\journal{the journal}
\newcounter{mycounter}
\newcommand{\ssn}{\mathrm{sn}}
\newtheorem{theorem}{Theorem}
\newtheorem{lemma}{Lemma}
\newtheorem{rmk}{Remark}
\newtheorem{prop}{Proposition}
\begin{document}
	\title{On the $N$-elliptic localized solutions to the derivative nonlinear Schr\"odinger equation and their asymptotic analysis}
\author[1]{Liming Ling}
\ead{lmling@scut.edu.cn}
\author[1]{Wang Tang\corref{cor1}}  
\ead{terrencet@scut.edu.cn}
\cortext[cor1]{Corresponding author}
\address[1]{School of Mathematics, South China University of Technology, Guangzhou, China}
	\begin{abstract}
		
	We parameterize the elliptic function solutions to the derivative nonlinear Schr\"odinger (DNLS) equation with four independent parameters and generate two equivalent forms of $N$-elliptic localized solutions to the DNLS equation through the Darboux-B\"acklund transformation.
	The $N$-elliptic localized solutions are expressed as (the derivative of) the ratios of determinants with entries in terms of Weierstrass sigma functions. Moreover, the asymptotic behaviors of both forms of $N$-elliptic localized solutions are analyzed along and between the propagation directions as $t\rightarrow \pm\infty$, which verify that the collisions between elliptic-solutions are elastic.
	We prove that the solution tends to a simple elliptic localized solution along each propagation direction. Between the propagation directions, the solution asymptotically approaches a shifted background. Furthermore, we establish sufficient conditions for strictly elastic collisions. The dynamic behaviors of the solutions are systematically investigated, with analytical results visualized through graphical illustrations. The asymptotic analysis of these solutions confirms that they exhibit the behavior predicted by the generalized soliton resolution conjecture on the elliptic function background.
	\end{abstract}

	\maketitle	
	\section{Introduction}
In this work, we focus on $N$-elliptic localized solutions for the derivative nonlinear Schr\"odinger (DNLS) equation \cite{Mjolhus_1978}
\begin{align}\label{Eq-DNLS-equation}
	\mathrm{i}u_t + u_{xx} + 2\mathrm{i}\big(|u|^2u\big)_x = 0,
\end{align}
where $u(x,t): \mathbb{R} \times \mathbb{R} \mapsto \mathbb{C}$.

The DNLS equation \eqref{Eq-DNLS-equation} was first proposed by Rogister in 1971 \cite{Rogister1971ParallelPO} and later derived independently by Mio et al. \cite{DNLS-derivation-1} and Mjolhus \cite{DNLS-derivation-2}. It has applications in diverse fields, including nonlinear optical pulse propagation in fibers \cite{Agrawal2008Applications}, modeling of subpicosecond or femtosecond pulses in single-mode fibers \cite{1987Soliton}, and the study of long-wavelength dynamics of dispersive Alfv\'{e}n waves in plasma physics \cite{Rogister1971ParallelPO}. Kaup and Newell derived its Lax pair using inverse scattering transformation \cite{Kaup:1978hw}. Subsequent methods such as Hirota's bilinear method \cite{1995Bilinearization} and Darboux transformation \cite{1988Darboux,2005Darboux,DNLS-DT,Cieli2009Algebraic,DNLS-DT-2003,DNLS-DT-2011} have been utilized to generate soliton and rational solutions on constant backgrounds.

Recently, Grinevich and Santini \cite{Grinevich_2018} revealed that genus-$1$ or genus-$2$ solutions may develop from the Cauchy problem of the NLS equation with a small perturbation. Prior to the formation of rogue waves and Akhmediev breathers, a nonzero background represented by elliptic functions emerges in the genus-$1$ case. Thus, exact solutions over complicated backgrounds, particularly elliptic function backgrounds, are more likely to arise in experiments and physical settings. Consequently, studying explicit solutions over elliptic backgrounds is of significant importance. We now review the history of studies on explicit solutions on elliptic backgrounds. In \cite{1994-Algebro}, multi-phase modulations of cnoidal waves for the defocusing NLS equation were considered through degeneration of finite-gap solutions. Multi-Akhmediev breather solutions on dnoidal backgrounds were also investigated using complex coordinate transformations \cite{1994-Algebro}. Later, Shin derived cnoidal wave solutions \cite{2003Squared} and soliton solutions on cnoidal backgrounds \cite{Shin-coupled-NLS} for coupled nonlinear Schr\"odinger equations via Darboux-B\"acklund transformation. A subsequent study employed Darboux-B\"acklund transformation to obtain dark solitons on cnoidal backgrounds for the defocusing NLS equation \cite{Shin}. Solutions over elliptic function backgrounds were also obtained through nonlocal symmetry localization \cite{2012Explicit} and consistent Riccati expansion \cite{Consistence-Ricatti}. Higher-order breathers and rogue waves on cnoidal and dnoidal backgrounds were numerically established via Darboux-B\"acklund transformation in \cite{2022Higher}. The inverse scattering transformation was utilized to generate multi-soliton solutions in terms of theta functions over elliptic backgrounds \cite{Takahashi_2012,Takahashi_2016}. Rogue waves and high-order rogue waves on cnoidal backgrounds for NLS equation were constructed using Darboux-B\"acklund transformation and numerical method \cite{K.D.J.}. Recently, elliptic function solutions for a family of Boussinesq-type equations were constructed by a direct linearisation scheme \cite{Boussinesq}.

Several systematic methods have recently emerged for constructing exact solutions over elliptic backgrounds via Darboux-B\"acklund transformations. Based on nonlinearization theory of Lax pairs \cite{1999Relation}, Chen and Pelinovsky derived algebraically decaying solitons, rogue waves, and other localized waves over elliptic backgrounds for the NLS equation \cite{CP-NLS}. This approach was extended to the mKdV and DNLS equations \cite{CP-DNLS,CPU-DNLS}. Employing Bargmann constraints with two eigenfunctions further developed the algebraic method to construct exact solutions for NLS equation on doubly-periodic backgrounds \cite{CPW-NLS}. This enhanced version enabled the construction of exact solutions over broader classes of elliptic backgrounds in the mKdV equation \cite{CP-mKdV-2}. Recent advances extended the method to discrete systems \cite{CP-dmKdV,CP-AL}. The algebraic approach now finds widespread application across numerous integrable equations \cite{Zhaqilao,Chen-Fa,K2,KMM,5th-cmKdV,5thNLS}. Using Baker-Akhiezer functions on algebraic curves, the inverse algebraic-geometric (IAG) approach was introduced to construct exact solutions over elliptic backgrounds for many integrable systems with $3 \times 3$ or higher-order Lax pairs, such as the vector Geng-Li model \cite{LRM-GL}, the Yajima-Oikawa equation \cite{LRM-YO}, and so on \cite{LRM-FL}. The IAG approach was later extended to semi-discrete systems as well \cite{LRM-semi}. Another systematic method was developed to construct multi-breathers and rogue waves over elliptic backgrounds for the AKNS equations \cite{LS-NLS,LS-mKdV-solution,LS-sG,Ling-NLS,thetaHirota,thetaLPD} in the compact form with determinant structures. By introducing a uniform parameter $z$ as a substitution for the spectral parameter, multi-breather solutions in terms of theta functions were derived over various elliptic backgrounds. Furthermore, higher-order rogue wave solutions over elliptic backgrounds were obtained for these systems using generalized Darboux transformation \cite{LS-RW}.

Soliton resolution constitutes a fundamental research direction in the theory of integrable systems and dispersive equations. The conjecture posits that solutions with generic initial data of weakly nonlinear dispersive equations asymptotically decompose into a finite sum of solitons, a radiative component governed by a linear equation, and a term vanishing as $t \rightarrow +\infty$ \cite{2008Why}. First emerging during the 1970s-1980s through studies of the Korteweg-de Vries (KdV) equation's integrability \cite{Miura_1976,Segur_1973,Schuur_1986} and numerical experiments \cite{Ivancevic_2010}, this conjecture found its initial rigorous verification via inverse scattering transformations \cite{Segur_1976_I,Segur_1976_II,Novokshenov_1980}. Substantial theoretical progress was later achieved through modern analytical methods, particularly through Deift and Zhou's nonlinear steepest descent approach for Riemann-Hilbert problems \cite{Deift-Zhou}, which established soliton resolution for numerous integrable systems \cite{YYL,LJQ}.

While first- and second-order rogue waves for the DNLS equation were constructed in \cite{CP-DNLS}, exact solutions derived from $N$-fold Darboux transformation remain systematically unexplored in compact form.  Building on \cite{LS-mKdV-solution,LS-NLS,LS-sG,effective-integrationeffective-integration,CP-DNLS}, we construct explicit $N$-elliptic localized solutions for \eqref{Eq-DNLS-equation} using Weierstrass functions via Darboux-B\"acklund transformations and addition formulas, followed by rigorous asymptotic analysis. Our main contributions are summarized below.

Unlike prior studies \cite{LS-NLS,LS-mKdV-solution,LS-sG,Ling-NLS,thetaHirota,thetaLPD}, our solutions utilize Weierstrass functions instead of Jacobi theta functions. This formulation offers distinct advantages: the addition formula \eqref{Eq-addition formulas of the sigma functions} for Weierstrass sigma functions eliminates indices, yielding more concise expressions than those involving Jacobi theta functions. Structures such as $\wp(\theta_1) - \wp(\theta_2)$ and $\zeta(\theta_1) + \zeta(\theta_2) + \zeta(\theta_3) - \zeta(\theta_1 + \theta_2 + \theta_3)$ naturally emerge during construction. These directly correspond to known Weierstrass function identities, enabling computational simplification. Such structures are closely related to the Weierstrass sigma addition formula and frequently appear in constructing elliptic function solutions for AKNS hierarchy equations \cite{LS-NLS,LS-mKdV-solution,LS-sG}. While Jacobian elliptic functions require specialized techniques, Weierstrass sigma functions directly facilitate calculations involving these structures, significantly simplifying computations compared to existing approaches.

Combining Jacobi theta addition formulas with Darboux-B\"acklund transformations yields compact elliptic localized solutions for AKNS hierarchy equations \cite{LS-NLS,LS-mKdV-solution,LS-sG,Ling-NLS,thetaHirota,thetaLPD}. This motivates applying similar techniques to Kaup-Newell hierarchy equations. However, the Darboux-B\"acklund transformation for the DNLS equation exhibits greater complexity, necessitating supplementary algebraic methods when integrated with addition formulas. Using the transformation derived from expansion at $\lambda = 0$, we obtain the $N$-elliptic localized solution as the derivative of a ratio of two determinants with elliptic function entries. Although intricate, the B\"acklund transformation from expansion at $\lambda = \infty$ aligns naturally with the addition formula, providing an equivalent derivative-free representation. Direct verification confirms the equivalence between both solution forms, suggesting potential applicability to other Kaup-Newell hierarchy equations.

Inspired by \cite{Takahashi_2016,Ling-NLS}, we introduce a sigma-type Cauchy matrix with corresponding determinant formula and rigorous proof. Leveraging this framework, we analytically investigate solution asymptotics. As $t \to \pm\infty$, we derive asymptotic expressions along propagation directions $L_i^{\pm}$ and within regions $R_i^{\pm}$ for $i=1,2,\ldots,N$. We prove that asymptotic solutions along propagation directions emerge from the background via one-fold Darboux transformation, while intermediate regions correspond to shifted backgrounds. Furthermore, \cite{LS-mKdV-solution} established $u(\xi,t) = u(-\xi,-t)$ symmetry for $N$-elliptic localized solutions to the mKdV equation under specific parameter constraints. We extend this to $u(\xi,t) = u^*(-\xi,-t)$ for DNLS equation, enabling sufficient conditions for strictly elastic collisions.

Recent work \cite{Ling-NLS} extends the soliton resolution framework by demonstrating that the NLS equation's $N$-breather solutions over elliptic backgrounds asymptotically decompose into single-elliptic localized solutions as $t \rightarrow \pm\infty$.
This work demonstrates that $N$-elliptic localized solutions to DNLS equation  decompose into single elliptic localized waves propagating at distinct velocities as $t \rightarrow \pm\infty$, consistent with the generalized soliton resolution conjecture over elliptic backgrounds. This provides the first verification of soliton resolution for exact solutions on complex backgrounds in non-AKNS integrable systems.

The paper is organized as follows. Section~2 revisits Darboux-B\"acklund transformations for DNLS and rigorously establishes equivalence between their distinct forms. Section~3 constructs elliptic function solutions for DNLS and derives the fundamental solution matrix for its Lax pair. Section~4 establishes $N$-elliptic localized solutions. Section~5 introduces the sigma-formulation of Cauchy determinants to enable asymptotic analysis. Section~6 provides an equivalent derivative-free representation via alternative B\"acklund transformations with corresponding asymptotics. Finally, Section~7 examines solution dynamics through analytical results and graphical illustrations.

	\section{The two equivalent forms of Darboux-B\"acklund transformation}
	In this section, we revisit the Darboux-B\"acklund transformation for the DNLS equation \eqref{Eq-DNLS-equation}\cite{DNLS-DT-2003,DNLS-DT-2011,DNLS-DT}.
	Expanding the Darboux matrix at distinct values of the spectral parameter yields different forms of B\"acklund transformations. 
	To the best of our knowledge, the equivalence between the two forms of B\"acklund transformations obtained by expanding the Darboux matrix at $\lambda=0$ and $\lambda=\infty$ respectively remains unproven.
	We denote these forms as $BT_0$ and $BT_\infty$, respectively. 
	Using the Lax pair structure, we rigorously establish their equivalence. 
	
	The DNLS equation \eqref{Eq-DNLS-equation} admits the following Lax pair
	\begin{equation}\label{Eq-Lax pair}
		\begin{split}
			&\Phi_x(x,t;\lambda)=\textbf{U}(\textbf{Q};\lambda)\Phi(x,t;\lambda),\quad \textbf{U}(\textbf{Q};\lambda)=-2\mathrm{i}\sigma_3\lambda^2+2\textbf{Q}\lambda ,\\
			& \Phi_t(x,t;\lambda)=\textbf{V}(\textbf{Q};\lambda)\Phi(x,t;\lambda),\quad \textbf{V}(\textbf{Q};\lambda)=\left(4\lambda^2+2\textbf{Q}^2\right)\textbf{U}(\textbf{Q};\lambda)+2\mathrm{i}\sigma_3\textbf{Q}_x\lambda,
		\end{split}
	\end{equation}
	where $\lambda\in \mathbb{C}\cup\{\infty\}$ is the spectral parameter and 
	\begin{align}\label{Eq-sigma3 and Q}
		\sigma_3= \begin{pmatrix}
			1 & 0\\
			0 & -1\\
		\end{pmatrix},\quad \textbf{Q}=\begin{pmatrix}
			0 & \mathrm{i}u\\
			\mathrm{i}u^* & 0\\
		\end{pmatrix},
	\end{align}
	with $^*$ denoting complex conjugation.
	The compatibility condition of above Lax pair \eqref{Eq-Lax pair}:
	\begin{equation}\label{Eq-compatibility-condition}
		\mathbf{U}_t-\mathbf{V}_x+[\mathbf{U},\mathbf{V}]=0,
	\end{equation}
	yields the DNLS equation \eqref{Eq-DNLS-equation}. Matrices  $\textbf{U}(\textbf{Q};\lambda)$
	and $\textbf{V}(\textbf{Q};\lambda)$ involved in the Lax pair \eqref{Eq-Lax pair} satisfy the $\mathrm{su}(2)$-symmetry
	\begin{equation}\label{Eq-symmetry-1}
		\textbf{U}^{\dagger}(\textbf{Q};\lambda^*)=-\textbf{U}(\textbf{Q};\lambda),\quad \textbf{V}^{\dagger}(\textbf{Q};\lambda^*)=-\textbf{V}(\textbf{Q};\lambda),
	\end{equation}
	where $^{\dagger}$ represents the Hermitian conjugation. Additionally, they exhibit the twist symmetry 
	\begin{equation}\label{Eq-symmetry-2}
		\textbf{U}(\textbf{Q};-\lambda)=\sigma_3\textbf{U}(\textbf{Q};\lambda)\sigma_3, \quad \textbf{V}(\textbf{Q};-\lambda)=\sigma_3\textbf{V}(\textbf{Q};\lambda)\sigma_3.
	\end{equation}
	Consequently, $\sigma_3\Phi(x,t;-\lambda)\sigma_3$ also solves \eqref{Eq-Lax pair}, while $\Phi^{\dagger}(x,t;\lambda^*)$ solves the adjoint Lax pair
	\begin{align}\label{Eq-adjoint-Lax-pair}
		\Psi_x(x,t;\lambda)=-\Psi(x,t;\lambda) \textbf{U}(x,t;\lambda),\quad \Psi_t(x,t;\lambda)=-\Psi(x,t;\lambda) \textbf{V}(x,t;\lambda),
	\end{align}
	if $\Phi(x,t;\lambda)$ is a solution matrix to  \eqref{Eq-Lax pair}. Direct verification shows that  $\Phi^{-1}(x,t;\lambda)$ also solves the adjoint Lax pair \eqref{Eq-adjoint-Lax-pair} . By the existence and uniqueness theorem for ODEs, the solution matrix $\Phi(x,t;\lambda)$ therefore satisfies the symmetries:
	\begin{align}\label{Eq-symmetry-phi}
		\Phi^{-1}(x,t;\lambda)=\Phi^\dagger(x,t;\lambda^*),\quad \Phi(x,t;-\lambda)=\sigma_3\Phi(x,t;\lambda)\sigma_3.
	\end{align}
	We propose to find the $N$-fold Darboux matrix $\textbf{T}^{[N]}(x,t;\lambda)$ that keeps the symmetries  \eqref{Eq-symmetry-phi}. To be specific, $\textbf{T}^{[N]}(x,t;\lambda)$ needs to satisfy
	\begin{align}\label{Eq-T-symmetry}
		(\textbf{T}^{[N]})^{-1}(x,t;\lambda)=(\textbf{T}^{[N]})^\dagger(x,t;\lambda^*),\quad \textbf{T}^{[N]}(x,t;-\lambda)=\sigma_3\textbf{T}^{[N]}(x,t;\lambda)\sigma_3.	
	\end{align}
	Therefore, along the lines of the reference \cite{DNLS-DT} we could assume that
	\begin{align}\label{Eq-T}
		\textbf{T}^{[N]}(x,t;\lambda)=\mathbb{I}_{2}+\sum_{i=1}^{N}\left(\frac{\textbf{A}_i(x,t)}{\lambda_i^*-\lambda}-\frac{\sigma_3\textbf{A}_i(x,t)\sigma_3}{\lambda+\lambda_i^*}\right)\lambda_i^*\lambda,
	\end{align}
	where $\mathbb{I}_2$ denotes the 2$\times$2 identity matrix and $\mathbf{A}_i(x,t)=\left|x_i(x,t)\right\rangle\left\langle y_i(x,t)\right|$ for $1\leq i\leq N$. Here, $\left|x_i\right\rangle=\big(x_i^{(1)},x_i^{(2)}\big)^T$ and $\left\langle y_i\right|=\big(y_i^{(1)},y_i^{(2)}\big)$ are two-dimensional column and row vectors to be determined respectively. Suppose $u\in L^{\infty}(\mathbb{R})\cap C^{\infty}(\mathbb{R})$ is a solution to the DNLS equation \eqref{Eq-DNLS-equation}, and let $\varphi_i=(\varphi_{i}^{(1)},\varphi_{i}^{(2)})^T, i=1,2,\dots,N$ be $N$ solutions to the corresponding Lax pair \eqref{Eq-Lax pair} at eigenvalues $\lambda_i$. Via the Darboux transformation
	\begin{align}\label{Eq-Darboux-transformation}
		\Phi^{[N]}( x, t;\lambda)=\mathbf{T}^{[N]}( x, t;\lambda) \Phi(x, t;\lambda),
	\end{align}
	the Lax pair \eqref{Eq-Lax pair} is converted into a new one 
	\begin{equation}\label{Eq-New-Lax-pair}
		\begin{split}
			\begin{cases}\Phi^{[N]}_x(x,t;\lambda)=\mathbf{U}(\textbf{Q}^{[N]};\lambda) \Phi^{[N]}(x,t;\lambda),  \\ \Phi^{[N]}_t(x,t;\lambda)=\mathbf{V}(\textbf{Q}^{[N]};\lambda) \Phi^{[N]}(x,t;\lambda), \end{cases}
		\end{split}
	\end{equation}
	where 
	\begin{align}\label{Eq-QN}
		\mathbf{Q}^{[N]}(x,t)=\begin{pmatrix} 0 & \mathrm{i}u^{[N]}\\
			\mathrm{i}(u^{[N]})^* & 0\end{pmatrix}.
	\end{align}
	
	From \eqref{Eq-Lax pair}, \eqref{Eq-New-Lax-pair} and \eqref{Eq-QN}, it follows that $u^{[N]}$ constitutes a new solution to the DNLS equation \eqref{Eq-DNLS-equation}. Substituting the Darboux transformation \eqref{Eq-Darboux-transformation} into \eqref{Eq-New-Lax-pair} and using \eqref{Eq-Lax pair} yields relations between the transformed Lax matrices $\mathbf{U}(\textbf{Q}^{[N]};\lambda),\mathbf{V}(\textbf{Q}^{[N]};\lambda)$ and the original Lax matrices $\mathbf{U}(\textbf{Q};\lambda),\mathbf{V}(\textbf{Q};\lambda)$:
	\begin{equation}\label{Eq-QN-Q}
		\begin{split}
			\mathbf{U}(\textbf{Q}^{[N]};\lambda)\mathbf{T}^{[N]}(x,t;\lambda)=\mathbf{T}^{[N]}_x(x,t;\lambda)+\mathbf{T}^{[N]}(x,t;\lambda)\mathbf{U}(\textbf{Q};\lambda),\\
			\mathbf{V}(\textbf{Q}^{[N]};\lambda)\mathbf{T}^{[N]}(x,t;\lambda)=\mathbf{T}^{[N]}_t(x,t;\lambda)+\mathbf{T}^{[N]}(x,t;\lambda)\mathbf{V}(\textbf{Q};\lambda).
		\end{split}
	\end{equation}
	These relations \eqref{Eq-QN-Q} are crucial for determining the corresponding B\"acklund transformation. Subsequent proposition presents the B\"acklund transformation $BT_0$, which is derived by expanding the Darboux matrix $\mathbf{T}^{[N]}$ at $\lambda=0$. This formulation of B\"acklund transformation is frequently utilized in literature. Henceforth, we only present the conclusions here without giving the proof. We recommend the reference \cite{DNLS-DT} for details.

	\begin{prop}[\cite{DNLS-DT}The B\"acklund transformation $BT_0$]\label{thm:DT}
		Suppose that $u\in L^{\infty}(\mathbb{R})\cap C^{\infty}(\mathbb{R})$ is a solution to the DNLS equation \eqref{Eq-DNLS-equation}, while $\varphi_i=(\varphi_{i}^{(1)},\varphi_{i}^{(2)})^T ,i=1,2,...,N$ are solutions to the corresponding Lax pair \eqref{Eq-Lax pair} at eigenvalues $\lambda_i$  respectively. Then
		\begin{align}\label{Eq-BT0-1}
			u^{[N]}_0=u+\mathrm{i}(\varphi^{(2)\dagger}\mathbf{M}^{-1}\varphi^{(1)})_x,
		\end{align}
		is a new solution to the DNLS equation \eqref{Eq-DNLS-equation}, where $\varphi^{(i)}=(\varphi_{1}^{(i)},\varphi_{2}^{(i)},...,\varphi_{N}^{(i)})^T,i=1,2$  and the matrix elements of the $N\times N$ matrix $\mathbf{M}_{ij}$ are defined by
		\begin{equation}\label{Eq-BT0-2}
			\mathbf{M}_{ij}=\left(\frac{ \varphi_j^{\dagger} \varphi_i}{\lambda_j^{*}-\lambda_i}-\frac{ \varphi_j^{\dagger} \sigma_3 \varphi_i}{\lambda_j^{*}+\lambda_i}\right)\lambda_i\lambda_j^{*}.
		\end{equation}
	\end{prop}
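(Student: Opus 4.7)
The plan is to construct the $N$-fold Darboux matrix $\mathbf{T}^{[N]}(x,t;\lambda)$ in the form \eqref{Eq-T} and then extract the B\"acklund transformation by matching the first nontrivial coefficient of the intertwining relation \eqref{Eq-QN-Q} at $\lambda=0$. The ansatz \eqref{Eq-T} manifestly satisfies the twist part of \eqref{Eq-T-symmetry}, so only the $\mathrm{su}(2)$ symmetry and the kernel conditions at the chosen spectral points need to be enforced.

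First I would fix the co-vectors $\langle y_j|$. Comparing the rank-one residues of the two sides of $(\mathbf{T}^{[N]})^{-1}(\lambda)=(\mathbf{T}^{[N]})^{\dagger}(\lambda^*)$ at $\lambda=\lambda_i^*$ shows that the left kernel of the residue of $\mathbf{T}^{[N]}$ at $\lambda=\lambda_i^*$ must be spanned by $\varphi_i^\dagger$; hence one may normalize $\langle y_j|=\varphi_j^\dagger$, absorbing any scalar into $|x_j\rangle$. Next, I would impose the kernel conditions $\mathbf{T}^{[N]}(\lambda_i)\varphi_i=0$ for $i=1,\dots,N$ (the conditions $\mathbf{T}^{[N]}(-\lambda_i)\sigma_3\varphi_i=0$ are automatic by twist). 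Plugging in \eqref{Eq-T} and reading off the first component gives exactly the $N\times N$ linear system $\sum_j\mathbf{M}_{ij}x_j^{(1)}=-\varphi_i^{(1)}$ with $\mathbf{M}$ as in \eqref{Eq-BT0-2}, so $x^{(1)}=-\mathbf{M}^{-1}\varphi^{(1)}$. The second component yields an analogous system whose coefficient matrix $\mathbf{M}'$ differs from $\mathbf{M}$ only by the sign in front of the $\sigma_3$ term; a one-line conjugation check produces $(\mathbf{M}')^\dagger=-\mathbf{M}$, which is precisely the identity needed so that the $(2,1)$-reconstruction of $(u^{[N]})^*$ is the complex conjugate of the $(1,2)$-reconstruction of $u^{[N]}$.

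To extract $u^{[N]}$ I would expand the first relation in \eqref{Eq-QN-Q} in powers of $\lambda$ at $\lambda=0$. A geometric-series expansion of \eqref{Eq-T} gives $\mathbf{T}^{[N]}(\lambda)=\mathbb{I}_2+\lambda\,\mathbf{T}_1+O(\lambda^2)$ with $\mathbf{T}_1=\sum_i(\mathbf{A}_i-\sigma_3\mathbf{A}_i\sigma_3)$. Since $\sigma_3\mathbf{A}_i\sigma_3$ flips the sign of the off-diagonal entries of $\mathbf{A}_i$, the matrix $\mathbf{T}_1$ is purely off-diagonal, and
\begin{equation*}
(\mathbf{T}_1)_{12}=2\sum_i x_i^{(1)}\varphi_i^{(2)*}=2\,\varphi^{(2)\dagger}x^{(1)}=-2\,\varphi^{(2)\dagger}\mathbf{M}^{-1}\varphi^{(1)}.
\end{equation*}
Matching the $O(\lambda)$ terms in \eqref{Eq-QN-Q} yields $\mathbf{Q}^{[N]}-\mathbf{Q}=\tfrac{1}{2}\mathbf{T}_{1,x}$, whose $(1,2)$-entry is exactly $\mathrm{i}(u^{[N]}-u)=-\,(\varphi^{(2)\dagger}\mathbf{M}^{-1}\varphi^{(1)})_x$, i.e. \eqref{Eq-BT0-1}.

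The main obstacle is the \emph{structural} step of verifying that \eqref{Eq-QN-Q} genuinely holds at every order in $\lambda$, so that the transformed $\mathbf{U}(\mathbf{Q}^{[N]};\lambda)$ and $\mathbf{V}(\mathbf{Q}^{[N]};\lambda)$ are actually polynomials of the prescribed Lax-pair degree and $u^{[N]}$ really solves \eqref{Eq-DNLS-equation}. Equivalently, one must show that the residues of the left-hand sides of \eqref{Eq-QN-Q} at the apparent poles $\lambda=\pm\lambda_i^*$ vanish. By the rank-one structure of these residues, this reduces to a bookkeeping argument that combines the kernel conditions $\mathbf{T}^{[N]}(\lambda_i)\varphi_i=0$ with their $x$- and $t$-derivatives (which are consequences of \eqref{Eq-Lax pair}) and with the symmetries \eqref{Eq-T-symmetry}. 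This is the step handled in detail in \cite{DNLS-DT}, and it is the reason the proposition is quoted here without its full verification.
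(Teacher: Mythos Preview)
Your proposal is correct and essentially follows the standard route from \cite{DNLS-DT}. Note, however, that the paper does \emph{not} prove this proposition: it explicitly states ``we only present the conclusions here without giving the proof'' and refers the reader to \cite{DNLS-DT}. So there is no in-paper proof to compare against.

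That said, your sketch lines up precisely with the machinery the paper does display in the proof of Theorem~\ref{thm:DT-infty} for $BT_\infty$: the identification $\langle y_i|=\varphi_i^\dagger$ via residues of $\mathbf{T}^{[N]}(\mathbf{T}^{[N]})^{-1}$ at $\lambda_i^*$, the linear system \eqref{Eq-DT-algebraic-equations} yielding $x^{(1)}=-\mathbf{M}^{-1}\varphi^{(1)}$, and then reading off the B\"acklund transformation from \eqref{Eq-QN-Q}. The only difference is the expansion point: you expand $\mathbf{T}^{[N]}$ at $\lambda=0$ (giving the off-diagonal $\mathbf{T}_1$ and hence \eqref{Eq-BT0-1}), whereas the paper expands at $\lambda=\infty$ to obtain \eqref{Eq-DT-infty}. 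Your final remark that the residue-vanishing verification is the substantive structural step deferred to \cite{DNLS-DT} is exactly right.
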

	The following theorem establishes another form of B\"acklund transformation by expanding the Darboux matrix at $\lambda=\infty$. This formulation was also reported in \cite{DNLS-DT-2003,DNLS-DT-2011} with another determinant form. Here we present the $N$-fold $BT_\infty$ transformation in the compact form and provides its rigorous proof.
	\begin{theorem}\label{thm:DT-infty}
		Suppose that $u\in L^{\infty}(\mathbb{R})\cap C^{\infty}(\mathbb{R})$ is a solution to the DNLS equation \eqref{Eq-DNLS-equation} while $\varphi_i=(\varphi_{i}^{(1)},\varphi_{i}^{(2)})^T,i=1,2,...,N$ are solutions to the corresponding Lax pair \eqref{Eq-Lax pair} at eigenvalues  $\lambda_i$  respectively. Then 	
		\begin{align}\label{Eq-DT-infty}
			u^{[N]}_\infty=\frac{1+2\varphi^{(1)\dagger}\mathbf{\Lambda}^\dagger\mathbf{M}^{-1}\varphi^{(1)}}{1-2\varphi^{(2)\dagger}\mathbf{\Lambda}^\dagger\left(\mathbf{M}^\dagger\right)^{-1}\varphi^{(2)}}u+\frac{4\varphi^{(2)\dagger}\big(\mathbf{\Lambda^{\dagger}}\big)^2\mathbf{M}^{-1}\varphi^{(1)}}{1-2\varphi^{(2)\dagger}\mathbf{\Lambda}^\dagger\left(\mathbf{M}^\dagger\right)^{-1}\varphi^{(2)}},	
		\end{align}
		is a new solution to the DNLS equation \eqref{Eq-DNLS-equation},
		where $\varphi^{(i)}=(\varphi_{1}^{(i)},\varphi_{2}^{(i)},...,\varphi_{N}^{(i)})^T,i=1,2$ and  $\mathbf{\Lambda}=\mathrm{diag}(\lambda_i),1\leq i\leq N$. Moreover, the entries of $\mathbf{M}$ are given in \eqref{Eq-BT0-2}. 
	\end{theorem}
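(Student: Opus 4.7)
The strategy is to Laurent-expand the Darboux matrix $\mathbf{T}^{[N]}(x,t;\lambda)$ of \eqref{Eq-T} about $\lambda=\infty$ and to match coefficients in the intertwining relation \eqref{Eq-QN-Q}. Writing $\mathbf{T}^{[N]}=\mathbf{T}_\infty+\lambda^{-1}\mathbf{T}_1+O(\lambda^{-2})$ and Taylor-expanding the two rational kernels $\lambda_i^*\lambda/(\lambda_i^*-\lambda)$ and $\lambda_i^*\lambda/(\lambda+\lambda_i^*)$, the elementary identities $\mathbf{A}_i\pm\sigma_3\mathbf{A}_i\sigma_3$ (equal to twice the diagonal, resp.\ off-diagonal, part of $\mathbf{A}_i$) force $\mathbf{T}_\infty=\mathbb{I}_2-2\sum_i\lambda_i^*\,\mathrm{diag}(\mathbf{A}_i)$ to be diagonal and $\mathbf{T}_1=-2\sum_i(\lambda_i^*)^2\,[\mathbf{A}_i-\mathrm{diag}(\mathbf{A}_i)]$ to be off-diagonal. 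The $\lambda^2$-coefficient of \eqref{Eq-QN-Q} reduces to the trivial $[\sigma_3,\mathbf{T}_\infty]=0$, while the $\lambda^1$-coefficient reads
\begin{equation*}
\mathbf{Q}^{[N]}\mathbf{T}_\infty = \mathbf{T}_\infty\mathbf{Q} + 2\mathrm{i}\sigma_3\mathbf{T}_1,
\end{equation*}
so that the $(1,2)$-entry, divided by $\mathrm{i}$, reduces the theorem to evaluating the three scalars $[\mathbf{T}_\infty]_{11}$, $[\mathbf{T}_\infty]_{22}$ and $[\mathbf{T}_1]_{12}$.

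To do that I express $\mathbf{A}_i=|x_i\rangle\langle y_i|$ in terms of $\varphi_i$. Following the standard prescription and guided by the unitary symmetry in \eqref{Eq-T-symmetry}, I set $\langle y_i|=\varphi_i^\dagger$ and impose the kernel conditions $\mathbf{T}^{[N]}(x,t;\lambda_i)\varphi_i=0$ for $i=1,\dots,N$. Because $\sigma_3|x_j\rangle$ differs from $|x_j\rangle$ only by a sign in its second component, the resulting $2N$ scalar equations split into two decoupled $N\times N$ systems, one for each component of $x_j$. The conjugation identity $\overline{\lambda_i\lambda_j^*/(\lambda_j^*\pm\lambda_i)}=\lambda_i^*\lambda_j/(\lambda_j\pm\lambda_i^*)$, together with the sign flip just mentioned, identifies the coefficient matrix of the first system with the $M$ of \eqref{Eq-BT0-2} and that of the second with $-M^\dagger$, yielding $x^{(1)}=-M^{-1}\varphi^{(1)}$ and $x^{(2)}=(M^\dagger)^{-1}\varphi^{(2)}$.

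Substituting these expressions back into the formulas for $\mathbf{T}_\infty$ and $\mathbf{T}_1$ produces the bilinear forms
\begin{equation*}
[\mathbf{T}_\infty]_{11}=1+2\varphi^{(1)\dagger}\mathbf{\Lambda}^\dagger M^{-1}\varphi^{(1)},\quad [\mathbf{T}_\infty]_{22}=1-2\varphi^{(2)\dagger}\mathbf{\Lambda}^\dagger(M^\dagger)^{-1}\varphi^{(2)},\quad [\mathbf{T}_1]_{12}=2\varphi^{(2)\dagger}(\mathbf{\Lambda}^\dagger)^{2}M^{-1}\varphi^{(1)},
\end{equation*}
and feeding these into $u_\infty^{[N]}=u\,[\mathbf{T}_\infty]_{11}/[\mathbf{T}_\infty]_{22}+2[\mathbf{T}_1]_{12}/[\mathbf{T}_\infty]_{22}$ gives \eqref{Eq-DT-infty}. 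The compatibility with the $t$-part of \eqref{Eq-Lax pair} is automatic, since the two identities in \eqref{Eq-QN-Q} are linked by the zero-curvature condition \eqref{Eq-compatibility-condition}, so $u_\infty^{[N]}$ is genuinely a new DNLS solution.

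The main obstacle I anticipate is the index bookkeeping in the middle step: one must carefully track the signs introduced by $\sigma_3$, verify the conjugation identity that makes the second system $-M^\dagger$ rather than $M$ itself, and confirm that the two $N\times N$ subsystems truly decouple. Once this algebraic lemma is in place, the remainder is a routine comparison of Laurent coefficients and straightforward linear algebra.
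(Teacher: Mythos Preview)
Your proposal is correct and follows essentially the same route as the paper: Laurent-expand $\mathbf{T}^{[N]}$ at $\lambda=\infty$, read off the $\lambda^1$-coefficient of \eqref{Eq-QN-Q} to get the scalar relation among $[\mathbf{T}_\infty]_{11},[\mathbf{T}_\infty]_{22},[\mathbf{T}_1]_{12}$, then determine $x^{(1)},x^{(2)}$ from kernel conditions, obtaining $x^{(1)}=-\mathbf{M}^{-1}\varphi^{(1)}$ and $x^{(2)}=(\mathbf{M}^\dagger)^{-1}\varphi^{(2)}$. The only cosmetic difference is that the paper imposes the kernel condition on the \emph{inverse}, $\varphi_i^\dagger(\mathbf{T}^{[N]})^{-1}\big|_{\lambda=\lambda_i^*}=0$, deduced from the residue of $\mathbf{T}^{[N]}(\mathbf{T}^{[N]})^{-1}$ at $\lambda_i^*$, rather than your $\mathbf{T}^{[N]}(\lambda_i)\varphi_i=0$; these are equivalent via the symmetry $(\mathbf{T}^{[N]})^{-1}(\lambda)=(\mathbf{T}^{[N]})^\dagger(\lambda^*)$ in \eqref{Eq-T-symmetry}.
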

	\begin{proof}
		As $\lambda\rightarrow \infty$, we have the following expansion
		\begin{equation}\label{Eq-T-expansion-infty}
			\begin{split}
				\textbf{T}^{[N]}(x,t;\lambda)&=\left(\textbf{I}-\sum_{i=1}^N(\textbf{A}_i+\sigma_3\textbf{A}_i\sigma_3)\lambda_i^*\right)+\sum_{i=1}^N(-\textbf{A}_i+\sigma_3\textbf{A}_i\sigma_3){\lambda_i^{*}}^{2}\lambda^{-1}+\mathcal{O}(\lambda^{-2})\\
				&:=\mathbf{T}_{0}^{[N]}+\mathbf{T}_{-1}^{[N]}\lambda^{-1}+\mathcal{O}(\lambda^{-2}).
			\end{split}
		\end{equation}
		The matrices $\mathbf{T}_{0}^{[N]}$ and $\mathbf{T}_{-1}^{[N]}$ can be expressed through their entries as
		\begin{equation}\label{Eq-T0-T1-entries}
			\begin{split}
				\mathbf{T}_{0}^{[N]}&=\begin{pmatrix}
					1-2\sum_{i=1}^N x_i^{(1)}y_i^{(1)}\lambda_i^* & 0\\
					0& 1-2\sum_{i=1}^N x_i^{(2)}y_i^{(2)}\lambda_i^*
				\end{pmatrix},\\
				\mathbf{T}_{-1}^{[N]}&=\begin{pmatrix}
					0 & -2\sum_{i=1}^N x_i^{(1)}y_i^{(2)}\left(\lambda_i^*\right)^2 \\
					-2\sum_{i=1}^N x_i^{(2)}y_i^{(1)}\left(\lambda_i^*\right)^2
					& 0\end{pmatrix}.
			\end{split}
		\end{equation}
		Plugging the expansion \eqref{Eq-T-expansion-infty}-\eqref{Eq-T0-T1-entries} into \eqref{Eq-QN-Q} and making utilize of \eqref{Eq-Lax pair}, we arrive at 
		\begin{equation}\label{Eq-BT-infty-scalar}
			\left(1-2\sum_{i=1}^Nx_i^{(2)}y_i^{(2)}\lambda_i^{*}\right)u^{[N]}_{\infty}=-4\sum_{i=1}^Nx_i^{(1)}y_i^{(2)}\left(\lambda_i^{*}\right)^2+\left(1-2\sum_{i=1}^Nx_i^{(1)}y_i^{(1)}\lambda_i^{*}\right)u,
		\end{equation}
		by comparing the coefficients of $\lambda$ on both sides of \eqref{Eq-QN-Q}. 
		According to \eqref{Eq-T}, the inverse of the Darboux matrix could be written as 
		\begin{align}\label{Eq-TN-inverse}
			(\mathbf{T}^{[N]})^{-1}=\mathbb{I}_2+\sum_{i=1}^N\left(\frac{\textbf{A}_i^\dagger}{\lambda_i-\lambda}-\frac{\sigma_3\textbf{A}_i^\dagger\sigma_3}{\lambda_i+\lambda}\right)\lambda_i\lambda.
		\end{align}
		It follows from $\mathbf{T}^{[N]}\left(\mathbf{T}^{[N]}\right)^{-1}=\mathbb{I}_2$ that
		\begin{equation}\label{Eq-yi-kernal}
			\left\langle y_i\right|(\mathbf{T}^{[N]})^{-1}|_{\lambda=\lambda_i^*}=0,\quad 1\leq i\leq N,
		\end{equation}
		where we take account of the residue of $\textbf{T}^{[N]}\left(\mathbf{T}^{[N]}\right)^{-1}$ at $\lambda=\lambda_i^*$ is zero. On the other hand, according to the Darboux transformation theory, it holds that 
		\begin{equation}\label{Eq-phii-kernel}
			\varphi_i^{\dagger}(\mathbf{T}^{[N]})^{-1}|_{\lambda=\lambda_i^*}=0,\quad 1\leq i\leq N.
		\end{equation}
		Consequently, we could set $\left\langle y_i\right|=\varphi_i^{\dagger},1\leq i\leq N$ without loss of generality.
		Henceforth, combining with \eqref{Eq-TN-inverse}-\eqref{Eq-phii-kernel}, we arrive at
		\begin{equation}\label{Eq-DT-algebraic-equations}
			\begin{split}
				&\varphi_{i }^{(1)}+\sum_{j=1}^N \left(\frac{x_{j}^{(1)} \varphi_j^{\dagger} \varphi_i}{\lambda_j^{*}-\lambda_i}-\frac{x_{j}^{(1)} \varphi_j^{\dagger} \sigma_3 \varphi_i}{\lambda_j^{*}+\lambda_i}\right)\lambda_i\lambda_j^{*}=0, \\
				&\varphi_{i }^{(2)}+\sum_{j=1}^N \left(\frac{x_{j}^{(2)} \varphi_j^{\dagger} \varphi_i}{\lambda_j^{*}-\lambda_i}+\frac{x_{j}^{(2)} \varphi_j^{\dagger} \sigma_3 \varphi_i}{\lambda_j^{*}+\lambda_i}\right)\lambda_i\lambda_j^{*}=0,\\
			\end{split}
		\end{equation}
		for $1\leq i\leq N$.	From the above equations \eqref{Eq-DT-algebraic-equations}, we indicate that 
		\begin{equation}\label{Eq-x1-x2}
			\begin{split}
				x^{(1)}&=-\textbf{M}^{-1}\varphi^{(1)},\\
				x^{(2)}&=\left(\mathbf{M}^{\dagger}\right)^{-1}\varphi^{(2)},
			\end{split}
		\end{equation}
		respectively, where we denote $x^{(i)}=(x^{(i)}_1,x^{(i)}_2,\ldots,x^{(i)}_N)^T,i=1,2$. Henceforth, the equations \eqref{Eq-BT-infty-scalar} and \eqref{Eq-x1-x2} lead to \eqref{Eq-DT-infty}. Thus we complete the proof.
	\end{proof}

	The following theorem establishes the equivalence between $BT_0$ and $BT_\infty$.
	\begin{theorem}\label{thm:DT-equivalence}
		The two forms of Darboux-B\"acklund transformation \eqref{Eq-BT0-1}-\eqref{Eq-DT-infty} are equivalent. Specifically, it holds that $u^{[N]}_0=u^{[N]}_\infty$.
	\end{theorem}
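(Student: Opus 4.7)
The plan is to recognise that both \eqref{Eq-BT0-1} and \eqref{Eq-DT-infty} are two specialisations of a single $\lambda$-dependent scalar identity extracted from the $x$-part of \eqref{Eq-QN-Q}. Taking the $(1,2)$-entry of $\mathbf{U}(\mathbf{Q}^{[N]};\lambda)\mathbf{T}^{[N]}(\lambda) - \mathbf{T}^{[N]}(\lambda)\mathbf{U}(\mathbf{Q};\lambda) = \mathbf{T}^{[N]}_x(\lambda)$ together with the explicit form of $\mathbf{U}$, I would obtain, after dividing by $2\mathrm{i}\lambda$, the identity
\begin{equation*}
u^{[N]}\bigl(\mathbf{T}^{[N]}(\lambda)\bigr)_{22} = u\bigl(\mathbf{T}^{[N]}(\lambda)\bigr)_{11} + 2\lambda\bigl(\mathbf{T}^{[N]}(\lambda)\bigr)_{12} + \frac{1}{2\mathrm{i}\lambda}\bigl(\mathbf{T}^{[N]}_x(\lambda)\bigr)_{12},
\end{equation*}
which is valid at every $\lambda$ where $\mathbf{T}^{[N]}(\lambda)$ is regular. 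The rest of the proof consists in reading off this identity at $\lambda=\infty$ and at $\lambda=0$.

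To recover $BT_\infty$, I would send $\lambda\to\infty$ using the expansion \eqref{Eq-T-expansion-infty}-\eqref{Eq-T0-T1-entries}. The twist symmetry \eqref{Eq-T-symmetry} forces $\mathbf{T}_0^{[N]}$ to be diagonal and $\mathbf{T}_{-1}^{[N]}$ to be anti-diagonal, so $(\mathbf{T}_{0,x}^{[N]})_{12}=0$ and the derivative term decays as $\mathcal{O}(\lambda^{-2})$. The surviving constant-in-$\lambda$ part reads $u^{[N]}(\mathbf{T}_0^{[N]})_{22} = u(\mathbf{T}_0^{[N]})_{11} + 2(\mathbf{T}_{-1}^{[N]})_{12}$, which is precisely \eqref{Eq-BT-infty-scalar}; the substitution \eqref{Eq-x1-x2} then delivers the closed form \eqref{Eq-DT-infty}.

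To recover $BT_0$, I would send $\lambda\to 0$ in the same identity. Since $\mathbf{T}^{[N]}(0)=\mathbb{I}_2$ by \eqref{Eq-T}, both $(\mathbf{T}^{[N]}(\lambda))_{12}$ and $(\mathbf{T}^{[N]}_x(\lambda))_{12}$ vanish at $\lambda=0$, while the diagonal entries tend to $1$. A Taylor expansion then yields $u^{[N]} = u - (\mathrm{i}/2)\bigl(\mathbf{T}^{[N]}_{x\lambda}(0)\bigr)_{12}$; computing $\bigl(\mathbf{T}^{[N]}_\lambda(0)\bigr)_{12} = 2\sum_{i}x_i^{(1)}y_i^{(2)} = -2\,\varphi^{(2)\dagger}\mathbf{M}^{-1}\varphi^{(1)}$ by means of \eqref{Eq-T} together with \eqref{Eq-x1-x2} reproduces \eqref{Eq-BT0-1}. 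Since both expressions are evaluations of one and the same identity, the equality $u^{[N]}_0 = u^{[N]}_\infty$ follows.

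The hard part will be making the $\lambda\to\infty$ step rigorous: the vanishing of the derivative contribution $(2\mathrm{i}\lambda)^{-1}(\mathbf{T}^{[N]}_x(\lambda))_{12}$ rests squarely on $\mathbf{T}_0^{[N]}$ being diagonal, which in turn rests on the twist symmetry of $\mathbf{T}^{[N]}$. Without this structural input, an additional algebraic correction would appear at infinity and would spoil the match with \eqref{Eq-BT-infty-scalar}, so the book-keeping of parities in the expansion of $\mathbf{T}^{[N]}(\lambda)$ is the critical technical ingredient.
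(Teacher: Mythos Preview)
Your argument is correct and takes a genuinely different route from the paper. The paper proceeds by direct computation: it differentiates $\varphi^{(2)\dagger}\mathbf{M}^{-1}\varphi^{(1)}$ using the Lax pair relations \eqref{Eq-matrix-Lax-pair}--\eqref{Eq-derivative-M}, then invokes the unitarity constraint $|(\mathbf{T}_0^{[N]})_{ii}|=1$ of \eqref{Eq-entries-of-T0-modulus} together with an auxiliary algebraic identity \eqref{Eq-identity} (itself obtained by comparing the $(1,2)$ and $(2,1)$ entries of \eqref{Eq-QN-Q} at order $\lambda$) to massage the expanded derivative into the form \eqref{Eq-DT-infty}. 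Your approach is structural rather than computational: you recognise that the $(1,2)$-entry of \eqref{Eq-QN-Q}, divided by $2\mathrm{i}\lambda$, is a single identity in $\lambda$ featuring one $\lambda$-independent unknown $u^{[N]}$, so that its specialisations at $\lambda=0$ and $\lambda=\infty$ must agree.

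What each buys: your route is shorter and conceptually cleaner, and it makes transparent \emph{why} the two formulas must coincide; the paper's route is more laborious but yields as by-products the explicit identities \eqref{Eq-entries-of-T0-modulus} and \eqref{Eq-identity}, which the paper re-uses later (for instance in deriving the modulus formula \eqref{Eq-DT-modulus}). The one tacit assumption in your argument is that a $\lambda$-independent $\mathbf{Q}^{[N]}$ satisfying \eqref{Eq-QN-Q} actually exists---i.e., that the Darboux construction closes---so that the scalar identity really does hold with a single $u^{[N]}$. The paper imports this from \cite{DNLS-DT} via Proposition~\ref{thm:DT}, so within the paper's framework your argument is complete.
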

	\begin{proof}
		Making use of the Lax pair \eqref{Eq-Lax pair}, we arrive at
		\begin{equation}\label{Eq-matrix-Lax-pair}
			\begin{split}
				\varphi^{(1)}_x&=-2\mathrm{i}\mathbf{\Lambda}^2\varphi^{(1)}+2\mathrm{i}\mathbf{\Lambda} u\varphi^{(2)},\\
				\varphi^{(2)}_x&=2\mathrm{i}\mathbf{\Lambda} u^*\varphi^{(1)}+2\mathrm{i}\mathbf{\Lambda}^2\varphi^{(2)}.
			\end{split}
		\end{equation}
		Moreover, combining \eqref{Eq-Lax pair} and  \eqref{Eq-BT0-2} gives rise to 
		\begin{equation}\label{Eq-derivative-M}
			\mathbf{M}_x=4\mathrm{i}\mathbf{\Lambda}\left(\mathbf{\Lambda}\varphi^{(1)}\big(\varphi^{(1)}\big)^\dagger-u\varphi^{(2)}\big(\varphi^{(1)}\big)^\dagger-\varphi^{(2)}\big(\varphi^{(2)}\big)^\dagger\mathbf{\Lambda}^\dagger\right)\mathbf{\Lambda}^\dagger.
		\end{equation}
		The symmetries \eqref{Eq-T-symmetry} and expansion \eqref{Eq-T-expansion-infty} imply that $\mathbf{T}_0^{[N]}$ is unitary and diagonal. Consequently, it is satisfied that
		\begin{equation}\label{Eq-entries-of-T0-modulus}
			|1+2\varphi^{(1)\dagger}\mathbf{\Lambda}^\dagger\mathbf{M}^{-1}\varphi^{(1)}|=|1-2\varphi^{(2)\dagger}\mathbf{\Lambda}^\dagger\left(\mathbf{M}^{\dagger}\right)^{-1}\varphi^{(2)}|=1.
		\end{equation}
		Furthermore, plugging  \eqref{Eq-T-expansion-infty}-\eqref{Eq-T0-T1-entries} into \eqref{Eq-QN-Q} and making utilize of \eqref{Eq-Lax pair} gives rise to
		\begin{equation}\label{Eq-matrix-BT-infty}
			\begin{split}
				\big(1+2\varphi^{(1)\dagger}\mathbf{\Lambda}^\dagger\mathbf{M}^{-1}\varphi^{(1)}\big)(u^{[N]}_\infty)^*&=4	\varphi^{(1)\dagger}\big(\mathbf{\Lambda}^\dagger\big)^2\big(\mathbf{M}^{\dagger}\big)^{-1}\varphi^{(2)}+\big(1-2\varphi^{(2)\dagger}\mathbf{\Lambda}^\dagger\big(\mathbf{M}^{\dagger}\big)^{-1}\varphi^{(2)}\big)u^*,\\
				\big(1-2\varphi^{(2)\dagger}\mathbf{\Lambda}^\dagger\big(\mathbf{M}^\dagger\big)^{-1}\varphi^{(2)}\big)u^{[N]}_\infty&=4	\varphi^{(2)\dagger}\big(\mathbf{\Lambda}^\dagger\big)^2\mathbf{M}^{-1}\varphi^{(1)}+\big(1+2\varphi^{(1)\dagger}\mathbf{\Lambda}^\dagger\mathbf{M}^{-1}\varphi^{(1)}\big)u.\\
			\end{split}
		\end{equation}
		Taking the complex conjugation of the second identity in \eqref{Eq-matrix-BT-infty} yields
		\begin{equation}\label{Eq-matrix-BT-infty0-2}
			\big(1-2\varphi^{(2)\dagger}\mathbf{M}^{-1}\mathbf{\Lambda}\varphi^{(2)}\big)(u^{[N]}_\infty)^*=4	\varphi^{(1)\dagger}\big(\mathbf{M}^{\dagger}\big)^{-1}\mathbf{\Lambda}^2\varphi^{(2)}+\big(1+2\varphi^{(1)\dagger}\big(\mathbf{M}^{\dagger}\big)^{-1}\mathbf{\Lambda}\varphi^{(1)}\big)u^*.
		\end{equation}
		Using \eqref{Eq-entries-of-T0-modulus}, we multiply the first identity in \eqref{Eq-matrix-BT-infty} by $1+2\varphi^{(1)\dagger}\big(\mathbf{M}^{\dagger}\big)^{-1}\mathbf{\Lambda}\varphi^{(1)}$ and \eqref{Eq-matrix-BT-infty0-2} by $1-2\varphi^{(2)\dagger}\mathbf{\Lambda}^\dagger\big(\mathbf{M}^{\dagger}\big)^{-1}\varphi^{(2)}$ respectively, then subtract the resulting identities and take the complex conjugation to obtain 
		\begin{equation}\label{Eq-identity}
			\varphi^{(2)\dagger}\mathbf{M}^{-1}\mathbf{\Lambda}^2\varphi^{(1)}\big(1+2\varphi^{(1)\dagger}\mathbf{\Lambda}^\dagger\mathbf{M}^{-1}\varphi^{(1)}\big)=\varphi^{(2)\dagger}(\mathbf{\Lambda}^{\dagger})^2\mathbf{M}^{-1}\varphi^{(1)}\big(1-2\varphi^{(2)\dagger}\mathbf{M}^{-1}\mathbf{\Lambda}\varphi^{(2)}\big).
		\end{equation}
		Combining \eqref{Eq-BT0-1}, \eqref{Eq-matrix-Lax-pair}-\eqref{Eq-entries-of-T0-modulus} and \eqref{Eq-identity},  we obtain
		\begin{equation}\label{Eq-BT0-simplication}
			\begin{split}
				&\quad u^{[N]}_0\\
				&=u+\mathrm{i}\varphi^{(2)\dagger}_x\mathbf{M}^{-1}\varphi^{(1)}-\mathrm{i}\varphi^{(2)\dagger}\mathbf{M}^{-1}\mathbf{M}_x\mathbf{M}^{-1}\varphi^{(1)}+\mathrm{i}\varphi^{(2)\dagger}\mathbf{M}^{-1}\varphi^{(1)}_x\\
				&= \big(1+2\varphi^{(1)\dagger}\mathbf{\Lambda}^\dagger\mathbf{M}^{-1}\varphi^{(1)}\big)\big(1-2\varphi^{(2)\dagger}\mathbf{M}^{-1}\mathbf{\Lambda}\varphi^{(2)}\big)u\\
				&\quad+2\varphi^{(2)\dagger}\big(\mathbf{\Lambda^{\dagger}}\big)^2\mathbf{M}^{-1}\varphi^{(1)}\big(1-2\varphi^{(2)\dagger}\mathbf{M}^{-1}\mathbf{\Lambda}\varphi^{(2)}\big)\\
				&\quad+2\varphi^{(2)\dagger}\mathbf{M}^{-1}\mathbf{\Lambda}^2\varphi^{(1)}\big(1+2\varphi^{(1)\dagger}\mathbf{\Lambda}^\dagger\mathbf{M}^{-1}\varphi^{(1)}\big)\\
				&\overset{\mathclap{\eqref{Eq-entries-of-T0-modulus}}}{=}\frac{1+2\varphi^{(1)\dagger}\mathbf{\Lambda}^\dagger\mathbf{M}^{-1}\varphi^{(1)}}{1-2\varphi^{(2)\dagger}\mathbf{\Lambda}^\dagger\left(\mathbf{M}^\dagger\right)^{-1}\varphi^{(2)}}u+\frac{2\varphi^{(2)\dagger}\big(\mathbf{\Lambda^{\dagger}}\big)^2\mathbf{M}^{-1}\varphi^{(1)}}{1-2\varphi^{(2)\dagger}\mathbf{\Lambda}^\dagger\left(\mathbf{M}^\dagger\right)^{-1}\varphi^{(2)}}+2\varphi^{(2)\dagger}\mathbf{M}^{-1}\mathbf{\Lambda}^2\varphi^{(1)}\big(1+2\varphi^{(1)\dagger}\mathbf{\Lambda}^\dagger\mathbf{M}^{-1}\varphi^{(1)}\big)\\
				&\overset{\mathclap{\eqref{Eq-identity}}}{=}\frac{1+2\varphi^{(1)\dagger}\mathbf{\Lambda}^\dagger\mathbf{M}^{-1}\varphi^{(1)}}{1-2\varphi^{(2)\dagger}\mathbf{\Lambda}^\dagger\left(\mathbf{M}^\dagger\right)^{-1}\varphi^{(2)}}u+\frac{2\varphi^{(2)\dagger}\big(\mathbf{\Lambda^{\dagger}}\big)^2\mathbf{M}^{-1}\varphi^{(1)}}{1-2\varphi^{(2)\dagger}\mathbf{\Lambda}^\dagger\left(\mathbf{M}^\dagger\right)^{-1}\varphi^{(2)}}+2\varphi^{(2)\dagger}(\mathbf{\Lambda}^{\dagger})^2\mathbf{M}^{-1}\varphi^{(1)}\big(1-2\varphi^{(2)\dagger}\mathbf{M}^{-1}\mathbf{\Lambda}\varphi^{(2)}\big)\\
				&\overset{\mathclap{\eqref{Eq-entries-of-T0-modulus}}}{=}\frac{1+2\varphi^{(1)\dagger}\mathbf{\Lambda}^\dagger\mathbf{M}^{-1}\varphi^{(1)}}{1-2\varphi^{(2)\dagger}\mathbf{\Lambda}^\dagger\left(\mathbf{M}^\dagger\right)^{-1}\varphi^{(2)}}u+\frac{4\varphi^{(2)\dagger}\big(\mathbf{\Lambda^{\dagger}}\big)^2\mathbf{M}^{-1}\varphi^{(1)}}{1-2\varphi^{(2)\dagger}\mathbf{\Lambda}^\dagger\left(\mathbf{M}^\dagger\right)^{-1}\varphi^{(2)}}=u^{[N]}_\infty.
			\end{split}
		\end{equation}
		Henceforth, we complete the proof.
	\end{proof}
	Combining with \eqref{Eq-DT-infty} and \eqref{Eq-entries-of-T0-modulus}, the modulus of the solution $u^{[N]}$
	\eqref{Eq-BT0-1} can be evaluated as
	\begin{align}\label{Eq-DT-modulus}
		|u^{[N]}|=\left|u+\frac{4\varphi^{(2)\dagger}\big(\mathbf{\Lambda^{\dagger}}\big)^2\mathbf{M}^{-1}\varphi^{(1)}}{1+2\varphi^{(1)\dagger}\mathbf{\Lambda}^\dagger\mathbf{M}^{-1}\varphi^{(1)}}\right|,
	\end{align}
	where we consistently denote both $u^{[N]}_0$ and $u^{[N]}_\infty$ by $u^{[N]}$ by virtue of the equivalence between $BT_0$ and $BT_\infty$. Throughout this work, $u^{[N]}$ represents any solution generated through the $N$-fold Darboux transformation, while the notation $u_N$ is reserved specifically for solutions arising from the elliptic function solutions under the $N$-fold Darboux-B\"acklund transformation.

	\section{The elliptic function solutions to the DNLS equation and its Lax pair}
In this section,
we derive elliptic function solutions to the DNLS equation \eqref{Eq-DNLS-equation}, 
then construct solutions to the associated Lax pair \eqref{Eq-Lax pair}. 
Unlike existing solutions \cite{Ling-NLS,LS-mKdV-solution}, 
these elliptic function solutions are explicitly expressed via Weierstrass sigma functions rather than Jacobi theta functions. 
Compared with the elliptic function solutions in \cite{effective-integrationeffective-integration}, 
we further observe that distinct types of squared moduli can be written uniformly if complex elliptic moduli are permitted. 
Moreover, we verify that the elliptic function solutions admit representations in terms of  Weierstrass functions of four parameters. 
This extends previous results \cite{Ling-NLS,LS-mKdV-solution}, 
where elliptic function solutions to AKNS equations were expressed via Jacobi elliptic functions of three parameters $\alpha,k$ and $l$. 
We verify that these four parameters are generically independent.
	
	\subsection{The squared moduli of the elliptic function solutions to the DNLS equation}
	We begin with the stationary zero-curvature equations
	\begin{align}\label{Eq-stationary-zero-curvature-equation}
		\textbf{L}_x(x,t;\lambda)=[\textbf{U}(\textbf{Q};\lambda),\textbf{L}(x,t;\lambda)],\quad     \textbf{L}_t(x,t;\lambda)=[\textbf{V}(\textbf{Q};\lambda),\textbf{L}(x,t;\lambda)].
	\end{align}
	The compatibility condition $\textbf{L}_{xt}(x,t;\lambda)=\textbf{L}_{tx}(x,t;\lambda)$ for \eqref{Eq-stationary-zero-curvature-equation}, enforced by the Jacobi identity, yields the DNLS equation \eqref{Eq-DNLS-equation}. Dependent variables are omitted where unambiguous in the following. Previous literature \cite{Ling-NLS} indicates that computing genus-one solutions for the NLS equation through coefficient comparison of powers of $\lambda$ entails tedious computations. To streamline this procedure, we observe that taking \(\mathbf{L}\) as a linear combination of the Lax pair matrices \(\mathbf{U}\) and \(\mathbf{V}\) suffices to obtain genus-one solutions for \eqref{Eq-DNLS-equation}. Specifically, we employ the ansatz
	\begin{align}\label{Eq-L-ansatz}
		\textbf{L}=-\frac{\mathrm{i}}{8}\textbf{V}+\frac{\mathrm{i}s_1}{4}\textbf{U}+\alpha_4\sigma_3,
	\end{align}
	with real parameters $s_1,\alpha_4\in\mathbb{R}$. Following the Lax pair  \eqref{Eq-Lax pair} and ansatz \eqref{Eq-L-ansatz}, the matrix $\mathbf{L}$ admits the equivalent representation
	\begin{equation}\label{Eq-Lax-matrix}
		\mathbf{L} = \begin{pmatrix}
			-\mathrm{i}f & g \\
			h &  \mathrm{i}f \\
		\end{pmatrix},
	\end{equation}
	whose components are specified by
	\begin{equation}\label{Eq-fgh}
		\begin{split}
			f &= -\mathrm{i}\lambda^4 + \frac{\mathrm{i}(s_1+\nu)}{2}\lambda^2 + \mathrm{i}\alpha_4, \\    
			g &= u\lambda^3 + \tfrac{\mathrm{i}}{4}u_x\lambda - \frac{s_1+\nu}{2}u\lambda=u\lambda(\lambda^2-\mu), \\
			h &= u^*\lambda^3 - \tfrac{\mathrm{i}}{4}u_x^*\lambda - \frac{s_1+\nu}{2}u^*\lambda=u^*\lambda(\lambda^2-\mu^*).
		\end{split}
	\end{equation}
	In these expressions, $\nu$
	defines the squared modulus of $u$ while the auxiliary spectral parameter is given by
	\begin{equation}\label{Eq-mu}
		\mu=-\frac{\mathrm{i}}{4}\frac{u_x}{u}+\frac{s_1+\nu}{2}.
	\end{equation}
	Introducing the polynomial $P(\lambda):=\det(\mathbf{L})$, 
	we reformulate it through \eqref{Eq-Lax-matrix}-\eqref{Eq-fgh} as
	\begin{equation}\label{Eq-P-ansatz}
		\begin{split}
			P(\lambda)&=-\left(\lambda^4-\frac{s_1+\nu}{2}\lambda^2-\alpha_4\right)^2-\lambda^2\nu(\lambda^2-\mu)(\lambda^2-\mu^*)\\
			&:=-\lambda^8+s_1 \lambda^6-s_2 \lambda^4+s_3 \lambda^2-s_4.
		\end{split}
	\end{equation}
	Using the fundamental theory of linear algebra and the spatial part of the stationary zero curvature equations \eqref{Eq-stationary-zero-curvature-equation}, we deduce that
	\begin{equation}
		\left( \det(\mathbf{L})\right)_x= \det(\mathbf{L}) \cdot \mathrm{tr}\left( \mathbf{L}^{-1}\mathbf{L}_x \right) = \det(\mathbf{L}) \cdot \mathrm{tr}\left( \mathbf{L}^{-1}(\mathbf{U}\mathbf{L} - \mathbf{L}\mathbf{U}) \right) = \det(\mathbf{L}) \cdot \mathrm{tr}\left( \mathbf{L}^{-1}\mathbf{U}\mathbf{L} - \mathbf{U} \right) = 0.
	\end{equation}
	Similarly we also have $	\left( \det(\mathbf{L})\right)_t=0$. Henceforth, the polynomial $P(\lambda)$ is independent of $x$ and $t$. 
	The following proposition reveals that $\nu$ is a travelling wave with respect to the variable $\xi=x+2s_1t$.
	\begin{prop}\label{Prop-ue}
		The squared modulus $\nu$ of the elliptic function solutions to DNLS equation \eqref{Eq-DNLS-equation} satisfies
		\begin{align}\label{Eq-nu-travelling-2}
			\nu_t = 2s_1\nu_x.
		\end{align}
	\end{prop}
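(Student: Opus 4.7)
The plan is to show that $\nu$ satisfies the transport equation $\nu_t = 2 s_1 \nu_x$ by extracting a simple linear first-order PDE for the complex field $u$ itself from the spatial part of the stationary zero-curvature equation $\mathbf{L}_x = [\mathbf{U},\mathbf{L}]$, then verifying that the extra imaginary phase term cancels when one passes to $\nu = u u^*$. No information from the $t$-part $\mathbf{L}_t = [\mathbf{V},\mathbf{L}]$ should be needed.

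First, I substitute the ansatz \eqref{Eq-L-ansatz} into $\mathbf{L}_x = [\mathbf{U},\mathbf{L}]$. Because $[\mathbf{U},\mathbf{U}]=0$, only the $\mathbf{V}$-term and the $\alpha_4 \sigma_3$-term produce commutators, so
\begin{equation*}
-\tfrac{\mathrm{i}}{8}\mathbf{V}_x + \tfrac{\mathrm{i} s_1}{4}\mathbf{U}_x \;=\; -\tfrac{\mathrm{i}}{8}[\mathbf{U},\mathbf{V}] + \alpha_4 [\mathbf{U},\sigma_3].
\end{equation*}
Next, I use the DNLS compatibility \eqref{Eq-compatibility-condition} to replace $\mathbf{V}_x$ by $\mathbf{U}_t + [\mathbf{U},\mathbf{V}]$; the commutators $[\mathbf{U},\mathbf{V}]$ then cancel exactly (this is why the coefficient $-\mathrm{i}/8$ was chosen), and after clearing factors one is left with the clean matrix identity
\begin{equation*}
\mathbf{U}_t - 2 s_1 \mathbf{U}_x \;=\; 8 \mathrm{i} \alpha_4 [\mathbf{U},\sigma_3].
\end{equation*}

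Second, I reduce this to a scalar equation for $u$. Since $-2\mathrm{i}\sigma_3 \lambda^2$ is constant in $(x,t)$ and commutes with $\sigma_3$, both sides of the above identity are off-diagonal and proportional to $\lambda$. A short computation gives $[\mathbf{U},\sigma_3] = 2\lambda[\mathbf{Q},\sigma_3]$ with $[\mathbf{Q},\sigma_3]$ off-diagonal, and reading off the $(1,2)$-entry yields
\begin{equation*}
u_t - 2 s_1 u_x \;=\; -16 \mathrm{i}\, \alpha_4\, u,
\end{equation*}
the $(2,1)$-entry giving the complex conjugate relation.

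Finally, I compute $\nu_t = u_t u^* + u u_t^*$ by substituting the scalar identity and its conjugate:
\begin{equation*}
\nu_t \;=\; (2 s_1 u_x - 16\mathrm{i}\alpha_4 u)\, u^* + u\,(2 s_1 u_x^* + 16\mathrm{i}\alpha_4 u^*) \;=\; 2 s_1 (u_x u^* + u u_x^*) \;=\; 2 s_1 \nu_x,
\end{equation*}
since the two purely imaginary contributions $\mp 16\mathrm{i}\alpha_4 |u|^2$ cancel. The only subtle point in the whole argument is the algebraic matching in the first step: the specific coefficients $-\mathrm{i}/8$ and $\mathrm{i} s_1 /4$ in the ansatz are precisely what make $[\mathbf{U},\mathbf{V}]$ drop out via the compatibility condition, leaving only a residual phase-rotation term proportional to $\alpha_4$, which is exactly the kind of term that disappears upon taking the modulus. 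Everything else is bookkeeping.
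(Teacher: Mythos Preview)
Your proof is correct and follows essentially the same route as the paper's own proof: substitute the ansatz \eqref{Eq-L-ansatz} into the spatial stationary zero-curvature equation, use the compatibility condition \eqref{Eq-compatibility-condition} to cancel $[\mathbf{U},\mathbf{V}]$, read off the scalar equation $u_t - 2s_1 u_x + 16\mathrm{i}\alpha_4 u = 0$ from the off-diagonal part, and then pass to $\nu = |u|^2$ so that the $\alpha_4$ phase terms cancel. The only cosmetic difference is that the paper writes the residual term as $32\lambda\alpha_4\sigma_3\mathbf{Q}$ directly rather than via $[\mathbf{U},\sigma_3]$, but the computation and logic are identical.
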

	\begin{proof}
		We substitute \eqref{Eq-L-ansatz} into the spatial part of the stationary zero curvature equation \eqref{Eq-stationary-zero-curvature-equation} and obtain
		\begin{equation}
			-\frac{\mathrm{i}}{8}\mathbf{V}_x+\frac{\mathrm{i}s_1}{4}\mathbf{U}_x=-\frac{\mathrm{i}}{8}[\mathbf{U},\mathbf{V}]+[\mathbf{U},\alpha_4\sigma_3].
		\end{equation}
		Using \eqref{Eq-compatibility-condition}, the above equation can be converted to 
		\begin{align}\label{Eq-prop1-1}
			-\mathrm{i}\mathbf{U}_t+2\mathrm{i}s_1\mathbf{U}_x+32\lambda\alpha_4\sigma_3\mathbf{Q}=0.
		\end{align}
		From \eqref{Eq-Lax pair} we derive
		\begin{align}\label{Eq-U-diag-off}
			\mathbf{U}^{\mathrm{diag}}=-2\mathrm{i}\sigma_3\lambda^2,\quad \mathbf{U}^{\mathrm{off}}=2\mathbf{Q}\lambda.
		\end{align}
		Substituting \eqref{Eq-U-diag-off} into \eqref{Eq-prop1-1}, the off-diagonal part gives rise to
		\begin{align}\label{Eq-u-travelling}
			\textbf{Q}_t = -16\mathrm{i}\alpha_4\sigma_3\textbf{Q} + 2s_1\textbf{Q}_x.
		\end{align}
		Rewrite the above matrix equation in its entries, we arrive at 
		\begin{align}\label{Eq-u-travelling-2}
			u_t-2s_1 u_x+16\mathrm{i}\alpha_4 u=0.
		\end{align}
		Multiplying the above equation with $u^*$ and summing the resulting equation with its complex conjugation gives rise to \eqref{Eq-nu-travelling-2}. Henceforth we complete the proof.
	\end{proof}
	To establish elliptic function solutions for the DNLS equation, we examine the elliptic equation governing the squared moduli. Based on this, 
	we could express the squared moduli via elliptic functions.

	\begin{prop}
		The squared modulus $\nu$ satisfies the differential equation
		\begin{equation}\label{Eq-nu-x}
			\nu_\xi = \sqrt{-R(\nu)},
		\end{equation}
		where $R(\nu)$ is the quartic polynomial
		\begin{equation}\label{Eq-R}
			\begin{split}
				R(\nu)
				&= \nu^4 + 4s_1\nu^3 + (6s_1^2 - 8s_2 + 48\alpha_4)\nu^2\\
				&\quad - (-4s_1^3+16s_1s_2 - 64s_3 - 32s_1\alpha_4)\nu+\left( -s_1^2+4s_2+ 8\alpha_4 \right)^2.
			\end{split}
		\end{equation}
	\end{prop}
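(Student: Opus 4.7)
The plan is to exploit the fact established above that $P(\lambda)=\det(\mathbf{L})$ is independent of $x$ and $t$, so the two expressions in \eqref{Eq-P-ansatz} are polynomial identities in $\lambda$ with constant coefficients $s_1,s_2,s_3,s_4$. Matching the coefficients of $\lambda^{4}$, $\lambda^{2}$ and $\lambda^{0}$ in
\begin{equation*}
\left(\lambda^{4}-\tfrac{s_1+\nu}{2}\lambda^{2}-\alpha_4\right)^{\!2}+\lambda^{2}\nu(\lambda^{2}-\mu)(\lambda^{2}-\mu^{*}) \;=\; \lambda^{8}-s_1\lambda^{6}+s_2\lambda^{4}-s_3\lambda^{2}+s_4
\end{equation*}
yields the three scalar relations
\begin{equation*}
\nu(\mu+\mu^{*})=\tfrac{(s_1+\nu)^{2}}{4}-2\alpha_4-s_2,\qquad \nu|\mu|^{2}=-(s_1+\nu)\alpha_4-s_3,\qquad s_4=\alpha_4^{2}.
\end{equation*}

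Next I would translate these into pointwise identities for $u$. Using the definition $\mu=-\tfrac{\mathrm{i}u_x}{4u}+\tfrac{s_1+\nu}{2}$ together with $\nu=|u|^{2}$, a direct calculation gives
\begin{equation*}
\mu+\mu^{*}=(s_1+\nu)+\frac{\mathrm{i}J}{4\nu},\qquad |\mu|^{2}=\frac{|u_x|^{2}}{16\nu}+\frac{(s_1+\nu)\mathrm{i}J}{8\nu}+\frac{(s_1+\nu)^{2}}{4},
\end{equation*}
where $J:=uu_x^{*}-u^{*}u_x$ is purely imaginary. Substituting the first formula into the $\lambda^{4}$ relation expresses $\mathrm{i}J$ as an explicit quadratic polynomial in $\nu$, and plugging that back together with the $\lambda^{2}$ relation into the second formula expresses $|u_x|^{2}$ as a polynomial in $\nu$ whose coefficients involve only $s_1,s_2,s_3,\alpha_4$.

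The final step combines the elementary identity
\begin{equation*}
\nu_x^{2}=(u_xu^{*}+uu_x^{*})^{2}=4\nu|u_x|^{2}+(uu_x^{*}-u^{*}u_x)^{2}=4\nu|u_x|^{2}+J^{2}
\end{equation*}
with the explicit polynomial expressions for $|u_x|^{2}$ and $J$. Since $J$ is purely imaginary, $J^{2}=-(\mathrm{i}J)^{2}$, and the right-hand side becomes a polynomial in $\nu$ of degree four. By Proposition \ref{Prop-ue} the function $\nu$ depends on $(x,t)$ only through the travelling variable $\xi=x+2s_1t$, so $\nu_{\xi}=\nu_x$; a (local) square root then produces \eqref{Eq-nu-x} as soon as the right-hand side is identified with $-R(\nu)$.

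The only real obstacle is the bookkeeping in that last identification. Collecting coefficients in $4\nu|u_x|^{2}+J^{2}$ and comparing with $-R(\nu)$ requires checking five powers of $\nu$; the $\nu^{0}$ coefficient $-(-s_1^{2}+4s_2+8\alpha_4)^{2}$ comes entirely from $(\mathrm{i}J)^{2}$ evaluated at $\nu=0$ and is an immediate sanity check, and the $\nu^{1}$ coefficient is the unique carrier of $s_3$, entering via the $-64\nu s_3$ contribution of $4\nu|u_x|^{2}$. The intermediate $\nu^{2},\nu^{3},\nu^{4}$ coefficients follow after a routine if lengthy expansion of $(\mathrm{i}J)^{2}$, $4\nu(s_1+\nu)\mathrm{i}J$ and $\nu^{2}(s_1+\nu)^{2}$, and reproduce exactly the quartic $R(\nu)$ stated in \eqref{Eq-R}.
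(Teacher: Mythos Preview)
Your argument is correct and runs parallel to the paper's, with the same starting point (matching the $\lambda^{4}$ and $\lambda^{2}$ coefficients of \eqref{Eq-P-ansatz} to express $\mu+\mu^{*}$ and $|\mu|^{2}$ as rational functions of $\nu$) and the same endpoint. The only difference is in the final combination step. The paper bypasses your quantities $J$ and $|u_x|^{2}$ entirely by observing directly from \eqref{Eq-mu} that
\[
\mu-\mu^{*}=-\frac{\mathrm{i}}{4}\,\frac{\nu_{\xi}}{\nu},
\]
and then reading off $\mu-\mu^{*}$ from the quadratic $x^{2}-(\mu+\mu^{*})x+|\mu|^{2}=0$ as $\pm\sqrt{(\mu+\mu^{*})^{2}-4|\mu|^{2}}$, which is exactly $-\tfrac{\mathrm{i}}{4\nu}\sqrt{-R(\nu)}$. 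This is the same algebra as your identity $\nu_{x}^{2}=4\nu|u_{x}|^{2}+J^{2}$ rewritten, since $\mu-\mu^{*}=-\tfrac{\mathrm{i}}{4}\nu_{x}/\nu$ and $(\mu+\mu^{*})^{2}-4|\mu|^{2}=(\mu-\mu^{*})^{2}$; but packaging it through $\mu-\mu^{*}$ avoids isolating $|u_{x}|^{2}$ as an intermediate polynomial and makes the identification of $R(\nu)$ with the discriminant immediate, so the coefficient bookkeeping you flag as the ``only real obstacle'' becomes a single-line expansion of $\bigl((s_1+\nu)^{2}-4s_2-8\alpha_4\bigr)^{2}+64\nu\bigl(s_3+(s_1+\nu)\alpha_4\bigr)$. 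Your route is equally valid, just slightly more hands-on.
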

	\begin{proof}
		It follows from  \eqref{Eq-P-ansatz} that
		\begin{equation}\label{Eq-si}
			s_2=\left(\frac{s_1+\nu}{2}\right)^2-2\alpha_4-\nu\left(\mu+\mu^*\right),\quad
			s_3=-(s_1+\nu) \alpha_4-\nu |\mu|^2 ,
		\end{equation}
		yielding
		\begin{align}\label{Eq-mu-plus-mubar }
			\mu+\mu^*=-\frac{s_2-(\frac{s_1+\nu}{2})^2+2\alpha_4}{\nu},\quad |\mu|^2=\frac{s_3+(s_1+\nu)\alpha_4}{-\nu}.
		\end{align}
		From \eqref{Eq-mu-plus-mubar }, $\mu$ and $\mu^*$ are roots of a quadratic equation. Thus, without loss of generality,
		\begin{align}\label{Eq-mu-nu}
			\mu= -\frac{1}{8\nu}\big(4s_2+8\alpha_4-(\nu+s_1)^2+\mathrm{i}\sqrt{-R(\nu)}\big).
		\end{align}
		From \eqref{Eq-mu}, we obtain
		\begin{equation}\label{Eq-Im-mu}
			\mu - \mu^* = -\frac{\mathrm{i}}{4}\frac{\nu_\xi}{\nu}.
		\end{equation}
		Combining \eqref{Eq-mu-nu} with \eqref{Eq-Im-mu} yields \eqref{Eq-nu-x}, completing the proof.
	\end{proof}

We assume that 
the roots of the polynomials $P(\lambda)$ and $R(\nu)$ are $\pm\lambda^{(i)}$ and $\nu_i$ for $i=1,2,3,4$ respectively. 
The following proposition connects the roots of $P(\lambda)$ with those of $R(\nu)$. 
Using these connections, we classify the squared moduli into three types.
	\begin{prop}\label{Prop-connection-P-R}
		The roots of the polynomials $R(\nu)$ and $P(\lambda)$ satisfy the relations
		\begin{equation}\label{Eq-connection-P-R+}
			\begin{aligned}
				\nu_1 &= -\left(\lambda^{(1)} + \lambda^{(2)} + \lambda^{(3)} - \lambda^{(4)}\right)^2, \\ 
				\nu_2 &= -\left(\lambda^{(1)} + \lambda^{(2)} - \lambda^{(3)} + \lambda^{(4)}\right)^2, \\
				\nu_3 &= -\left(\lambda^{(1)} - \lambda^{(2)} + \lambda^{(3)} + \lambda^{(4)}\right)^2, \\ 
				\nu_4 &= -\left(-\lambda^{(1)} + \lambda^{(2)} + \lambda^{(3)} + \lambda^{(4)}\right)^2,
			\end{aligned}
		\end{equation}
		for $\alpha_4 = \prod_{i=1}^4 \lambda^{(i)}$, and
		\begin{equation}\label{Eq-connection-P-R-}
			\begin{aligned}
				\nu_1 &= -\left(\lambda^{(1)} + \lambda^{(2)} + \lambda^{(3)} + \lambda^{(4)}\right)^2, \\
				\nu_2 &= -\left(\lambda^{(1)} + \lambda^{(2)} - \lambda^{(3)} - \lambda^{(4)}\right)^2, \\
				\nu_3 &= -\left(\lambda^{(1)} - \lambda^{(2)} + \lambda^{(3)} - \lambda^{(4)}\right)^2, \\
				\nu_4 &= -\left(\lambda^{(1)} - \lambda^{(2)} - \lambda^{(3)} + \lambda^{(4)}\right)^2,
			\end{aligned}
		\end{equation}
		for $\alpha_4 = -\prod_{i=1}^4 \lambda^{(i)}$.
	\end{prop}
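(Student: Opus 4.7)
The plan is to combine the polynomial identity $-P(\lambda) = Q(\lambda^2)^2 + \lambda^2\nu(\lambda^2-\mu)(\lambda^2-\mu^*)$ from \eqref{Eq-P-ansatz}--\eqref{Eq-fgh} (where $Q(\rho) := \rho^2 - \tfrac{s_1+\nu}{2}\rho - \alpha_4$) with the explicit formula \eqref{Eq-Im-mu} for $\mu - \mu^*$. Squaring \eqref{Eq-Im-mu} and invoking \eqref{Eq-nu-x} gives the clean identity $R(\nu) = 16\nu^2(\mu - \mu^*)^2$, so (apart from the exceptional possibility $\nu=0$, which one checks is generically not a root of $R$ since $R(0)=(-s_1^2+4s_2+8\alpha_4)^2\neq 0$) the roots of $R$ coincide exactly with those values $\nu_*$ at which the associated $\mu$ is real.

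Abbreviate $L_i := \lambda^{(i)}$. At such a critical value $\mu \in \mathbb{R}$, the identity above collapses to $-P(\lambda) = Q(\lambda^2)^2 + \lambda^2\nu(\lambda^2 - \mu)^2$. Setting $S = \sqrt{-\nu}$ turns the right-hand side into a difference of squares, so it factorizes as
\[
-P(\lambda) = \bigl[Q(\lambda^2) + S\lambda(\lambda^2-\mu)\bigr]\bigl[Q(\lambda^2) - S\lambda(\lambda^2-\mu)\bigr],
\]
with the two quartic factors exchanged by $\lambda \mapsto -\lambda$, since $Q(\lambda^2)$ is even in $\lambda$ and $S\lambda(\lambda^2-\mu)$ is odd. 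On the other hand $-P(\lambda) = \prod_{i=1}^4(\lambda - L_i)(\lambda + L_i)$, and any factorization of this fixed polynomial into two quartics that are $\lambda\mapsto -\lambda$ conjugates is forced to arise from partitioning the eight roots $\{\pm L_i\}$ as $\{\epsilon_i L_i\}_{i=1}^4$ and $\{-\epsilon_i L_i\}_{i=1}^4$ for some sign pattern $\epsilon \in \{\pm 1\}^4$, with $\epsilon$ and $-\epsilon$ producing the same unordered partition.

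Matching coefficients of $\prod_i(\lambda + \epsilon_i L_i)$ against $Q(\lambda^2) + S\lambda(\lambda^2 - \mu)$ then extracts everything: the $\lambda^3$-coefficient gives $S = \sum_i \epsilon_i L_i$, hence
\[
\nu = -\Bigl(\textstyle\sum_i \epsilon_i L_i\Bigr)^{\!2},
\]
which is the desired formula; the $\lambda^2$-coefficient is automatic from $e_2(\epsilon L) = \tfrac12(S^2 - s_1)$ and Vieta; the $\lambda$-coefficient simply determines $\mu = -e_3(\epsilon L)/S$, which is manifestly real when all $L_i$ are real (and otherwise serves as the consistency check $\mu=\mu^*$); and the constant term yields $(\prod_i \epsilon_i)\prod_i L_i = -\alpha_4$. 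Hence the parity of the sign pattern determines the sign of $\alpha_4$: $\prod_i\epsilon_i = -1$ (odd number of $-$'s) forces $\alpha_4 = \prod_i L_i$ and produces exactly the four equivalence classes $(+,+,+,-), (+,+,-,+), (+,-,+,+), (-,+,+,+)$ modulo global negation, recovering \eqref{Eq-connection-P-R+}; and $\prod_i\epsilon_i = +1$ forces $\alpha_4 = -\prod_i L_i$ and gives \eqref{Eq-connection-P-R-}. Since $R$ has degree four, these four values exhaust its roots.

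The main obstacle I foresee is purely combinatorial bookkeeping: enumerating the eight $\epsilon$-classes modulo global negation, confirming that the constant-term parity test $\prod_i\epsilon_i = -\alpha_4/\prod_i L_i$ partitions them into two groups of four matching the two cases of the proposition, and cleanly invoking the auxiliary symmetric-function identities for $e_2$ and $e_3$. Beyond these routine manipulations there is no deep technical difficulty, as the argument reduces entirely to factoring a fixed polynomial and applying Vieta's formulas.
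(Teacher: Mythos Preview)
Your approach is correct and essentially identical to the paper's: both exploit the fact that at a root $\nu_i$ of $R$ the associated $\mu_i$ is real, so the identity \eqref{Eq-P-ansatz} collapses to a difference of squares and yields a factorization $-P(\lambda)=P_i(\lambda)P_i(-\lambda)$ into two quartics related by $\lambda\mapsto-\lambda$, after which Vieta's formulas on the $\lambda^3$ and $\lambda^0$ coefficients produce the claimed relations and the sign of $\alpha_4$. Your write-up is somewhat more explicit about the combinatorics of the sign patterns $\epsilon\in\{\pm1\}^4$ modulo global negation, but the underlying argument is the same.
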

	\begin{proof}
		Combining \eqref{Eq-P-ansatz} with \eqref{Eq-mu-nu}, we deduce that the polynomial $P(\lambda)$ can be factored as 
		\begin{equation}
			\begin{split}
				P(\lambda)=-\left(\lambda^4-\frac{s_1+\nu_i}{2}\lambda^2-\alpha_4\right)^2+(\mathrm{i}\sqrt{\nu_i})^2\lambda^2(\lambda^2-\mu_i)^2=-P_i(\lambda)P_i(-\lambda),\quad i=1,2,3,4,
			\end{split}
		\end{equation}
		where 
		\begin{equation}
			\begin{split}
				\mu_i= -\frac{1}{8\nu_i}\big(4s_2+8\alpha_4-(\nu_i+s_1)^2\big),\quad 
				P_i(\lambda)=\lambda^4+\mathrm{i}\sqrt{\nu_i}\lambda^3-\frac{s_1+\nu_i}{2}\lambda^2-\mathrm{i}\mu_i\sqrt{\nu_i}\lambda-\alpha_4.
			\end{split}
		\end{equation}
		When $\alpha_4=\prod_{i=1}^4\lambda^{(i)}$, using the Vieta's theorem, the coefficients of $\lambda^3$ and $\lambda^0$ in $P_i(\lambda)$ gives rise to \eqref{Eq-connection-P-R+}. Similarly we can derive \eqref{Eq-connection-P-R-}. Henceforth, we finish the proof.
	\end{proof}
	
	To ensure \eqref{Eq-nu-x} admits non-degenerate, bounded and non-negative solutions, certain constraints must be imposed on $\nu_i$ (and consequently on  $\lambda^{(i)}$). Based on Proposition \ref{Prop-connection-P-R}, we identify three admissible types:
	\begin{itemize}
		\item[-]Type 1: For  pairs $\lambda^{(1)} = -\lambda^{(2)*} = a - b\mathrm{i}$ and $\lambda^{(3)} = -\lambda^{(4)*} = -c - d\mathrm{i}$ with $a,b,c,d >0$, the relation \eqref{Eq-connection-P-R-} produces four distinct real roots of $R(\nu)$ \eqref{Eq-R} satisfying $\nu_1 >\nu_2 \geq  0 \geq \nu_3 >\nu_4$. For pairs $\lambda^{(1)} = \lambda^{(2)*} = a - b\mathrm{i}$ and $\lambda^{(3)} = -\lambda^{(4)*} = -c - d\mathrm{i}$ with $a,b,c,d>0$, the equation \eqref{Eq-connection-P-R+} generate the same $\nu_i,i=1,2,3,4$ after a proper rearrangement. 
		
		\item[-]Type 2: When all $\lambda^{(i)}$ are purely imaginary ($\lambda^{(i)} = \mathrm{i}\tau_i$) with ordered parameters $\tau_1 > \tau_2> \tau_3 > \tau_4 \geq 0$, both \eqref{Eq-connection-P-R+} and \eqref{Eq-connection-P-R-} generate four distinct real roots of $R(\nu)$ \eqref{Eq-R} with $\nu_1> \nu_2> \nu_3> \nu_4 \geq 0$.
		
		\item[-]Type 3: When $\lambda^{(1,2)}=\pm a+\mathrm{i}b,\lambda^{(3)}=\mathrm{i}c,\lambda^{(4)}=\mathrm{i}d$ with $a,b,c>0$ and $d\geq 0$, \eqref{Eq-connection-P-R+} provides $\nu_1>\nu_2\geq 0$ and $\nu_3=\nu_4^*$ when $c>d$ while \eqref{Eq-connection-P-R-} always provides $\nu_1>\nu_2\geq 0$ and $\nu_3=\nu_4^*$.
		
	\end{itemize}

	\subsection{The parameterization of the squared moduli}
	In the subsection, we aim to parameterize the three types of squared moduli  uniformly. As is revealed by \eqref{Eq-nu-x}, the squared moduli $\nu(\xi)$ oscillate in the intervals $[\nu_2,\nu_1]$ or $[\nu_4,\nu_3]$. Without loss of generality, we can set the value of $\nu(0)$ of the squared moduli oscillating within $[\nu_2,\nu_1]$ and $[\nu_4,\nu_3]$ as $\nu_1$ and $\nu_4$ respectively. We consider elliptic function solutions with squared modulus with $\nu(0)=\nu_1$ first.
	We introduce an elliptic curve
	\begin{align}\label{Eq-K1} \mathcal{K}_1:=\{(\Lambda_1,Y_1)|Y_1^2=R(\Lambda_1)\}.
	\end{align}
	According to \eqref{Eq-nu-x}, the point  $(\nu,\mathrm{i}\nu_\xi)$ lies on the curve $\mathcal{K}_{1}$. Based on the theory of elliptic curves, $\mathcal{K}_{1}$ can be associated with a normalized Weierstrass elliptic curve:
	\begin{align}\label{Eq-Kn}
		\mathcal{K}_{(n)}:=\{(\Lambda,Y)|Y^2=4\left(\Lambda-e_1\right)\left(\Lambda-e_2\right)\left(\Lambda-e_3\right)\},
	\end{align}
	via a birational mapping
	\begin{equation}\label{Eq-birational-mapping}
		\big(\Lambda_1,	Y_1\big)=\left(\nu_1 + \frac{1}{b\Lambda+c},\frac{bY}{(b\Lambda+c)^2}\right),
	\end{equation}
	where
	\begin{equation}\label{Eq-ei-and-vi}
		\begin{split}
			e_1&=-\frac{1}{12}(\nu_1+\nu_2)(\nu_3+\nu_4)+\frac{1}{6}(\nu_1\nu_2+\nu_3\nu_4),\\
			e_2&=-\frac{1}{12}(\nu_1+\nu_3)(\nu_2+\nu_4)+\frac{1}{6}(\nu_1\nu_3+\nu_2\nu_4),\\
			e_3&=-\frac{1}{12}(\nu_1+\nu_4)(\nu_2+\nu_3)+\frac{1}{6}(\nu_1\nu_4+\nu_2\nu_3),
		\end{split}
	\end{equation}
	and
	\begin{equation}\label{Eq-b-c}
		\begin{split}
			b&=-\frac{4}{(\nu_1-\nu_2)(\nu_1-\nu_3)(\nu_1-\nu_4)},\\
			c&=-\frac{1}{3}\left(\frac{1}{\nu_1-\nu_2}+\frac{1}{\nu_1-\nu_3}+\frac{1}{\nu_1-\nu_4}\right).
		\end{split}
	\end{equation}
	The curve $\mathcal{K}_{(n)}$ could be parameterized by $(\Lambda,Y)=\big(\wp(\xi),\wp'(\xi)\big)$, where $\wp(\xi)$ satisfies the normalized Weierstrass elliptic equation
	\begin{align}\label{Eq-Weierstrass-elliptic-equation}
		\big(\wp'(\xi)\big)^2=4\big(\wp(\xi)-e_1\big)\big(\wp(\xi)-e_2\big)\big(\wp(\xi)-e_3\big).
	\end{align}
	Recall that the point $(\nu,\mathrm{i}\nu_\xi)$ lies on the curve $\mathcal{K}_{1}$. Applying \eqref{Eq-birational-mapping} and \eqref{Eq-b-c} yields
	\begin{equation}\label{Eq-nu-parameterization}
		\nu(\xi)=\nu_1\frac{\wp(\xi)-\wp(\rho)}{\wp(\xi)-\wp(\kappa)},
	\end{equation}
	as a solution to \eqref{Eq-nu-x}, where $\kappa$ and $\rho$ satisfy
	\begin{equation}\label{Eq-kappa-rho}
		\begin{split}
			\wp(\kappa)&=-\frac{1}{12}\sum_{i=2}^4\prod_{j=2,j\neq i}^4(\nu_1-\nu_j),\\
			\wp(\rho)	&=-\frac{\sum_{i=2}^4\nu_i\prod_{j=2,j\neq i}^4(\nu_1-\nu_j)}{12\nu_1}. 
		\end{split}
	\end{equation}
	We adapt the settings of the Weierstrass elliptic functions in the Appendix A. Then the parameters $\kappa$ and $\rho$ can be selected within the following periodic parallelogram
	\begin{align}\label{Eq-S}
		S_0:=\left\{w\in\mathbb{C}\Bigg|-\omega_1< \Re\big(w\big)\leq \omega_1,\Im(\omega_3)\leq\Im\big(w\big)< -\Im(\omega_3) \right\},
	\end{align}
	without loss of generality.
	There exists two values of $\kappa$ and $\rho$ satisfying \eqref{Eq-kappa-rho} in $S_0$ respectively.  
	Taking $\xi=\kappa$ into \eqref{Eq-nu-x} and using \eqref{Eq-nu-parameterization}, we infer that 
	\begin{equation}
		\nu_1^2=-\frac{\big(\wp'(\kappa)\big)^2}{\big(\wp(\rho)-\wp(\kappa)\big)^2}.
	\end{equation}
	A comparison between \eqref{Eq-birational-mapping} and \eqref{Eq-nu-parameterization} yields
	\begin{equation}\label{Eq-rho-minus-kappa}
		\wp(\rho)=\wp(\kappa)-\frac{1}{b\nu_1},
	\end{equation}
	which further implies
	\begin{equation}\label{Eq-rho-greater-kappa}
		\wp(\rho)>\wp(\kappa).
	\end{equation}
	Henceforth, by selecting $\kappa\in S_0$ properly such that $\kappa$ satisfies \eqref{Eq-kappa-rho} with $\mathrm{i}\wp'(\kappa)>0$, we arrive at
	\begin{equation}\label{Eq-nu1-parameterization}
		\nu_1=	\frac{\mathrm{i}\wp'(\kappa)}{\wp(\rho)-\wp(\kappa)}.
	\end{equation}
	Such $\kappa$ is uniquely determined in $S_0$. Define 
	\begin{equation}\label{Eq-Constant-C}
		C:=-s_1^2+4s_2+8\alpha_4.
	\end{equation}
	Substituting $\xi=\rho$ into \eqref{Eq-nu-x} and applying \eqref{Eq-nu-parameterization}, we derive	
	\begin{align}\label{Eq-Csqr}
		C^2=	\nu_1\nu_2\nu_3\nu_4=\frac{\big(\wp'(\rho)\big)^2\big(\wp'(\kappa)\big)^2}{\big(\wp(\rho)-\wp(\kappa)\big)^4}.
	\end{align}
	Thus, $\rho$ may be chosen such that 
	\begin{align}\label{Eq-C-para}
		C=	\frac{\wp'(\rho)\wp'(\kappa)}{\big(\wp(\rho)-\wp(\kappa)\big)^2}.	
	\end{align}
	This uniquely determines $\rho$ in $S_0$. To obtain the squared moduli with  $\nu(0)=\nu_4$, it suffices to exchange $\nu_{i},i=1,2$ with $\nu_{5-i}$ in the above constructions. Moreover, $\kappa$ and $\rho$ are uniquely determined in $S_0$ in a similar way. 
	The above discussion yields the uniform explicit form for the squared moduli $\nu(\xi)$:
	\begin{equation}\label{Eq-uniform-expression-for-the-squared-modulus}
		\nu(\xi)=\nu_0\frac{\wp(\xi)-\wp(\rho)}{\wp(\xi)-\wp(\kappa)},
	\end{equation}
	where 
	\begin{equation}\label{Eq-nu-0}
		\nu_0=\nu(0)=\frac{\mathrm{i}\wp'(\kappa)}{\wp(\rho)-\wp(\kappa)}.
	\end{equation}

	Finally, we discuss the relations between the square moduli presented here and those given in \cite{CP-DNLS}. For the first two types of squared moduli, using \eqref{Eq-ei-and-vi},\eqref{Eq-Weierstrass-elliptic-equation}-\eqref{Eq-kappa-rho},\eqref{Eq-nu1-parameterization} and \eqref{Eq-relation-wp-sn}, we could express the squared moduli with $\nu_0=\nu_1$ as 
	\begin{equation}\label{Eq-snoidal-representation}
		\nu(\xi)=\nu_4+\frac{\left(\nu_1-\nu_4\right)\left(\nu_2-\nu_4\right)}{\left(\nu_2-\nu_4\right)+\left(\nu_1-\nu_2\right) \operatorname{sn}^2(\alpha_0 \xi ; k_0)},
	\end{equation}
	where
	\begin{equation}\label{Eq-alpha0-k0}
		\alpha_0=\frac{1}{2} \sqrt{\left(\nu_1-\nu_3\right)\left(\nu_2-\nu_4\right)}, \quad
		k_0=\frac{\sqrt{\left(\nu_1-\nu_2\right)\left(\nu_3-\nu_4\right)}}{\sqrt{\left(\nu_1-\nu_3\right)\left(\nu_2-\nu_4\right)}},	
	\end{equation}
	by direct computation. This representation \eqref{Eq-snoidal-representation}-\eqref{Eq-alpha0-k0} coincides with equations (29)-(30) in \cite{CP-DNLS}.
	For the square moduli of Type $3$, to circumvent complex elliptic moduli, we verify  
	\begin{equation}\label{Eq-cnoidal-representation}
		\nu(\xi)=\nu_1+\frac{\left(\nu_2-\nu_1\right)\left(1-\operatorname{cn}(\theta_2\xi ; \theta_3)\right)}{1+\theta_1+(\theta_1-1) \operatorname{cn}(\theta_2 \xi ; \theta_3)},
	\end{equation}
	where 
	\begin{equation}\label{Eq-theta-123}
		\begin{split}
			\theta_1 & =\frac{\sqrt{\left(\nu_2-\Re(\nu_3)\right)^2+\Im^2(\nu_3)}}{\sqrt{\left(\nu_1-\Re(\nu_3)\right)^2+\Im^2(\nu_3)}}, \,\,\,
			\theta_2  =\sqrt[4]{\left(\left(\nu_1-\Re(\nu_3)\right)^2+\Im^2(\nu_3)\right)\left(\left(\nu_2-\Re(\nu_3)\right)^2+\Im^2(\nu_3)\right)},\\ 2\theta_3^2&=1-\frac{\left(\nu_1-\Re(\nu_3)\right)\left(\nu_2-\Re(\nu_3)\right)+\Im^2(\nu_3)}{\sqrt{\left(\left(\nu_1-\Re(\nu_3)\right)^2+\Im^2(\nu_3)\right)\left(\left(\nu_2-\Re(\nu_3)\right)^2+\Im^2(\nu_3)\right)}},
		\end{split}
	\end{equation}
	using properties of Jacobi elliptic functions. Equations \eqref{Eq-cnoidal-representation}-\eqref{Eq-theta-123} match equations (39)-(41) in \cite{CP-DNLS}. Thus, the squared moduli with $\nu_0=\nu_1$ in \cite{CP-DNLS} are equivalent to those presented here. Similarly, the squared moduli with $\nu_0=\nu_4$ can be derived directly by exchanging $\nu_i,i=1,2$ with $\nu_{5-i}$ in the above settings, which are also compatible with the squared moduli therein \cite{CP-DNLS}.
	
	\subsection{The elliptic function solutions to the DNLS equation}
	In this subsection, based on the unified expression of the squared moduli \eqref{Eq-uniform-expression-for-the-squared-modulus}, we derive Weierstrass function expressions for the auxiliary spectrum. Then, through integration, we obtain elliptic function solutions to the DNLS equation.
	\begin{prop}\label{Prop-mu}
		The auxiliary spectrum $\mu(\xi)$ could be  written in terms of Weierstrass zeta functions as 
		\begin{align}\label{Eq-mu-parameterization}
			\mu(\xi)=\frac{\mathrm{i}}{4}\Big(\zeta(\kappa+\xi) - \zeta(\xi+\rho) + \zeta(\rho+\kappa) - \zeta(2\kappa)\Big).	
		\end{align}
	\end{prop}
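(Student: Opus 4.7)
The plan is to prove the identity by Liouville's theorem. Denote the RHS by $\tilde\mu(\xi):=\frac{\mathrm{i}}{4}[\zeta(\kappa+\xi) - \zeta(\xi+\rho) + \zeta(\rho+\kappa) - \zeta(2\kappa)]$, so the claim reads $\mu = \tilde\mu$. Both sides are doubly periodic in $\xi$ with lattice $2\omega_1\mathbb{Z} + 2\omega_3\mathbb{Z}$: for $\tilde\mu$ this follows from the quasi-periodicity $\zeta(u+2\omega_j) = \zeta(u) + 2\eta_j$ (the shifts in $\zeta(\kappa+\xi)$ and $\zeta(\xi+\rho)$ cancel), and for $\mu$ from the rational representation $\mu = s_1/4 + \nu/8 - C/(8\nu) - \mathrm{i}\nu_\xi/(8\nu)$ (obtained by combining \eqref{Eq-mu-plus-mubar } and \eqref{Eq-Im-mu}, with $C$ as in \eqref{Eq-Constant-C}) together with the elliptic form \eqref{Eq-uniform-expression-for-the-squared-modulus} of $\nu$.

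Next, I would match principal parts on a fundamental period parallelogram to show $\mu - \tilde\mu$ is holomorphic. $\tilde\mu$ has simple poles only at $\xi\equiv -\kappa$ (residue $\mathrm{i}/4$) and $\xi\equiv -\rho$ (residue $-\mathrm{i}/4$). Laurent-expanding $\nu(\xi)$ near each of the four candidate singular points $\pm\kappa,\pm\rho$ using \eqref{Eq-nu-0} and \eqref{Eq-C-para} gives $\nu\sim \mathrm{i}/(\xi+\kappa)$ at $-\kappa$ and $\nu\sim -\mathrm{i}C(\xi+\rho)$ at $-\rho$. A short computation then shows that at $-\kappa$ the terms $\nu/8$ and $-\mathrm{i}\nu_\xi/(8\nu)$ together contribute residue $\mathrm{i}/4$, at $-\rho$ the terms $-C/(8\nu)$ and $-\mathrm{i}\nu_\xi/(8\nu)$ together contribute $-\mathrm{i}/4$, while at $+\kappa$ and $+\rho$ the potential singular terms cancel exactly. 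Thus $\mu - \tilde\mu$ has no poles on the torus and is constant by Liouville.

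To pin the constant to zero, evaluate at $\xi = 0$. Since $\wp$ is even, $\nu_\xi(0) = 0$, so substituting \eqref{Eq-nu-0} and \eqref{Eq-C-para} yields $\mu(0) = s_1/4 + \mathrm{i}(\wp'(\kappa)+\wp'(\rho))/(8(\wp(\rho)-\wp(\kappa)))$. On the other hand, applying the zeta addition formula $\zeta(u+v) - \zeta(u) - \zeta(v) = \tfrac{1}{2}(\wp'(u) - \wp'(v))/(\wp(u) - \wp(v))$ to $\zeta(\rho+\kappa)$ and the doubling limit $\zeta(2\kappa) = 2\zeta(\kappa) + \tfrac{1}{2}\wp''(\kappa)/\wp'(\kappa)$ reduces $\tilde\mu(0)$ to $\frac{\mathrm{i}}{8}[(\wp'(\rho)-\wp'(\kappa))/(\wp(\rho)-\wp(\kappa)) - \wp''(\kappa)/\wp'(\kappa)]$. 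Equating $\mu(0) = \tilde\mu(0)$ and using $\mathrm{i}\wp'(\kappa)/(\wp(\rho)-\wp(\kappa)) = \nu_1$ from \eqref{Eq-nu1-parameterization} collapses the claim to the scalar identity $\wp''(\kappa)/\wp'(\kappa) = \mathrm{i}(3\nu_1 - \nu_2 - \nu_3 - \nu_4)/2$.

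The main obstacle is verifying this last identity, which is a polynomial equality among the symmetric functions of $\nu_1,\ldots,\nu_4$. I would establish it using $\wp''(\kappa) = 6\wp(\kappa)^2 - g_2/2$, the explicit formulas for $\wp(\kappa)$ and $\wp'(\kappa) = -\mathrm{i}(\nu_1-\nu_2)(\nu_1-\nu_3)(\nu_1-\nu_4)/4$ derived from \eqref{Eq-kappa-rho}, \eqref{Eq-nu1-parameterization} and \eqref{Eq-rho-minus-kappa}, the relation $g_2 = -4(e_1e_2+e_1e_3+e_2e_3)$ together with \eqref{Eq-ei-and-vi}, and Vieta's $s_1 = -(\nu_1+\nu_2+\nu_3+\nu_4)/4$ from \eqref{Eq-R}; the remaining verification is an elementary but tedious polynomial computation. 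All other steps are routine consequences of the Weierstrass addition theorems and the Laurent expansion of $\nu$.
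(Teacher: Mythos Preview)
Your argument is correct and takes a genuinely different route from the paper. The paper proceeds by direct substitution: it first derives the Weierstrass parameterization of $s_1$ (equation \eqref{Eq-s1-parameterization}) by summing the parameterizations \eqref{Eq-nu234-parameterization} of all four $\nu_i$ and invoking the log-derivative identity \eqref{Eq-log-deri}, then plugs $s_1$, $\nu(\xi)$ and $C$ into the closed form \eqref{Eq-mu-nu} for $\mu$ and simplifies with the addition and half-argument formulas. Your approach is more structural: you recognise both sides as elliptic, match the four candidate principal parts, and reduce the whole question to a single evaluation at $\xi=0$. Interestingly, that evaluation collapses to exactly the same scalar identity the paper needs, namely \eqref{Eq-s1-parameterization} rewritten as $\wp''(\kappa)/\wp'(\kappa)=\tfrac{\mathrm{i}}{2}(3\nu_1-\nu_2-\nu_3-\nu_4)$.

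What each buys: your Liouville route makes the pole structure transparent and avoids manipulating the full $\xi$-dependent expression, at the cost of a residual ``tedious polynomial computation'' for the constant. The paper's route is heavier on addition-formula algebra but handles the constant far more cleanly: rather than expanding $6\wp(\kappa)^2-g_2/2$ and the $e_i$ in terms of the $\nu_i$, it writes $\nu_2,\nu_3,\nu_4$ as $\nu_1+\mathrm{i}\wp'(\kappa)/(\wp(\kappa)-e_i)$ and observes that $\sum_i \wp'(\kappa)/(\wp(\kappa)-e_i)=2\wp''(\kappa)/\wp'(\kappa)$, which is just the logarithmic derivative of the Weierstrass equation. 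You could shortcut your last step the same way and avoid the polynomial slog entirely.
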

	\begin{proof}
		Consider the squared moduli with $\nu_0=\nu_1$.  Combining \eqref{Eq-birational-mapping} and substituting $\xi=\omega_i,i=1,2,3$ into \eqref{Eq-nu-x} yields
		\begin{equation}
			\prod_{j=1}^4\left(\nu_1\frac{\wp(\omega_i)-\wp(\rho)}{\wp(\omega_i)-\wp(\kappa)}-\nu_j\right)=0,\quad i=1,2,3.
		\end{equation}
		The three equations above are each composed of four factors. Since the $\nu_i$, $i=1,2,3,4$ are distinct, each equation has precisely one vanishing factor. Specifically, we obtain that
		\begin{equation}\label{Eq-nu234-parameterization}
			\{\nu_i,i=2,3,4\}=\left\{\frac{\mathrm{i}\wp'(\kappa)}{\wp(\rho)-\wp(\kappa)}+\frac{\mathrm{i}\wp'(\kappa)}{\wp(\kappa)-\wp(\omega_i)}, i=1,2,3\right\}.
		\end{equation}
 For squared moduli with $\nu_0=\nu_4$, analogous arguments apply via exchange of $\nu_i,i=1,2$ and $\nu_{5-i}$. Therefore, we always have 
		\begin{equation}
			\sum_{i=1}^4\nu_i=\frac{4\mathrm{i}\wp'(\kappa)}{\wp(\rho)-\wp(\kappa)}+\sum_{i=1}^3\frac{\mathrm{i}\wp'(\kappa)}{\wp(\kappa)-\wp(\omega_i)}.
		\end{equation}
		Moreover, \eqref{Eq-Weierstrass-elliptic-equation} implies
		\begin{equation}\label{Eq-log-deri}
			\sum_{i=1}^3\frac{\wp'(\kappa)}{\wp(\kappa)-\wp(\omega_i)}=	\frac{2\wp''(\kappa)}{\wp'(\kappa)}.
		\end{equation}
		Given the form of \eqref{Eq-mu-nu}, it remains to represent $s_1$ in terms of Weierstrass functions to obtain the representation ofe $\mu$. 
		Combining \eqref{Eq-R}, \eqref{Eq-nu1-parameterization}, \eqref{Eq-nu234-parameterization} and \eqref{Eq-log-deri} gives 
		\begin{equation}\label{Eq-s1-parameterization}
			s_1=-\frac{\	\sum_{i=1}^4\nu_i}{4}=-\frac{\mathrm{i}\wp'(\kappa)}{\wp(\rho)-\wp(\kappa)} - \frac{\mathrm{i}\wp''(\kappa)}{2\wp'(\kappa)}.
		\end{equation}
		Substituting \eqref{Eq-nu-parameterization}, \eqref{Eq-C-para} and \eqref{Eq-s1-parameterization} into \eqref{Eq-mu-nu} yields
		\begin{equation}
			\begin{split}
				\mu(\xi) 
				& = \frac{\mathrm{i}}{8}\Bigg(\frac{\wp'(\kappa)-\wp'(\xi)}{\wp(\kappa)-\wp(\xi)} 
				- \frac{\wp'(\xi)-\wp'(\rho)}{\wp(\xi)-\wp(\rho)} 
				+ \frac{\wp'(\rho)-\wp'(\kappa)}{\wp(\rho)-\wp(\kappa)} 
				- \frac{\wp''(\kappa)}{\wp'(\kappa)}\Bigg).
			\end{split}
		\end{equation}
		Applying the second formula in \eqref{Eq-formulas of Weierstrass functions} and the half argument formula \eqref{Eq-half-argument-2} produces \eqref{Eq-mu-parameterization}, completing the proof.
	\end{proof}
	
	Building upon \eqref{Eq-mu}, we could establish the $\xi$-depenence of the elliptic function solutions. Moreover, combining with \eqref{Eq-u-travelling-2}, 
	we generate elliptic function solutions to the DNLS equation \eqref{Eq-DNLS-equation}.
	
	\begin{prop}
		The elliptic function solutions to the DNLS equation \eqref{Eq-DNLS-equation} can be expressed as 
		\begin{equation}\label{Eq-DNLS-elliptic-solution}
			\begin{split}
				u(\xi,t)=\frac{\sqrt{\nu_0}\sigma(\kappa)\sigma(\xi+\rho)\sigma(\xi+\kappa)}{\sigma(\rho)\sigma^2(\xi-\kappa)}e^{-F(\xi,t)},
			\end{split}
		\end{equation}
		where the function $F(\xi,t)$ is defined as 
		\begin{equation}\label{Eq-F} 
			F(\xi,t)=\big(\zeta(\rho+\kappa)+\zeta(2\kappa)\big)\xi+16\mathrm{i}\alpha_4t.
		\end{equation}
	\end{prop}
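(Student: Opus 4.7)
The approach is to reduce the problem to a first-order ODE in the travelling-wave variable $\xi = x+2s_1 t$ for a normalized amplitude. By Proposition \ref{Prop-ue} the modulus $\nu=|u|^2$ depends only on $\xi$, and integrating the linear equation \eqref{Eq-u-travelling-2} along the characteristics shows that every solution necessarily factors as $u(x,t) = U(\xi)\,e^{-16\mathrm{i}\alpha_4 t}$ for a single-variable function $U$. Inserting this factorization into the definition \eqref{Eq-mu} of the auxiliary spectrum turns the problem into
\[
\frac{U'(\xi)}{U(\xi)} = 4\mathrm{i}\mu(\xi) - 2\mathrm{i}\bigl(s_1+\nu(\xi)\bigr),
\]
so it remains to integrate this right-hand side and to fix one complex constant of integration.

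The next step is to substitute the Weierstrass parameterizations already in hand: \eqref{Eq-mu-parameterization} for $\mu$, \eqref{Eq-s1-parameterization} for $s_1$, and \eqref{Eq-uniform-expression-for-the-squared-modulus} for $\nu$. A short algebraic collapse, using $\nu_0 = \mathrm{i}\wp'(\kappa)/(\wp(\rho)-\wp(\kappa))$, gives
\[
s_1+\nu(\xi) = -\frac{\mathrm{i}\wp''(\kappa)}{2\wp'(\kappa)} - \frac{\mathrm{i}\wp'(\kappa)}{\wp(\xi)-\wp(\kappa)}.
\]
The pivotal identity I will invoke is
\[
\zeta(\xi+\kappa) - \zeta(\xi-\kappa) - \zeta(2\kappa) = -\frac{\wp'(\kappa)}{\wp(\xi)-\wp(\kappa)} - \frac{\wp''(\kappa)}{2\wp'(\kappa)},
\]
which I derive from the standard $\zeta$ addition theorem applied to $\zeta(\xi\pm\kappa)$ together with the limiting form $\zeta(2\kappa) = 2\zeta(\kappa) + \tfrac{1}{2}\wp''(\kappa)/\wp'(\kappa)$. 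Combining this with \eqref{Eq-mu-parameterization} puts the right-hand side of the ODE into the compact form
\[
\frac{U'(\xi)}{U(\xi)} = \zeta(\xi+\kappa) + \zeta(\xi+\rho) - 2\zeta(\xi-\kappa) - \zeta(\rho+\kappa) - \zeta(2\kappa).
\]

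Finally, I integrate term by term using $(\log\sigma(\xi+a))' = \zeta(\xi+a)$, which produces
\[
U(\xi) = A\,\frac{\sigma(\xi+\kappa)\sigma(\xi+\rho)}{\sigma^2(\xi-\kappa)}\,e^{-(\zeta(\rho+\kappa)+\zeta(2\kappa))\xi},
\]
with $A$ a complex constant. To pin $A$ down I evaluate at $\xi = 0$, use $\sigma(-\kappa) = -\sigma(\kappa)$, and enforce $|U(0)|^2 = \nu_0$; the residual phase is absorbed by the global $U(1)$ symmetry of \eqref{Eq-DNLS-equation}, yielding $A = \sqrt{\nu_0}\,\sigma(\kappa)/\sigma(\rho)$ and hence exactly \eqref{Eq-DNLS-elliptic-solution}, with the time exponent assembled from \eqref{Eq-u-travelling-2} into the definition \eqref{Eq-F} of $F(\xi,t)$. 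The main technical obstacle is the rearrangement of $s_1 + \nu$ as the $\zeta$-difference above, since it is the one step requiring a nontrivial Weierstrass identity rather than routine algebra; once that identity is in place, the remaining steps are direct integration and a normalization check.
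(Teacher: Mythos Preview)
Your proof is correct and follows essentially the same route as the paper: reduce to the first-order ODE $u_\xi/u = 4\mathrm{i}(\mu - (s_1+\nu)/2)$, substitute the Weierstrass parameterizations \eqref{Eq-uniform-expression-for-the-squared-modulus}, \eqref{Eq-mu-parameterization}, \eqref{Eq-s1-parameterization}, integrate, and then append the $t$-dependence via \eqref{Eq-u-travelling-2}. The only cosmetic difference is that you first collapse the integrand to a pure sum of $\zeta$-terms using your identity for $\zeta(\xi+\kappa)-\zeta(\xi-\kappa)-\zeta(2\kappa)$ and then integrate trivially via $\zeta=\sigma'/\sigma$, whereas the paper leaves the $-2\wp'(\kappa)/(\wp(\xi)-\wp(\kappa))$ term in place and integrates it with the formula \eqref{Eq-integration-formulas}; the two computations are equivalent and lead to the same primitive.
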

	\begin{proof}
		Rewriting \eqref{Eq-mu} yields 
		\begin{equation}\label{Eq-u-ODE}
			u_\xi=4\mathrm{i}\left(\mu(\xi)-\frac{s_1+\nu(\xi)}{2}\right)u.
		\end{equation}
		The components $s_1$, $\nu(\xi)$ and $\mu(\xi)$ in \eqref{Eq-u-ODE} are now explicitly expressed via Weierstrass functions. Direct integration then yields the solution. Combining \eqref{Eq-nu-parameterization},   \eqref{Eq-mu-parameterization} and \eqref{Eq-s1-parameterization} gives
		\begin{equation}
			\begin{split}
				4\mathrm{i}\left(\mu(\xi)-\frac{s_1+\nu(\xi)}{2}\right)=-\zeta(\kappa+\xi)+\zeta(\rho+\xi)-\zeta(\rho+\kappa)+\zeta(2\kappa)-\frac{\wp''(\kappa)}{\wp'(\kappa)}-\frac{2\wp'(\kappa)}{\wp(\xi)-\wp(\kappa)}.
			\end{split}
		\end{equation}
		Employing \eqref{Eq-derivative-sigma}, \eqref{Eq-half-argument-2}  and  \eqref{Eq-integration-formulas}, the integral evaluates to 
		\begin{equation}\label{Eq-integrand-1}
			\begin{split}
				\int_0^{\xi}4\mathrm{i}\left(\mu(\xi')-\frac{s_1+\nu(\xi')}{2}\right)d\xi'=\ln\left(\frac{\sigma(\xi+\rho)\sigma(\xi+\kappa)}{\sigma^2(\xi-\kappa)}\right)-\left(\zeta(2\kappa)+\zeta(\rho+\kappa)\right)\xi-\ln\left(\frac{\sigma(\rho)}{\sigma(\kappa)}\right).
			\end{split}
		\end{equation}
		As a consequence of \eqref{Eq-u-travelling-2}, we have 
		\begin{equation}\label{Eq-integrand-2}
			\begin{split}
				&	\quad \ln\big(u(\xi,t)\big)-\ln\big(\sqrt{\nu_0}\big)=	\int_0^{\xi}4\mathrm{i}\left(\mu(\xi')-\frac{s_1+\nu(\xi')}{2}\right)d\xi'+16\mathrm{i}\alpha_4t.\\
			\end{split}
		\end{equation}	
		Combining with \eqref{Eq-integrand-1} and \eqref{Eq-integrand-2}, we deduce that \eqref{Eq-DNLS-elliptic-solution}-\eqref{Eq-F} constitute a solution to the DNLS equation \eqref{Eq-DNLS-equation}, completing the proof.
	\end{proof}

	We have now obtained a unified expression \eqref{Eq-DNLS-elliptic-solution} for the elliptic function solutions to the DNLS equation \eqref{Eq-DNLS-equation}. However, the parameter $\alpha_4$ involved in this expression remains unexpressed in terms of elliptic functions.  For theoretical completeness, we show that upon introducing two auxiliary parameters $\omega_{1,3}, \alpha_4$ can be written as Weierstrass functions. In fact,
	differentiating $g$ with respect to $\xi$ in \eqref{Eq-fgh} gives
	\begin{align}\label{Eq-g-xi-1}
		g_\xi=u_\xi\lambda(\lambda^2-\mu)-u\lambda\mu_\xi.
	\end{align}
	Substituting the spectral parameter $\lambda=\sqrt{\mu}$ into
	\eqref{Eq-g-xi-1} yields
	\begin{align}\label{Eq-g-xi-2}
		g_\xi=-u\sqrt{\mu}\mu_\xi.
	\end{align}
	Besides, the spatial part of the stationary zero curvature equation \eqref{Eq-stationary-zero-curvature-equation} implies 
	\begin{align}\label{Eq-g-xi-3}
		g_\xi=-4  \lambda u f-4 \mathrm{i} \lambda^2 g.	
	\end{align}
	Inserting $\lambda=\sqrt{\mu}$ into \eqref{Eq-g-xi-3} and employing \eqref{Eq-fgh} and \eqref{Eq-g-xi-1} yields
	\begin{align}\label{Eq-mu-xi}
		\mu_\xi=4f(\sqrt{\mu}).
	\end{align}
	Furthermore, combining \eqref{Eq-Lax-matrix} with \eqref{Eq-mu-xi}, we obtain
	\begin{equation}\label{Eq-mu-xi-sqr}
		\mu_\xi^2=16f^2(\sqrt{\mu})=16P(\sqrt{\mu}).
	\end{equation}
	Equation \eqref{Eq-mu-xi-sqr} establishes that $(\lambda^{(i)})^2, i=1,2,3,4$ correspond precisely to the values of $\mu(\xi)$ at the zeros of $\mu'(\xi)$ within a periodic parallelogram. Without loss of generality, these zeros are
	$-\frac{\kappa+\rho}{2}$, $-\frac{\kappa+\rho}{2}+\omega_1$, $-\frac{\kappa+\rho}{2}+\omega_3$ and $-\frac{\kappa+\rho}{2}+\omega_1+\omega_3$. The square root of the product of their $\mu$-values gives the parameterization of $\alpha_4$. Therefore, every elliptic solution \eqref{Eq-DNLS-elliptic-solution} to the DNLS equation can be parameterized by the four parameters $\kappa,\rho,\omega_1,\omega_3$.
	
We verify that $\kappa,\rho,\omega_1$ and $\omega_3$ are independent parameters in general in the following. To this end, we establish a one-to-one correspondence between the roots $\nu_i$ of $R(\nu)$ \eqref{Eq-R} and parameters $\kappa,\rho,\omega_1,\omega_3$. When $\nu_0=\nu_1$, equations \eqref{Eq-ei-and-vi} and \eqref{Eq-kappa-rho} yield the Jacobian determinant:
	\begin{equation}\label{Eq-Jacobian}
		\frac{\partial\left(\wp(\kappa),\wp(\rho),\wp(\omega_1),\wp(\omega_3)\right)}{\partial\left(\nu_1,\nu_2,\nu_3,\nu_4\right)}=\frac{(\nu_1-\nu_3)^2(\nu_1-\nu_2)(\nu_1-\nu_4)\big(2(\nu_1-\nu_4)(\nu_2-\nu_4)+\nu_4(\nu_3-\nu_4)\big)}{768\nu_1^2}.
	\end{equation}
	For squared moduli of Type 1,
	\begin{equation}
		2(\nu_1-\nu_4)(\nu_2-\nu_4)+\nu_4(\nu_3-\nu_4)=2\nu_1\nu_2+\nu_4^2+\nu_3\nu_4-2\nu_1\nu_4-2\nu_2\nu_4>0,
	\end{equation}
	since all right-hand side terms are nonnegative and $\nu_4^2>0$ holds strictly. For Type 2, it is evidently that
	\begin{equation}
		2(\nu_1-\nu_4)(\nu_2-\nu_4)+\nu_4(\nu_3-\nu_4)>0.
	\end{equation}
	For Type 3, the Jacobian determinant \eqref{Eq-Jacobian} vanishes if and only if 
	\begin{equation}
		\begin{split}
			\big(\Re(\nu_3)-\nu_1\big)\big(\Re(\nu_3)-\nu_2\big)&=0,\\
			\Im(\nu_3)\big(\nu_1+\nu_2-2\Re(\nu_3)\big)&=0.
		\end{split}
	\end{equation}
	These identities cannot simultaneously hold. By the inverse function theorem, the correspondence between $(\nu_1,\nu_2,\nu_3,\nu_4)$ and $\left(\wp(\kappa),\wp(\rho),\wp(\omega_1),\wp(\omega_3)\right)$ is one-to-one. Thus, by construction, a one-to-one correspondence exists between $\nu_i,i=1,2,3,4$ and  $\kappa,\rho,\omega_1,\omega_3$. However,
	when $\nu_0=\nu_4$, the condition $	2(\nu_4-\nu_1)(\nu_3-\nu_1)+\nu_1(\nu_2-\nu_1)\neq 0$ should be imposed on $\nu_i,i=1,2,3,4$ such that the Jacobi determinant does not vanish. 
	
	As is verified above, these four parameters are generally independent. For comparison, elliptic function solutions of the NLS equation \cite{Ling-NLS}, the mKdV equation \cite{LS-mKdV-solution}, and the sine-Gordon equation \cite{LS-sG} are each uniquely parameterized by three independent parameters $\alpha,k$ and $l$. The parameterization here thus extends prior results. Analogously, the methodology presented here are potential to be extended to other Kaup-Newell equations.

	\subsection{The fundamental solution matrix to the Lax pair }
	In this subsection, we construct the fundamental solution matrix for the Lax pair \eqref{Eq-Lax pair}. Our approach introduces a uniform parameter $z$ replacing the spectral parameter $\lambda$, which is essential for both constructing the fundamental solution matrix and obtaining $N$-elliptic localized solutions. To achieve this, we define the elliptic curve
	\begin{align}\label{Eq-K2} \mathcal{K}_2:=\{(\lambda,y)|y^2=P(\lambda)\}.
	\end{align}
	Equation \eqref{Eq-mu-xi-sqr} implies that $(\sqrt{\mu},\frac{\mu_\xi}{4})$ lies on $\mathcal{K}_2$. We therefore introduce the parameterization 
	\begin{equation}\label{Eq-parameterization} \big(\lambda(z),y(z)\big)=\big(\sqrt{\mu(z)},\frac{\mu'(z)}{4}\big),
	\end{equation}
	using \eqref{Eq-mu-parameterization}. Explicitly,
	\begin{equation}\label{Eq-parameterization-lambda-y}
		\big(\lambda(z),y(z)\big)=\left(-\frac{\sigma(z-\kappa)\sigma(\rho)}{2\sqrt{\nu_0}\sigma(\rho+z)\sigma(-\rho-\kappa)\sigma(\kappa)}\frac{d_0(z)}{d_0(\hat{z})},\frac{\mathrm{i}}{16}\left(\wp(z+\rho)-\wp(z+\kappa)\right)\right),
	\end{equation}
	where 
	\begin{equation}\label{Eq-nu0-d0}
		d_0(z)=\sqrt{\sigma(\rho+z)\sigma(z+2\kappa+\rho)},\quad \hat{z}=-\kappa-\rho-z.
	\end{equation}
	From \eqref{Eq-parameterization-lambda-y}-\eqref{Eq-nu0-d0}, through direct computations we obtain 
	\begin{equation}\label{Eq-shift}
		\lambda(z)=\lambda(\hat{z}),\quad y(z)=-y(\hat{z}),
	\end{equation}
	indicating that the transformation $z \mapsto \hat{z}$ preserves $\lambda$ while reversing the sign of $y$.
	\begin{rmk}
		Substituting $\xi=z$ into \eqref{Eq-mu-parameterization} and taking the square root does not immediately yield the parametric expression for $\lambda$ in \eqref{Eq-parameterization-lambda-y}. To demonstrate their equivalence, one can employ the first formula in \eqref{Eq-formulas of Weierstrass functions} and the half argument formula \eqref{Eq-half-argument-1}. We adopt this particular form for $\lambda(z)$ \eqref{Eq-parameterization-lambda-y} because it facilitates applying the addition formula \eqref{Eq-addition formulas of the sigma functions} in subsequent constructions for $N$-elliptic localized solutions.
	\end{rmk}	
	\begin{rmk}
		Based on the theory of elliptic functions, the mapping 
		\begin{equation}
			\begin{split}
				S_1&\rightarrow \mathcal{K}_2\\
				z&\mapsto \big(\lambda(z),y(z)\big),
			\end{split}
		\end{equation}
		is a one-to-one correspondence, where $S_1$ can be selected as arbitrary periodic parallelogram of the function $\wp(z)$.
	\end{rmk}
	
	Clearly, $\pm \mathrm{i}y(z)$ are eigenvalues of $\mathbf{L}\big(x,t;z\big)$. Let $\left(1,r_\pm(\xi,t;z)\right)^T$ denote the kernels of $\pm\mathrm{i}y(z)\mathbb{I}-\mathbf{L}(\xi,t;z)$, respectively. The existence and uniqueness of solutions to ODEs guarantee two linearly independent solutions $\big(\phi_{\pm}(\xi,t;z),\psi_{\pm}(\xi,t;z)\big)$ to the Lax pair \eqref{Eq-Lax pair} such that
	\begin{align}\label{Eq-rpm}
		r_{\pm}(\xi,t;z)=\frac{\psi_{\pm}(\xi,t;z)}{\phi_{\pm}(\xi,t;z)}=\frac{\mathrm{i}\big(f(\xi,t;z)\pm y(z)\big)}{g(\xi,t;z)}=\frac{\mathrm{i}h(\xi,t;z)}{f(\xi,t;z)\mp y(z)}.
	\end{align}
	Based on these preliminaries, we obtain the following lemmas. 
	\begin{lemma}\label{Prop-psi+}
		The function $\phi_{+}(\xi,t;z)$ admits the explicit representation
		\begin{align}\label{Eq-phi-p}
			\phi_+(\xi,t;z)=\frac{\sigma(z-\xi)}{\sigma(\xi-\kappa)}e^{\left(\frac{1}{2}\zeta(\kappa+z)+\frac{1}{2}\zeta(z+\rho)-\frac{1}{2}\zeta(\kappa+\rho)-\frac{1}{2}\zeta(2\kappa)\right)\xi-8\big(\mathrm{i}\alpha_4+y(z)\big)t}.
		\end{align}
		The corresponding function $\phi_{-}$ is obtained by substituting $z$ with $\hat{z}$ in \eqref{Eq-phi-p}.
	\end{lemma}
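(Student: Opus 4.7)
The plan is to derive a first-order ODE for $\log\phi_+$ from the Lax pair and verify the ansatz \eqref{Eq-phi-p} by residue matching on the torus. Since $(1,r_+)^T$ is the eigenvector of $\mathbf{L}(\xi,t;\lambda(z))$ for the eigenvalue $\mathrm{i}y(z)$ and a Lax pair solution factors as $(\phi_+,\psi_+)^T=\phi_+(1,r_+)^T$, the first row of $\Phi_x=\mathbf{U}\Phi$ gives
$$(\log\phi_+)_\xi=-2\mathrm{i}\lambda(z)^2+2\mathrm{i}\lambda(z)u(\xi,t)r_+(\xi,t;z).$$
Substituting $\lambda(z)^2=\mu(z)$, the formulas in \eqref{Eq-fgh}, and $r_+=\mathrm{i}(f+y)/g$ from \eqref{Eq-rpm}, this simplifies to
$$(\log\phi_+)_\xi=-2\mathrm{i}\mu(z)-\frac{2\bigl(f(\xi,t;z)+y(z)\bigr)}{\mu(z)-\mu(\xi,t)},$$
which becomes an elliptic function of $\xi$ after inserting the parameterizations \eqref{Eq-mu-parameterization} and \eqref{Eq-nu-parameterization}.

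Next I would match this against $\partial_\xi\log\bigl[\sigma(z-\xi)/\sigma(\xi-\kappa)\cdot e^{A\xi}\bigr]=-\zeta(z-\xi)-\zeta(\xi-\kappa)+A$, which is also elliptic with simple poles in a fundamental parallelogram at $\xi=z$ of residue $+1$ and at $\xi=\kappa$ of residue $-1$. A direct residue analysis shows the Lax-derived side has the same polar structure: at $\xi=z$, the vanishing of $g$ combined with $y(z)=\mu'(z)/4$ produces residue $+1$; at $\xi=\kappa$, the identity $\mu(\kappa)=0$ (a direct consequence of \eqref{Eq-mu-parameterization}) together with the residue $-\mathrm{i}$ of $\nu$ at $\xi=\kappa$ (coming from \eqref{Eq-nu1-parameterization}) yields residue $-1$; and any potential singularities at $\xi=\hat z,\,-\kappa,\,-\rho$ cancel automatically by smoothness of the Lax pair eigenvector. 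The difference of the two elliptic functions is therefore holomorphic, hence constant; evaluating at a convenient base point together with the half-argument identity \eqref{Eq-half-argument-2} and the addition formula
$$\zeta(u+v)=\zeta(u)+\zeta(v)+\tfrac{1}{2}\frac{\wp'(u)-\wp'(v)}{\wp(u)-\wp(v)}$$
pins down $A=\tfrac{1}{2}\bigl(\zeta(\kappa+z)+\zeta(z+\rho)-\zeta(\kappa+\rho)-\zeta(2\kappa)\bigr)$, and integration via $\int\zeta(\xi-a)\,d\xi=\log\sigma(\xi-a)$ reproduces the $\sigma$-quotient in \eqref{Eq-phi-p}.

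For the temporal factor, writing $\mathbf{V}=(4\lambda^2-2\nu)\mathbf{U}+2\mathrm{i}\sigma_3\mathbf{Q}_x\lambda$ (using $\mathbf{Q}^2=-\nu\mathbb{I}_2$) and switching to $(\xi,t)$-coordinates via $\partial_t|_x=\partial_t|_\xi+2s_1\partial_\xi$, the $t$-equation reduces, after elimination of $(\log\phi_+)_\xi$ through the already-established $\xi$-equation, to a $\xi$-independent identity whose value is exactly $B=-8(\mathrm{i}\alpha_4+y(z))$. The companion solution $\phi_-$ is then obtained by the involution $z\mapsto\hat z$, which preserves $\lambda$ but flips the sign of $y$ per \eqref{Eq-shift}, thus selecting the opposite eigenvalue $-\mathrm{i}y(z)$ of $\mathbf{L}$. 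The main obstacle is the residue computation at $\xi=z$ and $\xi=\kappa$—relying on the parameter-level identities $y(z)=\mu'(z)/4$, $\mu(\kappa)=0$, and $\operatorname{Res}_{\xi=\kappa}\nu=-\mathrm{i}$—together with the identification of the additive constant $A$ through the Weierstrass addition and half-argument formulas; once these are in hand, Liouville's theorem on the torus immediately closes the argument.
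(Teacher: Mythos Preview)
Your approach is correct and takes a genuinely different route from the paper's. The paper proceeds by explicit integration: it rewrites $(\log\phi_+)_\xi$ using the auxiliary quantity $\beta_+=y+\mathrm{i}\lambda^4-\tfrac{\mathrm{i}s_1}{2}\lambda^2-\mathrm{i}\alpha_4$, splits the integrand into three pieces $I(\xi)+II(\xi)+III(\xi)$, and integrates each term separately via the formulas \eqref{Eq-integration-formulas}, then collapses the linear coefficient through a chain of $\zeta$-identities culminating in \eqref{Eq-half-argument-2}. Your argument instead recognizes both sides as elliptic functions of $\xi$, matches the principal parts at $\xi=z$ and $\xi=\kappa$, and invokes Liouville to conclude they differ by a constant, which you then identify as $A$. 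This is more structural and avoids the term-by-term antiderivative bookkeeping; the paper's route is more hands-on but makes every cancellation visible. For the $t$-dependence the two approaches essentially coincide: the paper also computes $\phi_{+,t}=-8(y(z)+\mathrm{i}\alpha_4)\phi_+$ directly from the $\mathbf V$-equation.

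One point to tighten: your justification that the potential poles at $\xi=\hat z,\,-\kappa,\,-\rho$ ``cancel automatically by smoothness of the Lax pair eigenvector'' is not quite the right reason, since these are complex points and smoothness on the real line says nothing there. The cancellations are genuine but algebraic: at $\xi=\hat z$ one has $f(\hat z;z)=\mu'(\hat z)/4=-y(z)$ so the numerator $f+y$ vanishes; at $\xi=-\kappa$ both $\nu$ and $\mu$ have simple poles whose contributions to numerator and denominator cancel to leave a finite limit; at $\xi=-\rho$ only $\mu(\xi)$ blows up, so the ratio $-2(f+y)/(\mu(z)-\mu(\xi))$ tends to zero. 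You should state these checks explicitly rather than appeal to smoothness.
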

	\begin{proof}
		Utilizing \eqref{Eq-Lax pair}, \eqref{Eq-fgh}, \eqref{Eq-Im-mu} and \eqref{Eq-rpm}, we obtain that the function $\phi_+$ satisfies  
		\begin{equation}\label{Eq-spatial}
			\begin{split}
				\phi_{+,\xi}&\xlongequal{\eqref{Eq-Lax pair}}-2\mathrm{i}\lambda^2\phi_++2\lambda u r_{+}\phi_+\\
				&\xlongequal[\eqref{Eq-rpm}]{\eqref{Eq-fgh}}-2\mathrm{i}\lambda^2\left(1+\frac{\mathrm{i}\nu(\xi)\big(\lambda^2-\mu^*(\xi)\big)}{-\frac{\mathrm{i}}{2}\nu(\xi)\lambda^2+\beta_{+}}\right)\phi_+\\
				&=-2\mathrm{i}\lambda^2\left(1+\frac{\mathrm{i}\nu(\xi)\lambda^2-2\beta_{+}+2\beta_{+}+\mathrm{i}\nu(\xi)\big(\mu(\xi)-\mu^*(\xi)\big)-\mathrm{i}\nu(\xi)\mu(\xi)}{-\frac{\mathrm{i}}{2}\nu(\xi)\lambda^2+\beta_{+}}\right)\phi_+\\
				&\xlongequal{\eqref{Eq-Im-mu}}-2\mathrm{i}\lambda^2\left(-1+\frac{2\beta_{+}-\mathrm{i}\nu(\xi)\mu(\xi)}{-\frac{\mathrm{i}}{2}\nu(\xi)\lambda^2+\beta_{+}}+\frac{\nu'(\xi)}{4\left(-\frac{\mathrm{i}}{2}\nu(\xi)\lambda^2+\beta_{+}\right)}\right)\phi_+\\
				&:=\big(\uppercase\expandafter{\romannumeral1}(\xi)+\uppercase\expandafter{\romannumeral2}(\xi)+\uppercase\expandafter{\romannumeral3}(\xi)\big)\phi_+,
			\end{split}
		\end{equation}
		where we introduce the notation	\begin{align}\label{Eq-def-beta-p}
			\beta_{+}= y+\mathrm{i}\lambda^4-\frac{\mathrm{i}s_1}{2}\lambda^2-\mathrm{i}\alpha_4,
		\end{align}
	for simplicity. Furthermore, it follows from \eqref{Eq-Lax pair}, \eqref{Eq-fgh}, \eqref{Eq-mu}  and \eqref{Eq-rpm} that 
	\begin{equation}\label{Eq-phi-t}
		\begin{split}
			\phi_{+,t}&\xlongequal{\eqref{Eq-Lax pair}}\big(4\lambda^2-2s_1-2\nu(\xi)\big)\big(-2\mathrm{i}\lambda^2+2\mathrm{i}\lambda u r_+\big)\phi_+-2\lambda u_\xi r_+\phi_+\\
			&= -2\mathrm{i}\lambda^2\big(4\lambda^2-2s_1-2\nu(\xi)\big)\phi_++2\lambda\left(\mathrm{i}\left(4\lambda^2-2s_1-2\nu(\xi)\right)u-u_\xi\right)r_+\phi_+\\
			&\xlongequal{\eqref{Eq-fgh}}-8\big(f(\xi)+y(z)\big)\phi_++8\mathrm{i}\lambda u\left(\left(\lambda^2-\frac{1}{2}\nu(\xi)-\frac{1}{2}s_1\right)+\frac{\mathrm{i}}{4}\frac{u_\xi}{u}\right)\frac{\mathrm{i}\big(f(\xi)+y(z)\big)}{\lambda u\big(\lambda^2-\mu(\xi)\big)}\phi_+\\
			&\xlongequal{\eqref{Eq-mu}}-8\big(f(\xi)+y(z)\big)\phi_++8\big(f(\xi)-\mathrm{i}\alpha_4\big)\phi_+=-8\big(y(z)+\mathrm{i}\alpha_4\big)\phi_+.
		\end{split}
	\end{equation}
As a consequence, it is satisfied that
\begin{equation}\label{Eq-line-integral}
	\begin{split}
		&\quad \ln\big(\phi_+(\xi,t)\big)-\ln\big(\phi_+(0,0)\big)=\int_{(0,0)}^{(\xi,t)}\frac{\phi_{+,\xi'}(\xi',t')}{\phi_{+}(\xi',t')} d\xi'+ \frac{\phi_{+,t'}(\xi',t')}{\phi_{+}(\xi',t')}  dt'\\
	&=\int_0^{\xi}\frac{\phi_{+,\xi'}(\xi',0)}{\phi_{+}(\xi',0)} d\xi'+\int_0^{t}\frac{\phi_{+,t'}(\xi,t')}{\phi_{+}(\xi,t')} dt'\\
	&=\int_0^{\xi} \big(I(\xi')+II(\xi')+III(\xi')\big)d\xi'-8\int_0^{t}\big(y(z)+\mathrm{i}\alpha_4\big)dt'.
	\end{split}
\end{equation}
	Henceforth, it suffices to evaluate the line integral to obtain the representation of function $\phi_+$. We proceed by parameterizing $\beta_+$ with $z$ and henceforth the functions $II(\xi)$ and $III(\xi)$ in \eqref{Eq-spatial} can be integrated. It follows from \eqref{Eq-fgh} and \eqref{Eq-K2}-\eqref{Eq-parameterization-lambda-y} that 
	\begin{equation}
		y^2(z)=P\big(\lambda(z)\big)=f^2\big(\lambda(z)\big)-g\big(\lambda(z)\big)h\big(\lambda(z)\big)=f^2\big(\lambda(z)\big).
	\end{equation}
As a consequence, we could set
		\begin{align}\label{Eq-y-parameterization}
			y(z)=f\left(\lambda(z)\right),
		\end{align}
	without loss of generality. Actually, if otherwise $y(z)=-f\left(\lambda(z)\right)$, we could substitute $z$ with $\hat{z}$ and still denote $\hat{z}$ as $z$. Consequently, using \eqref{Eq-fgh}, \eqref{Eq-def-beta-p} and \eqref{Eq-y-parameterization}, $\beta_{+}$ is parameterized by $z$ as 
		\begin{align}\label{Eq-beta-plus}
			\beta_{+}(z)=\frac{\mathrm{i}}{2}\nu(z)\mu(z).
		\end{align}
 Using the parameterization \eqref{Eq-beta-plus}, we split $II(\xi)$ into three terms:
		\begin{align}\label{Eq-II}
				II(\xi)=\frac{-4\mathrm{i}\left(\nu(z)\mu(z)-\nu(\xi)\mu(\xi)\right)}{\left(\nu(z)-\nu(\xi)\right)}=-\frac{\mathrm{i}}{2}\left(2s_1+\nu(z)+\nu(\xi)-\frac{\mathrm{i}\big(\nu'(z)-\nu'(\xi)\big)}{\nu(z)-\nu(\xi)}\right).
		\end{align}
We integrate the function $II(\xi)$ by terms. Using \eqref{Eq-nu1-parameterization} and the integration formulas \eqref{Eq-integration-formulas}, we arrive at
		\begin{equation} \label{Eq-II-1}
			\begin{split}
			\int_0^\xi \nu(s)ds
				=\int_0^\xi \frac{\mathrm{i}\wp'(\kappa)}{\wp(\rho)-\wp(\kappa)}+\frac{\mathrm{i}\wp'(\kappa)}{\wp(\kappa)-\wp(s)}ds
				=\nu_0\xi-\mathrm{i}\left(\ln\frac{\sigma(\kappa-\xi)}{\sigma(\xi+\kappa)}+2\xi\zeta(\kappa)\right),
			\end{split}
		\end{equation}
		and
		\begin{equation}\label{Eq-II-2}
			\begin{split}
				&\quad \int_0^\xi\frac{\nu'(s)-\nu'(z)}{\nu(s)-\nu(z)}ds\\
				&\xlongequal{\eqref{Eq-nu-parameterization}}\quad\int_0^\xi\frac{\wp'(z)\big(\wp(\rho)-\wp(\kappa)\big)\big(\wp(s)-\wp(\kappa)\big)^2-\wp'(s)\big(\wp(\rho)-\wp(\kappa)\big)\big(\wp(z)-\wp(\kappa)\big)^2}{\big(\wp(s)-\wp(\kappa)\big)\big(\wp(z)-\wp(\kappa)\big)\big(\wp(\rho)-\wp(\kappa)\big)\big(\wp(z)-\wp(s)\big)}ds\\
				&=\int_0^\xi\frac{\wp'(z)\big(\wp(s)-\wp(\kappa)\big)}{\big(\wp(z)-\wp(\kappa)\big)\big(\wp(z)-\wp(s)\big)}-\frac{\wp'(s)\big(\wp(z)-\wp(\kappa)\big)}{\big(\wp(z)-\wp(s)\big)\big(\wp(s)-\wp(\kappa)\big)}ds\\
				&=\int_0^\xi\frac{\wp'(z)\big(\wp(s)-\wp(z)+\wp(z)-\wp(\kappa)\big)}{\big(\wp(z)-\wp(\kappa)\big)\big(\wp(z)-\wp(s)\big)}-\frac{\wp'(s)\big(\wp(z)-\wp(s)+\wp(s)-\wp(\kappa)\big)}{\big(\wp(z)-\wp(s)\big)\big(\wp(s)-\wp(\kappa)\big)}ds\\
				&=\int_0^\xi\frac{\wp'(z)}{\wp(\kappa)-\wp(z)}+\frac{\wp'(z)}{\wp(z)-\wp(s)}+\frac{\wp'(s)}{\wp(\kappa)-\wp(s)}+\frac{\wp'(s)}{\wp(s)-\wp(z)}ds\\
				&\xlongequal{\eqref{Eq-integration-formulas}}\left(\frac{\wp'(z)}{\wp(\kappa)-\wp(z)}-2\zeta(z)\right)\xi+\ln\left(\frac{\sigma^2(\xi+z)\sigma^2(\kappa)}{\sigma(\kappa-\xi)\sigma(\kappa+\xi)\sigma^2(z)}\right).
			\end{split}
		\end{equation}
	As a consequence of \eqref{Eq-II}-\eqref{Eq-II-2}, 
	\begin{equation}\label{Eq-II-integral}
		\begin{split}
\int_0^{\xi}II(\xi')d\xi'&=-\ln\left(\frac{\sigma(\xi+z)\sigma(\kappa)}{\sigma(\xi+\kappa)\sigma(z)}\right)-\frac{\mathrm{i}}{2}\left(2s_1+\nu(z)+\nu_0+2\mathrm{i}\zeta(z)-2\mathrm{i}\zeta(\kappa)-\frac{\mathrm{i}\wp'(z)}{\wp(\kappa)-\wp(z)}\right)\xi,
\end{split}
		\end{equation}
Moreover, $III$ can be directly integrated under the parameterization \eqref{Eq-parameterization-lambda-y} and \eqref{Eq-beta-plus}. Specifically, 
\begin{equation}\label{Eq-III-integral}
	\begin{split}
 \int_0^{\xi}III(\xi')d\xi'=\int_0^\xi\frac{\nu'(\xi')}{\nu(\xi')-\nu(z)}d\xi'=\ln\frac{\sigma^2(\kappa)\sigma(\xi-z)\sigma(\xi+z)}{\sigma^2(z)\sigma(\xi-\kappa)\sigma(\xi+\kappa)}.
	\end{split}
\end{equation}
 Ultimately, it suffices to simplify the coefficient of the linear part in the integration $\int_0^{\xi}I(\xi')+II(\xi')d\xi'$.
Actually, \eqref{Eq-nu-parameterization}, \eqref{Eq-nu1-parameterization},  \eqref{Eq-mu-parameterization}, \eqref{Eq-s1-parameterization}, \eqref{Eq-formulas of Weierstrass functions} and \eqref{Eq-half-argument-2} give rise to the following simplification:
\begin{equation}\label{Eq-linear-coefficient}
	\begin{split}
		&\quad -\mathrm{i}s_1-\frac{\mathrm{i}}{2}\nu(z)-\frac{\mathrm{i}}{2}\nu_0+\zeta(z)-\zeta(\kappa)-\frac{1}{2}\frac{\wp'(z)}{\wp(\kappa)-\wp(z)}+2\mathrm{i}\lambda^2(z)\\
		&\xlongequal[\eqref{Eq-mu-parameterization},\eqref{Eq-s1-parameterization}]{\eqref{Eq-uniform-expression-for-the-squared-modulus},\eqref{Eq-nu-0}}\left(-\frac{1}{2}\frac{\wp''(\kappa)}{\wp'(\kappa)}-\zeta(\kappa)+\frac{1}{2}\zeta(2\kappa)\right)+\frac{1}{2}\frac{\wp'(\kappa)-\wp'(z)}{\wp(\kappa)-\wp(z)}-\frac{1}{2}\left(\zeta(\kappa+z)-\zeta(\rho+z)+\zeta(\rho+\kappa)\right)+\zeta(z)\\
		&\xlongequal[\eqref{Eq-half-argument-2}]{\eqref{Eq-formulas of Weierstrass functions}}-\frac{1}{4}\frac{\wp''(\kappa)}{\wp'(\kappa)}+\left(\zeta(\kappa+z)-\zeta(\kappa)-\zeta(z)\right)-\frac{1}{2}\left(\zeta(\kappa+z)-\zeta(\rho+z)+\zeta(\rho+\kappa)\right)+\zeta(z)\\
		&\xlongequal{\eqref{Eq-half-argument-2}}\frac{1}{2}\left(\zeta(\kappa+z)+\zeta(\rho+z)-\zeta(\rho+\kappa)-\zeta(2\kappa)\right).
	\end{split}
\end{equation}
Combining \eqref{Eq-line-integral} with \eqref{Eq-II-integral}-\eqref{Eq-linear-coefficient}, we know that  $\big(\phi_+(\xi,t;z),r_+(\xi,t;z)\phi_+(\xi,t;z)\big)^T$ 
is a solution to the Lax pair \eqref{Eq-Lax pair} with the potential $u$ \eqref{Eq-DNLS-elliptic-solution} at $\lambda=\lambda(z)$, where $\phi_+$ is defined as \eqref{Eq-phi-p}.  
Another linearly independent solution  can be derived by substituting $z$ with $\hat{z}$. As a consequence, we complete the proof.
	\end{proof}
Henceforth, to establish the explicit form of the solutions to the Lax pair, it suffices to find the explicit expressions for $r_\pm(\xi,t;z)$.
	\begin{lemma}\label{Prop-r+}
		The function $r_+(\xi,t;z)$ admits the explicit representation
		\begin{align}\label{Eq-para-r+}
			r_+(\xi,t;z)=\mathrm{i}\frac{\sigma(-\hat{z}-\xi)\sigma(\kappa-\xi)d_0(\hat{z})}{\sigma(\xi-z)\sigma(\kappa+\xi)d_0(z)}e^{F(\xi,t)}.
		\end{align}
		The corresponding function $r_-(\xi,t;z)$ is obtained through the substitution $z \mapsto \hat{z}$.
	\end{lemma}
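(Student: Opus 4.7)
The plan is to extract $r_+$ from the explicit form of $\phi_+$ given in Lemma~\ref{Prop-psi+} via the first row of the spatial Lax equation. Since $(\phi_+,r_+\phi_+)^{\T}$ solves \eqref{Eq-Lax pair}, one has $\phi_{+,\xi}=-2\mathrm{i}\lambda^2(z)\phi_++2\mathrm{i}\lambda(z)u(\xi,t)\,r_+\phi_+$, which yields
\[
r_+(\xi,t;z)=\frac{1}{2\mathrm{i}\lambda(z)\,u(\xi,t)}\left(\frac{\phi_{+,\xi}}{\phi_+}+2\mathrm{i}\lambda^2(z)\right).
\]
First I would compute $\phi_{+,\xi}/\phi_+$ by logarithmic differentiation of \eqref{Eq-phi-p} in $\xi$ and the oddness of $\zeta$, obtaining
\[
\frac{\phi_{+,\xi}}{\phi_+}=\zeta(\xi-z)-\zeta(\xi-\kappa)+\tfrac{1}{2}\bigl[\zeta(\kappa+z)+\zeta(z+\rho)-\zeta(\rho+\kappa)-\zeta(2\kappa)\bigr].
\]
Substituting $2\mathrm{i}\lambda^2(z)=2\mathrm{i}\mu(z)$ from Proposition~\ref{Prop-mu} cancels the $\zeta(\kappa+z)$ and $\zeta(2\kappa)$ contributions and leaves the balanced four-term combination
\[
\frac{\phi_{+,\xi}}{\phi_+}+2\mathrm{i}\lambda^2(z)=\zeta(\xi-z)-\zeta(\xi-\kappa)+\zeta(z+\rho)-\zeta(\rho+\kappa).
\]

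The decisive step is to factorize this $\zeta$-sum as a $\sigma$-ratio. Since the arguments satisfy the balance $(\xi-z)+(\rho+\kappa)=(\xi-\kappa)+(z+\rho)$, two iterated applications of the sigma addition formula \eqref{Eq-addition formulas of the sigma functions}---pairing $\zeta(\xi-z)+\zeta(z+\rho)$ against $\zeta(\xi-\kappa)+\zeta(\rho+\kappa)$ via the identity $\sigma(a+b)\sigma(a-b)=\sigma^2(a)\sigma^2(b)\bigl(\wp(b)-\wp(a)\bigr)$---produce a $\sigma$-ratio whose numerator carries the factor $\sigma(\kappa+\rho+z-\xi)\sigma(\kappa-\xi)$ and whose denominator carries $\sigma(\xi-z)\sigma(\kappa+\xi)$. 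Substituting the explicit $u(\xi,t)$ from \eqref{Eq-DNLS-elliptic-solution} (which contributes $e^{F(\xi,t)}$ from $1/u$ and the prefactor $\sigma(\rho)\sigma^2(\xi-\kappa)/[\sqrt{\nu_0}\sigma(\kappa)\sigma(\xi+\rho)\sigma(\xi+\kappa)]$) and $1/\lambda(z)$ from \eqref{Eq-parameterization-lambda-y} (which supplies the $d_0(\hat z)/d_0(z)$ ratio together with compensating factors in $\sqrt{\nu_0}$, $\sigma(\kappa)$, $\sigma(\rho)$ and $\sigma(\rho+\kappa)$), massive $\sigma$-cancellation collapses the expression to exactly \eqref{Eq-para-r+}. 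The corresponding formula for $r_-$ follows immediately from the substitution $z\mapsto\hat{z}$ together with $\lambda(\hat{z})=\lambda(z)$ and $y(\hat{z})=-y(z)$ from \eqref{Eq-shift}.

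The main obstacle is this balanced four-term $\zeta$-to-$\sigma$ factorization: although its existence is forced by the balance relation, it is not a standard textbook identity and must be derived carefully from the sigma addition formula, with close attention to parity $\sigma(-w)=-\sigma(w)$ and to the branch choices implicit in $d_0$ from \eqref{Eq-nu0-d0}. Once this step is established, matching $\sqrt{\nu_0}$-prefactors and verifying the overall sign on $\mathrm{i}$ in \eqref{Eq-para-r+} reduces to routine but lengthy $\sigma$-algebra that I would organize by grouping cancellations around the arguments $\xi\pm\kappa$, $\xi\pm\rho$, $z\pm\kappa$ and $z\pm\rho$.
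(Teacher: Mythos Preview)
Your approach is correct and takes a genuinely different route from the paper. The paper never computes $r_+$ from the Lax row at all: instead it uses the representation $r_+^2=\frac{h}{g}\cdot\frac{f+y}{f-y}$ from \eqref{Eq-rpm}, determines the zeros and poles in $\xi$ of the elliptic functions $\nu(z)-\nu(\xi)$, $\lambda^2(z)-\mu(\xi)$, $\lambda^2(z)-\mu^*(\xi)$ and $u^*/u$ (equations \eqref{Eq-zero-pole-1}--\eqref{Eq-zero-pole-5}), and thereby fixes $r_+$ up to a $\xi$-independent factor $\hat r_+(z)$; that factor is then pinned down by evaluating at $\xi=t=0$. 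Your direct computation avoids the zero/pole bookkeeping entirely and delivers the constant automatically, at the cost of one nontrivial $\zeta$-identity. Both are clean; the paper's method is more structural (Liouville-type), yours is more algorithmic.

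Two small repairs to your write-up. First, the ``balance'' you state is mis-grouped: $(\xi-z)+(\rho+\kappa)\ne(\xi-\kappa)+(z+\rho)$ in general. The correct balance is $(\xi-z)+(\kappa-\xi)+(z+\rho)=\rho+\kappa$, which is exactly the hypothesis of the \emph{third} identity in \eqref{Eq-formulas of Weierstrass functions}, not the sigma addition formula \eqref{Eq-addition formulas of the sigma functions}. Applying that identity once with $\theta_1=\xi-z$, $\theta_2=\kappa-\xi$, $\theta_3=z+\rho$ gives in a single step
\[
\zeta(\xi-z)-\zeta(\xi-\kappa)+\zeta(z+\rho)-\zeta(\rho+\kappa)
=\frac{\sigma(\kappa-z)\,\sigma(\xi+\rho)\,\sigma(\kappa+\rho+z-\xi)}{\sigma(\xi-z)\,\sigma(\kappa-\xi)\,\sigma(z+\rho)\,\sigma(\rho+\kappa)},
\]
so no ``two iterated applications'' are needed. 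Second, after inserting $1/u$ from \eqref{Eq-DNLS-elliptic-solution} and $1/\lambda(z)$ from \eqref{Eq-parameterization-lambda-y}, the sign count (four odd $\sigma$-flips and the overall $1/(2\mathrm{i})$) collapses to $+\mathrm{i}$, and $\sigma(\kappa+\rho+z-\xi)=\sigma(-\hat z-\xi)$ by \eqref{Eq-nu0-d0}, recovering \eqref{Eq-para-r+} exactly. With these two tweaks your argument goes through without further difficulty.
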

	
	\begin{proof}
		Using \eqref{Eq-nu-parameterization} and \eqref{Eq-mu-parameterization}, we know that
		$\nu(z)-\nu(\xi)$, $\nu(\hat{z})-\nu(\xi)$, $\lambda^2(z)-\mu(\xi)$ and  $\lambda^2(z)-\mu^*(\xi)$ are elliptic functions of $\xi$. Consequently,  within a periodic parallelogram, the number of their zeros equals that of their poles respectively. Using \eqref{Eq-nu-parameterization}, analysis of the zeros and poles yields
		\begin{align}\label{Eq-zero-pole-1}
			\nu(z)-\nu(\xi)\propto\frac{\sigma(\xi-z)\sigma(\xi+z)}{\sigma(\kappa+\xi)\sigma(\xi-\kappa)}.
		\end{align} Analogously, we derive
		\begin{align}\label{Eq-zero-pole-2}
			\nu(\hat{z})-\nu(\xi)\propto\frac{\sigma(\xi-\hat{z})\sigma(\xi+\hat{z})}{\sigma(\kappa+\xi)\sigma(\xi-\kappa)}.
		\end{align}
		Furthermore, based on the expression \eqref{Eq-mu-parameterization} and \eqref{Eq-shift}, we arrive at
		\begin{align}\label{Eq-zero-pole-3}
			\lambda^2(z)-\mu(\xi)\propto\frac{\sigma(\xi-z)\sigma(\xi-\hat{z})}{\sigma(\xi+\kappa)\sigma(\xi+\rho)}.
		\end{align}
		It follows directly from \eqref{Eq-mu-parameterization} that  the auxiliary spectrum $\mu$ admits the symmetry $\mu^*(\xi)=\mu(-\xi)$. Thus we obtain
		\begin{align}\label{Eq-zero-pole-4}
			\lambda^2(z)-\mu^*(\xi)\propto\frac{\sigma(\xi+z)\sigma(\xi+\hat{z})}{\sigma(\xi-\kappa)\sigma(\xi-\rho)}.
		\end{align}
		Moreover, we deduce from  \eqref{Eq-DNLS-elliptic-solution} that $u$ admits the symmetry $u^*(\xi,t)=u(-\xi,-t)$. Henceforth,
		\begin{align}\label{Eq-zero-pole-5}
			\frac{u^*(\xi,t)}{u(\xi,t)}=\frac{\sigma(\xi-\rho)\sigma(\xi-\kappa)^3}{\sigma(\xi+\rho)\sigma(\xi+\kappa)^3}e^{2F(\xi,t)}.
		\end{align}
		Combining \eqref{Eq-fgh}, \eqref{Eq-rpm} and \eqref{Eq-zero-pole-1}-\eqref{Eq-zero-pole-5} together, we deduce that 
		\begin{align}\label{Eq-r_+}
			r_{+}(\xi,t;z)=\hat{r}_+(z)\frac{\sigma(\xi+\hat{z})\sigma(\xi-\kappa)}{\sigma(\xi-z)\sigma(\xi+\kappa)}e^{F(\xi,t)},
		\end{align}
		where $\hat{r}_+(z)$ is independent of $\xi$ and $t$.
		By plugging $\xi=t=0$ into \eqref{Eq-r_+} and making utilize of \eqref{Eq-fgh}, \eqref{Eq-rpm}, \eqref{Eq-zero-pole-1} and \eqref{Eq-zero-pole-2} together with the first formula in \eqref{Eq-formulas of Weierstrass functions}, we have
		\begin{align}
			r_+(0,0;z) \xlongequal{\eqref{Eq-r_+}}\hat{r}_+(z)\frac{\sigma(\hat{z})}{\sigma(z)}\xlongequal[\eqref{Eq-rpm}]{\eqref{Eq-fgh}}\sqrt{-\frac{\nu(\hat{z})-\nu(0)}{\nu(z)-\nu(0)}}\xlongequal[\eqref{Eq-zero-pole-2}]{\eqref{Eq-zero-pole-1}}\sqrt{-\frac{\sigma(z+\kappa)\sigma(z-\kappa)}{\sigma(\hat{z}+\kappa)\sigma(\hat{z}-\kappa)}}\frac{\sigma(\hat{z})}{\sigma(z)},
		\end{align}
		which indicates that 
		\begin{align}
			\hat{r}_+(z)=\sqrt{-\frac{\sigma(z+\kappa)\sigma(z-\kappa)}{\sigma(\hat{z}+\kappa)\sigma(\hat{z}-\kappa)}}=\mathrm{i}\frac{d_0(\hat{z})}{d_0(z)}.
		\end{align}
		Henceforth we have proved the proposition.
	\end{proof}
	With these preparations in hand, we can establish the fundamental solution matrix to the Lax pair \eqref{Eq-Lax pair}, which is concluded in the following theorem.
	
	\begin{theorem}	[The fundamental solution matrix to the Lax pair]\label{thm:solution to the Lax pair}
		The fundamental solution matrix of the Lax pair \eqref{Eq-Lax pair} with the elliptic solution \eqref{Eq-DNLS-elliptic-solution} at $\lambda(z)$ can be written as
		\begin{align}\label{Eq-solution to the Lax pair-1}
			\Phi \big(\xi,t;\lambda(z)\big) = \begin{pmatrix} 
				d(\xi;z)d_0(z)E(\xi,t;z) & d(\xi;\hat{z})d_0(\hat{z})E(\xi,t;\hat{z}) \\
				\mathrm{i}d(-\xi;\hat{z})d_0(\hat{z})E(\xi,t;z)e^{F(\xi,t)} & \mathrm{i}d(-\xi;z)d_0(z)E(\xi,t;\hat{z})e^{F(\xi,t)}
			\end{pmatrix},
		\end{align}
		where the components are defined by
		\begin{equation}\label{Eq-solution to the Lax pair-2}
			\begin{split}
				E(\xi,t;z) &= \exp\Big(\frac{1}{2}\big(\zeta(\kappa+z)+\zeta(\rho+z)-\zeta(\rho+\kappa)-\zeta(2\kappa)\big)\xi-8\big(\mathrm{i}\alpha_4+y(z)\big)t\Big),\quad d(\xi;z) = \frac{\sigma(z-\xi)}{\sigma(\xi-\kappa)}.
			\end{split}
		\end{equation}
	\end{theorem}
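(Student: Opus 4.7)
The plan is to assemble the fundamental solution matrix directly from Lemmas \ref{Prop-psi+} and \ref{Prop-r+}, since together they already supply one scalar component $\phi_+(\xi,t;z)$ and the ratio $r_+(\xi,t;z) = \psi_+/\phi_+$; the second linearly independent column is obtained by the substitution $z \mapsto \hat{z}$, which by \eqref{Eq-shift} preserves $\lambda(z)$ but flips the sign of $y(z)$, and therefore produces the other eigenvector of $\mathbf{L}$. So the natural candidate for a fundamental matrix at the fixed spectral value $\lambda(z)$ is
\begin{equation*}
\begin{pmatrix} \phi_+(\xi,t;z) & \phi_+(\xi,t;\hat z) \\ r_+(\xi,t;z)\,\phi_+(\xi,t;z) & r_+(\xi,t;\hat z)\,\phi_+(\xi,t;\hat z) \end{pmatrix},
\end{equation*}
and I intend to verify, after right-multiplying by the diagonal normalization $\mathrm{diag}(d_0(z),d_0(\hat z))$, that this coincides with the matrix displayed in \eqref{Eq-solution to the Lax pair-1}.

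For the top row this is immediate: the explicit formula in Lemma \ref{Prop-psi+} is already $\phi_+(\xi,t;z)=d(\xi;z)\,E(\xi,t;z)$, so multiplication by $d_0(z)$ yields the claimed $(1,1)$ and $(1,2)$ entries. The slightly more delicate step is the bottom row. Substituting the formula for $r_+$ from Lemma \ref{Prop-r+} into $r_+\phi_+\, d_0(z)$, the factors $\sigma(\xi-z)$ in the denominator of $r_+$ and $\sigma(z-\xi)$ in $\phi_+$ cancel (up to a sign supplied by the oddness of $\sigma$), as do $\sigma(\xi-\kappa)$ and $\sigma(\kappa-\xi)$, collapsing the product to $\mathrm{i}\,\sigma(-\hat z-\xi)\,d_0(\hat z)\,E(\xi,t;z)\,e^{F(\xi,t)}/\sigma(\kappa+\xi)$; using $\sigma(-\hat z-\xi) = -\sigma(\hat z+\xi)$ and the definition $d(-\xi;\hat z) = \sigma(\hat z+\xi)/\sigma(-\xi-\kappa) = -\sigma(\hat z+\xi)/\sigma(\xi+\kappa)$, the two minus signs cancel and the $(2,1)$ entry matches \eqref{Eq-solution to the Lax pair-1}. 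The $(2,2)$ entry follows from the same computation under $z\mapsto\hat z$, noting from \eqref{Eq-shift} that the exponential factor $e^{-8y(z)t}$ in $E$ correctly switches sign when passing through $\hat{\hat z}=z$.

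Finally I would justify that the resulting matrix is indeed a fundamental solution matrix, i.e.\ that its columns are linearly independent over $\mathbb{C}$ generically. This amounts to checking that $r_+(\xi,t;z)\neq r_+(\xi,t;\hat z)$ at, say, $(\xi,t)=(0,0)$, which is transparent from \eqref{Eq-rpm} since the two values correspond to the distinct eigenvalues $\pm \mathrm{i}y(z)$ of $\mathbf{L}$; equivalently, one could compute the determinant of the matrix in \eqref{Eq-solution to the Lax pair-1} and verify by the Abel-type argument used above Proposition \ref{Prop-ue} that it is a non-vanishing elliptic expression. The only step that is not essentially bookkeeping is the verification that the sigma-function parities conspire to produce exactly $\mathrm{i}\,d(-\xi;\hat z)d_0(\hat z)$ rather than its negative; I expect this to be the main (though still minor) obstacle, and I would handle it by carefully tracking the three sign flips coming from $\sigma(z-\xi)$, $\sigma(\kappa-\xi)$, and $\sigma(-\hat z-\xi)$ as indicated above.
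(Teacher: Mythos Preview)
Your proposal is correct and follows essentially the same approach as the paper: assemble the first column from Lemmas~\ref{Prop-psi+} and~\ref{Prop-r+}, obtain the second column via $z\mapsto\hat z$, and invoke the distinct eigenvalues $\pm\mathrm{i}y(z)$ of $\mathbf{L}$ for linear independence. The paper's proof is terser and does not spell out the sign-tracking you outline for the $(2,1)$ entry, but your explicit verification of the three $\sigma$-parity cancellations is exactly what underlies its one-line claim.
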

	
	\begin{proof}
		Based on Lemma~\ref{Prop-psi+}-\ref{Prop-r+}, the vector $\big(d(\xi;z)d_0(z)E(z),\mathrm{i}d(-\xi;\hat{z})d_0(\hat{z})E(z)e^{F(\xi,t)}\big)^T$ is a solution to the Lax pair~\eqref{Eq-Lax pair}. Substituting $\hat{z}$ for $z$ yields another  solution. These two solutions are linearly independent since that they belong to the kernels of $\textbf{L}$ at distinct eigenvalues $\pm \mathrm{i}y(z)$ respectively. This completes the proof.
	\end{proof}

	\section{$N$-elliptic localized solutions}
	In this section, we construct the $N$-elliptic localized solution to the DNLS equation \eqref{Eq-DNLS-equation} through $BT_0$ \eqref{Eq-BT0-1}-\eqref{Eq-BT0-2}. The solution is written in compact form as the derivative of a ratio of two determinants. 
	A crucial tool for constructing $N$-elliptic localized solutions is the Sherman-Morrison-Woodbury-type matrix identity:
	\begin{equation}\label{Eq-SMW-identity}
		\begin{aligned}
			a+\textbf{q}^{\dagger} \mathbf{A}^{-1} \textbf{p}=\frac{a^{1-N} \operatorname{det}\left(a \mathbf{A}+\textbf{p} \textbf{q}^{\dagger}\right)}{\operatorname{det} \mathbf{A}},
		\end{aligned}
	\end{equation}
	where $\mathbf{A}\in\mathbb{C}^{n\times n}$, $a\in\mathbb{C}$ and $\textbf{p},\textbf{q}\in\mathbb{C}^{n\times 1}$. Assume that $z_i,i=1,2,\ldots,N$ are uniform parameters that correspond with the spectral parameters via $\lambda_i=\lambda(z_i)$ and $\phi_i=(\phi^{(1)}_i,\phi^{(2)}_i)^T,i=1,2,\ldots,N$ are solutions to the Lax pair \eqref{Eq-Lax pair} at the elliptic solution \eqref{Eq-DNLS-elliptic-solution}-\eqref{Eq-F} respectively. Each $\phi_i$ represents a linear combination of the two columns of the fundamental solution matrix $\Phi\left(\xi,t;\lambda(z_i)\right)$ defined in \eqref{Eq-solution to the Lax pair-1}. Specifically, 
	\begin{align}\label{Eq-phi-l}
		\begin{pmatrix}
			\phi_{i}^{(1)} \\ \phi_{i}^{(2)}
		\end{pmatrix}=	\begin{pmatrix} d(\xi;z_i)d_0(z_i)E(\xi,t;z_i)  \\
			\mathrm{i}d(-\xi;\hat{z}_i)d_0(\hat{z}_i)e^{F(\xi,t)}E(\xi,t;z_i) 	
		\end{pmatrix}
		+\alpha_i\begin{pmatrix}
			d(\xi;\hat{z}_i)d_0(\hat{z}_i)E(\xi,t;\hat{z}_i)\\
			\mathrm{i}d(-\xi;z_i)d_0(z_i)e^{F(\xi,t)}E(\xi,t;\hat{z}_i)
		\end{pmatrix},
	\end{align}
	where $\alpha_i\in\mathbb{C},i=1,2,\ldots,N$.
	
	Applying the Sherman-Morrison-Woodbury identity \eqref{Eq-SMW-identity} together with the $N$-fold Darboux-B\"acklund transformation $BT_0$ \eqref{Eq-BT0-1}-\eqref{Eq-BT0-2} yields the $N$-elliptic localized solution:
	\begin{equation}\label{Eq-uN-Sherman}
		\begin{split}
			u_N=\big(\tilde{u}+\mathrm{i}\phi^{(2)\dagger}\textbf{M}^{-1}\phi^{(1)}\big)_\xi=\left(\frac{\tilde{u}^{1-N}\det(\tilde{u}\textbf{M}+\mathrm{i}\phi^{(1)}\phi^{(2)\dagger})}{\det(\textbf{M})}\right)_\xi,
		\end{split}
	\end{equation}
	where $\phi^{(i)}=(\phi^{(i)}_1,\phi^{(i)}_2,\ldots,\phi^{(i)}_N)^T$ for $i=1,2$ and
	\begin{align}\label{Eq-tildeu}
		\tilde{u}=-\frac{\sqrt{\nu_0}\sigma(\kappa)\sigma(\rho+\kappa)\sigma(2\kappa)\sigma(\xi+\rho+2\kappa)}{\sigma(\rho)\sigma(\rho+3\kappa)\sigma(\xi-\kappa)}e^{-F(\xi,t)},
	\end{align}
	is the primitive function of the elliptic solution $u$. In fact, combining \eqref{Eq-DNLS-elliptic-solution}-\eqref{Eq-F}, \eqref{Eq-tildeu}, the derivative formula \eqref{Eq-derivative-sigma} and the third identity in \eqref{Eq-formulas of Weierstrass functions}, we derive
	\begin{equation}
		\begin{split}
			\tilde{u}_\xi 
			&= \left( \frac{\sigma'(\xi+\rho+2\kappa)}{\sigma(\xi+\rho+2\kappa)} - \frac{\sigma'(\xi-\kappa)}{\sigma(\xi-\kappa)} -\big(\zeta(\kappa+\rho)+\zeta(2\kappa)\big) \right)\tilde{u} \\
			&\xlongequal{\eqref{Eq-derivative-sigma}} \big( \zeta(\xi+\rho+2\kappa)-\zeta(\xi-\kappa)-\zeta(\kappa+\rho)-\zeta(2\kappa) \big)\tilde{u}\\
			&\xlongequal{\eqref{Eq-formulas of Weierstrass functions}}-\frac{\sigma(\xi+\rho)\sigma(\xi+\kappa)\sigma(3\kappa+\rho)}{\sigma(\xi-\kappa)\sigma(\kappa+\rho)\sigma(2\kappa)\sigma(\xi+\rho+2\kappa)}\tilde{u}=u,
		\end{split}
	\end{equation}
	which
	verifies that the $\xi$-derivative of $\tilde{u}$ in \eqref{Eq-tildeu} recovers the elliptic solution $u$ \eqref{Eq-DNLS-elliptic-solution}.

The squared moduli are classified into three types at the end of Subsection $3.1$ based on $\lambda^{(i)},i=1,2,3,4$. 
Since the representations of $\kappa$ and $\rho$ are different for squared moduli with $\nu_0=\nu_1$ and $\nu_0=\nu_4$, we analyze these scenarios separately. For squared moduli with $\nu_0=\nu_1$ (which may belong to any of the three types), $\kappa$ and $\rho$ are given by \eqref{Eq-kappa-rho}. Combining \eqref{Eq-ei-and-vi} and \eqref{Eq-kappa-rho} yields
\begin{equation}\label{Eq-kappa-expression1-in-proof}
	\wp(\kappa)=e_i-\frac{1}{4}\prod_{j=2,j\neq i+1}^4(\nu_1-\nu_j),\quad i=1,2,3.
\end{equation}
This implies $\wp(\kappa)<e_3$ and consequently $\Re(\kappa)=0$. Similarly, \eqref{Eq-ei-and-vi} and \eqref{Eq-kappa-rho} give
\begin{equation}\label{Eq-rho-expression1-in-proof}
	\wp(\rho)=e_i-\frac{\nu_{i+1}}{4\nu_1}\prod_{j=2,j\neq i+1}^{4}(\nu_1-\nu_i),\quad i=1,2,3.
\end{equation}
We find $e_2\leq \wp(\rho)\leq e_1$ and $\wp(\rho)\leq e_3$, leading to $\Re(\rho)=\omega_1$ for Type 1 and $\Re(\rho)=0$ for Type 2.  
For Type 3, $e_1\in\mathbb{R}$ while $e_{2,3}$ are complex conjugates. From \eqref{Eq-kappa-expression1-in-proof}-\eqref{Eq-rho-expression1-in-proof}, we deduce $\wp(\kappa)<e_1$ and $\wp(\rho)<e_1$, implying $\wp'(\kappa),\wp'(\rho)\in\mathrm{i}\mathbb{R}$ and restricting $\Re(\kappa),\Re(\rho)$ to $0$ or $\omega_1$. Specific values of $\Re(\kappa)$ and $\Re(\rho)$ follow from comparing $\wp(\kappa),\wp(\rho)$ with $\Re(e_3)$. Henceforth, the equations \eqref{Eq-kappa-expression1-in-proof}-\eqref{Eq-rho-expression1-in-proof} establish $\Re(\kappa)=0$ when $\Re(\nu_1-\nu_3)\geq0$ and $\Re(\kappa)=\omega_1$ when $\Re(\nu_1-\nu_3)<0$. Similarly, $\Re(\rho)=0$ when $\Re\big(\nu_4(\nu_1-\nu_3)\big)\geq0$ and $\Re(\rho)=\omega_1$ when $\Re\big(\nu_4(\nu_1-\nu_3)\big)<0$. 
Notably, the combination $\Re(\kappa)=\omega_1$ with $\Re(\rho)=0$ cannot occur, as demonstrated by
\begin{equation}
	\begin{split}
		\Re(\nu_1-\nu_3)&=\nu_1-\Re(\nu_3),\\
		\Re\big(\nu_4(\nu_1-\nu_3)\big)&=\Re(\nu_3)\nu_1-\Re^2(\nu_3)-\Im^2(\nu_3).
	\end{split}
\end{equation}
When $\Re(\nu_1-\nu_3)\leq0$, we immediately obtain $\Re\big(\nu_4(\nu_1-\nu_3)\big)<0$, which excludes $\Re(\kappa)=\omega_1$ and $\Re(\rho)=0$. Squared moduli with $\nu_0=\nu_4$ belong to Type 2, equations \eqref{Eq-kappa-expression1-in-proof} and \eqref{Eq-rho-expression1-in-proof} remain valid with $\nu_i,i=1,2$ exchanged by $\nu_{5-i}$. Consequently, $\Re(\kappa)=\Re(\rho)=0$ holds for this case.

In summary, the three types of squared moduli exhibit the following properties for $\kappa,\rho,\omega_1$ and $\omega_3$:
\begin{itemize}
	\item[-] Type 1: $\Re(\kappa)=0,\Re(\rho)=\omega_1,\Re(\omega_3)=0$,
	\item[-] Type 2: $\Re(\kappa)=\Re(\rho)=0,\Re(\omega_3)=0$,
	\item[-] Type 3: All the three situations can happen: $\Re(\kappa)=0,\Re(\rho)=\omega_1$, $\Re(\kappa)=\Re(\rho)=0$ and $\Re(\kappa)=\Re(\rho)=\omega_1$. Moreover, $\Re(\omega_3)\neq 0$.
\end{itemize}

To obtain the compact form of the $N$-elliptic localized solution, we express the determinant entries in \eqref{Eq-uN-Sherman} using Weierstrass functions. Subsequent computations involve frequent conjugation operations, where distinct values of $\Re(\kappa)$ and $\Re(\rho)$ yield different outcomes of the $N$-elliptic localized solutions and their asymptotics. To achieve uniform formulation, we introduce
	
	\begin{equation}\label{Eq-Ij}
		\begin{split}
			I_j=\begin{dcases}
				-1,&   \Re(\kappa)=\Re(\rho)=0,  \\
				e^{-2\zeta(\omega_1)(\kappa+\rho-\omega_1-z_j^*)},& \Re(\kappa)=0,\Re(\rho)=\omega_1,\\
				-e^{-2\zeta(\omega_1)(2\kappa+2\rho-4\omega_1-2z_j^*)},& \Re(\kappa)=\Re(\rho)=\omega_1,\\   
			\end{dcases}\quad    I_0(\xi)=\begin{dcases}
				1,&   \Re(\kappa)=\Re(\rho)=0,\\
				e^{2\zeta(\omega_1)\xi},&    \Re(\kappa)=0,\Re(\rho)=\omega_1,\\
				e^{4\zeta(\omega_1)\xi},&    \Re(\kappa)=\Re(\rho)=\omega_1.\\   
			\end{dcases}
		\end{split}
	\end{equation}
	These preparations yield a compact expression for the $N$-elliptic localized solutions to the DNLS equation \eqref{Eq-DNLS-equation}, as stated in the following theorem. The solution is derived from the $N$-fold Darboux-B\"acklund transformation $BT_0$ with the aforemention elliptic solution $u$ \eqref{Eq-DNLS-elliptic-solution} and $\phi_i=(\phi^{(1)}_i,\phi^{(2)}_i)^T,i=1,2,\ldots,N$.
	\begin{theorem}[$N$-elliptic localized solution]\label{thm:N-soliton solution}
		The $N$-elliptic localized solution of the DNLS equation \eqref{Eq-DNLS-equation} could be written as 
		\begin{equation}\label{Eq-N-elliptic-localized-solution}
			\begin{split}
				u_{N}(\xi,t)=\left(\Big(-\frac{\sigma(\xi+\rho+2\kappa)}{\sigma(\xi-\kappa)}\Big)^{1-N}\frac{\sqrt{\nu_0}\sigma(\kappa)\sigma(\rho+\kappa)\sigma(2\kappa)}{\sigma(\rho)\sigma(\rho+3\kappa)}\frac{\det\mathbf{B}_{N}^{(1)}}{\det\mathbf{B}_{N}^{(2)}}e^{-F(\xi,t)}\right)_\xi.
			\end{split}
		\end{equation}
		The matrices $\mathbf{B}_N^{(i)},i=1,2$ are $N\times N$, whose elements are given by
		\begin{equation}\label{Eq-N-elliptic-localized-solution-2}
			\begin{split}
				\big(\mathbf{B}_{N}^{(1)}\big)_{ij}&= \Sigma^{(1)}(\xi;z_i,z_j^*)r^{-1}(z_i)s(z_j^*)d_0(z_i)d_0^*(\hat{z}_j)E(\xi,t;z_i)E^*(\xi,t;z_j)I_j\\
				&\quad +\alpha_i\Sigma^{(1)}(\xi;\hat{z}_i,z_j^*)r(z_i)s(z_j^*)d_0(\hat{z}_i)d_0^*(\hat{z}_j)E(\xi,t;\hat{z}_i)E^*(\xi,t;z_j)I_j\\
				&\quad -\alpha_j^*\Sigma^{(1)}(\xi;z_i,\check{z}_j)r^{-1}(z_i)s^{-1}(z_j^*)d_0(z_i)d_0^*(z_j)E(\xi,t;z_i)E^*(\xi,t;\hat{z}_j)I_0(\xi)\\
				&\quad -\alpha_i\alpha_j^*\Sigma^{(1)}(\xi;\hat{z}_i,\check{z}_j)r(z_i)s^{-1}(z_j^*)d_0(\hat{z}_i)d_0^*(z_j)E(\xi,t;\hat{z}_i)E^*(\xi,t;\hat{z}_j)I_0(\xi),\\
				\big(\mathbf{B}_N^{(2)}\big)_{ij}&=\Sigma^{(2)}(\xi;z_i,z_j^*)d_0(z_i)d_0^*(\hat{z}_j)E(\xi,t;z_i)E^*(\xi,t;z_j)I_j\\
				&\quad +\alpha_i\Sigma^{(2)}(\xi;\hat{z}_i,z_j^*)d_0(\hat{z}_i)d_0^*(\hat{z}_j)E(\xi,t;\hat{z}_i)E^*(\xi,t;z_j)I_j\\
				&\quad -\alpha_j^*\Sigma^{(2)}(\xi;z_i,\check{z}_j)d_0(z_i)d_0^*(z_j)E(\xi,t;z_i)E^*(\xi,t;\hat{z}_j)I_0(\xi)\\
				&\quad -\alpha_i\alpha_j^*\Sigma^{(2)}(\xi;\hat{z}_i,\check{z}_j)d_0(\hat{z}_i)d_0^*(z_j)E(\xi,t;\hat{z}_i)E^*(\xi,t;\hat{z}_j)I_0(\xi),
			\end{split}
		\end{equation}
		where $\check{w}=\kappa+\rho-w^*$ and
		\begin{equation}\label{Eq-N-elliptic-localized-solution-3}
			\begin{split}
				r(\blacktriangle)&=\frac{\sigma(\kappa-\blacktriangle)}{\sigma(\rho+2\kappa+\blacktriangle)},\quad s(\blacktriangle)=\frac{\sigma(\kappa+\blacktriangle)}{\sigma(2\kappa+\rho-\blacktriangle)},\\
				\Sigma^{(1)}(\xi;\blacktriangle,\bullet) &= -\frac{\sigma(\blacktriangle-\kappa)\sigma(\blacktriangle+\bullet-\rho-2\kappa-\xi)\sigma(\rho+2\kappa-\bullet)}{\sigma(\blacktriangle+\bullet)}, \\
				\Sigma^{(2)}(\xi;\blacktriangle,\bullet) &= \frac{\sigma(\blacktriangle-\kappa)\sigma(\blacktriangle+\bullet+\kappa-\xi)\sigma(\rho+2\kappa-\bullet)}{\sigma(\blacktriangle+\bullet)}.\\	
			\end{split}
		\end{equation}
	\end{theorem}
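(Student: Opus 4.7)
My plan is to start from the ratio-of-determinants form \eqref{Eq-uN-Sherman}, obtained by combining Proposition~\ref{thm:DT} with the Sherman--Morrison--Woodbury identity \eqref{Eq-SMW-identity}, and to reduce the two determinants $\det(\mathbf{M})$ and $\det(\tilde u\mathbf{M}+\mathrm{i}\phi^{(1)}\phi^{(2)\dagger})$ to compact sigma-function form entry by entry. All ingredients are available: the explicit fundamental matrix \eqref{Eq-solution to the Lax pair-1}--\eqref{Eq-solution to the Lax pair-2}, the decomposition \eqref{Eq-phi-l} of each $\phi_i$ as a linear combination of the two columns of $\Phi$ with weights $(1,\alpha_i)$, the uniform parameterization \eqref{Eq-parameterization-lambda-y} of the spectral curve, the primitive formula \eqref{Eq-tildeu} for $\tilde u$, and the Weierstrass sigma addition formula highlighted in the introduction.

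First I would substitute \eqref{Eq-phi-l} into the definition \eqref{Eq-BT0-2} of $\mathbf{M}_{ij}$ and into the bilinear $\mathrm{i}\phi_i^{(1)}\phi_j^{(2)*}$ appearing in the numerator. Each bilinear then expands into four summands indexed by the pairs $(z_i,z_j^*),\ (\hat z_i,z_j^*),\ (z_i,(\hat z_j)^*),\ (\hat z_i,(\hat z_j)^*)$, with weights $1,\ \alpha_i,\ \alpha_j^*,\ \alpha_i\alpha_j^*$; this four-term structure is exactly what appears in \eqref{Eq-N-elliptic-localized-solution-2}. The exponential factors $d_0(\cdot)E(\xi,t;\cdot)$ carry over verbatim from \eqref{Eq-solution to the Lax pair-1}--\eqref{Eq-solution to the Lax pair-2}. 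The central algebraic step is then to simplify the spectral combination $\lambda_i\lambda_j^*\bigl(\tfrac{1}{\lambda_j^*-\lambda_i}-\tfrac{1}{\lambda_j^*+\lambda_i}\bigr)$ multiplying each summand: invoking \eqref{Eq-parameterization-lambda-y} and the sigma addition formula, this spectral factor, together with the $\sigma$-factors pulled in from the columns of $\Phi$, telescopes to a single sigma quotient whose denominator is $\sigma(z_i+z_j^*)$ (with the obvious hatted/checked variants in the other three summands). That quotient is precisely the kernel $\Sigma^{(2)}$ of \eqref{Eq-N-elliptic-localized-solution-3}. Passing from $\mathbf{M}_{ij}$ to $\tilde u\mathbf{M}_{ij}+\mathrm{i}\phi_i^{(1)}\phi_j^{(2)*}$ merely shifts one argument of the sigma product by $\xi-(\rho+2\kappa)-\kappa$, converting $\Sigma^{(2)}$ into $\Sigma^{(1)}$, while the rational prefactors $r(\cdot)$ and $s(\cdot)$ of $\mathbf{B}_N^{(1)}$ compensate for the $\tilde u^{1-N}$ extracted in front of the determinant ratio in \eqref{Eq-N-elliptic-localized-solution}.

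The principal obstacle is the bookkeeping of complex conjugation, which is the sole origin of the factors $I_j$ and $I_0(\xi)$ in \eqref{Eq-Ij}. Because the case analysis preceding the theorem forces $\Re(\kappa),\Re(\rho)\in\{0,\omega_1\}$, the actual conjugates $\kappa^*,\rho^*$ differ from the naive values $-\kappa,-\rho$ by integer multiples of $2\omega_1$, so that $(\hat z_j)^*$ differs from $\check z_j=\kappa+\rho-z_j^*$ by a lattice translation depending on the type. Applying the quasi-periodicity $\sigma(w+2\omega_1)=-e^{2\zeta(\omega_1)(w+\omega_1)}\sigma(w)$ to every affected sigma factor produces exponential corrections; collecting those independent of $\xi$ into $I_j$ and the $\xi$-linear ones into $I_0(\xi)$ yields the uniform formula stated. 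Once this conjugation accounting is settled, \eqref{Eq-N-elliptic-localized-solution}--\eqref{Eq-N-elliptic-localized-solution-3} follow directly from \eqref{Eq-uN-Sherman}. The computations are lengthy but routine; the real content lies in choosing the $\Sigma^{(1,2)}, r, s, I_j, I_0(\xi)$ groupings so that a single expression covers all three types of squared modulus simultaneously.
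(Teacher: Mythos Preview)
Your proposal is correct and follows essentially the same route as the paper's proof: start from \eqref{Eq-uN-Sherman}, expand each entry into the four summands indexed by $(1,\alpha_i)\times(1,\alpha_j^*)$, collapse via the sigma addition formula \eqref{Eq-addition formulas of the sigma functions}, and absorb the quasi-periodicity corrections from conjugation into $I_j,I_0(\xi)$. One point worth sharpening: your description of ``the spectral combination $\lambda_i\lambda_j^*\bigl(\tfrac{1}{\lambda_j^*-\lambda_i}-\tfrac{1}{\lambda_j^*+\lambda_i}\bigr)$ multiplying each summand'' is slightly off, since the $\sigma_3$ in \eqref{Eq-BT0-2} means the two spectral denominators weight $\phi^{(1)}$- and $\phi^{(2)}$-parts differently. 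The paper handles this by first rewriting $\mathbf{M}_{ij}=\dfrac{2\lambda_i^{-1}\phi_j^{(2)*}\phi_i^{(2)}+2(\lambda_j^*)^{-1}\phi_j^{(1)*}\phi_i^{(1)}}{\lambda_i^{-2}-(\lambda_j^*)^{-2}}$, so that in each of the four summands the $(\lambda_j^*)^{-1}d^*(\xi;\cdot)d(\xi;\cdot)$ piece and the $\lambda_i^{-1}d^*(-\xi;\cdot)d(-\xi;\cdot)$ piece (with the $\lambda^{-1}$ factors supplied by \eqref{Eq-parameterization-lambda-y}) become two four-fold sigma products to which \eqref{Eq-addition formulas of the sigma functions} applies directly; a second application then combines $\tilde u\mathbf{M}_{ij}$ with $\mathrm{i}\phi_i^{(1)}\phi_j^{(2)*}$. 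With that regrouping made explicit, your outline matches the paper.
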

	\begin{proof}
	The proof proceeds as follows. First, we evaluate the numerator and denominator of $\mathbf{M}_{ij}$ separately to obtain an expression for $\mathbf{M}_{ij}$. Direct evaluation of $(\mathrm{i}\phi^{(1)}\phi^{(2)\dagger})_{ij}$ via \eqref{Eq-phi-l} then yields the representation for $\left(\tilde{u}\mathbf{M}+\mathrm{i}\phi^{(1)}\phi^{(2)\dagger}\right)_{ij}$. Combining the computation results with \eqref{Eq-uN-Sherman} and the representation for $\tilde{u}$ \eqref{Eq-tildeu} ultimately gives the $N$-elliptic localized solutions.
	
		From \eqref{Eq-BT0-2},
we have $	\textbf{M}_{ij}
=\frac{2\lambda_i^{-1}\big(\phi_{j}^{(2)*}\phi_{i}^{(2)}\big)+2(\lambda_j^{*})^{-1}\big(\phi_{j}^{(1)*}\phi_{i}^{(1)}\big)}{\lambda_i^{-2}-(\lambda_j^{*})^{-2}},\quad 1\leq i,j\leq N$. Using the solution forms \eqref{Eq-phi-l}, the numerator of $\mathbf{M}_{ij}$ evaluates to
	\begin{equation}\label{Eq-Mij-numerator}
		\begin{split}
			&\quad\quad 2\left((\lambda_j^{-1})^*\phi_{j}^{(1)*}\phi_{i}^{(1)}+\lambda_i^{-1}\phi_{j}^{(2)*}\phi_{i}^{(2)}\right)\\
			&=2\left(	(\lambda_j^{-1})^*\left(E^*(\xi,t;z_j)d^*(\xi;z_j)d_0^*(z_j)+\alpha_j^*E^*(\xi,t;\hat{z}_j)d^*(\xi;\hat{z}_j)d_0^*(\hat{z}_j)\right)\right.\\
			&\quad \left(E(\xi,t;z_i)d(\xi;z_i)d_0(z_i)+\alpha_iE(\xi,t;\hat{z}_i)d(\xi;\hat{z}_i)d_0(\hat{z}_i)\right)\\
			&\quad +(\lambda_i^{-1})\left(E^*(\xi,t;z_j)d^*(-\xi;\hat{z}_j)d_0^*(\hat{z}_j)+\alpha_j^*E^*(\xi,t;\hat{z}_j)d^*(-\xi;z_j)d_0^*(z_j)\right)\\ &\left.\quad \left(E(\xi,t;z_i)d(-\xi;\hat{z}_i)d_0(\hat{z}_i)+\alpha_iE(\xi,t;\hat{z}_i)d(-\xi;z_i)d_0(z_i)\right)e^{F(\xi,t)+F^*(\xi,t)}\right).
		\end{split}
	\end{equation}
Expanding \eqref{Eq-Mij-numerator} expresses the numerator of $\mathbf{M}_{ij}$ as a sum of four terms. Different values of $\Re(\kappa)$ and $\Re(\rho)$ lead to distinct computation results for the components involving complex conjugation operations. We illustrate the computation for the first term as an example. Specifically, combining \eqref{Eq-F}, \eqref{Eq-nu0-d0} and \eqref{Eq-solution to the Lax pair-2} gives:
		\begin{equation*}\label{Eq-relations}
			\begin{split}
				2\Re(F(\xi,t))&=\begin{dcases}
					0,&   \Re(\kappa)=\Re(\rho)=0,\\
					2\zeta(\omega_1)\xi,&    \Re(\kappa)=0,\Re(\rho)=\omega_1,\\
					8\zeta(\omega_1)\xi,&    \Re(\kappa)=\Re(\rho)=\omega_1,\\
				\end{dcases}\quad 	d^*(\xi;z_j)=\begin{dcases}
					\frac{\sigma(\xi-z_j^*)}{\sigma(-\kappa-\xi)},&   \Re(\kappa)=0,\\
					-\frac{\sigma(\xi-z_j^*)}{\sigma(-\kappa-\xi)}e^{2\zeta(\omega_1)(\kappa+\xi-\omega_1)},&    \Re(\kappa)=\omega_1,\\
				\end{dcases}\\
				d^*(-\xi;\hat{z}_j)&=\begin{dcases}
					\frac{\sigma(-\xi-\kappa-\rho+z_j^*)}{\sigma(-\kappa+\xi)},&   \Re(\kappa)=\Re(\rho)=0,\\
					-\frac{\sigma(-\xi-\kappa-\rho+z_j^*)}{\sigma(-\kappa+\xi)}e^{2\zeta(\omega_1)(-\xi-\kappa-\rho+\omega_1+z_j^*)},&    \Re(\kappa)=0,\Re(\rho)=\omega_1,\\
					-\frac{\sigma(-\xi-\kappa-\rho+z_j^*)}{\sigma(-\kappa+\xi)}e^{2\zeta(\omega_1)(-3\xi-\kappa-2\rho+3\omega_1+2z_j^*)},&    \Re(\kappa)=\Re(\rho)=\omega_1.\\
				\end{dcases}\\
			\end{split}
		\end{equation*}
		When $\Re(\kappa)=0$, we further observe that 
		\begin{equation}\label{Eq-relations-2}
			\begin{split}
				2\Re(F(\xi,t))=\ln\big(I_0(\xi)\big),\quad 	
				d^*(-\xi;\hat{z}_j)=-\frac{\sigma(-\xi-\kappa-\rho+z_j^*)}{\sigma(-\kappa+\xi)}I_0^{-1}(\xi)I_j,
			\end{split}
		\end{equation}
		establishing uniform representations through $I_0(\xi)$ and $I_j$ \eqref{Eq-Ij}. Equations \eqref{Eq-parameterization-lambda-y} and \eqref{Eq-Ij} imply
		\begin{equation}\label{Eq-lambda-conjugate}
			\begin{split}
				&\quad (\lambda^{*}_j)^{-1}
				=-\frac{2\sqrt{\nu_0}\sigma(-\rho+z_j^*)\sigma(\rho+\kappa)\sigma(\kappa)d_0^*(\hat{z}_j)}{\sigma(-\kappa-z_j^*)\sigma(\rho)d_0^*(z_j)}I_j.
			\end{split}
		\end{equation}
		Using \eqref{Eq-relations-2}-\eqref{Eq-lambda-conjugate},
		we derive	\begin{equation}\label{Eq-Mij-denominator-1-part1}
			\begin{split}
				&\quad (\lambda_j^{-1})^*d^*(\xi;z_j)d(\xi;z_i)d_0^*(z_j)d_0(z_i)\\
				&=-\frac{2\sqrt{\nu_0}\sigma(-\rho+z_j^*)\sigma(\rho+\kappa)\sigma(\kappa)d_0^*(\hat{z}_j)I_j}{\sigma(-\kappa-z_j^*)\sigma(\rho)d_0^*(z_j)}	\frac{\sigma(\xi-z_j^*)}{\sigma(-\kappa-\xi)}\frac{\sigma(\xi-z_i)}{\sigma(\kappa-\xi)}d_0^*(z_j)d_0(z_i).
			\end{split}
		\end{equation}
		Applying \eqref{Eq-parameterization-lambda-y} and \eqref{Eq-relations-2} gives    
		\begin{equation}\label{Eq-Mij-denominator-1-part2}
			\begin{split}
				&\quad (\lambda_i^{-1})d^*(-\xi;\hat{z}_j)d(-\xi;\hat{z}_i)d_0^*(\hat{z}_j)d_0(\hat{z}_i)e^{F(\xi,t)+F^*(\xi,t)}\\
				&=\frac{2\sqrt{\nu_0}\sigma(\rho+z_i)\sigma(-\rho-\kappa)\sigma(\kappa)}{\sigma(z_i-\kappa)\sigma(\rho)}\left(\frac{d_0(\hat{z}_i)}{d_0(z_i)}\right)^2\frac{\sigma(-\xi-\kappa-\rho+z_j^*)}{\sigma(-\kappa+\xi)}\frac{\sigma(-\xi-\hat{z}_i)}{\sigma(\kappa+\xi)}I_jd_0(z_i)d_0^*(\hat{z}_j).
			\end{split}
		\end{equation}
		Combining \eqref{Eq-Mij-denominator-1-part1} and \eqref{Eq-Mij-denominator-1-part2} yields
		\begin{equation}\label{Eq-Mij-denominator-1}
			\begin{split}
				& \quad (\lambda_j^{-1})^*d^*(\xi;z_j)d(\xi;z_i)d_0^*(z_j)d_0(z_i)+(\lambda_i^{-1})d^*(-\xi;\hat{z}_j)d(-\xi;\hat{z}_i)d_0^*(\hat{z}_j)d_0(\hat{z}_i)I_0(\xi)\\
				&=-\frac{2\sqrt{\nu_0}\sigma(\kappa)\sigma(\rho+\kappa)d_0(z_i)d_0^*(\hat{z}_j)I_j}{\sigma(\rho)\sigma(\xi+\kappa)\sigma(\xi-\kappa)}\\
				&\quad \left(\frac{\sigma(-\rho+z_j^*)\sigma(z_j^*-\xi)\sigma(z_i-\xi)}{\sigma(-\kappa-z_j^*)}-\frac{\sigma(\rho+z_i)\sigma(\kappa+\rho+z_i-\xi)\sigma(\kappa+\rho-z_j^*+\xi)d_0^2(\hat{z}_i)}{\sigma(z_i-\kappa)d_0^2(z_i)}\right)\\
				&\xlongequal{\eqref{Eq-addition formulas of the sigma functions}}-\frac{2\sqrt{\nu_0}\sigma(\kappa)\sigma(\rho+\kappa)d_0(z_i)d_0^*(\hat{z}_j)I_j}{\sigma(\rho)\sigma(\xi+\kappa)\sigma(\xi-\kappa)\sigma(-\kappa-z_j^*)\sigma(z_i+2\kappa+\rho)}\\
				&\quad\left(\sigma(-\rho+z_j^*)\sigma(z_j^*-\xi)\sigma(z_i-\xi)\sigma(z_i+2\kappa+\rho)+\sigma(\kappa+z_i)\sigma(\xi-z_j^*+\kappa+\rho)\sigma(-\kappa-\rho-z_i+\xi)\sigma(-\kappa-z_j^*)\right)\\
				&=\frac{2\sqrt{\nu_0}\sigma^2(\rho+\kappa)\sigma(\kappa)\sigma(z_i+z_j^*+\kappa-\xi)\sigma(-z_i+z_j^*-\kappa-\rho)}{\sigma(\rho)\sigma(\xi-\kappa)\sigma(-\kappa-z_j^*)\sigma(z_i+2\kappa+\rho)}d_0(z_i)d_0^*(\hat{z}_j)I_j,
			\end{split}
		\end{equation}
		where the addition formula \eqref{Eq-addition formulas of the sigma functions} is applied. Analogously, we obtain
		\begin{equation}\label{Eq-Mij-denominator2-3-4}
			\begin{split}
				&	(\lambda_j^{-1})^*d^*(\xi;z_j)d(\xi;\hat{z}_i)d_0^*(z_j)d_0(\hat{z}_i)+(\lambda_i^{-1})d^*(-\xi;\hat{z}_j)d(-\xi;z_i)d_0(z_i)d_0^*(\hat{z}_j)I_0(\xi)\\
				&=\frac{2\sqrt{\nu_0}\sigma^2(\rho+\kappa)\sigma(\kappa)}{\sigma(\rho)\sigma(\xi-\kappa)}\frac{\sigma(z_j^*+z_i)\sigma(z_j^*-z_i-\rho-\xi)}{\sigma(\kappa-z_i)\sigma(-\kappa-z_j^*)}d_0(\hat{z}_i)d_0^*(\hat{z}_j)I_j,\\
				&	(\lambda_j^{-1})^*d^*(\xi;\hat{z}_j)d(\xi;z_i)d_0^*(\hat{z}_j)d_0(z_i)+(\lambda_i^{-1})d^*(-\xi;z_j)d(-\xi;\hat{z}_i)d_0(\hat{z}_i)d_0^*(z_j)I_0(\xi)\\
				&=\frac{2\sqrt{\nu_0}\sigma^2(\rho+\kappa)\sigma(\kappa)}{\sigma(\rho)\sigma(\xi-\kappa)}\frac{\sigma(z_j^*+z_i)\sigma(-z_j^*+z_i+\rho-\xi+2\kappa)}{\sigma(\rho+2\kappa+z_i)\sigma(-\rho-2\kappa+z_j^*)}d_0(z_i)d_0^*(z_j)I_0(\xi),\\
				&	(\lambda_j^{-1})^*d^*(\xi;\hat{z}_j)d(\xi;\hat{z}_i)d_0^*(\hat{z}_j)d_0(\hat{z}_i)+(\lambda_i^{-1})d^*(-\xi;z_j)d(-\xi;z_i)d_0(z_i)d_0^*(z_j)I_0(\xi)\\
				&=\frac{2\sqrt{\nu_0}\sigma^2(\rho+\kappa)\sigma(\kappa)}{\sigma(\rho)\sigma(\xi-\kappa)}\frac{\sigma(z_j^*-z_i-\rho-\kappa)\sigma(-z_j^*-z_i+\kappa-\xi)}{\sigma(\kappa-z_i)\sigma(-\rho-2\kappa+z_j^*)}d_0(\hat{z}_i)d_0^*(z_j)I_0(\xi).
			\end{split}
		\end{equation}
		The complex conjugate of \eqref{Eq-mu-parameterization} combined with \eqref{Eq-addition formulas of the sigma functions} provides
		\begin{equation}
			\begin{split}
				\lambda_i^{-2}-	(\lambda_j^*)^{-2}=\frac{4\mathrm{i}\sigma^2(\rho+\kappa)\sigma^2(2\kappa)\sigma(\kappa+\rho+z_i-z_j^*)\sigma(z_j^*+z_i)}{\sigma(\kappa-\rho)\sigma(2\kappa+\rho+z_i)\sigma(2\kappa+\rho-z_j^*)\sigma(\kappa-z_i)\sigma(\kappa+z_j^*)}.
			\end{split}
		\end{equation}
		Combining \eqref{Eq-Mij-numerator} and \eqref{Eq-Mij-denominator-1}-\eqref{Eq-Mij-denominator2-3-4} yields 
		\begin{equation}\label{Eq-Mij}
			\begin{split}
					\mathbf{M}_{ij}&=-\frac{\sqrt{\nu_0}\sigma(\kappa)\sigma(\kappa-\rho)}{\mathrm{i}\sigma(\rho)\sigma(\xi-\kappa)\sigma^2(2\kappa)}\mathbf{M}_{0,ij},\\
				\mathbf{M}_{0,ij}&=\Big(\frac{\sigma(-z_i-z_j^*-\kappa+\xi)\sigma(2\kappa+\rho-z_j^*)\sigma(\kappa-z_i)}{\sigma(z_i+z_j^*)}d_0(z_i)d_0^*(\hat{z}_j)E(\xi,t;z_i)E^*(\xi,t;z_j)I_j\\
				&\quad +\alpha_i\frac{\sigma(-z_i+z_j^*-\rho-\xi)\sigma(2\kappa+\rho+z_i)\sigma(2\kappa+\rho-z_j^*)}{\sigma(\kappa+\rho+z_i-z_j^*)}d_0(\hat{z}_i)d_0^*(\hat{z}_j)E(\xi,t;\hat{z}_i)E^*(\xi,t;z_j)I_j\\
				&\quad +\alpha_j^*\frac{\sigma(\kappa-z_i)\sigma(\kappa+z_j^*)\sigma(z_i-z_j^*+2\kappa+\rho-\xi)}{\sigma(\kappa+\rho+z_i-z_j^*)}d_0(z_i)d_0^*(z_j)E(\xi,t;z_i)E^*(\xi,t;\hat{z}_j)I_0(\xi)\\
				&\quad -\alpha_i\alpha_j^*\frac{\sigma(2\kappa+\rho+z_i)\sigma(\kappa+z_j^*)\sigma(-z_i-z_j^*+\kappa-\xi)}{\sigma(z_i+z_j^*)}d_0(\hat{z}_i)d_0^*(z_j)E(\xi,t;\hat{z}_i)E^*(\xi,t;\hat{z}_j)I_0(\xi)\Big).
			\end{split}
		\end{equation}
		Direct evaluation of \eqref{Eq-phi-l} provides
		\begin{equation}\label{Eq-phiab}
			\begin{split}
				\quad \left(\mathrm{i}\phi^{(1)}\phi^{(2)\dagger}\right)_{ij}
				&=e^{-F(\xi,t)}\left(-\frac{\sigma(z_i-\xi)\sigma(\kappa+\rho-z_j^*+\xi)}{\sigma(\xi-\kappa)\sigma(-\xi+\kappa)}d_0(z_i)d_0^*(\hat{z}_j)E(\xi,t;z_i)E^*(\xi,t;z_j)I_j\right.\\
				&\quad -\alpha_i\frac{\sigma(-\kappa-\rho-z_i-\xi)\sigma(\kappa+\rho-z_j^*+\xi)}{\sigma(\xi-\kappa)\sigma(-\xi+\kappa)}d_0(\hat{z}_i)d_0^*(\hat{z}_j)E(\xi,t;\hat{z}_i)E^*(\xi,t;z_j)I_j\\
				&	\quad +\alpha_j^*\frac{\sigma(z_i-\xi)\sigma(z_j^*+\xi)}{\sigma(\xi-\kappa)\sigma(-\xi+\kappa)}d_0(z_i)d_0^*(z_j)E(\xi,t;z_i)E^*(\xi,t;\hat{z}_j)I_0(\xi)\\
				&\quad \left.+\alpha_i\alpha_j^*\frac{\sigma(-\kappa-\rho-z_i-\xi)\sigma(z_j^*+\xi)}{\sigma(\xi-\kappa)\sigma(-\xi+\kappa)}d_0(\hat{z}_i)d_0^*(z_j)E(\xi,t;\hat{z}_i)E^*(\xi,t;\hat{z}_j)I_0(\xi)\right),
			\end{split}
		\end{equation}
		Combining with \eqref{Eq-tildeu}, we have \begin{equation}\label{Eq-uM-plus-phi12}
			\begin{split}
				&\quad \left(\tilde{u}\mathbf{M}+\mathrm{i}\phi^{(1)}\phi^{(2)\dagger}\right)_{ij}=\frac{e^{-F(\xi,t)}}{\sigma(\xi-\kappa)\sigma(\rho+3\kappa)}\\
				&\quad \Big(\frac{\sigma(\rho+2\kappa+z_i)\sigma(z_i+z_j^*-\rho-2\kappa-\xi)\sigma(\kappa+z_j^*)}{\sigma(z_i+z_j^*)}d_0(z_i)d_0^*(\hat{z}_j)E(\xi,t;z_i)E^*(\xi,t;z_j)I_j\\
				&\quad +\alpha_i\frac{\sigma(\kappa-z_i)\sigma(z_i-z_j^*+2\rho+3\kappa+\xi)\sigma(\kappa+z_j^*)}{\sigma(\kappa+\rho+z_i-z_j^*)}d_0(\hat{z}_i)d_0^*(\hat{z}_j)E(\xi,t;\hat{z}_i)E^*(\xi,t;z_j)I_j\\
				&\quad +\alpha_j^*\frac{\sigma(\xi+z_j^*-z_i+\kappa)\sigma(\rho+2\kappa-z_j^*)\sigma(\rho+2\kappa+z_i)}{\sigma(\kappa+\rho+z_i-z_j^*)}d_0(z_i)d_0^*(z_j)E(\xi,t;z_i)E^*(\xi,t;\hat{z}_j)I_0(\xi)\\
				&\quad +\alpha_i\alpha_j^*\frac{\sigma(\kappa-z_i)\sigma(z_j^*+z_i+\xi+\rho+2\kappa)\sigma(z_j^*-\rho-2\kappa)}{\sigma(z_i+z_j^*)}d_0(\hat{z}_i)d_0^*(z_j)E(\xi,t;\hat{z}_i)E^*(\xi,t;\hat{z}_j)I_0(\xi)\Big),
			\end{split}
		\end{equation}
	by applying \eqref{Eq-addition formulas of the sigma functions} for another time.  Combining  \eqref{Eq-tildeu} and \eqref{Eq-uM-plus-phi12} completes the proof for $\Re(\kappa)=0$. 
		The case $\Re(\kappa)=\omega_1$ follows analogously through similar computations. We omit further details due to procedural parallels.
	\end{proof}
	\begin{rmk}
		The addition formula for Weierstrass sigma functions \eqref{Eq-addition formulas of the sigma functions} plays a crucial role in proving Theorem \ref{thm:N-soliton solution}. We present a direct method for applying this formula. This algorithm determines whether the addition formula applies to summations of the form 
		$\sigma(\theta_1)\sigma(\theta_2)\sigma(\theta_3)\sigma(\theta_4)+\sigma(\vartheta_1)\sigma(\vartheta_2)\sigma(\vartheta_3)\sigma(\vartheta_4)$ 
		while simultaneously yielding the application result. Define two sets:
		\begin{equation}
			\begin{split}
				A:&=\left\{\frac{\theta_1+\theta_2+\theta_3+\theta_4}{2},\frac{-\theta_1-\theta_2+\theta_3+\theta_4}{2},\frac{-\theta_1+\theta_2-\theta_3+\theta_4}{2},\frac{-\theta_1+\theta_2+\theta_3-\theta_4}{2}\right\},\\
				B:&=\left\{\frac{-\theta_1+\theta_2+\theta_3+\theta_4}{2},\frac{\theta_1-\theta_2+\theta_3+\theta_4}{2},\frac{\theta_1+\theta_2-\theta_3+\theta_4}{2},\frac{\theta_1+\theta_2+\theta_3-\theta_4}{2}\right\}.
			\end{split}
		\end{equation}
		When $A=\{\vartheta_i,i=1,2,3,4\}$, the sum reduces to $\prod_{b\in B}\sigma(b)$. If three elements of $B$ and $\{\vartheta_i,i=1,2,3,4\}$ are identical with a single sign difference in the remaining element, the sum becomes $-\prod_{a\in A}\sigma(a)$. For example, consider the sum
		\begin{equation}\label{Eq-example-1}
			\sigma(-\rho+z_j^*)\sigma(z_j^*-\xi)\sigma(z_i-\xi)\sigma(z_i+2\kappa+\rho)+\sigma(\kappa+z_i)\sigma(-\xi+z_j^*-\kappa-\rho)\sigma(\kappa+\rho+z_i-\xi)\sigma(-\kappa-z_j^*),
		\end{equation}
		which appears in \eqref{Eq-Mij-denominator-1}. With 
		\begin{equation}
			\begin{split}
				\theta_1&=-\rho+z_j^*,\quad \theta_2=z_j^*-\xi,\quad \theta_3=z_i-\xi,\quad \theta_4=z_i+2\kappa+\rho,\\
				\vartheta_1&=\kappa+z_i,\quad \vartheta_2=-\xi+z_j^*-\kappa-\rho,\quad \vartheta_3=\kappa+\rho+z_i-\xi,\quad \vartheta_4=-\kappa-z_j^*,
			\end{split}
		\end{equation}	
		the sets $A$ and $B$ evaluate to 
		\begin{equation}\label{Eq-A-B-example-1}
			\begin{split}
				A&=\left\{z_i+z_j^*-\xi-\kappa,z_i-z_j^*+\kappa+\rho,\rho+\kappa,-\xi-\kappa\right\},\\
				B&=\left\{\kappa+\rho-\xi+z_i,z_i+\kappa,z_j^*+\kappa,-\kappa-\rho-\xi+z_j^*\right\}.
			\end{split}
		\end{equation}
		Comparing \eqref{Eq-example-1} and \eqref{Eq-A-B-example-1}, we find that
		the sum equals $\sigma(z_i+z_j^*-\xi-\kappa)\sigma(z_i-z_j^*+\kappa+\rho)\sigma(\xi+\kappa)\sigma(\rho+\kappa)$. As a second example, consider 
		\begin{equation}\label{Eq-example-2}
			\begin{split}
				&\quad\sigma(\rho+3\kappa)\sigma(\kappa+\rho-z_j^*+\xi)\sigma(\kappa+\rho+z_i-z_j^*)\sigma(-\kappa-\rho-z_i-\xi)\\
				&+\sigma(-\xi-\rho-2\kappa)\sigma(-2\kappa-\rho-z_i)\sigma(z_i-z_j^*+\rho+\xi)\sigma(2\kappa+\rho-z_j^*),
			\end{split}
		\end{equation}
		which arises in the evaluation of the second term of $\big(\tilde{u}\mathbf{M}+\mathrm{i}\phi^{(1)}\phi^{(2)\dagger}\big)_{ij}$. Taking 
		\begin{equation}
			\begin{split}
			\theta_1&=\rho+3\kappa,\quad \theta_2=\kappa+\rho-z_j^*+\xi,\quad \theta_3=\kappa+\rho+z_i-z_j^*,\quad \theta_4=-\kappa-\rho-z_i-\xi,\\
			\vartheta_1&=-\xi-\rho-2\kappa,\quad \vartheta_2=-2\kappa-\rho-z_i,\quad \vartheta_3=z_i-z_j^*+\rho+\xi,\quad \vartheta_4=2\kappa+\rho-z_j^*,
			\end{split}
		\end{equation}	
		the sets $A$ and $B$ are given by 
		\begin{equation}\label{Eq-A-B-example-2}
			\begin{split}
				A&=\left\{-\xi-\rho-2\kappa,-2\kappa-\rho-z_i,z_i-z_j^*+\rho+\xi,2\kappa+\rho-z_j^*\right\},\\
				B&=\left\{2\rho + 3\kappa + \xi + z_i - z_j^*,\kappa-z_i,\kappa-\xi,-\kappa-z_j^*\right\}.
			\end{split}
		\end{equation} 
		Comparing \eqref{Eq-example-2} and \eqref{Eq-A-B-example-2}, we obtain $\sigma(2\rho + 3\kappa + \xi + z_i - z_j^*)\sigma(\kappa-z_i)\sigma(\kappa-\xi)\sigma(-\kappa-z_j^*)$.
	\end{rmk}

	Previous works \cite{Ling-NLS,LS-mKdV-solution} have shown that $N$-elliptic localized solutions for many equations within the AKNS hierarchy can be expressed as ratios of determinants with theta function entries. Here, we establish that $N$-elliptic localized solutions to the DNLS equation \eqref{Eq-DNLS-equation} in the Kaup-Newell hierarchy take the form of derivatives of determinant ratios with Weierstrass function entries. This allows direct comparison with earlier results. We also conjecture that other systems in the Kaup-Newell hierarchy admit $N$-elliptic localized solutions in this form, which merits further study.

	\section{Asymptotic analysis of the $N$-elliptic localized solutions}
	In this section, we conduct the asymptotic analysis on the $N$-elliptic localized solution \eqref{Eq-N-elliptic-localized-solution}-\eqref{Eq-N-elliptic-localized-solution-3} derived from the $N$-fold Darboux-B\"acklund transformation $BT_0$ \eqref{Eq-BT0-1}-\eqref{Eq-BT0-2}. First, we establish Theorem \ref{thm:Cauchy matrix}, which provides a Weierstrass sigma version of Cauchy determinants essential for our analysis. As a generalization of Cauchy matrices, previous work \cite{Takahashi} introduced a theta-type Cauchy matrix that retains a structure similar to the classical Cauchy matrix with entries replaced by Jacobi theta functions, along with its determinant formula. This result was later applied to asymptotic analysis of elliptic localized solutions for the NLS equation \cite{Ling-NLS}. We introduce in the present work a Weierstrass sigma version of Cauchy matrix and rigorously prove its corresponding determinant formula by induction. This result is concluded in the following theorem.
	\begin{theorem}[	Sigma version of Cauchy determinants]\label{thm:Cauchy matrix}
	Assume $\tau$, $m_1,\ldots m_N,n_1,\ldots n_N\in\mathbb{C}$ and $N\in\mathbb{Z}$. Then the determinant identity
	\begin{align*}
		\det\left(\frac{\sigma(\tau+m_i+n_j)}{\sigma(m_i+n_j)}\right)_{1\leq i,j\leq N}=\frac{\sigma(\tau)^{N-1}\sigma\big(\tau+\sum_{i=1}^{N}(m_i+n_i)\big)\prod_{1\leq i<j\leq N}\sigma(m_i-m_j)\sigma(n_i-n_j)}{\prod_{i,j=1}^N\sigma(m_i+n_j)}
	\end{align*}
	holds.
	\end{theorem}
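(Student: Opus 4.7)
The plan is to prove the identity by induction on $N$, invoking a three-term Weierstrass sigma identity that is equivalent to the addition formula \eqref{Eq-addition formulas of the sigma functions}. The base case $N=1$ is immediate, since both sides reduce to $\sigma(\tau+m_1+n_1)/\sigma(m_1+n_1)$.

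For the inductive step, I would denote the $N\times N$ matrix in the statement by $A$, with entries $A_{ij}=\sigma(\tau+m_i+n_j)/\sigma(m_i+n_j)$, and set $\tau':=\tau+m_N+n_N$. The crucial algebraic input is a Pl\"ucker-type quadratic relation between the entries,
\begin{equation*}
A_{ij}A_{NN}-A_{iN}A_{Nj}=\frac{\sigma(\tau)\,\sigma(m_i-m_N)\,\sigma(n_j-n_N)}{\sigma(m_N+n_N)\,\sigma(m_i+n_N)\,\sigma(m_N+n_j)}\cdot\frac{\sigma(\tau'+m_i+n_j)}{\sigma(m_i+n_j)},\quad 1\le i,j\le N-1.
\end{equation*}
After clearing denominators, this is precisely the three-term sigma identity
\begin{equation*}
\sigma(\alpha+\beta)\sigma(\alpha-\beta)\sigma(\gamma+\delta)\sigma(\gamma-\delta)-\sigma(\alpha+\gamma)\sigma(\alpha-\gamma)\sigma(\beta+\delta)\sigma(\beta-\delta)+\sigma(\alpha+\delta)\sigma(\alpha-\delta)\sigma(\beta+\gamma)\sigma(\beta-\gamma)=0,
\end{equation*}
with $(\alpha,\beta,\gamma,\delta)$ chosen so that $\alpha\pm\beta$ produce $\tau+m_i+n_j$ and $\tau+m_N+n_N$ while $\gamma\pm\delta$ produce $m_i+n_N$ and $m_N+n_j$; the remaining factors $\sigma(\alpha\pm\delta)\sigma(\beta\pm\gamma)$ then yield $\sigma(\tau'+m_i+n_j)\sigma(\tau)\sigma(m_i-m_N)\sigma(n_j-n_N)$ up to a sign tracked by $\sigma(-z)=-\sigma(z)$.

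With the Pl\"ucker relation in hand, I would decompose $A=\begin{pmatrix}A' & v\\ w^T & a\end{pmatrix}$ with $a=A_{NN}$ and invoke the Schur-complement identity $\det(aA'-vw^T)=a^{N-2}\det(A)$. Since $(aA'-vw^T)_{ij}=A_{ij}A_{NN}-A_{iN}A_{Nj}$, the Pl\"ucker relation writes this $(N-1)\times(N-1)$ matrix as $B_{ij}:=\sigma(\tau'+m_i+n_j)/\sigma(m_i+n_j)$ multiplied entrywise by a separable pattern of row and column weights. Pulling those out gives
\begin{equation*}
a^{N-2}\det(A)=\frac{\sigma(\tau)^{N-1}\prod_{i<N}\sigma(m_i-m_N)\prod_{j<N}\sigma(n_j-n_N)}{\sigma(m_N+n_N)^{N-1}\prod_{i<N}\sigma(m_i+n_N)\prod_{j<N}\sigma(m_N+n_j)}\,\det(B).
\end{equation*}

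Now $\det(B)$ is of the same shape as the claim but of size $N-1$ with $\tau$ replaced by $\tau'$, so the induction hypothesis applies. The translation identity $\tau'+\sum_{i<N}(m_i+n_i)=\tau+\sum_{i=1}^{N}(m_i+n_i)$ places the distinguished factor into the form appearing in the claim, while $\sigma(\tau')^{N-2}$ from $\det(B)$ cancels the $\sigma(\tau+m_N+n_N)^{N-2}$ inside $a^{N-2}$. Consolidating $\prod_{i<N}\sigma(m_i-m_N)\cdot\prod_{1\le i<j\le N-1}\sigma(m_i-m_j)=\prod_{1\le i<j\le N}\sigma(m_i-m_j)$ (and analogously for the $n_j$), together with $\sigma(m_N+n_N)\,\prod_{i<N}\sigma(m_i+n_N)\,\prod_{j<N}\sigma(m_N+n_j)\,\prod_{i,j<N}\sigma(m_i+n_j)=\prod_{i,j=1}^{N}\sigma(m_i+n_j)$, then reproduces exactly the right-hand side of the theorem. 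The main obstacle is the Pl\"ucker relation itself: finding the correct substitution $(\alpha,\beta,\gamma,\delta)$ in the three-term sigma identity and tracking the signs is the only non-mechanical step; everything else is linear-algebraic bookkeeping carried out in a single induction.
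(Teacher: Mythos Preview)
Your argument is correct and complete. The Pl\"ucker relation you write down is exactly the $2\times 2$ case of the theorem (the paper's equation \eqref{Eq-sigma-function-determinant-formulas}), which follows from the addition formula \eqref{Eq-addition formulas of the sigma functions}; the Schur-complement identity $\det(aA'-vw^{T})=a^{N-2}\det(A)$ is standard, and your bookkeeping of the row/column factors and of the shift $\tau'=\tau+m_N+n_N$ checks out line by line.

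The paper proceeds differently: it also argues by induction, but via the Desnanot--Jacobi (Dodgson condensation) identity, which expresses $D_N\cdot D_{N-2}$ as a difference of products of four $(N-1)\times(N-1)$ minors. This requires \emph{two} previous cases in the induction and a second application of the addition formula at the end to collapse the resulting difference. Your Schur-complement route needs only the $(N-1)$-case, at the price of allowing the parameter $\tau$ to shift; it is a one-step induction and is slightly more economical. Conceptually the two arguments are cousins (both are condensation-type reductions powered by the same $2\times 2$ sigma identity), but your version avoids the $D_{N-2}$ factor and the second use of the addition formula. One minor point you might state explicitly: the Schur-complement step assumes $a=A_{NN}\neq 0$, which holds generically; since both sides are meromorphic in the parameters, the identity then extends everywhere.
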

	
	\begin{proof}
		We prove this theorem by mathematical induction. 
		For $M=1$, the formula follows from direct computation. For
		$M=2$, the formula results from the addition formula  \eqref{Eq-addition formulas of the sigma functions}. Specifically, we have
		\begin{equation}\label{Eq-sigma-function-determinant-formulas}
			\begin{split}
				&\quad \det\left(\frac{\sigma(\tau+m_i+m_j)}{\sigma(m_i+n_j)}\right)_{1\leq i<j\leq 2}    \\
				&=\frac{\sigma(\tau+m_1+n_1)\sigma(\tau+m_2+n_2)}{\sigma(m_1+n_1)\sigma(m_2+n_2)}-\frac{\sigma(\tau+m_1+n_2)\sigma(\tau+m_2+n_1)}{\sigma(m_1+n_2)\sigma(m_2+n_1)}\\
				&=\frac{\sigma(\tau+m_1+n_1+m_2+n_2)\sigma(\tau)\sigma(m_1-m_2)\sigma(n_1-n_2)}{\sigma(m_1+n_2)\sigma(m_2+n_1)\sigma(m_1+n_1)\sigma(m_2+n_2)}.
			\end{split}
		\end{equation}
		Assume that the induction hypothesis holds for $M\leq N-1$, applying the Desnanot-Jacobi determinant identity yields
		\begin{equation}\label{Eq-Desnanot-Jacobi}
			\begin{split}
				&\quad D_N(\tau;m_1,\ldots,m_N,n_1,\ldots,n_N)D_{N-2}(\tau;m_1,\ldots,m_{N-2},n_1,\ldots,n_{N-2})\\
				&=D_{N-1}(\tau;m_1,\ldots,m_{N-1},n_1,\ldots,n_{N-1})D_{N-1}(\tau;m_1,\ldots,m_{N-2},m_{N},n_1,\ldots,n_{N-2},n_{N})\\
				&\quad  -D_{N-1}(\tau;m_1,\ldots,m_{N-1},n_1,\ldots,n_{N-2},n_N)D_{N-1}(\tau;m_1,\ldots,m_{N-2},m_{N},n_1,\ldots,n_{N-1}),
			\end{split}
		\end{equation}
		where $D_N(\tau;m_1,\ldots,m_N,n_1,\ldots,n_N)$ denotes the sigma version of Cauchy determinant with parameters $m_1,\ldots,m_N$ and $n_1,\ldots,n_N$. Based on the induction hypothesis, the determinants with subscripts $N-2$ and $N-1$ can be evaluated as  
		\begin{equation}\label{Eq-DN}
			\begin{split}
				&D_{N-2}(\tau;m_1,\ldots,m_{N-2},n_1,\ldots,n_{N-2})=\frac{\sigma(\tau)^{N-3}\sigma\big(\tau+\sum_{i=1}^{N-2}(m_i+n_i)\big)P^{(1)}}{P^{(2)}},\\
				&D_{N-1}(\tau;m_1,\ldots,m_{N-1},n_1,\ldots,n_{N-1})=\frac{\sigma(\tau)^{N-2}\sigma\big(\tau+\sum_{i=1}^{N-1}(m_i+n_i)\big)P^{(1)}P^{(3)}_{N-1,N-1}}{P^{(2)}P^{(4)}_{N-1,N-1}\sigma(m_{N-1}+n_{N-1})},\\
				&D_{N-1}(\tau;m_1,\ldots,m_{N-2},m_N,n_1,\ldots,n_{N-2},n_N)=\frac{\sigma(\tau)^{N-2}\sigma\big(\tau+\sum_{i=1}^{N-2}(m_i+n_i)+m_N+n_N\big)P^{(1)}P^{(3)}_{N,N}}{P^{(2)}P^{(4)}_{N,N}\sigma(m_{N}+n_N)},\\
				&D_{N-1}(\tau;m_1,\ldots,m_{N-1},n_1,\ldots,n_{N-2},n_N)=\frac{\sigma(\tau)^{N-2}\sigma\big(\tau+\sum_{i=1}^{N-2}(m_i+n_i)+m_{N-1}+n_N\big)P^{(1)}P^{(3)}_{N-1,N}}{P^{(2)}P^{(4)}_{N-1,N}\sigma(m_{N-1}+n_N)},\\
				&D_{N-1}(\tau;m_1,\ldots,m_{N-2},m_N,n_1,\ldots,n_{N-1})=\frac{\sigma(\tau)^{N-2}\sigma\big(\tau+\sum_{i=1}^{N-2}(m_i+n_i)+m_{N}+n_{N-1}\big)P^{(1)}P^{(3)}_{N,N-1}}{P^{(2)}P^{(4)}_{N,N-1}\sigma(m_{N}+n_{N-1})},
			\end{split}
		\end{equation}
		where
		\begin{equation}\label{Eq-Pi}
			\begin{split}
				P^{(1)}&=\prod_{1\leq i<j\leq N-2}\sigma(m_i-m_j)\sigma(n_i-n_j),\quad P^{(2)}=\prod_{i,j=1}^{N-2}\sigma(m_i+n_j),\\
				P^{(3)}_{k,l}&=\prod_{i=1}^{N-2}\sigma(m_i-m_k)\sigma(n_i-n_l),\quad 	P^{(4)}_{k,l}=\prod_{i=1}^{N-2}\sigma(m_k+n_i)\sigma(m_i+n_l),\quad k,l=N-1,N.
			\end{split}
		\end{equation}
		Combining with \eqref{Eq-DN}-\eqref{Eq-Pi}, we obtain that
		\begin{equation}\label{Eq-part1-a}
			\begin{split}
				&\quad  \frac{D_{N-1}(\tau;m_1,\ldots,m_{N-1},n_1,\ldots,n_{N-1})D_{N-1}(\tau;m_1,\ldots,m_{N-2},m_N,n_1,\ldots,n_{N-2},n_N)}{D_{N-2}(\tau;m_1,\ldots,m_{N-2},n_1,\ldots,n_{N-2})}\\
				&=\frac{\sigma^{N-1}(\tau)\sigma\big(\tau+\sum_{i=1}^{N-1}(m_i+n_i)\big)\sigma\big(\tau+\sum_{i=1}^{N-2}(m_i+n_i)+m_N+n_N\big)P^{(1)}P^{(3)}_{N-1,N-1}P^{(3)}_{N,N}}{\sigma\big(\tau+\sum_{i=1}^{N-2}(m_i+n_i)\big)\sigma(m_{N-1}+n_{N-1})\sigma(m_N+n_N)P^{(2)}P^{(4)}_{N-1,N-1}P^{(4)}_{N,N}},\\
				&\quad  \frac{D_{N-1}(\tau;m_1,\ldots,m_{N-1},n_1,\ldots,n_{N-2},n_N)D_{N-1}(\tau;m_1,\ldots,m_{N-2},m_N,n_1,\ldots,n_{N-1})}{D_{N-2}(\tau;m_1,\ldots,m_{N-2},n_1,\ldots,n_{N-2})}\\
				&=\frac{\sigma^{N-1}(\tau)\sigma\big(\tau+\sum_{i=1}^{N-2}(m_i+n_i)+m_{N-1}+n_N\big)\sigma\big(\tau+\sum_{i=1}^{N-2}(m_i+n_i)+m_N+n_{N-1}\big)P^{(1)}P^{(3)}_{N-1,N}P^{(3)}_{N,N-1}}{\sigma\big(\tau+\sum_{i=1}^{N-2}(m_i+n_i)\big)\sigma(m_{N-1}+n_N)\sigma(m_N+n_{N-1})P^{(2)}P^{(4)}_{N-1,N}P^{(4)}_{N,N-1}}.
			\end{split}
		\end{equation}
Using the addition formula \eqref{Eq-addition formulas of the sigma functions},
		we obtain
		\begin{equation}\label{Eq-adaption-of-addition-formula}
			\begin{split}
				&\quad\sigma\big(\tau+\sum_{i=1}^{N-1}(m_i+n_i)\big)\sigma\big(\tau+\sum_{i=1}^{N-2}(m_i+n_i)+m_N+n_N\big)\sigma(m_{N-1}+n_N)\sigma(m_N+n_{N-1})\\
				&-\sigma\big(\tau+\sum_{i=1}^{N-2}(m_i+n_i)+m_{N-1}+n_N\big)\sigma\big(\tau+\sum_{i=1}^{N-2}(m_i+n_i)+m_N+n_{N-1}\big)\sigma(m_{N-1}+n_{N-1})\sigma(m_N+n_N)\\
				&=\sigma\big(\tau+\sum_{i=1}^{N}(m_i+n_i)\big)\sigma\big(\tau+\sum_{i=1}^{N-2}(m_i+n_i)\big)\sigma(m_{N-1}-m_{N})\sigma(n_{N-1}-n_{N}).
			\end{split}
		\end{equation}
		Moreover, we notice that 
		\begin{equation}\label{Eq-Pi-equivalence}
			P^{(i)}_{N-1,N}	P^{(i)}_{N,N-1}=	P^{(i)}_{N-1,N-1}	P^{(i)}_{N,N},\quad i=3,4.
		\end{equation}
		Besides, we can verify
		\begin{equation}
			\begin{split}
				&\sigma(m_{N-1}-m_{N})\sigma(n_{N-1}-n_{N})P^{(1)}P^{(3)}_{N-1,N-1}P^{(3)}_{N,N}=\prod_{1\leq i<j\leq N}\sigma(m_i-m_j)\sigma(n_i-n_j),\\
				&\sigma(m_{N-1}+n_{N-1})\sigma(m_N+n_N)\sigma(m_{N-1}+n_{N})\sigma(m_N+n_{N-1})P^{(2)}P^{(4)}_{N-1,N-1}P^{(4)}_{N,N}=\prod_{i,j=1}^{N}\sigma(m_i+n_j),
			\end{split}
		\end{equation}
		through direct verification.
		Combining \eqref{Eq-Desnanot-Jacobi} with \eqref{Eq-part1-a}-\eqref{Eq-Pi-equivalence}, we obtain 
		\begin{equation}
			\begin{split}
				&\quad D_N(\tau;m_1,\ldots,m_N,n_1,\ldots,n_N)\\	&=\frac{\sigma^{N-1}(\tau)\sigma\big(\tau+\sum_{i=1}^{N}(m_i+n_i)\big)\sigma(m_{N-1}-m_N)\sigma(n_{N-1}-n_N)P^{(1)}P^{(3)}_{N-1,N-1}P^{(3)}_{N,N}}{\sigma(m_{N-1}+n_{N-1})\sigma(m_N+n_N)\sigma(m_{N-1}+n_{N})\sigma(m_N+n_{N-1})P^{(2)}P^{(4)}_{N-1,N-1}P^{(4)}_{N,N}}\\
				&=\frac{\sigma^{N-1}(\tau)\sigma\big(\tau+\sum_{i=1}^{N}(m_i+n_i)\big)\prod_{1\leq i<j\leq N}\sigma(m_i-m_j)\sigma(n_i-n_j)}{\prod_{i,j=1}^{N}\sigma(m_i+n_j)}.
			\end{split}
		\end{equation}
		Henceforth we prove that the proposition holds when $M=N$ under the induction hypothesis. As a consequence, we complete the inductive proof.
	\end{proof}
	
	With these preliminaries, we will establish the asymptotic behavior of the $N$-elliptic localized solution \eqref{Eq-N-elliptic-localized-solution}-\eqref{Eq-N-elliptic-localized-solution-3} for the DNLS equation \eqref{Eq-DNLS-equation} via the $N$-fold Darboux-B\"acklund transformation $BT_0$ \eqref{Eq-BT0-1}-\eqref{Eq-BT0-2}. 
	Asymptotic analysis is performed in the $(\xi,t)$ plane for simplicity. The domain exterior to the interaction region is partitioned into $2N$ regions by $N$ pairs of lines $L_i^{\pm},i=1,2,\ldots,N$. Regions bounded by $L_{i-1}^{\pm}$ and $L_{i}^{\pm}$ are denoted as $R_i^{\pm}$ for $i=2,\ldots,N$. Additionally, regions bounded by $L_1^{\pm}$ and $L_N^{\mp}$ are labeled as $R_1^{\pm}$, with superscripts $\pm$ indicating positive and negative temporal directions. It is verified in Theorem \ref{thm:AB1} that lines $L_i^{\pm}$ characterize propagation directions of the elliptic localized waves \eqref{Eq-N-elliptic-localized-solution}-\eqref{Eq-N-elliptic-localized-solution-3}, defined by
	\begin{align}\label{Eq-line}
		L_i^{\pm}:=\{\xi=v_it-c_i^{\pm}\},\quad i=1,2,\ldots,N,
	\end{align}
	where  $c_i\in\mathbb{R},i=1,2,\ldots,N$ are constants. The velocities $v_i$ of each elliptic localized wave are determined by
	\begin{equation}\label{Eq-vi}
		\begin{split}
			v_i&=-\frac{16\mathrm{Re}(y(z_i))}{\mathrm{Re}(\beta_i)},\quad 
			\beta_i=\big(\kappa+\rho+2z_i\big)\frac{\zeta(\omega_1)}{\omega_1}-\zeta(\kappa+z_i)-\zeta(\rho+z_i),
		\end{split}
	\end{equation}
	which will be verified in the proof of the following theorem.
	We conduct asymptotic analysis along $L_k^{\pm}$ and in the regions $R_k^{\pm}$ as $t\rightarrow \pm\infty$ in the following theorems.

	\begin{theorem}[The asymptotic solution along  $L_{k}^{\pm}$]\label{thm:AB1}
	Assume $\Re(\beta_i)>0,i=1,2,\ldots,N$ and $v_i<v_j,1\leq i<j\leq N$.
	As $t\rightarrow -\infty$, the asymptotic behavior of the solution $u_N$ along line $L_k^{-}$ is given by 
	\begin{equation}\label{Eq-asym-Lkm-1}
		\begin{split}
			u_{N}&= u_{N,L_k^-}+\mathcal{O}\big(\exp\big(\mathrm{min}_{i\neq k}|\beta_i(v_i-v_k)|t\big)\big),\\
			u_{N,L_k^-}&=\left(\frac{\sqrt{\nu_0}\sigma(\kappa)\sigma(\rho+\kappa)\sigma(2\kappa)}{\sigma(\rho)\sigma(\rho+3\kappa)}\frac{D_{N,L_{k}^{-}}^{(1)}}{D_{N,L_{k}^{-}}^{(2)}}C_{N,L_{k}^{-}}e^{-F(\xi,t)}\right)_\xi,
		\end{split}
	\end{equation}
	where 
	\begin{equation}\label{Eq-asym-Lkm-2}
		\begin{split}
			D_{N,L_{k}^{-}}^{(1)}&=\Sigma^{(1)}(\xi+z_{N,L_k^-};z_k,z_k^*)r^{-1}(z_k)s(z_k^*)d_0(z_k)d_0^*(\hat{z}_k)E(\xi,t;z_k)E^*(\xi,t;z_k)I_k\\
			& \quad +\alpha_{N,L_{k}^{-}}\Sigma^{(1)}(\xi+z_{N,L_k^-};\hat{z}_k,z_k^*)r(z_k)s(z_k^*)d_0(\hat{z}_k)d_0^*(\hat{z}_k)E(\xi,t;\hat{z}_k)E^*(\xi,t;z_k)I_k\\
			&\quad-\alpha_{N,L_{k}^{-}}^*\Sigma^{(1)}(\xi+z_{N,L_k^-};z_k,\check{z}_k)r^{-1}(z_k)s^{-1}(z_k^*)d_0(z_k)d_0^*(z_k)E(\xi,t;z_k)E^*(\xi,t;\hat{z}_k)I_0(\xi+z_{N,L_k^-})\\
			&\quad -|\alpha_{N,L_{k}^{-}}|^2\Sigma^{(1)}(\xi+z_{N,L_k^-};\hat{z}_k,\check{z}_k)r(z_k)s^{-1}(z_k^*)d_0(\hat{z}_k)d_0^*(z_k)E(\xi,t;\hat{z}_k)E^*(\xi,t;\hat{z}_k)I_0(\xi+z_{N,L_k^-}),\\
			D_{N,L_{k}^{-}}^{(2)}&=	\Sigma^{(2)}(\xi+z_{N,L_k^-};z_k,z_k^*)d_0(z_k)d_0^*(\hat{z}_k)E(\xi,t;z_k)E^*(\xi,t;z_k)I_k\\
			&\quad+\alpha_{N,L_{k}^{-}}\Sigma^{(2)}(\xi+z_{N,L_k^-};\hat{z}_k,z_k^*)d_0(\hat{z}_k)d_0^*(\hat{z}_k)E(\xi,t;\hat{z}_k)E^*(\xi,t;z_k)I_k\\
			&\quad-\alpha_{N,L_{k}^{-}}^*\Sigma^{(2)}(\xi+z_{N,L_k^-};z_k,\check{z}_k)d_0(z_k)d_0^*(z_k)E(\xi,t;z_k)E^*(\xi,t;\hat{z}_k)I_0(\xi+z_{N,L_k^-})\\
			&\quad -|\alpha_{N,L_{k}^{-}}|^2\Sigma^{(2)}(\xi+z_{N,L_k^-};\hat{z}_k,\check{z}_k)d_0(\hat{z}_k)d_0^*(z_k)E(\xi,t;\hat{z}_k)E^*(\xi,t;\hat{z}_k)I_0(\xi+z_{N,L_k^-}),\\
		\end{split}
	\end{equation}
	and
	\begin{equation}\label{Eq-zLkm}
		\begin{split} 
			z_{N,L_{k}^{-}}&=-\sum_{i=1}^{k-1}(z_i+z_i^*)+\sum_{i=k+1}^N(z_i+z_i^*),\quad \alpha_{N,L_{k}^{-}}=\alpha_k\Delta_{N,L_{k}^{-}},\\
			\Delta_{N,L_{k}^{-}}&=\prod_{i=1}^{k-1}\frac{\sigma(z_i-\hat{z}_k)\sigma(z_k+z_i^*)}{\sigma(\hat{z}_k+z_i^*)\sigma(z_i-z_k)}\prod_{i=k+1}^{N}\frac{\sigma(\hat{z}_k-\hat{z}_i)\sigma(z_k+\check{z}_i)}{\sigma({\hat{z}_k+\check{z}_i)\sigma(z_k-\hat{z}_i)}} ,\quad 
			C_{N,L_k^-}=\prod_{i=1}^{k-1}\frac{s(z_i^*)}{r(z_i)}\prod_{i=k+1}^{N}\frac{r(z_i)}{s(z_i^*)}.
		\end{split}
	\end{equation}
	For $t\rightarrow +\infty$, the corresponding asymptotic representation along $L_k^{+}$ holds with $z_{N,L_k^-}$, $\Delta_{N,L_k^-}$, $\alpha_{N,L_k^-}$, $C_{N,L_k^-}$, and $\mathcal{O}\big(\exp\big(\mathrm{min}_{i\neq k}|\beta_i(v_i-v_k)|t\big)\big)$ replaced by $z_{N,L_k^+}$, $\Delta_{N,L_k^+}$, $\alpha_{N,L_k^+}$, $C_{N,L_k^+}$, and $\mathcal{O}\big(\exp\big(-\mathrm{min}_{i\neq k}|\beta_i(v_i-v_k)|t\big)\big)$ respectively, where
	\begin{align}\label{Eq-z-plus}
		z_{N,L_k^+}=-z_{N,L_k^-}, \quad 	\Delta_{N,L_k^+}=(\Delta_{N,L_k^-})^{-1},
		\quad 	\alpha_{N,L_k^+}=\alpha_k\Delta_{N,L_k^+},
		\quad C_{N,L_k^+}=(C_{N,L_k^-})^{-1}.
	\end{align}
	\end{theorem}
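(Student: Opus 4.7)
The proof proceeds by direct asymptotic evaluation of the two determinants $\det \mathbf{B}_N^{(1)}$ and $\det \mathbf{B}_N^{(2)}$ in \eqref{Eq-N-elliptic-localized-solution}, followed by an application of the sigma Cauchy determinant identity of Theorem \ref{thm:Cauchy matrix}. Setting $\xi = v_k t - c_k^-$ along $L_k^-$, first track the exponential scales $E(\xi,t;z_i)$ and $E(\xi,t;\hat z_i)$ from \eqref{Eq-solution to the Lax pair-2}. Using $\hat z_i = -\kappa - \rho - z_i$ together with the oddness of the Weierstrass $\zeta$ function, a short computation shows that the modulus of the ratio $E(\xi,t;z_i)/E(\xi,t;\hat z_i)$ equals, up to a bounded factor, $\exp\bigl(\tfrac{1}{2}\Re(\beta_i)(\xi - v_i t)\bigr)$ with $v_i,\beta_i$ as in \eqref{Eq-vi}; the linear drift subtracted in the definition of $\beta_i$ comes from the quasi-periodicity of $\zeta$ through the term $(\kappa+\rho+2z_i)\zeta(\omega_1)/\omega_1$. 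Under the hypotheses $\Re(\beta_i)>0$ and $v_i<v_j$ for $i<j$, as $t\to -\infty$ this ratio blows up for $i<k$ and vanishes for $i>k$; at $i=k$ it is bounded by construction.

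Next, in each of $\mathbf{B}_N^{(1)}$ and $\mathbf{B}_N^{(2)}$, factor the dominant exponential scalar out of row $i$ and symmetrically out of column $j$. Because these scalars are identical in the two matrices, they cancel in the ratio $\det \mathbf{B}_N^{(1)}/\det \mathbf{B}_N^{(2)}$. After cancellation, the $(i,j)$ entry for $i,j\neq k$ reduces to a single one of the four terms in \eqref{Eq-N-elliptic-localized-solution-2}, the selection being dictated by the signs of $v_i-v_k$ and $v_j-v_k$; for the $k$-th row or column all four terms must be retained. The neglected contributions decay exponentially with rate $\min_{i\neq k}|\beta_i(v_i-v_k)|$, which yields the stated error bound.

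Subsequently, the reduced entries for $i,j\neq k$ have sigma-Cauchy structure, namely a $\xi$-dependent $\sigma$-quotient times a Cauchy-type factor $1/\sigma(m_i+n_j)$ with $m_i,n_j$ linear in $z_i$ or $\hat z_i$ (and $z_j^*$ or $\check z_j$). Expanding the two determinants along row and column $k$ and applying Theorem \ref{thm:Cauchy matrix} to the resulting $(N-1)\times(N-1)$ subdeterminants, the bulk of the sigma products cancels in the quotient, leaving a single sigma factor of the form $\sigma\bigl(\tau + \sum_{i\neq k}(m_i + n_i)\bigr)$; matching the argument with \eqref{Eq-zLkm} identifies this sum precisely as the shift $z_{N,L_k^-}$ entering $\Sigma^{(1,2)}$ in \eqref{Eq-asym-Lkm-2}. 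The leftover $k$-th row/column retains its four-term form with its coefficient $\alpha_k$ multiplied by the ratio $\Delta_{N,L_k^-}$ arising from the discarded sigma factors of the neighbouring rows and columns, while the surviving scalar prefactors $r(z_i)$ and $s(z_i^*)$ combine into $C_{N,L_k^-}$, producing \eqref{Eq-asym-Lkm-1}.

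The case $t\to +\infty$ along $L_k^+$ is parallel, with the dominance dichotomy reversed, producing the substitutions in \eqref{Eq-z-plus}. The hardest part will be the bookkeeping in the third step: repeated use of the Weierstrass addition formula \eqref{Eq-addition formulas of the sigma functions} is required to repackage the quotient of sigma products into the compact form involving $\Sigma^{(1,2)}$, and one must also track the piecewise definitions of $I_j$ and $I_0$ in \eqref{Eq-Ij} under the shift $\xi \mapsto \xi + z_{N,L_k^-}$ and under the different choices of $\Re(\kappa),\Re(\rho)\in\{0,\omega_1\}$.
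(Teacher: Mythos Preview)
Your proposal is essentially correct and follows the same overall strategy as the paper: identify the dominant exponential scales along $L_k^-$, reduce the determinants to sigma-Cauchy structure, apply Theorem \ref{thm:Cauchy matrix}, and simplify. The paper carries this out by first rewriting the entries via \eqref{Eq-entries-numerator}--\eqref{Eq-tildeE} so that the unbounded growth is concentrated in diagonal matrices $\mathbf{X}_1,\mathbf{X}_2$ (encoding $e^{\pm\tau_i}$ with $\tau_i$ as in \eqref{Eq-tau}), whose limits as $t\to-\infty$ are projections; this makes the dominance dichotomy and the error estimate transparent.

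The one organizational difference worth noting is your third step. You propose a Laplace expansion along row and column $k$, producing $(N-1)\times(N-1)$ subdeterminants. The paper instead uses multilinearity in the $k$-th row and the $k$-th column to split each $N\times N$ determinant into a sum of four $N\times N$ sigma-Cauchy determinants $\det(\mathbf V_{N,L_k^-}^{[(l),m,n]})$, $m,n\in\{1,2\}$, each of which is directly of the form in Theorem \ref{thm:Cauchy matrix} with parameters $(\delta_i^{[m,n]},\eta_j^{[m,n]})$ listed in \eqref{Eq-N-breather-asymptotic-3}. This is cleaner: all four determinants share the same off-$k$ structure and differ only in the $k$-th parameter, so the common sigma products cancel immediately in the ratio and only the four $\sigma(\tau+\sum_i(\delta_i+\eta_i))$ factors survive, yielding the shift $z_{N,L_k^-}$ and the coefficient $\Delta_{N,L_k^-}$ by direct comparison. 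Your Laplace route would also work but forces you to track many $(N-1)\times(N-1)$ minors with varying deleted columns before the cancellation becomes visible. The paper's device of replacing $\Sigma^{(l)}$ by the bounded periodic $\Delta^{(l)}$ via the Gaussian factor $e^{-\frac{\zeta(\omega_1)}{2\omega_1}\xi^2}$ (your ``linear drift'' remark) is exactly what makes the limiting step rigorous without separate casework on $\Re(\kappa),\Re(\rho)$.
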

	\begin{proof}
		We can rewrite \eqref{Eq-N-elliptic-localized-solution}-\eqref{Eq-N-elliptic-localized-solution-3} in the following equivalent form
		\begin{align}\label{Eq-uN-form2}
			u_{N}=\left(\frac{\tilde{u}\det\left(\tilde{u}\textbf{M}+\mathrm{i}\phi^{(1)}\phi^{(2)\dagger}\right)}{\det(\tilde{u}\textbf{M})}\right)_\xi.
		\end{align}
		The entries of $\tilde{u}\textbf{M}+\mathrm{i}\phi^{(1)}\phi^{(2)\dagger}$ and $\tilde{u}\textbf{M}$ can be rewritten as 
		\begin{equation}\label{Eq-entries-numerator}
			\begin{split}
				\left(\tilde{u}\textbf{M}+\mathrm{i}\phi^{(1)}\phi^{(2)\dagger}\right)_{ij}
				&=T_1(\xi,t)\begin{pmatrix}
					\tilde{E}(\xi,t;z_i)r^{-1}(z_i) & \alpha_i \tilde{E}(\xi,t;\hat{z}_i)r(z_i)
				\end{pmatrix}\\
				&\quad \begin{pmatrix}
					\Delta^{(1)}(\xi;z_i,z_j^*)d_0(z_i)d_0^*(\hat{z}_j)I_j &  -	\Delta^{(1)}(\xi;z_i,\check{z}_j)d_0(z_i)d_0^*(z_j)\\
					\Delta^{(1)}(\xi;\hat{z}_i,z_j^*)d_0(\hat{z}_i)d_0^*(\hat{z}_j)I_j &  -	\Delta^{(1)}(\xi;\hat{z}_i,\check{z}_j)d_0(\hat{z}_i)d_0^*(z_j)
				\end{pmatrix}	\begin{pmatrix}
					\tilde{E}^*(\xi,t;z_j)s(z_j^*)\\ \alpha_j^* \tilde{E}^*(\xi,t;\hat{z}_j)s^{-1}(z_j^*)
				\end{pmatrix},\\
				\left(\tilde{u}\textbf{M}\right)_{ij}
				&=
				T_2(\xi,t)\begin{pmatrix}
					\tilde{E}(\xi,t;z_i) & \alpha_i \tilde{E}(\xi,t;\hat{z}_i)
				\end{pmatrix}\\
				&\quad \begin{pmatrix}
					\Delta^{(2)}(\xi;z_i,z_j^*)d_0(z_i)d_0^*(\hat{z}_j)I_j &  -\Delta^{(2)}(\xi;z_i,\check{z}_j)d_0(z_i)d_0^*(z_j)\\
					\Delta^{(2)}(\xi;\hat{z}_i,z_j^*)d_0(\hat{z}_i)d_0^*(\hat{z}_j)I_j &  -\Delta^{(2)}(\xi;\hat{z}_i,\check{z}_j)d_0(\hat{z}_i)d_0^*(z_j)
				\end{pmatrix}	\begin{pmatrix}
					\tilde{E}^*(\xi,t;z_j)\\ \alpha_j^* \tilde{E}^*(\xi,t;\hat{z}_j)
				\end{pmatrix},
			\end{split}
		\end{equation}
		respectively, where 
		\begin{equation}\label{Eq-tildeE}
			\begin{split}
				\tilde{E}(\xi,t;z)&=E(\xi,t;z)e^{-\frac{\zeta(\omega_1)}{\omega_1}z\xi},\\
				T_1(\xi,t)&=\frac{e^{\frac{\zeta(\omega_1)}{2\omega_1}\xi^2+\frac{\zeta(\omega_1)}{\omega_1}(\rho+2\kappa)\xi-F(\xi,t)}}{\sigma(\rho+3\kappa)\sigma(\xi-\kappa)},\quad 
				T_2(\xi,t)=-\frac{\sigma(\xi+\rho+2\kappa)e^{\frac{\zeta(\omega_1)}{2\omega_1}\xi^2-\frac{\zeta(\omega_1)}{\omega_1}\kappa\xi-F(\xi,t)}}{\sigma(\xi-\kappa)^2\sigma(\rho+3\kappa)},\\
				\Delta^{(1)}(\xi;\blacktriangle,\bullet)&=\Sigma^{(1)}(\xi;\blacktriangle,\bullet)e^{-\frac{\zeta(\omega_1)}{2\omega_1}\left(\xi^2-2(\blacktriangle+\bullet-\rho-2\kappa)\xi\right)},\quad	\Delta^{(2)}(\xi;\blacktriangle,\bullet)=\Sigma^{(2)}(\xi;\blacktriangle,\bullet)e^{-\frac{\zeta(\omega_1)}{2\omega_1}\left(\xi^2-2(\blacktriangle+\bullet+\kappa)\xi\right)}.
			\end{split}
		\end{equation}
		The functions $\Delta^{(1,2)}(\xi;\blacktriangle,\bullet)$ are bounded and periodic in $\xi$ for arbitrary $\blacktriangle,\bullet\in\mathbb{C}$. Based on \eqref{Eq-entries-numerator}-\eqref{Eq-tildeE}, we obtain that
		\begin{equation}\label{Eq-asym-solution}
			\begin{split}
				\frac{\det\left(	\tilde{u}\textbf{M}+\mathrm{i}\phi^{(1)}\phi^{(2)\dagger}\right)}{\det\left(\tilde{u}\textbf{M}\right)}=\frac{\det\left(\sum_{m,n=1}^2(-1)^{n+1}T_1(\xi,t)\textbf{Y}_m\textbf{X}_m\mathbf{\Theta}_{m,n}^{(1)}\textbf{X}_n^\dagger\textbf{Z}_n\right)}{\det\left(\sum_{m,n=1}^2(-1)^{n+1}T_2(\xi,t)\textbf{X}_m\mathbf{\Theta}_{m,n}^{(2)}\textbf{X}_n^\dagger\right)},	\\
			\end{split}
		\end{equation}
		where 
		\begin{equation}\label{Eq-Aij-entries}
			\begin{split}
				\quad \big(\mathbf{\Theta}_{m,n}^{(l)}\big)_{1\leq i,j\leq N}&=\left(
				\Delta^{(l)}\left(\xi;\frac{z_i+\hat{z}_i-(-1)^{m}(z_i-\hat{z}_i)}{2},\frac{z_j^*+\check{z}_j-(-1)^{n}(z_j^*-\check{z}_j)}{2}\right)\right.\\	&\quad 	 \left. d_0\left(\frac{z_i+\hat{z}_i-(-1)^{m}(z_i-\hat{z}_i)}{2}\right)
				d_0^*\left(\frac{z_j+\hat{z}_j+(-1)^{n}(z_j-\hat{z}_j)}{2}\right)I_j^{2-n}\right),\quad   l,m,n=1,2,\\
				\textbf{X}_1&=\mathrm{diag}\begin{pmatrix}
					1,...,1,e^{-\tau_{k+1}},...,e^{-\tau_{N}}
				\end{pmatrix}
				,\quad \quad \quad \quad \quad \quad  \textbf{X}_2=\mathrm{diag}\begin{pmatrix}
					e^{\tau_{1}},...,e^{\tau_{k}},1,...,1
				\end{pmatrix},\\
				\textbf{Y}_1&=\mathrm{diag}\begin{pmatrix}
					r^{-1}(z_i)
				\end{pmatrix},\quad \quad \quad \quad  \textbf{Y}_2=\textbf{Y}_1^{-1},\quad\quad \quad \quad  \textbf{Z}_1=\mathrm{diag}\begin{pmatrix}
					s(z_j^*)
				\end{pmatrix},\quad\quad \quad \quad  \textbf{Z}_2=\textbf{Z}_1^{-1}.
			\end{split}
		\end{equation}     
		Here the functions 
		\begin{equation}\label{Eq-tau}
			\tau_i:=\ln(\alpha_i\tilde{E}(\xi,t;\hat{z}_i)\tilde{E}^{-1}(\xi,t;z_i))=\Re(\beta_i)(\xi-v_it)+\mathrm{i}\big(\Im(\beta_i)\xi+16\Im(y(z_i))t\big)+\ln(\alpha_i),\quad 1\leq i\leq N,
		\end{equation}
		determine the propagation directions of each elliptic localized waves, where we make utilize of \eqref{Eq-solution to the Lax pair-2} and \eqref{Eq-tildeE}. Henceforth, we have
		\begin{equation}\label{Eq-tau-2}
			\begin{split}
			\textbf{X}_1&\rightarrow\mathrm{diag}\Big(
			\overbrace{1,...,1}^{k},0,...,0
			\Big)
			,\quad  \textbf{X}_2\rightarrow\mathrm{diag}\Big(
			0,...,e^{\tau_{k}},	\overbrace{1,...,1}^{N-k}
			\Big),
		\end{split}
	\end{equation}
		along $L_k^{-}$ as $t\rightarrow -\infty$, since $\Re(\beta_i)>0,i=1,2,\ldots,N$ and $v_i<v_j,1\leq i<j\leq N$. Combining with \eqref{Eq-tildeu}, \eqref{Eq-uN-form2}-\eqref{Eq-Aij-entries} yields the asymptotic representation
		\begin{equation}\label{Eq-N-breather-asymptotic}
			\begin{split}
				u_{N}&= u_{N,L_k^-}+\mathcal{O}\big(\exp\big(\mathrm{min}_{i\neq k}|\beta_i(v_i-v_k)|t\big)\big),\quad  u_{N,L_{k}^{-}}=\Bigg(\tilde{u}\left(-\frac{\sigma(\xi-\kappa)}{\sigma(\xi+\rho+2\kappa)}\right)^NC_{N,L_k^{-}}\frac{U_{N,L_k^{-}}^{(1)}}{U_{N,L_k^{-}}^{(2)}}\Bigg)_\xi,
			\end{split}
		\end{equation}
		where
		\begin{equation}
			\begin{split}
				U_{N,L_k^{-}}^{(1)}&=I_k\left(r^{-1}(z_k)s(z_k^*)\det(\textbf{V}_{N,L_k^{-}}^{[(1),1,1]})+r(z_k)s(z_k^*)\det(\textbf{V}_{N,L_k^{-}}^{[(1),2,1]})e^{\tau_k}\right)\\
				&\quad -\left(r^{-1}(z_k)s^{-1}(z_k^*)\det(\textbf{V}_{N,L_k^{-}}^{[(1),1,2]})e^{\tau_k^*}+r(z_k)s^{-1}(z_k^*)\det(\textbf{V}_{N,L_k^{-}}^{[(1),2,2]})e^{\tau_k+\tau_k^*}\right),\\
				U_{N,L_k^{-}}^{(2)}&=I_k\left(\det(\textbf{V}_{N,L_k^{-}}^{[(2),1,1]})+\det(\textbf{V}_{N,L_k^{-}}^{[(2),2,1]})e^{\tau_k}\right)-\left(\det(\textbf{V}_{N,L_k^{-}}^{[(2),1,2]})e^{\tau_k^*}+\det(\textbf{V}_{N,L_k^{-}}^{[(2),2,2]})e^{\tau_k+\tau_k^*}\right),\\
				\left(\textbf{V}_{N,L_k^-}^{[(1),m,n]}\right)_{ij}&=	\frac{\sigma(2\kappa+\rho-\eta_j^{[m,n]})\sigma(\kappa-\delta_i^{[m,n]})\sigma(\delta_i^{[m,n]}+\eta_j^{[m,n]}-2\kappa-\rho-\xi)}{\sigma(\delta_i^{[m,n]}+\eta_j^{[m,n]})e^{-(\delta_i^{[m,n]}+\eta_j^{[m,n]})\frac{\zeta(\omega_1)\xi}{\omega_1}}} d_0(\delta_i^{[m,n]})d_0^*(\varepsilon_j^{[m,n]}),\\
				\left(\textbf{V}_{N,L_k^{-}}^{[(2),m,n]}\right)_{ij}&=	-\frac{\sigma(\delta_i^{[m,n]}+\eta_j^{[m,n]}+\kappa-\xi)}{\sigma(\delta_i^{[m,n]}+\eta_j^{[m,n]}-2\kappa-\rho-\xi)}	\left(\textbf{V}_{N,L_k^-}^{[(1),m,n]}\right)_{ij},\quad m,n=1,2,
			\end{split}
		\end{equation}
		and 
		\begin{equation}\label{Eq-N-breather-asymptotic-3}
			\begin{split}			&\left(\delta_1^{[1,n]}, \delta_2^{[1,n]}, \ldots, \delta_k^{[1,n]}, \delta_{k+1}^{[1,n]}, \ldots, \delta_N^{[1,n]}\right)=\left(z_1,z_2,\ldots,z_k,\hat{z}_{k+1},\ldots,\hat{z}_{N}\right),\\ 
				& \left(\delta_1^{[2,n]}, \delta_2^{[2,n]}, \ldots, \delta_{k-1}^{[2,n]}, \delta_k^{[2,n]}, \ldots, \delta_N^{[2,n]}\right)=\left(z_1,z_2,\ldots,z_{k-1},\hat{z}_k,\ldots,\hat{z}_{N}\right), \\	
				& \left(\eta_1^{[m,1]}, \eta_2^{[m,1]}, \ldots, \eta_k^{[m,1]}, \eta_{k+1}^{[m,1]}, \ldots, \eta_N^{[m,1]}\right)=\left(z_1^*,z_2^*,\ldots,z_k^*,\check{z}_{k+1},\ldots,\check{z}_{N}\right), \\
				&  \left(\eta_1^{[m,2]}, \eta_2^{[m,2]}, \ldots, \eta_{k-1}^{[m,2]}, \eta_k^{[m,2]}, \ldots, \eta_N^{[m,2]}\right)= \left(z_1^*,z_2^*,\ldots,z_{k-1}^*,\check{z}_{k},\ldots,\check{z}_{N}\right),\\
				& \left(\varepsilon_1^{[m,1]}, \varepsilon_2^{[m,1]}, \ldots, \varepsilon_k^{[m,1]}, \varepsilon_{k+1}^{[m,1]}, \ldots, \varepsilon_N^{[m,1]}\right)=\left(\hat{z}_1,\hat{z}_2,\ldots,\hat{z}_k,z_{k+1},\ldots,z_N\right), \\
				&  \left(\varepsilon_1^{[m,2]}, \varepsilon_2^{[m,2]}, \ldots, \varepsilon_{k-1}^{[m,2]}, \varepsilon_k^{[m,2]}, \ldots, \varepsilon_N^{[m,2]}\right)= \left(\hat{z}_1,\hat{z}_2,\ldots,\hat{z}_{k-1},z_k,\ldots,z_N\right).
			\end{split}
		\end{equation}
		Utilizing Theorem \ref{thm:Cauchy matrix}, we can evaluate the determinants appeared in the asymptotic formula \eqref{Eq-N-breather-asymptotic}-\eqref{Eq-N-breather-asymptotic-3}. Specifically, we have
		\begin{equation}\label{Eq-detV}
			\begin{split}
				\det(\textbf{V}_{N,L_k^-}^{[(1),m,n]})=\Pi_{N,L_k^-}^{[m,n]}(\xi;2\kappa+\rho),\quad \det(\textbf{V}_{N,L_k^-}^{[(2),m,n]})=(-1)^N\Pi_{N,L_k^-}^{[m,n]}(\xi;-\kappa),
			\end{split}
		\end{equation}
		where
		\begin{equation}
			\begin{split}
				&\quad \Pi_{N,L_k^-}^{[m,n]}(\xi;\bullet)\\
				&=\sigma(-\xi-\bullet)^{N-1}\sigma\big(-\xi-\bullet+\sum_{i=1}^N(\delta_i^{[m,n]}+\eta_i^{[m,n]})\big)\frac{ \prod_{1\leq i<j\leq N} \sigma(\delta_i^{[m,n]} - \delta_j^{[m,n]}) \sigma(\eta_i^{[m,n]} - \eta_j^{[m,n]})}{\prod_{i,j=1}^{N}\sigma(\delta_i^{[m,n]}+\eta_j^{[m,n]})}\\
				&\quad 	 \prod_{i=1}^N \sigma(\kappa - \delta_i^{[m,n]}) \sigma(2\kappa + \rho - \eta_i^{[m,n]}) d_0(\delta_i^{[m,n]}) d_0^*(\varepsilon_i^{[m,n]})e^{(\delta_i^{[m,n]}+\eta_i^{[m,n]})\frac{\zeta(\omega_1)}{\omega_1}\xi}.
			\end{split}
		\end{equation}
		Furthermore, by removing the common divisors of the numerator and dinominator, we obtain
		\begin{equation}\label{Eq-N-breather-asymptotic-2}
			\begin{split}
				u_{N,L_{k}^{-}}=\left(\frac{\sqrt{\nu_0}\sigma(\kappa)\sigma(\rho+\kappa)\sigma(2\kappa)}{\sigma(\rho)\sigma(\rho+3\kappa)}\frac{\tilde{U}_{N,L_k^{-}}^{(1)}}{\tilde{U}_{N,L_k^{-}}^{(2)}}C_{N,L_k^{-}}e^{-F(\xi,t)}\right)_\xi,
			\end{split}
		\end{equation}
		where 
		\begin{equation}\label{Eq-U12-tilde}
			\begin{split}
				\tilde{U}_{N,L_k^-}^{(1)}&=I_k\big(r^{-1}(z_k)s(z_k^*)\tilde{V}_{N,L_k^-}^{[(1),1,1]}+r(z_k)s(z_k^*)\tilde{V}_{N,L_k^-}^{[(1),2,1]}e^{\tau_k}\big)\\ &\quad -\big(r^{-1}(z_k)s^{-1}(z_k^*)\tilde{V}_{N,L_k^-}^{[(1),1,2]}e^{\tau_k^*}+r(z_k)s^{-1}(z_k^*)\tilde{V}_{N,L_k^-}^{[(1),2,2]}e^{\tau_k+\tau_k^*}\big), \\        
				\tilde{U}_{N,L_k^{-}}^{(2)}&=I_k\big(\tilde{V}_{N,L_k^-}^{[(2),1,1]}+\tilde{V}_{N,L_k^-}^{[(2),2,1]}e^{\tau_k}\big)-\big(\tilde{V}_{N,L_k^-}^{[(2),1,2]}e^{\tau_k^*} +\tilde{V}_{N,L_k^-}^{[(2),2,2]}e^{\tau_k+\tau_k^*}\big), \\
				\tilde{V}_{N,L_k^-}^{[(1),m,n]}&=\tilde{\Pi}_{N,L_k^-}^{[(1),m,n]}\tilde{\Pi}_{N,L_k^-}^{[(2),m,n]}\sigma\big(-\xi-2\kappa-\rho+\sum_{i=1}^N(\delta_i^{[m,n]}+\eta_i^{[m,n]})\big)e^{(\delta_k^{[m,n]}+\eta_k^{[m,n]})\frac{\zeta(\omega_1)}{\omega_1}\xi}, \\
				\tilde{V}_{N,L_k^-}^{[(2),m,n]}&=\tilde{\Pi}_{N,L_k^-}^{[(1),m,n]}\tilde{\Pi}_{N,L_k^-}^{[(2),m,n]}\sigma\big(\xi-\kappa-\sum_{i=1}^N(\delta_i^{[m,n]}+\eta_i^{[m,n]})\big)e^{(\delta_k^{[m,n]}+\eta_k^{[m,n]})\frac{\zeta(\omega_1)}{\omega_1}\xi},\\
				\tilde{\Pi}_{N,L_k^-}^{[(1),m,n]} &= \sigma(\kappa - \delta_k^{[m,n]}) \sigma(2\kappa + \rho - \eta_k^{[m,n]}) d_0(\delta_k^{[m,n]}) d_0^*(\varepsilon_k^{[m,n]}), \\
				\tilde{\Pi}_{N,L_k^-}^{[(2),m,n]} &=\frac{ \prod_{i=1}^{k-1}\sigma(\delta_i^{[m,n]}-\delta_k^{[m,n]})\sigma(\eta_i^{[m,n]}-\eta_k^{[m,n]})\prod_{i=k+1}^{n}\sigma(\delta_k^{[m,n]}-\delta_i^{[m,n]})\sigma(\eta_k^{[m,n]}-\eta_i^{[m,n]})}{\prod_{i=1}^{N}\sigma(\delta_i^{[m,n]}+\eta_k^{[m,n]})\prod_{j=1,j\neq k}^{N}\sigma(\delta_k^{[m,n]}+\eta_j^{[m,n]})}.	
			\end{split}
		\end{equation}
		Using \eqref{Eq-N-breather-asymptotic-3}, for each $m,n=1,2$, 
		\begin{equation}
			z_{N,L_k^-}=-\sum_{i=1}^{k-1}(\delta_i^{[m,n]}+\eta_i^{[m,n]})-	\sum_{i=k+1}^{N}(\delta_i^{[m,n]}+\eta_i^{[m,n]}).
		\end{equation}
		Moreover, the representation of $\alpha_{N,L_k^-}$ in \eqref{Eq-asym-Lkm-2} can be evaluated as 
		\begin{equation}
			\alpha_{N,L_k^-}=\frac{	\tilde{\Pi}_{N,L_k^-}^{[(2),2,1]}\sigma(\delta_k^{[2,1]}+\eta_k^{[2,1]})}{	\tilde{\Pi}_{N,L_k^-}^{[(2),1,1]}\sigma(\delta_k^{[1,1]}+\eta_k^{[1,1]})}.
		\end{equation}
		Making use of \eqref{Eq-N-breather-asymptotic-3} and the last identity in \eqref{Eq-U12-tilde}, we obtain the expression for $\alpha_{N,L_k^-}$ in \eqref{Eq-asym-Lkm-2}. Henceforth, via a further direct simplification of \eqref{Eq-N-breather-asymptotic-2} and \eqref{Eq-U12-tilde}, the proof is complete.
	\end{proof}

	Furthermore, the asymptotic formulas of the solution \eqref{Eq-N-elliptic-localized-solution}-\eqref{Eq-N-elliptic-localized-solution-3} in the regions $R_{k}^{\pm}$ as $t\rightarrow\pm\infty$ are concluded in the following theorem.

	\begin{theorem}	[The asymptotic solution in $R_{k}^{\pm}$]\label{thm:AB2}
When $\Re(\beta_i)>0,i=1,2,\ldots,N$ and $v_i<v_j,1\leq i<j\leq N$, the asymptotic behavior of the solution $u_N$ as $t\rightarrow -\infty$ in region $R_k^{-}$ is given by 
\begin{equation}\label{Eq-asymptotic-Rk}
	\begin{split}
		u_N= u_{N,R_k^-}+\mathcal{O}\big(\exp\big(\mathrm{min}_{i\neq k}|\beta_i(v_i-v_k)|t\big)\big),
	\end{split}
\end{equation}
where
\begin{equation}\label{Eq-uNRk-}
	u_{N,R_k^-}=	\left(-\frac{\sqrt{\nu_0}\sigma(\kappa)\sigma(\rho+\kappa)\sigma(2\kappa)\sigma\big(\xi+2\kappa+\rho+z_{N,R_k^-}\big)}{\sigma(\rho)\sigma(\rho+3\kappa)\sigma\big(\xi-\kappa+z_{N,R_k^-}\big)}C_{N,R_k^-}e^{-F(\xi,t)}\right)_\xi,
\end{equation}
with $C_{N,R_k^-}=\prod_{i=1}^{k-1}\frac{s(z_i^*)}{r(z_i)}\prod_{i=k}^{N}\frac{r(z_i)}{s(z_i^*)}$ and $z_{N,R_k^-}=	-\sum_{i=1}^{k-1}(z_i+z_i^*)+\sum_{i=k}^N(z_i+z_i^*)$.	
For $t\rightarrow +\infty$ in region $R_k^{+}$, the corresponding asymptotic representation holds with $C_{N,R_k^-}$, $z_{N,R_k^-}$, and $\mathcal{O}\big(\exp\big(\mathrm{min}_{i\neq k}|\beta_i(v_i-v_k)|t\big)\big)$ replaced by $ C_{N,R_k^+}$, $z_{N,R_k^+}$, and $\mathcal{O}\big(-\exp\big(\mathrm{min}_{i\neq k}|\beta_i(v_i-v_k)|t\big)\big)$ respectively, where
\begin{align}
	z_{N,R_k^+}=-z_{N,R_k^-},\quad C_{N,R_k^+}=(C_{N,R_k^-})^{-1}.
\end{align}
Furthermore, both asymptotic solutions admit the derivative-free representation  
\begin{equation}\label{Eq-asymptotic-Rkp}
	\begin{split}
		u_{N,R_k^\pm}	=\frac{\sqrt{\nu_0}\sigma(\kappa)\sigma\big(\xi+\rho+z_{N,R_k^{\pm}}\big)\sigma\big(\xi+\kappa+z_{N,R_k^{\pm}}\big)}{\sigma(\rho)\sigma^2\big(\xi-\kappa+z_{N,R_k^{\pm}}\big)}C_{N,R_k^{\pm}}e^{-F(\xi,t)},
	\end{split}
\end{equation}
respectively.	
	\end{theorem}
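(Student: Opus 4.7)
My plan is to follow the strategy of the proof of Theorem~\ref{thm:AB1} closely, starting from the equivalent form \eqref{Eq-uN-form2} of $u_N$ together with the block factorization \eqref{Eq-entries-numerator}--\eqref{Eq-Aij-entries}, and to adapt the large-time limit to the interior region $R_k^\pm$ rather than to the line $L_k^\pm$. The essential difference is that in $R_k^-$ the exponent $\tau_k$ of \eqref{Eq-tau} no longer remains bounded: writing $V=\xi/t$, the hypotheses $\Re(\beta_i)>0$ and $v_1<\cdots<v_N$ force $V\in(v_{k-1},v_k)$ as $t\to-\infty$ (with the natural boundary interpretation when $k=1$), and hence
\begin{equation*}
e^{\tau_i}\to 0\ (i\leq k-1),\qquad e^{\tau_i}\to+\infty\ (i\geq k),
\end{equation*}
so that in particular both $e^{\tau_k}$ and $e^{\tau_k^*}$ diverge.

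First I would replace the factorization $\mathbf X_1,\mathbf X_2$ used in the proof of Theorem~\ref{thm:AB1} by one split at index $k-1$: take $\mathbf X_1=\mathrm{diag}(1,\ldots,1,e^{-\tau_k},\ldots,e^{-\tau_N})$ with $k-1$ leading ones and $\mathbf X_2=\mathrm{diag}(e^{\tau_1},\ldots,e^{\tau_{k-1}},1,\ldots,1)$ with $N-k+1$ trailing ones. Then $\mathbf X_1\to\mathrm{diag}(1,\ldots,1,0,\ldots,0)$ and $\mathbf X_2\to\mathrm{diag}(0,\ldots,0,1,\ldots,1)$ in the limit, and the four-fold sum over $(m,n)\in\{1,2\}^2$ in \eqref{Eq-asym-solution} collapses to a single surviving block: the $z_i$-branch for rows $i\leq k-1$, the $\hat z_i$-branch for rows $i\geq k$, and similarly $z_j^*$ versus $\check z_j$ for the columns. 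In contrast to the situation on $L_k^-$, where four sub-determinants of type $\tilde V^{[\cdot,m,n]}$ contribute simultaneously because $\tau_k$ is $O(1)$, here only one $N\times N$ sub-determinant survives to leading order in both numerator and denominator.

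Next I would apply the sigma Cauchy determinant identity of Theorem~\ref{thm:Cauchy matrix} to the surviving block, which reduces each $N\times N$ determinant to a single $\sigma$-factor depending on the cumulative sum $S=\sum_{i=1}^N(\delta_i+\eta_i)$. Since $\hat z_i+\check z_i=-(z_i+z_i^*)$, this $S$ equals $-z_{N,R_k^-}$, which is precisely the shift appearing in \eqref{Eq-uNRk-}. The symmetric $\sigma$-difference factors cancel between numerator and denominator, while the residual $d_0$-factors together with $I_j,I_0(\xi)$ from \eqref{Eq-Ij} assemble, through the quasi-periodicity $\sigma(z+2\omega_1)=-e^{2\zeta(\omega_1)(z+\omega_1)}\sigma(z)$, into the scalar $C_{N,R_k^-}$. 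The error $\mathcal O\bigl(\exp(\min_{i\neq k}|\beta_i(v_i-v_k)|t)\bigr)$ is the size of the discarded subleading terms. The derivative-free form \eqref{Eq-asymptotic-Rkp} is then obtained by computing $(P(\xi)e^{-F(\xi,t)})_\xi$ directly: the $\zeta$-logarithmic derivatives of the two $\sigma$-factors in $P$, combined with $F_\xi=\zeta(\rho+\kappa)+\zeta(2\kappa)$, produce the combination $\zeta(\xi+2\kappa+\rho+z_{N,R_k^-})-\zeta(\xi-\kappa+z_{N,R_k^-})-\zeta(\rho+\kappa)-\zeta(2\kappa)$, which by the third $\sigma$-identity in \eqref{Eq-formulas of Weierstrass functions} (the same identity underlying $\tilde u_\xi=u$, applied translated by $z_{N,R_k^-}$) collapses to exactly the $\sigma$-quotient in \eqref{Eq-asymptotic-Rkp}. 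The case $t\to+\infty$ in $R_k^+$ follows symmetrically by interchanging the roles of vanishing and diverging $e^{\tau_i}$, which flips the sign of $z_{N,R_k^+}$ and inverts $C_{N,R_k^+}$.

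The main obstacle will be the bookkeeping of the Cauchy collapse: after reducing both determinants, the surviving prefactors carry the diverging factors $e^{\tau_k}$ and $e^{\tau_k^*}$ in both numerator and denominator, and one must verify their exact cancellation, together with the alignment of the $d_0$ and $I_j$ quasi-periodic factors across all three $(\Re(\kappa),\Re(\rho))$ subcases encoded by \eqref{Eq-Ij}. This divergence-cancellation is the analytic heart of the theorem --- it is precisely what forces $u_{N,R_k^\pm}$ to reduce to a pure shifted elliptic background, with no residual localized-wave component surviving between propagation directions.
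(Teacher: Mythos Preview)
Your proposal is correct and follows essentially the same approach as the paper: the paper likewise rewrites $u_N$ via \eqref{Eq-uN-form2}, replaces $\mathbf X_1,\mathbf X_2$ by the split-at-$(k-1)$ versions you describe, observes that in $R_k^-$ only the single block $\mathbf V_{N,L_k^-}^{[(\cdot),2,2]}$ survives, evaluates it with the sigma Cauchy determinant \eqref{Eq-detV}, cancels common factors to obtain \eqref{Eq-uNRk-}, and then applies \eqref{Eq-derivative-sigma} together with the third identity of \eqref{Eq-formulas of Weierstrass functions} to pass to the derivative-free form \eqref{Eq-asymptotic-Rkp}. Your concern about the divergence of $e^{\tau_k}$ is handled automatically: these factors appear as common diagonal prefactors in both determinants and cancel directly in the ratio, so no delicate balancing is needed.
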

	\begin{proof}
		We also rewrite the $N$-elliptic localized solution as \eqref{Eq-uN-form2}. 
		In the region $R_k^{-}$, the velocity $v$ satisfies $v_{k-1}<v<v_{k}$. The ratios of the determinants $\det(\tilde{u}\textbf{M}+\mathrm{i}\phi^{(1)}\phi^{(2)\dagger})$ and $\det(\tilde{u}\textbf{M})$ can also be written in the form of \eqref{Eq-asym-solution}-\eqref{Eq-Aij-entries} with $\textbf{X}_{1,2}$ replaced by 
		\begin{align}
			\mathbf{X}_1 = \mathrm{diag}(1,\ldots,1,e^{-\tau_k},\ldots,e^{-\tau_N}), \quad 
			\mathbf{X}_2 = \mathrm{diag}(e^{\tau_1},\ldots,e^{\tau_{k-1}},1,\ldots,1).
		\end{align}
		When $\Re(\beta_i)>0,i=1,2,\ldots,N$ and $v_i<v_j,1\leq i<j\leq N$, we have
		\begin{align}
			\mathbf{X}_1\rightarrow\mathrm{diag}\Big(
			\overbrace{1,...,1}^{k-1},0,...,0
			\Big)
			,\quad  	\mathbf{X}_2\rightarrow\mathrm{diag}\Big(
			0,...,0,	\overbrace{1,...,1}^{N-k+1}
			\Big),
		\end{align}
		as $t\rightarrow -\infty$.
		Therefore, we obtain
		\begin{equation}\label{Eq-reduced-expression}
			\begin{split}
				u_N&= u_{N,R_k^-}+\mathcal{O}\big(\exp\big(\mathrm{min}_{i\neq k}|\beta_i(v_i-v_k)|t\big)\big),\\
				u_{N,R_{k}^{-}}&=\left(\frac{\sqrt{\nu_0}\sigma(\kappa)\sigma(\rho+\kappa)\sigma(2\kappa)}{\sigma(\rho)\sigma(\rho+3\kappa)}\frac{\det(\mathbf{V}_{N,L_k^-}^{[(1),2,2]})}{\det(\mathbf{V}_{N,L_k^-}^{[(2),2,2]})}C_{N,R_k^{-}}e^{-F(\xi,t)}\right)_\xi.
			\end{split}
		\end{equation}
		We can evaluate the determinants involved in \eqref{Eq-reduced-expression} using \eqref{Eq-detV}.  By removing common divisors in the denominator and numerator, we arrive at \eqref{Eq-asymptotic-Rk}-\eqref{Eq-uNRk-}. Similarly, we obtain the asymptotic behavior in the region $R_k^+$. Moreover, using \eqref{Eq-derivative-sigma} and the third formula in \eqref{Eq-formulas of Weierstrass functions}, we obtain \eqref{Eq-asymptotic-Rkp}. This completes the proof.
	\end{proof}
	
	Using \eqref{Eq-N-elliptic-localized-solution-3}, we deduce that 
	\begin{equation}
		\begin{split}
			r^*(z_i)=\begin{dcases}	s(z_i^*), & \Re(\kappa)=\Re(\rho)=0,\\
				-s(z_i^*)e^{-2\zeta(\omega_1)(-\rho+\omega_1+z_i^*)}, & \Re(\kappa)=0,\Re(\rho)=\omega_1,\\
				s(z_i^*)e^{-2\zeta(\omega_1)(-5\kappa-3\rho+8\omega_1+4z_i^*)}, & \Re(\kappa)=\Re(\rho)=\omega_1.
			\end{dcases}
		\end{split}
	\end{equation}
	Thus it follows from  \eqref{Eq-zLkm} that 
	\begin{equation}
		\begin{split}
			\left|C_{N,L_k^-}\right|
			=\begin{dcases}
				1, & \Re(\kappa)=\Re(\rho)=0,\\
				e^{-\zeta(\omega_1)z_{N,L_k^-}},& \Re(\kappa)=0,\Re(\rho)=\omega_1,\\
				e^{-4\zeta(\omega_1)z_{N,L_k^-}},& \Re(\kappa)=\Re(\rho)=\omega_1.
			\end{dcases}	\end{split}
	\end{equation}
	Henceforth, when $\Re(\kappa)=\Re(\rho)=0$, the asymptotic solution can be written as
	\begin{equation}\label{Eq-asym-rewritten}
		\begin{split}
			u_{N,L_k^{-}}=
			\left(\frac{\sqrt{\nu_0}\sigma(\kappa)\sigma(\rho+\kappa)\sigma(2\kappa)}{\sigma(\rho)\sigma(\rho+3\kappa)}\frac{D_{N,L_{k}^{-}}^{(1)}C_{N,L_k^-}}{D_{N,L_{k}^{-}}^{(2)}|C_{N,L_k^-}|}e^{-F(\xi+z_{N,L_k^-},t)+(\zeta(\rho+\kappa)+\zeta(2\kappa))z_{N,L_k^-}}\right)_\xi.
		\end{split}
	\end{equation}
	Since that the DNLS equation \eqref{Eq-DNLS-equation} is rotational invariant, we deduce that 
	\begin{equation}\label{Eq-asym-rewritten}
		\begin{split}
			\hat{u}_{N,L_k^{-}}=
			\left(\frac{\sqrt{\nu_0}\sigma(\kappa)\sigma(\rho+\kappa)\sigma(2\kappa)}{\sigma(\rho)\sigma(\rho+3\kappa)}\frac{D_{N,L_{k}^{-}}^{(1)}}{D_{N,L_{k}^{-}}^{(2)}}e^{-F(\xi+z_{N,L_k^-},t)}\right)_\xi,
		\end{split}
	\end{equation}
	is also a solution to the DNLS equation \eqref{Eq-DNLS-equation}.
	Furthermore, we observe that
	replacing $\alpha_{N,L_k^-}$, $E(\xi,t;z_k)$ and $E(\xi,t;\hat{z}_k)$ with $\alpha_{N,L_k^-}e^{(\zeta(\kappa+z_k)+\zeta(\rho+z_k))z_{N,L_k^-}}$ , $E(\xi+z_{N,L_k^-},t;z_k)$ and $E(\xi+z_{N,L_k^-},t;\hat{z}_k)$ respectively preserves \eqref{Eq-asym-rewritten}. Therefore, the asymptotic solution \eqref{Eq-asym-Lkm-1}-\eqref{Eq-asym-Lkm-2} constitutes a first-order elliptic localized solution derived from \eqref{Eq-DNLS-elliptic-solution} via the one-fold Darboux-B\"acklund transformation $BT_0$ at $\lambda_k=\lambda(z_k)$, with a shift $\xi\rightarrow \xi+z_{N,L_k^-}$ and a phase factor $C_{N,L_k^-}e^{(\zeta(\rho+\kappa)+\zeta(2\kappa))z_{N,L_k^-}}$ when $\Re(\kappa)=\Re(\rho)=0$. For other values of $\Re(\kappa)$ and $\Re(\rho)$, we could obtain the same conclusions in a similar approach. Additionally, \eqref{Eq-asymptotic-Rkp} demonstrates that the asymptotic solutions in regions $R_{k}^{\pm},k=1,2,\ldots,N$ correspond to shifted versions of the elliptic solution \eqref{Eq-DNLS-elliptic-solution}.

	Previous studies \cite{LS-mKdV-solution} established that $N$-elliptic localized solutions of the mKdV equation possess symmetry with respect to the origin when specific conditions are satisfied. Here, we extend this symmetry to the DNLS equation. The asymptotic analysis in Theorems \ref{thm:AB1} and \ref{thm:AB2} demonstrates that wave collisions are elastic. This result extends to strictly elastic collisions between elliptic localized waves when solutions exhibit symmetry about the origin.

	\begin{theorem}[The symmetry property of the $N$-elliptic localized solutions]\label{thm:symmetry}
When $\alpha_i=1,i=1,2,\ldots,N$, the $N$-elliptic localized solution $u_N$ satisfies the symmetry:
\begin{align}\label{Eq-uN-symmetry}
	u_N(\xi,t)= u_N^*(-\xi,-t).
\end{align}
	\end{theorem}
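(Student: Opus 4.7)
The plan is to show that the substitution $(\xi,t)\mapsto(-\xi,-t)$ followed by complex conjugation leaves the explicit $N$-elliptic localized formula invariant precisely when $\alpha_i=1$ for all $i$. I first invoke the fact, already established in the derivation of \eqref{Eq-zero-pole-5}, that the seed elliptic solution $u$ itself satisfies $u(\xi,t)=u^*(-\xi,-t)$, so $v_N(\xi,t):=u_N^*(-\xi,-t)$ is automatically another solution of the DNLS equation on the very same elliptic background; the task reduces to identifying the Darboux–B\"acklund data that produces $v_N$ and comparing them with those that produce $u_N$.

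Next, I would analyze how the seed eigenfunctions $\phi_i$ given in \eqref{Eq-phi-l} behave under the composed operation. Using the oddness of $\sigma$ and $\zeta$ together with the direct consequences of Theorem~\ref{thm:solution to the Lax pair}, namely $E(-\xi,-t;z)=E(\xi,t;z)^{-1}$, $F(-\xi,-t)=-F(\xi,t)$, and $d(-\xi;z)=-\sigma(\xi+z)/\sigma(\xi+\kappa)$, I would show that each column-combination $\phi_i^{*}(-\xi,-t;z_i)$ is again a solution of the Lax pair at the common spectral value $\lambda(z_i)=\lambda(\hat z_i)$, but with the two linearly-independent basis columns interchanged and $\alpha_i$ replaced by $\alpha_i^{*}$ up to an overall scalar that depends only on $z_i,\hat z_i$. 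Imposing $\alpha_i=1$ makes $\alpha_i^{*}=\alpha_i$, so that, after this $z_i\leftrightarrow\hat z_i$ reshuffling, the transformed Darboux data coincide with the original ones up to overall factors that cancel from $\mathbf{M}^{-1}$ and the rank-one product $\phi^{(1)}\phi^{(2)\dagger}$ in \eqref{Eq-uN-Sherman}.

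I would then translate this eigenfunction-level invariance into an entry-by-entry verification on the explicit formula \eqref{Eq-N-elliptic-localized-solution}--\eqref{Eq-N-elliptic-localized-solution-3}. Using the addition formula \eqref{Eq-addition formulas of the sigma functions} and the definitions of $\Sigma^{(1,2)}$, $r$, $s$, I would check that, when $\alpha_i=\alpha_j^{*}=1$, the four summands defining each entry of $\mathbf{B}_N^{(1)}$ and $\mathbf{B}_N^{(2)}$ are permuted among themselves by $(\xi,t)\mapsto(-\xi,-t)$ composed with complex conjugation; combined with the behaviour of the scalar prefactor $(-\sigma(\xi+\rho+2\kappa)/\sigma(\xi-\kappa))^{1-N}$ and of $e^{-F(\xi,t)}$, this yields $(\det\mathbf{B}_N^{(1)}/\det\mathbf{B}_N^{(2)})^{*}\big|_{(-\xi,-t)}=\det\mathbf{B}_N^{(1)}/\det\mathbf{B}_N^{(2)}\big|_{(\xi,t)}$ up to the derivative-compatible factor, from which $u_N^{*}(-\xi,-t)=u_N(\xi,t)$ after applying the outer $\xi$-derivative.

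The main obstacle will be the bookkeeping of signs, shifts and quasi-periodicity factors coming from the three different regimes $\Re(\kappa)=\Re(\rho)=0$, $\Re(\kappa)=0,\Re(\rho)=\omega_1$, and $\Re(\kappa)=\Re(\rho)=\omega_1$ catalogued at the end of Section~3 and packaged into the indicator factors $I_j$ and $I_0(\xi)$ of \eqref{Eq-Ij}. In each regime the $\zeta$-values appearing in $E(\xi,t;z)$ and $F(\xi,t)$ acquire different linear shifts under complex conjugation, and one must check that the contributions from $I_j$, $I_0(\xi)$ combine with those shifts to cancel cleanly. A related subtle point is verifying that $\alpha_i=1$ is genuinely the sharp hypothesis: the transformation produces $\alpha_i\mapsto\alpha_i^{*}$, so the condition $|\alpha_i|=1$ would a priori suffice for a weaker symmetry, and I must confirm that the additional phase constraint $\alpha_i=\alpha_i^{*}$ enters through the off-diagonal cross terms of the $\alpha_i\alpha_j^{*}$-type in $\mathbf{B}_N^{(1,2)}$.
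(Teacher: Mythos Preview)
Your proposal is conceptually sound but takes a significantly more laborious route than the paper's proof. The paper avoids entirely the entry-by-entry verification on the explicit sigma-function determinants $\mathbf{B}_N^{(1,2)}$ that you outline in your third paragraph, and hence never has to confront the three-regime bookkeeping with $I_j$ and $I_0(\xi)$ that you correctly flag as the main obstacle.

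Instead, the paper works at the abstract level of the B\"acklund formula $u_N=(\tilde u+\mathrm{i}\,\phi^{(2)\dagger}\mathbf{M}^{-1}\phi^{(1)})_\xi$ from \eqref{Eq-uN-Sherman}. Its key lemma is the \emph{non-conjugated} eigenfunction identity
\[
\phi_i(-\xi,-t)=\begin{pmatrix}0&-\mathrm{i}\\ \mathrm{i}&0\end{pmatrix}\phi_i(\xi,t),
\]
obtained directly from the relation $E(-\xi,-t;z)=E(\xi,t;\hat z)\,e^{F(\xi,t)}$ (equivalent to your $E(-\xi,-t;z)=E(\xi,t;z)^{-1}$ via $E(\xi,t;z)E(\xi,t;\hat z)=e^{-F}$) together with the fact that $\alpha_i=1$ makes $\binom{1}{\alpha_i}$ invariant under the column swap $\bigl(\begin{smallmatrix}0&1\\1&0\end{smallmatrix}\bigr)$. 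From this single relation one reads off $\mathbf{M}(\xi,t)=-\mathbf{M}^\dagger(-\xi,-t)$ and $(\phi^{(1)}\phi^{(2)\dagger})(\xi,t)=-(\phi^{(1)}\phi^{(2)\dagger})^\dagger(-\xi,-t)$ in two lines; combined with $\tilde u(\xi,t)=-\tilde u^*(-\xi,-t)$ this gives $G(\xi,t)=-G^*(-\xi,-t)$ for the pre-derivative quantity, whence $u_N(\xi,t)=u_N^*(-\xi,-t)$.

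What your approach would buy, if carried out, is an explicit check independent of the abstract Darboux machinery; what the paper's approach buys is a proof of a few lines once the eigenfunction identity is in hand, uniform across all three parameter regimes. Your remark about $\alpha_i\mapsto\alpha_i^*$ under conjugation is something of a red herring here: the paper never conjugates $\phi_i$ itself, only the full B\"acklund expression, so the relevant condition is simply that $\binom{1}{\alpha_i}$ be an eigenvector of the column-swap matrix, and the hypothesis $\alpha_i=1$ already enters at this non-conjugated step.
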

	\begin{proof}
		Utilizing \eqref{Eq-F} and  \eqref{Eq-solution to the Lax pair-2}, we could verify that
		\begin{align}
			E(-\xi,-t;z)=E(x,t;\hat{z})e^{F(\xi,t)},
		\end{align}
		is satisfied for arbitrary uniform parameter $z$. Then we obtain the symmetry
		\begin{align}\label{Eq-symmetry-entry}
			\begin{pmatrix}
				\phi_i^{(1)}(-\xi,-t)\\
				\phi_i^{(2)}(-\xi,-t)
			\end{pmatrix}:=\Phi(-\xi,-t;\lambda(z_i))\begin{pmatrix}
				1\\
				1
			\end{pmatrix}=\begin{pmatrix}
				0 & -\mathrm{i}\\
				\mathrm{i} & 0
			\end{pmatrix}\Phi(\xi,t;\lambda(z_i))\begin{pmatrix}
				0 & 1\\
				1 & 0
			\end{pmatrix}\begin{pmatrix}
				1\\
				1
			\end{pmatrix}=\begin{pmatrix}
				-\mathrm{i}\phi_{i}^{(2)}(\xi,t)\\
				\mathrm{i}\phi_{i}^{(1)}(\xi,t)
			\end{pmatrix},
		\end{align}
		which leads to
		\begin{equation}
			\begin{split}
				\phi_j^\dagger(\xi,t)\phi_i(\xi,t)&=\phi_j^\dagger(-\xi,-t)\phi_i(-\xi,-t),\\
				\phi_j^\dagger(\xi,t)\sigma_3\phi_i(\xi,t)&=-\phi_j^\dagger(-\xi,-t)\sigma_3\phi_i(-\xi,-t).
			\end{split}
		\end{equation}		
	Moreover, we have
		\begin{equation}\label{Eq-M-symmetry}
			\mathbf{M}(\xi,t)=-\mathbf{M}^\dagger(-\xi,-t).
		\end{equation}
		Similarly, it follows directly from \eqref{Eq-symmetry-entry} that 
		\begin{equation}
			\phi_i^{(1)}\big(\phi_j^{(2)}\big)^*(\xi,t)=	-\phi_i^{(2)}\big(\phi_j^{(1)}\big)^*(-\xi,-t).
		\end{equation}
		As a consequence, we have 
		\begin{equation}\label{Eq-phi-symmetry}
			\left(\phi^{(1)}\big(\phi^{(2)}\big)^\dagger\right)(\xi,t)=-\left(\phi^{(1)}\big(\phi^{(2)}\big)^\dagger\right)^\dagger(-\xi,-t).
		\end{equation}
		Using the explicit form of the elliptic solution \eqref{Eq-DNLS-elliptic-solution}, we arrive at 
		\begin{equation}
			u(\xi,t)=u^*(-\xi,-t).
		\end{equation}
		Henceforth, it holds that
		\begin{equation}\label{Eq-utilde-symmetry}
			\tilde{u}(\xi,t)=-\tilde{u}^*(-\xi,-t).
		\end{equation}
		Combining \eqref{Eq-uN-Sherman}, \eqref{Eq-M-symmetry}, \eqref{Eq-phi-symmetry} and \eqref{Eq-utilde-symmetry}, utilmately we verify  \eqref{Eq-uN-symmetry} and complete the proof.
	\end{proof}

	Combining Theorems \ref{thm:AB1}, \ref{thm:AB2}, and \ref{thm:symmetry}, we conclude that the collisions under the specific condition are strictly elastic.

	\section{The equivalent derivative-free form of the $N$-elliptic localized solutions}
	In this section, we present an equivalent derivative-free representation of the $N$-elliptic localized solutions \eqref{Eq-N-elliptic-localized-solution}-\eqref{Eq-N-elliptic-localized-solution-3}, originally obtained via $BT_\infty$
	transformation defined in \eqref{Eq-DT-infty}. Additionally, we establish asymptotic analysis for these solutions using this formulation.   The form of $N$-elliptic function solutions presented in Theorem \ref{thm:N-soliton solution} contains a derivative. As a supplyment, we present below an equivalent derivative-free representation based on the $BT_\infty$ transformation \eqref{Eq-DT-infty}. Notice that 
	\begin{equation}\label{Eq-Morrison-simplification}
		\begin{split}
			&\quad \left(1+2\phi^{(1)\dagger}\mathbf{\Lambda}^{\dagger}\mathbf{M}^{-1}\phi^{(1)}\right)u+4\phi^{(2) \dagger}\left(\mathbf{\Lambda}^{\dagger}\right)^2\mathbf{M}^{-1}\phi^{(1)}\\
			&=u+2\left(u\phi^{(1)\dagger}+2\phi^{(2)\dagger}\mathbf{\Lambda}^{\dagger}\right)\mathbf{\Lambda}^{\dagger}\textbf{M}^{-1}\phi^{(1)}\\
			&:=u+2\phi_A^\dagger\mathbf{\Lambda}^{\dagger}\textbf{M}^{-1}\phi^{(1)}.\\
		\end{split}
	\end{equation}
	With the help of the Sherman-Morrison-Woodbury-type matrix identity \eqref{Eq-SMW-identity}, combining with the $N$-fold Darboux-B\"acklund transformation $BT_{\infty}$ \eqref{Eq-DT-infty}, the $N$-elliptic localized solution $u_N$ \eqref{Eq-N-elliptic-localized-solution}-\eqref{Eq-N-elliptic-localized-solution-3} can be rewritten as
	\begin{align}\label{Eq-uN-BT-infty}
		u_N=\frac{u^{1-N}\det\left(-\mathbf{M}^\dagger\right)}{\det\left(\textbf{M}\right)}	\frac{\det\left(\frac{1}{2}u\textbf{M}\left(\mathbf{\Lambda}^\dagger\right)^{-1}+\phi^{(1)}\phi_A^\dagger\right)}{\det\left(-\frac{1}{2}u\mathbf{M}^\dagger\left(\mathbf{\Lambda}^\dagger\right)^{-1}+\phi^{(2)}\phi^{(2)\dagger}\right)},
	\end{align}
	where $u$ is the elliptic solution \eqref{Eq-DNLS-elliptic-solution}. Expressing the entries of the determinants involved in \eqref{Eq-uN-BT-infty} in terms of Weierstrass functions yields the following theorem.
	\begin{theorem}[Equivalent derivative-free form of $N$-elliptic localized solutions]\label{thm:solution-2}  
	The $N$-elliptic localized solution \eqref{Eq-N-elliptic-localized-solution}-\eqref{Eq-N-elliptic-localized-solution-3} to the DNLS equation \eqref{Eq-DNLS-equation} can be expressed as
	\begin{equation}\label{Eq-uN-2}
		\begin{split}
			u_N(\xi,t)=\left(-\frac{\sigma^2(\xi-\kappa)}{\sigma(\xi+\kappa)\sigma(\xi+\rho)}\right)^{N}\left(\frac{\sqrt{\nu_0}\sigma(\kappa)\sigma(\xi+\rho)\sigma(\xi+\kappa)}{\sigma(\rho)\sigma^2(\xi-\kappa)}e^{-F(\xi,t)}\right)\prod_{i=1}^{2}\frac{\det\left(\mathbf{B}_{N}^{(2i+1)}\right)}{\det\left(\mathbf{B}_{N}^{(2i+2)}\right)},
		\end{split}
	\end{equation}
	where the matrix elements are defined by
	\begin{equation}\label{Eq-asym-Lkm-3}
		\begin{split}
			\left(\mathbf{B}_{N}^{(s)}\right)_{ij}&=\sum_{m,n=0}^{1}(-1)^{n}\alpha_i^{m}\left(\alpha_j^*\right)^{n}I_0^{n}(\xi)I_j^{\frac{1+(-1)^s}{2}+(-1)^{s+1}n}\\
			&\quad \Sigma^{(s)}\left(\xi;\frac{z_i+\hat{z}_i+(-1)^{m}(z_i-\hat{z}_i)}{2},\frac{z_j^*+\check{z}_j+(-1)^{n}\big(z_j^*-\check{z}_j\big)}{2}\right)\\
			&\quad d_0\left(\frac{z_i+\hat{z}_i+(-1)^{m+(s\quad \mathrm{mod}5)}(z_i-\hat{z}_i)}{2}\right) d_0^*\left(\frac{z_j+\hat{z}_j-(-1)^{s+n}(z_j-\hat{z}_j)}{2}\right)\\
			&\quad 
			E\left(\frac{z_i+\hat{z}_i+(-1)^{m}(z_i-\hat{z}_i)}{2}\right)E^*\left(\frac{z_j+\hat{z}_j+(-1)^{n}(z_j-\hat{z}_j)}{2}\right),\quad s=3,4,5,6,
		\end{split}
	\end{equation}
	and the $\Sigma$-functions are given by
	\begin{equation}
		\begin{split}
			&\Sigma^{(3)}(\xi;\blacktriangle,\bullet)=\frac{\sigma( - \blacktriangle-\bullet + \kappa + \xi)\sigma(-2\kappa - \rho - \blacktriangle)\sigma(\kappa + \bullet)}{\sigma(\blacktriangle + \bullet)},\\
			&	\Sigma^{(4)}(\xi;\blacktriangle,\bullet)=\frac{\sigma(-\blacktriangle - \bullet - \kappa + \xi)\sigma(2\kappa + \rho - \bullet)\sigma(\kappa - \blacktriangle)}{\sigma(\blacktriangle + \bullet)},\\
			&	\Sigma^{(5)}(\xi;\blacktriangle,\bullet)=\frac{\sigma(\blacktriangle + \bullet -\rho- \xi  )\sigma(\kappa - \bullet)\sigma(\kappa + \bullet)\sigma(\blacktriangle + \rho)}{\sigma(\blacktriangle + \bullet)\sigma(\bullet - \rho)},\\
			&	\Sigma^{(6)}(\xi;\blacktriangle,\bullet)=\frac{\sigma(-\blacktriangle - \bullet - \kappa + \xi)\sigma(2\kappa + \rho - \bullet)\sigma(-\rho - \blacktriangle)}{\sigma(\blacktriangle + \bullet)}.
		\end{split}
	\end{equation}
	\end{theorem}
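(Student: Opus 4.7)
The plan is to parallel the computation in the proof of Theorem \ref{thm:N-soliton solution}, but starting from the $BT_\infty$ representation \eqref{Eq-DT-infty}. By Theorem \ref{thm:DT-equivalence} the resulting expression automatically coincides with the $N$-elliptic localized solution $u_N$ already constructed via $BT_0$, so the task reduces to rewriting \eqref{Eq-uN-BT-infty} entrywise in Weierstrass sigma functions. Applying the Sherman--Morrison--Woodbury identity \eqref{Eq-SMW-identity} together with the algebraic simplification \eqref{Eq-Morrison-simplification} brings \eqref{Eq-DT-infty} into the four--determinant ratio \eqref{Eq-uN-BT-infty}, and the remaining work splits cleanly into the two ratios visible in \eqref{Eq-uN-2}.

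For the first ratio $u^{1-N}\det(-\mathbf{M}^\dagger)/\det(\mathbf{M})$, the entries of $\mathbf{M}$ are already available from \eqref{Eq-Mij} in the proof of Theorem \ref{thm:N-soliton solution}, and the Hermitian conjugate $-\mathbf{M}^\dagger$ is obtained by swapping $(i,j)$ with $(j,i)$, conjugating, and flipping sign. Extracting the common scalar $-\sqrt{\nu_0}\sigma(\kappa)\sigma(\kappa-\rho)/[\mathrm{i}\sigma(\rho)\sigma(\xi-\kappa)\sigma^{2}(2\kappa)]$ from every entry identifies the stripped matrices with $\mathbf{B}_N^{(3)}$ and $\mathbf{B}_N^{(4)}$ after the bookkeeping encoded by the factors $(-1)^{n}\alpha_i^{m}(\alpha_j^*)^{n}I_j^{\bullet}I_0^{n}(\xi)$ in \eqref{Eq-asym-Lkm-3}. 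Combining the $N$ row--scalars, the $u^{1-N}$ factor, and the explicit form of $u$ \eqref{Eq-DNLS-elliptic-solution} produces exactly the outer sigma prefactor $\bigl(-\sigma^{2}(\xi-\kappa)/[\sigma(\xi+\kappa)\sigma(\xi+\rho)]\bigr)^{N}\cdot u$ displayed in \eqref{Eq-uN-2}.

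For the second ratio, I would compute the entries of the matrix $\tfrac{1}{2}u\mathbf{M}(\mathbf{\Lambda}^{\dagger})^{-1}+\phi^{(1)}\phi_A^{\dagger}$, using $\phi_A^{\dagger}=u^{*}\phi^{(1)\dagger}+2\phi^{(2)\dagger}\mathbf{\Lambda}^{\dagger}$, and analogously the entries of $-\tfrac{1}{2}u\mathbf{M}^{\dagger}(\mathbf{\Lambda}^{\dagger})^{-1}+\phi^{(2)}\phi^{(2)\dagger}$. Expanding the two--term forms \eqref{Eq-phi-l} yields four contributions indexed by $(m,n)\in\{0,1\}^{2}$, each carrying the products $d_0(\cdot)d_0^{*}(\cdot)E(\cdot)E^{*}(\cdot)$ together with the uniformizing factors from \eqref{Eq-Ij}, exactly as in the $\Re(\kappa),\Re(\rho)\in\{0,\omega_1\}$ case analysis performed around \eqref{Eq-relations-2}. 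Within each $(m,n)$ block the $\mathbf{M}$--piece and the outer--product piece must be fused via the sigma addition formula \eqref{Eq-addition formulas of the sigma functions}, in direct analogy with the pivotal reduction \eqref{Eq-Mij-denominator-1}; this produces precisely the trinomials $\Sigma^{(5)}$ and $\Sigma^{(6)}$ in the statement, while the prefactor $\bigl(-\sigma^{2}(\xi-\kappa)/[\sigma(\xi+\kappa)\sigma(\xi+\rho)]\bigr)^{N}$ is absorbed into the outer factor already accounted for.

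The principal obstacle is this sigma--function reduction: one must verify that, in each $(m,n)$ block of each of the two new matrices, the four--term combination collapses onto the single advertised trinomial $\Sigma^{(s)}$. The algorithm spelled out in the remark after Theorem \ref{thm:N-soliton solution}, based on matching the auxiliary sets $A$ and $B$, applies verbatim; however one must track carefully the $(s\bmod 5)$--dependent sign in the argument of $d_0$ in \eqref{Eq-asym-Lkm-3} and the exponents $\tfrac{1+(-1)^{s}}{2}+(-1)^{s+1}n$ carried by $I_j$, which differ between $s=5$ and $s=6$ because the $\tfrac{1}{2}u\mathbf{M}(\mathbf{\Lambda}^{\dagger})^{-1}$ contribution shifts $z_j^{*}$ to $\check z_j$ in exactly half of the cross--products. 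Once the reduction is verified and all scalar row/column factors are collected to match the prefactor in \eqref{Eq-uN-2}, the identification with $u_N$ from Theorem \ref{thm:N-soliton solution} (via Theorem \ref{thm:DT-equivalence}) completes the proof.
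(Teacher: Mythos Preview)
Your plan is the same as the paper's: start from the $BT_\infty$ ratio \eqref{Eq-uN-BT-infty}, read off the entries of $-\mathbf{M}^\dagger$ by conjugating \eqref{Eq-Mij}, compute the two combined matrices $\tfrac12 u\mathbf{M}(\mathbf{\Lambda}^\dagger)^{-1}+\phi^{(1)}\phi_A^\dagger$ and $-\tfrac12 u\mathbf{M}^\dagger(\mathbf{\Lambda}^\dagger)^{-1}+\phi^{(2)}\phi^{(2)\dagger}$ term by term, and collapse each $(m,n)$ block with the sigma addition formula. Two small corrections: first, from \eqref{Eq-Morrison-simplification} one has $\phi_A^\dagger=u\,\phi^{(1)\dagger}+2\phi^{(2)\dagger}\mathbf{\Lambda}^\dagger$, not $u^*\phi^{(1)\dagger}$; second, the outer prefactor $\bigl(-\sigma^2(\xi-\kappa)/[\sigma(\xi+\kappa)\sigma(\xi+\rho)]\bigr)^{N}\cdot u$ does \emph{not} come from the first ratio alone --- the row scalars of $\mathbf{M}$ versus $-\mathbf{M}^\dagger$ contribute only $(-\sigma(\xi-\kappa)/\sigma(\xi+\kappa))^N$, and the remaining $(\sigma(\xi-\kappa)/\sigma(\xi+\rho))^N$ together with the compensating powers of $u$ arises from the row scalars of the second pair of matrices (cf.\ the prefactors in \eqref{Eq-det-B5}--\eqref{Eq-det-B6}). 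With those two fixes your sketch matches the paper's proof.
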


	\begin{proof}	
		We focus on solutions with $\Re(\kappa)=0$. Cases where $\Re(\kappa)=\omega_1$ can be analyzed similarly with direct but more complicated computations.
		It follows directly from \eqref{Eq-Mij} that
		\begin{equation}\label{Eq-Mtilde}
			\begin{split}
				&\quad \left(-\textbf{M}^{\dagger}\right)_{ij}=\frac{\sqrt{\nu_0}\sigma(\kappa)\sigma(\kappa-\rho)}{\mathrm{i}\sigma(\rho)\sigma(\xi+\kappa)\sigma^2(2\kappa)}\left( \frac{\sigma(-z_j^*-z_i+\kappa+\xi)\sigma(-2\kappa-\rho-z_i)\sigma(\kappa+z_j^*)}{\sigma(z_i+z_j^*)}d_0(\hat{z}_i)d_0^*(z_j)E(z_i)E^*(z_j)\right.\\
				&\quad +\alpha_i\frac{\sigma(-\kappa-z_j^*)\sigma(-\kappa+z_i)\sigma(z_j^*-z_i-2\kappa-\rho-\xi)}{\sigma(-\kappa-\rho+z_j^*-z_i)}d_0(z_i)d_0^*(z_j)E(\hat{z}_i)E^*(z_j)\\
				&\quad +\alpha_j^*\frac{\sigma(-2\kappa-\rho+z_j^*)\sigma(-2\kappa-\rho-z_i)\sigma(-z_j^*+z_i+\rho-\xi)}{\sigma(-\kappa-\rho+z_j^*-z_i)}d_0(\hat{z}_i)d_0^*(\hat{z}_j)E(z_i)E^*(\hat{z}_j)I_jI_0(\xi)\\
				&\quad \left.+\alpha_i\alpha_j^*\frac{\sigma(-2\kappa-\rho+z_j^*)\sigma(-\kappa+z_i)\sigma(-z_j^*-z_i-\kappa-\xi)}{\sigma(z_j^*+z_i)}d_0(z_i)d_0^*(\hat{z}_j)E(\hat{z}_i)E^*(\hat{z}_j)I_jI_0(\xi)\right).\\
			\end{split}
		\end{equation}
		Based on \eqref{Eq-uN-2}, it suffices to determine the entries of the matrices $\frac{1}{2}u\textbf{M}\left(\mathbf{\Lambda}^\dagger\right)^{-1}+\phi^{(1)}\phi_A^\dagger$ and $-\frac{1}{2}u\textbf{M}^\dagger\left(\mathbf{\Lambda}^\dagger\right)^{-1}+\phi^{(2)}\phi^{(2)\dagger}$. Using  \eqref{Eq-DNLS-elliptic-solution} and \eqref{Eq-parameterization-lambda-y}, we obtain
		\begin{equation}\label{Eq-uML}
			\begin{split}
				\left(\frac{1}{2}u\mathbf{M}\left(\mathbf{\Lambda}^{-1}\right)^{\dagger}\right)_{ij}=\frac{\sqrt{\nu_0}\sigma(\kappa)\sigma(\xi+\rho)\sigma(\xi+\kappa)\sigma(-\rho+z_j^*)d_0^*(\hat{z}_j)}{\sigma(\rho)\sigma(2\kappa)\sigma^3(\xi-\kappa)\sigma(\kappa+z_j^*)d_0^*(z_j)}e^{-F(\xi,t)}I_j\mathbf{M}_{0,ij}.
			\end{split}
		\end{equation}
		Besides, using \eqref{Eq-solution to the Lax pair-1}-\eqref{Eq-phi-l}, \eqref{Eq-Morrison-simplification} and the addition formula \eqref{Eq-addition formulas of the sigma functions}, we arrive at
		\begin{equation}
			\begin{split}
				&\quad \left(\phi_A^\dagger\right)_i=\frac{\sqrt{\nu_0}\sigma(\kappa)\sigma(\kappa+\rho)\sigma(\xi+\kappa)}{\sigma(2\kappa)\sigma(\rho)\sigma^2(\xi-\kappa)}e^{-F(\xi,t)}\\
				&\quad \left(\frac{\sigma(z_i^*-\kappa)\sigma(z_i^*+\kappa-\rho-\xi)}{\sigma(z_i^*-\rho)}d_0^*(z_i)E^*(z_i)-\alpha_i^*\frac{\sigma(2\kappa-z_i^*-\xi)\sigma(\rho-z_i^*)}{\sigma(\kappa-z_i^*)}d_0^*(\hat{z}_i)E^*(\hat{z}_i)I_0(\xi)I_i\right).
			\end{split}
		\end{equation}
		As a consequence of \eqref{Eq-phi-l}, the entries of the matrix $\phi^{(1)}\phi_A^\dagger$ read as
		\begin{equation*}
			\begin{split}
				\quad\left(\phi^{(1)}\phi_A^\dagger\right)_{ij}&=\frac{\sqrt{\nu_0}\sigma(\kappa)\sigma(\kappa+\rho)\sigma(\xi+\kappa)}{\sigma(2\kappa)\sigma(\rho)\sigma^3(\xi-\kappa)}e^{-F(\xi,t)}\\
				&\quad\left( \frac{\sigma(z_i-\xi)\sigma(z_j^*-\kappa)\sigma(z_j^*+\kappa-\rho-\xi)}{\sigma(z_j^*-\rho)}d_0(z_i)d_0^*(z_j)E(z_i)E^*(z_j)\right.\\
				&\quad+\alpha_i\frac{\sigma(-\kappa-\rho-z_i-\xi)\sigma(z_j^*-\kappa)\sigma(z_j^*+\kappa-\rho-\xi)}{\sigma(z_j^*-\rho)}d_0(\hat{z}_i)d_0^*(z_j)E(\hat{z}_i)E^*(z_j)\\
				&\quad-\alpha_j^*\frac{\sigma(z_i-\xi)\sigma(2\kappa-z_j^*-\xi)\sigma(\rho-z_j^*)}{\sigma(\kappa-z_j^*)}d_0(z_i)d_0^*(\hat{z}_j)E(z_i)E^*(\hat{z}_j)I_0(\xi)I_j\\
				&\left.\quad-\alpha_i\alpha_j^*\frac{\sigma(-\kappa-\rho-z_i-\xi)\sigma(2\kappa-z_j^*-\xi)\sigma(\rho-z_j^*)}{\sigma(\kappa-z_j^*)}d_0(\hat{z}_i)d_0^*(\hat{z}_j)E(\hat{z}_i)E^*(\hat{z}_j)I_0(\xi)I_j\right).
			\end{split}
		\end{equation*}
		Making utilize of the addition formula \eqref{Eq-addition formulas of the sigma functions} for another time, we obtain 
		\begin{equation}\label{Eq-det-B5}
			\begin{split}
				&  \left(\frac{1}{2}u\textbf{M}(\mathbf{\Lambda}^\dagger)^{-1}+\phi^{(1)}\phi_A^\dagger\right)_{ij}= \frac{\sqrt{\nu_0} \sigma(\kappa) \sigma(\xi+\kappa)}{\sigma(\rho) \sigma(2 \kappa) \sigma^2(\xi-\kappa)} e^{-F(\xi,t)}\\
				&\left(\frac{\sigma(\kappa-z_j^*) \sigma(\kappa+z_j^*) \sigma(z_i+\rho) \sigma(z_i+z_j^*-\xi-\rho)}{\sigma(z_i+z_j^*) \sigma(z_j^*-\rho)} d_0(z_i) d_0^*(z_j) E(z_i) E^*(z_j)\right. \\
				&\quad  +\alpha_i \frac{\sigma(-\kappa+z_j^*) \sigma(\kappa+z_j^*) \sigma(-z_i-\kappa) \sigma(-2 \rho-\kappa-\xi-z_i+z_j^*)}{\sigma(\kappa+\rho+z_i-z_j^*) \sigma(z_j^*-\rho)} d_0(\hat{z}_i) d_0^*(z_j) E(\hat{z}_i) E^*(z_j)\\
				&\quad  +\alpha_j^* \frac{\sigma(\rho-z_j^*) \sigma(2 \kappa+\rho-z_j^*) \sigma(-\rho-z_i) \sigma(-z_i+z_j^*-\kappa+\xi)}{\sigma(\kappa+\rho+z_i-z_j^*) \sigma(\kappa-z_j^*)} d_0(z_i) d_0^*(\hat{z}_j) E(z_i) E^*(\hat{z}_j)I_0(\xi)I_j \\
				& \quad \left.+\alpha_i \alpha_j^* \frac{\sigma(\rho-z_j^*) \sigma(2 \kappa+\rho-z_j^*) \sigma(-\kappa-z_i) \sigma(z_i+z_j^*+\rho+\xi)}{\sigma(z_i+z_j^*) \sigma(\kappa-z_j^*)} d_0(\hat{z}_i) d_0^*(\hat{z}_j) E(\hat{z}_i) E^*(\hat{z}_j)I_0(\xi)I_j\right).
			\end{split}
		\end{equation}
	Ultimately, combining with \eqref{Eq-DNLS-elliptic-solution}, \eqref{Eq-solution to the Lax pair-1}, \eqref{Eq-solution to the Lax pair-2}, \eqref{Eq-Mtilde} and \eqref{Eq-addition formulas of the sigma functions}, we have
		\begin{equation}\label{Eq-det-B6}
			\begin{split}
				&\quad \left(-\frac{1}{2}u\textbf{M}^\dagger\left(\mathbf{\Lambda}^\dagger\right)^{-1}+\phi^{(2)}\phi^{(2)\dagger}\right)_{ij}\\
				&=\frac{1}{\sigma(2\kappa)\sigma(\xi-\kappa)} \left(\frac{\sigma(-z_j^*-z_i-\kappa+\xi)\sigma(2\kappa+\rho-z_j^*)\sigma(-\rho-z_i)}{\sigma(z_i+z_j^*)}d_0(\hat{z}_i)d_0^*(\hat{z}_j)E(z_i)E^*(z_j)I_j\right.\\
				&\quad +\alpha_i\frac{\sigma(\rho+z_i-z_j^*+\xi)\sigma(\kappa+z_i)\sigma(2\kappa+\rho-z_j^*)}{\sigma(-\kappa-\rho-z_i+z_j^*)}d_0(z_i)d_0^*(\hat{z}_j)E(\hat{z}_i)E^*(z_j)I_j\\
				&\quad +\alpha_j^*\frac{\sigma(z_j^*-z_i-2\kappa-\rho+\xi)\sigma(\kappa+z_j^*)\sigma(\rho+z_i)}{\sigma(\kappa+\rho+z_i-z_j^*)}d_0(\hat{z}_i)d_0^*(z_j)E(z_i)E^*(\hat{z}_j)I_0(\xi)\\
				&\quad \left.+\alpha_i\alpha_j^*\frac{\sigma(z_j^*+z_i-\kappa+\xi)\sigma(\kappa+z_i)\sigma(\kappa+z_j^*)}{\sigma(z_i+z_j^*)}d_0(z_i)d_0^*(z_j)E(\hat{z}_i)E^*(\hat{z}_j)I_0(\xi)\right).
			\end{split}
		\end{equation}
		Therefore, after further simplification using  \eqref{Eq-DNLS-elliptic-solution}, \eqref{Eq-Mij}, \eqref{Eq-uN-BT-infty}, \eqref{Eq-Mtilde}, 
		\eqref{Eq-det-B5} and \eqref{Eq-det-B6}, we establish the theorem's conclusions.
	\end{proof}
	As a by-product, we obtain that the modulus of the $N$-elliptic localized solution $u_N$ can be expressed as 
	\begin{equation}\label{Eq-modulus}
		|u_N|=\left|\frac{\sqrt{\nu_0}\sigma(\kappa)}{\sigma(\rho)}\left(\frac{\sigma(\xi-\kappa)}{\sigma(\xi+\rho)}\right)^{N-1}\frac{\det(\mathbf{B}_N^{(5)})}{\det(\mathbf{B}_N^{(6)})}e^{-F(\xi,t)}\right|.
	\end{equation}
	Actually, using \eqref{Eq-DNLS-elliptic-solution} and \eqref{Eq-uN-BT-infty}, we arrive at
	\begin{equation}
		|u_N|=|u^{1-N}|\left|\frac{\det\big(\frac{1}{2}u\textbf{M}(\mathbf{\Lambda}^\dagger)^{-1}+\phi^{(1)}\phi_A^\dagger\big)}{\det\big(-\frac{1}{2}u\textbf{M}^\dagger\left(\mathbf{\Lambda}^\dagger\right)^{-1}+\phi^{(2)}\phi^{(2)\dagger}\big)}\right|,
	\end{equation}
	where $u$ denotes the elliptic solution given in \eqref{Eq-DNLS-elliptic-solution}. Combining \eqref{Eq-DNLS-elliptic-solution}, \eqref{Eq-det-B5} and \eqref{Eq-det-B6}, direct evaluation yields \eqref{Eq-modulus}.

	Building upon the derivative-free form of the $N$-elliptic localized solution presented in Theorem \ref{thm:solution-2}, we can directly derive the corresponding asymptotic formula in the derivative-free form.
	\begin{theorem}[Equivalent derivative-free form of asymptotic solutions]\label{thm:asym-BT-infty}
	When $\Re(\beta_i)>0,i=1,2,\ldots,N$ and $v_i<v_j,1\leq i<j\leq N$,
	the asymptotic behavior of the $N$-elliptic localized solution along propagation directions $L_k^{-}$ as $t\rightarrow -\infty$ is given by
	\begin{equation}\label{Eq-asym-Lkm-infty}
		\begin{split}
			u_N&= u_{N,L_k^-}+\mathcal{O}\big(\exp\big(\mathrm{min}_{i\neq k}|\beta_i(v_i-v_k)|t\big)\big),\quad
			u_{N,L_k^-}=\sqrt{\nu_0}\frac{\sigma(\kappa)}{\sigma(\rho)}\prod_{i=1}^2\frac{D^{(2i+1)}_{N,L_k^-}}{D^{(2i+2)}_{N,L_k^-}}C_{N,L_k^-}e^{-F(\xi,t)},\\
		\end{split}
	\end{equation}	
	where				
	\begin{equation}
		\begin{split}		
			D_{N,L_{k}^-}^{(s)}&=\sum_{m,n=0}^{1}(-1)^{n}\alpha_{N,L_k^-}^{m}\big(\alpha_{N,L_k^-}^*\big)^{n}I_0^{n}\big(\xi+z_{N,L_k^-}\big)I_k^{\frac{1+(-1)^s}{2}+(-1)^{s+1}n}\\
			&\quad\Sigma^{(s)}\left(\xi+z_{N,L_k^-};\frac{z_k+\hat{z}_k+(-1)^{m}(z_k-\hat{z}_k)}{2},\frac{z_k^*+\check{z}_k+(-1)^{n}\big(z_k^*-\check{z}_k\big)}{2}\right)\\
			&\quad d_0\left(\frac{z_k+\hat{z}_k+(-1)^{m+(s\quad\mathrm{mod}5)}(z_k-\hat{z}_k)}{2}\right) d_0^*\left(\frac{z_k+\hat{z}_k-(-1)^{s+n}(z_k-\hat{z}_k)}{2}\right)\\
			&\quad 
			E\left(\xi;\frac{z_k+\hat{z}_k+(-1)^{m}(z_k-\hat{z}_k)}{2}\right)E^*\left(\xi;\frac{z_k+\hat{z}_k+(-1)^{n}(z_k-\hat{z}_k)}{2}\right), \quad s=3,4,5,6.
		\end{split}
	\end{equation}
	This representation remains valid along $L_k^{+}$ as $t\rightarrow +\infty$, with $z_{N,L_k^-}$, $\Delta_{N,L_k^-}$, $\alpha_{N,L_k^-}$, $C_{N,L_k^-}$,  \quad and $\mathcal{O}\big(\exp\big(\mathrm{min}_{i\neq k}|\beta_i(v_i-v_k)|t\big)\big)$ replaced by $z_{N,L_k^+}$, $\Delta_{N,L_k^+}$, $\alpha_{N,L_k^+}$, $C_{N,L_k^+}$, and $\mathcal{O}\big(\exp\big(-\mathrm{min}_{i\neq k}|\beta_i(v_i-v_k)|t\big)\big)$, respectively.
	\end{theorem}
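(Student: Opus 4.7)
The plan is to proceed in direct parallel with the proof of Theorem \ref{thm:AB1}, starting from the derivative-free representation \eqref{Eq-uN-2} rather than from \eqref{Eq-N-elliptic-localized-solution}. First I would rewrite each entry $(\mathbf{B}_N^{(s)})_{ij}$ for $s=3,4,5,6$ by extracting the quadratic-in-$\xi$ exponential from $E(\xi,t;z_i)E^*(\xi,t;z_j)$ via the substitution $\tilde{E}(\xi,t;z)=E(\xi,t;z)e^{-\frac{\zeta(\omega_1)}{\omega_1}z\xi}$ and renormalizing the $\Sigma^{(s)}$ into bounded quasi-periodic functions $\Delta^{(s)}$, exactly as in \eqref{Eq-tildeE}. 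This factors each $(\mathbf{B}_N^{(s)})_{ij}$ into a row-column bilinear form $\bigl(1,\alpha_i e^{\tau_i}\bigr)\mathbf{\Theta}_{m,n}^{(s)}\bigl(1,\alpha_j^* e^{\tau_j^*}\bigr)^{\!\top}$ with $\tau_i$ defined as in \eqref{Eq-tau}.

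Next, I would invoke the sign conditions $\Re(\beta_i)>0$ and $v_i<v_j$ for $i<j$: along $L_k^{-}$ as $t\to -\infty$ the exponential $e^{\tau_i}$ blows up for $i<k$ and vanishes for $i>k$. Scaling rows by $e^{-\tau_i}$ for $i<k$ and columns by $e^{-\tau_j^*}$ for $j<k$ (the scalings cancel between numerator and denominator because the four determinants share these row/column diagonal prefactors), only one of the four bilinear blocks survives in each off-diagonal position, producing four rank-compatible degenerate matrices whose nonvanishing entries are indexed by $i,j\neq k$ plus a single row/column pair at $i=j=k$ containing the four surviving terms in $\alpha_k,\alpha_k^*$. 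Expanding each $\det(\mathbf{B}_N^{(s)})$ along the $k$-th row and column yields a product of an $(N-1)\times(N-1)$ minor with a $2\times 2$-type combination of the surviving $s$-th $\Sigma$-terms.

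I would then evaluate the $(N-1)\times(N-1)$ minors using the sigma version of the Cauchy determinant formula in Theorem \ref{thm:Cauchy matrix}. After application, each of the four minors has the same $\prod_{1\le i<j\le N,\,i,j\neq k}\sigma(\cdot)\sigma(\cdot)/\prod\sigma(\cdot)$ skeleton, and the ratios $\det(\mathbf{B}_N^{(3)})\det(\mathbf{B}_N^{(5)})/\bigl(\det(\mathbf{B}_N^{(4)})\det(\mathbf{B}_N^{(6)})\bigr)$ collapses so that the common skeleton factors cancel pairwise. The residual arguments of the sigma functions in the surviving $\sigma(\tau+\sum(m_i+n_i))$ factors are precisely the shifts $\pm z_{N,L_k^-}$ (as in \eqref{Eq-zLkm}), and the surviving ratios of the $P^{(3)}/P^{(4)}$-type products reproduce the constants $C_{N,L_k^-}$ and $\Delta_{N,L_k^-}$. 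Combining this with the prefactor $\bigl(-\sigma^2(\xi-\kappa)/(\sigma(\xi+\kappa)\sigma(\xi+\rho))\bigr)^{N}$ and the seed elliptic solution in \eqref{Eq-uN-2} yields the claimed formula. The $t\to+\infty$ case is identical under the swap $(k-1,k)\leftrightarrow(k,k+1)$ of the truncation index.

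The main technical obstacle is bookkeeping: four determinants, each expanded about its $k$-th row and column with four $(m,n)$ branches, must be reduced to a single compact expression, and the precise combinations of $I_j$ and $I_0(\xi+z_{N,L_k^-})$ factors must agree with those appearing in the one-fold $BT_\infty$ output. A cleaner way to dispatch this verification, which I would use as a consistency check, is to apply Theorem \ref{thm:DT-equivalence} at the level of the surviving single-soliton background: the asymptotic profile $u_{N,L_k^-}$ must coincide with the image of a one-fold $BT_\infty$ applied to the seed \eqref{Eq-DNLS-elliptic-solution} with spectral parameter $\lambda(z_k)$, twist $\alpha_{N,L_k^-}$, shift $\xi\mapsto \xi+z_{N,L_k^-}$ and phase $C_{N,L_k^-}$, which is exactly the derivative-free reformulation of the corresponding $BT_0$ output in Theorem \ref{thm:AB1}. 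This reduces the final bookkeeping step to a one-fold identity already guaranteed by Theorem \ref{thm:DT-equivalence}.
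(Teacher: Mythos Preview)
Your proposal is correct and follows essentially the same route as the paper: factor each $(\mathbf{B}_N^{(s)})_{ij}$ into a bilinear form in $\tilde{E}(\xi,t;z_i)$ via renormalized bounded functions $\Delta^{(s)}$, use the asymptotics of $e^{\tau_i}$ along $L_k^-$ to reduce each determinant to a sum of four full $N\times N$ sigma-Cauchy determinants indexed by $(m,n)$, evaluate those by Theorem~\ref{thm:Cauchy matrix}, and cancel the common product skeleton between numerator and denominator. One small imprecision: the surviving object is not a single $(N-1)\times(N-1)$ minor times a $2\times2$ block, but rather four distinct $N\times N$ determinants (one per $(m,n)$ choice at index $k$), each evaluated separately; the paper also needs a short identity for the residual $d_0$ and $I_j$ products (the $(I_{k-}/I_{k+})^2$ relation) which you subsume under ``common skeleton factors cancel''. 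Your closing remark---that Theorem~\ref{thm:DT-equivalence} lets one bypass the four-determinant bookkeeping by reading off the one-fold $BT_\infty$ output directly from the already-established $BT_0$ asymptotic of Theorem~\ref{thm:AB1}---is a genuine shortcut the paper does not invoke, and would indeed reduce the final verification to a one-fold identity.
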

	\begin{proof}
		We begin by analyzing the asymptotic behavior of the determinants in the solution expression \eqref{Eq-uN-2}, which requires separating bounded and unbounded components of each matrix entry. These entries can be expressed as 
		\begin{equation}\label{Eq-MNij}
			\begin{split}
				\quad (\mathbf{B}^{(3)}_N)_{ij}&=e^{\frac{\zeta(\omega_1)}{2\omega_1}(\xi^2+2\kappa\xi)}\begin{pmatrix}
					\tilde{E}(\xi;z_i) & \alpha_i \tilde{E}(\xi;\hat{z}_i)
				\end{pmatrix} \\
				&\quad \begin{pmatrix}
					\Delta^{(3)}(\xi;z_i,z_j^*)d_0(\hat{z}_i)d_0^*(z_j) &      -\Delta^{(3)}(\xi;z_i,\check{z}_j)d_0(\hat{z}_i)d_0^*(\hat{z}_j)I_j\\
					\Delta^{(3)}(\xi;\hat{z}_i,z_j^*)d_0(z_i)d_0^*(z_j)
					&  -    \Delta^{(3)}(\xi;\hat{z}_i,\check{z}_j)d_0(z_i)d_0^*(\hat{z}_j)I_j
				\end{pmatrix}\begin{pmatrix}
					\tilde{E}^*(\xi;z_j)\\ \alpha_j^* \tilde{E}^*(\xi;\hat{z}_j)
				\end{pmatrix},\\
				(\mathbf{B}^{(4)}_N)_{ij}&=e^{\frac{\zeta(\omega_1)}{2\omega_1}(\xi^2-2\kappa\xi)}\begin{pmatrix}
					\tilde{E}(\xi;z_i) & \alpha_i \tilde{E}(\xi;\hat{z}_i)
				\end{pmatrix}\\
				&\quad \begin{pmatrix}
					\Delta^{(4)}(\xi;z_i,z_j^*)d_0(z_i)d_0^*(\hat{z}_j)I_j &  -\Delta^{(4)}(\xi;z_i,\check{z}_j)d_0(z_i)d_0^*(z_j)\\
					\Delta^{(4)}(\xi;\hat{z}_i,z_j^*)d_0(\hat{z}_i)d_0^*(\hat{z}_j)I_j
					&  -\Delta^{(4)}(\xi;\hat{z}_i,\check{z}_j)d_0(\hat{z}_i)d_0^*(z_j)
				\end{pmatrix}\begin{pmatrix}
					\tilde{E}^*(\xi;z_j)\\ \alpha_j^* \tilde{E}^*(\xi;\hat{z}_j)
				\end{pmatrix},\\ (\mathbf{B}^{(5)}_N)_{ij}&=e^{\frac{\zeta(\omega_1)}{2\omega_1}(\xi^2+2\rho\xi)}
				\begin{pmatrix}
					\tilde{E}(\xi;z_i) & \alpha_i \tilde{E}(\xi;\hat{z}_i)
				\end{pmatrix}\\
				&\quad \begin{pmatrix}
					\Delta^{(5)}(\xi;z_i,z_j^*)d_0(z_i)d_0^*(z_j) &  -\Delta^{(5)}(\xi;z_i,\check{z}_j)d_0(z_i)d_0^*(\hat{z}_j)I_
					j\\
					\Delta^{(5)}(\xi;\hat{z}_i,z_j^*)d_0(\hat{z}_i)d_0^*(z_j) 
					&  -\Delta^{(5)}(\xi;\hat{z}_i,\check{z}_j)d_0(\hat{z}_i)d_0^*(\hat{z}_j)I_
					j
				\end{pmatrix}    \begin{pmatrix}
					\tilde{E}^*(\xi;z_j)\\ \alpha_j^* \tilde{E}^*(\xi;\hat{z}_j)
				\end{pmatrix},\\
				(\mathbf{B}^{(6)}_N)_{ij}&=e^{\frac{\zeta(\omega_1)}{2\omega_1}(\xi^2-2\kappa\xi)}
				\begin{pmatrix}
					\tilde{E}(\xi;z_i) & \alpha_i \tilde{E}(\xi;\hat{z}_i)
				\end{pmatrix}\\ &\quad \begin{pmatrix}
					\Delta^{(6)}(\xi;z_i,z_j^*)d_0(\hat{z}_i)d_0^*(\hat{z}_j)I_j &  -\Delta^{(6)}(\xi;z_i,\check{z}_j)d_0(\hat{z}_i)d_0^*(z_j)\\
					\Delta^{(6)}(\xi;\hat{z}_i,z_j^*)d_0(z_i)d_0^*(\hat{z}_j)I_j 
					&  -\Delta^{(6)}(\xi;\hat{z}_i,\check{z}_j)d_0(z_i)d_0^*(z_j)
				\end{pmatrix}    \begin{pmatrix}
					\tilde{E}^*(\xi;z_j)\\ \alpha_j^* \tilde{E}^*(\xi;\hat{z}_j)
				\end{pmatrix},\\
			\end{split}
		\end{equation}
		where
		\begin{equation}
			\begin{split}
				\Delta^{(3)}(\xi;\blacktriangle,\bullet)&=\Sigma^{(3)}(\xi;\blacktriangle,\bullet)e^{-\frac{\zeta(\omega_1)}{2\omega_1}\big(\xi^2-2(\blacktriangle+\bullet-\kappa)\xi\big)},\\
				\Delta^{(4)}(\xi;\blacktriangle,\bullet)&=\Sigma^{(4)}(\xi;\blacktriangle,\bullet)e^{-\frac{\zeta(\omega_1)}{2\omega_1}\big(\xi^2-2(\blacktriangle+\bullet+\kappa)\xi\big)},\\
				\Delta^{(5)}(\xi;\blacktriangle,\bullet)&=\Sigma^{(5)}(\xi;\blacktriangle,\bullet)e^{-\frac{\zeta(\omega_1)}{2\omega_1}\big(\xi^2-2(\blacktriangle+\bullet-\rho)\xi\big)},\\
				\Delta^{(6)}(\xi;\blacktriangle,\bullet)&=\Sigma^{(6)}(\xi;\blacktriangle,\bullet)e^{-\frac{\zeta(\omega_1)}{2\omega_1}\big(\xi^2-2(\blacktriangle+\bullet+\kappa)\xi\big)}.
			\end{split}
		\end{equation}    
		The functions $\Delta^{(i)}$, $i=3,4,5,6$, remain bounded in $\xi$ for arbitrary $\blacktriangle,\bullet\in\mathbb{C}$. Following a procedure analogous to that in the proof of Theorem \ref{thm:AB1}, we derive the asymptotic expression for the solution \eqref{Eq-uN-2}:
		\begin{align}\label{Eq-asymptotic-formula-1-infty}
			u_{N,L_k^-}=\left(\frac{\sqrt{\nu_0}\sigma(\kappa)\sigma(\xi+\rho)\sigma(\xi+\kappa)}{\sigma(\rho)\sigma^2(\xi-\kappa)}e^{-F(\xi,t)}\right) \left(-\frac{\sigma(\xi+\rho)\sigma(\xi+\kappa)}{\sigma^2(\xi-\kappa)}\right)^{-N}\frac{U^{(3)}_{N,L_k^{-}}}{U^{(4)}_{N,L_k^{-}}},
		\end{align}
		where 
		\begin{equation}
			\begin{split}
				U^{(3)}_{N,L_k^{-}}
				&=\prod_{i=1}^{2}\sum_{m,n=1}^2(-1)^{n-1}I_k^{n-1}e^{(m-1)\tau_k+(n-1)\tau_k^*}\det\left(\textbf{V}_{N,L_k^-}^{[(2i+1),m,n]}\right),\\
				U^{(4)}_{N,L_k^{-}}
				&=\prod_{i=1}^{2}\sum_{m,n=1}^2(-1)^{n-1}I_k^{2-n}e^{(m-1)\tau_k+(n-1)\tau_k^*}\det\left(\textbf{V}_{N,L_k^-}^{[(2i+2),m,n]}\right).\\
			\end{split}
		\end{equation}
		The determinants appearing in $U^{(3)}_{N,L_k^{-}}$ and $U^{(4)}_{N,L_k^{-}}$ can be evaluated as 
		\begin{equation}
			\begin{split}
				\det\left(\textbf{V}_{N,L_k^-}^{[(3),m,n]}\right)&=\left(\prod_{i=1}^N \frac{d_0(\hat{\delta}_i^{[m,n]})d_0^*(\hat{\varepsilon}_i^{[m,n]})}{d_0(\delta_i^{[m,n]})d_0^*(\varepsilon_i^{[m,n]})}r^{-1}(\delta_{i}^{[m,n]})s(\eta_{i}^{[m,n]})\right)	\Pi_{N,L_k^-}^{[m,n]}(\xi;\kappa)I_{k+},\\  
				\det\left(\textbf{V}_{N,L_k^-}^{[(4),m,n]}\right)&= (-1)^N\Pi_{N,L_k^-}^{[m,n]}(\xi;-\kappa)I_{k-},\\
				\det\left(\textbf{V}_{N,L_k^-}^{[(5),m,n]}\right)&=(-1)^N\left(\prod_{i=1}^N\frac{d_0^*(\hat{\varepsilon}_i^{[m,n]})}{d_0^*(\varepsilon_i^{[m,n]})}\frac{\sigma(-\rho-\delta_i^{[m,n]})\sigma(\kappa-\eta_i^{[m,n]})\sigma(\kappa+\eta_i^{[m,n]})}{\sigma(\kappa-\delta_i^{[m,n]})\sigma(\eta_i^{[m,n]}-\rho)\sigma(2\kappa+\rho-\eta_i^{[m,n]})}\right)\Pi_{N,L_k^-}^{[m,n]}(\xi;\rho)I_{k+},\\
				\det\left(\textbf{V}_{N,L_k^-}^{[(6),m,n]}\right)&=(-1)^N\prod_{i=1}^N \frac{d_0(\hat{\delta}_i^{[m,n]})}{d_0(\delta_i^{[m,n]})}\frac{\sigma(-\rho-\delta_i^{[m,n]})}{\sigma(\kappa-\delta_i^{[m,n]})}\Pi_{N,L_k^-}^{[m,n]}(\xi;-\kappa)I_{k-},\\
			\end{split}
		\end{equation}
		where $	I_{k-}=\prod_{j=1}^{k-1}I_j,\quad I_{k+}=\prod_{j=k+1}^{N}I_j$.
		We observe that 
		\begin{align}
			\left(\frac{d_0^*(\hat{\varepsilon}_i^{[m,n]})}{d_0^*(\varepsilon_i^{[m,n]})}\right)^2\frac{\sigma(\kappa-\eta_i^{[m,n]})\sigma(\kappa+\eta_i^{[m,n]})}{\sigma(\eta_i^{[m,n]}-\rho)\sigma(2\kappa+\rho-\eta_i^{[m,n]})}=\left\{\begin{aligned}
				&I_i^2,\quad\quad  1\leq i\leq k-1,\\
				&I_i^{-2},\quad k+1\leq i\leq N.
			\end{aligned}
			\right.
		\end{align}
		Consequently, 
		\begin{align}
			\prod_{i=1,i\neq k}^{N}    \left(\frac{d_0^*(\hat{\varepsilon}_i^{[m,n]})}{d_0^*(\varepsilon_i^{[m,n]})}\right)^2\frac{\sigma(\kappa-\eta_i^{[m,n]})\sigma(\kappa+\eta_i^{[m,n]})}{\sigma(\eta_i^{[m,n]}-\rho)\sigma(2\kappa+\rho-\eta_i^{[m,n]})}=\left(\frac{I_{k-}}{I_{k+}}\right)^2.
		\end{align}
		This leads to the simplified asymptotic solution
		\begin{align}\label{Eq-u-Lk--infty}
			u_{N,L_k^-}= \frac{\sqrt{\nu_0}\sigma(\kappa)}{\sigma(\rho)}\frac{\tilde{U}^{(3)}_{N,L_k^{-1}}}{\tilde{U}^{(4)}_{N,L_k^{-1}}}C_{N,L_k^-}e^{-F(\xi,t)},
		\end{align}
		where
		\begin{equation}\label{Eq-u-Lk--infty-2}
			\begin{split}
				\tilde{U}^{(3)}_{N,L_k^{-}}
				&=\prod_{i=1}^{2}\sum_{m,n=1}^2(-1)^{n-1}I_k^{n-1}e^{(m-1)\tau_k+(n-1)\tau_k^*}\tilde{V}_{N,L_k^-}^{[(2i+1),m,n]},\\
				\tilde{U}^{(4)}_{N,L_k^{-}}
				&=\prod_{i=1}^{2}\sum_{m,n=1}^2(-1)^{n-1}I_k^{2-n}e^{(m-1)\tau_k+(n-1)\tau_k^*}\tilde{V}_{N,L_k^-}^{[(2i+2),m,n]},
					\end{split}
			\end{equation}
		and	
				\begin{equation}\label{Eq-u-Lk--infty-3}
				\begin{split}	
				\tilde{V}_{N,L_k^-}^{[(3),m,n]}&=\frac{d_0(\hat{\delta}_k^{[m,n]})d_0^*(\hat{\varepsilon}_k^{[m,n]})}{d_0(\delta_k^{[m,n]})d_0^*(\varepsilon_k^{[m,n]})}r^{-1}(\delta_{k}^{[m,n]})s(\eta_{k}^{[m,n]})\tilde{\Pi}_{N,L_k^-}^{[(1),m,n]}\tilde{\Pi}_{N,L_k^-}^{[(2),m,n]}\\
				&\sigma\big(-\xi-\kappa+\sum_{i=1}^N(\delta_i^{[m,n]}+\eta_i^{[m,n]})\big)e^{(\delta_k^{[m,n]}+\eta_k^{[m,n]})\frac{\zeta(\omega_1)}{\omega_1}\xi},\\  
				\tilde{V}_{N,L_k^-}^{[(4),m,n]}&= \tilde{\Pi}_{N,L_k^-}^{[(1),m,n]}\tilde{\Pi}_{N,L_k^-}^{[(2),m,n]}\sigma\big(-\xi+\kappa+\sum_{i=1}^N(\delta_i^{[m,n]}+\eta_i^{[m,n]})\big)e^{(\delta_k^{[m,n]}+\eta_k^{[m,n]})\frac{\zeta(\omega_1)}{\omega_1}\xi},\\
				\tilde{V}_{N,L_k^-}^{[(5),m,n]}&=\frac{d_0^*(\hat{\varepsilon}_k^{[m,n]})}{d_0^*(\varepsilon_k^{[m,n]})}\frac{\sigma(-\rho-\delta_k^{[m,n]})\sigma(\kappa-\eta_k^{[m,n]})\sigma(\kappa+\eta_k^{[m,n]})}{\sigma(\kappa-\delta_k^{[m,n]})\sigma(\eta_k^{[m,n]}-\rho)\sigma(2\kappa+\rho-\eta_k^{[m,n]})}\tilde{\Pi}_{N,L_k^-}^{[(1),m,n]}\tilde{\Pi}_{N,L_k^-}^{[(2),m,n]}\\
				&\quad \sigma\big(-\xi-\rho+\sum_{i=1}^N(\delta_i^{[m,n]}+\eta_i^{[m,n]})\big)e^{(\delta_k^{[m,n]}+\eta_k^{[m,n]})\frac{\zeta(\omega_1)}{\omega_1}\xi},\\
				\tilde{V}_{N,L_k^-}^{[(6),m,n]}&=\frac{d_0(\hat{\delta}_k^{[m,n]})}{d_0(\delta_k^{[m,n]})}\frac{\sigma(-\rho-\delta_k^{[m,n]})}{\sigma(\kappa-\delta_k^{[m,n]})}\tilde{\Pi}_{N,L_k^-}^{[(1),m,n]}\tilde{\Pi}_{N,L_k^-}^{[(2),m,n]}\sigma\big(-\xi+\kappa+\sum_{i=1}^N(\delta_i^{[m,n]}+\eta_i^{[m,n]})\big)\\
				&\quad e^{(\delta_k^{[m,n]}+\eta_k^{[m,n]})\frac{\zeta(\omega_1)}{\omega_1}\xi}.\\
			\end{split}
		\end{equation}
		Based on \eqref{Eq-tildeE}-\eqref{Eq-tau}, \eqref{Eq-N-breather-asymptotic-3}, \eqref{Eq-U12-tilde}, \eqref{Eq-u-Lk--infty}-\eqref{Eq-u-Lk--infty-3}, we arrive at the conclusion of Theorem \ref{thm:AB1} with a further direct simplification.
	\end{proof}
	Building on Theorem \ref{thm:asym-BT-infty}, the derivative-free form of the asymptotic formula \eqref{Eq-asymptotic-Rkp} can be directly derived.

	\section{Dynamic behaviors of the one and two-elliptic localized solutions}
	In this section, we investigate the dynamics of the one and two-elliptic localized solutions to gain deeper insight into the constructed solutions and their analytical properties. The dynamics are systematically explored, with analytical findings illustrated graphically.

	\subsection{The dynamics of the one-elliptic localized solutions}
	In this subsection, we display one-elliptic localized solutions given as \eqref{Eq-N-elliptic-localized-solution}-\eqref{Eq-N-elliptic-localized-solution-3} with
	$	\kappa = 1.57\mathrm{i},  \rho = 4.61 - 1.57\mathrm{i}, \omega_1 = 4.61, \omega_3 = -3.14\mathrm{i}$. Setting $z_1 =\dfrac{\omega _1- \omega_3}{2}$ and $\alpha_1 = 1$ in \eqref{Eq-parameterization-lambda-y} yields the spectral parameter value $
	\lambda_1 = \lambda(z_1) = -0.17 - 0.17\mathrm{i}$. Substituting these into \eqref{Eq-N-elliptic-localized-solution}-\eqref{Eq-N-elliptic-localized-solution-3} produces a one-elliptic stationary localized solution, shown in the left panel of Figure~\ref{Fig-first-order-solution-1}. As verified by Theorem~\ref{thm:symmetry}, this solution exhibits symmetry with respect to the origin. Selecting $\alpha_1 = 0.05$ while keeping other parameters unchanged yields another stationary solution lacking this symmetry, displayed in the right panel of Figure~\ref{Fig-first-order-solution-1}. Using \eqref{Eq-tau}, the fundamental period $T$ along the $t$-axis for the stationary one-elliptic localized solutions is evaluated as
	\begin{align}
		T = \frac{\pi}{8|\Im\big(y(z_1)\big)|}.
	\end{align}
	For both solutions in Figure~\ref{Fig-first-order-solution-1}, we evaluate $T \approx 30.2$, consistent with the illustrations. Numerical results shows that the maximum modulus $|u_1|$ for the solution with $\alpha_1=1$ (left panel) is $2.02$, attained at $(\xi,t)=(0,\frac{(2n+1)T}{2}),n\in \mathbb{Z}$. The minimum value is $0.02$, occurring at $(\xi,t)=(0,nT),n\in \mathbb{Z}$.
	Using \eqref{Eq-vi}, we deduce that the speeds $v_1$ of both solutions are $0$, analytically confirming their stationary nature. Substituting the parameters to $\kappa=1.9\mathrm{i},\rho={4.47-2.2\mathrm{i}},\omega_1=4.47,\omega_3=-3.51\mathrm{i}$ with $z_1$ invariant, we obtain two non-stationary solutions shown in Figure~\ref{Fig-first-order-solution-2}. The solution in the left panel, with $\alpha_1=1$, exhibits symmetry with respect to the origin. However, the solution in the right panel, where $\alpha_1=0.05$, lacks this symmetry. All solutions presented in Figures~\ref{Fig-first-order-solution-1} and~\ref{Fig-first-order-solution-2} represent one breather propagating on the elliptic background.
	
	\subsection{The dynamics of the two-elliptic localized solutions}
	The left panel of Figure \ref{Fig-second-order-solution-1} illustrates a two-elliptic localized solution given as \eqref{Eq-N-elliptic-localized-solution}-\eqref{Eq-N-elliptic-localized-solution-3} with parameters $\kappa = 1.57\mathrm{i},\rho = 4.6 - 1.57\mathrm{i},\omega_1 = 4.6, \omega_3 = -3.14\mathrm{i}$, $ z_1 =\dfrac{\omega_1}{2} - \dfrac{\omega_3}{2},z_2 = -\dfrac{\omega_1}{9} + \dfrac{\omega_3}{8}$ and $\alpha_1 = \alpha_2 = 1$. This solution features two interacting breathers propagating on the elliptic background, with one static and the other propagating at velocity $v_1=-1.72$, computed via \eqref{Eq-vi}. The choice $\alpha_1 = \alpha_2 = 1$ imposes the symmetry with respect to the origin as established in Theorem \ref{thm:symmetry}, constraining the characteristic trajectory lines to intersect precisely at the origin. With identical parameters but substituting $z_1 = \dfrac{\omega_1}{2}-\dfrac{2\omega_3}{3}, z_2 = -\dfrac{\omega_1}{9}+\dfrac{\omega_3}{6}$, we generate another two-elliptic solution featuring two nonstationary breathers, illustrated in the right panel of Figure \ref{Fig-second-order-solution-1}.
	
	\subsection{The asymptotic behaviors of the one and two-elliptic localized solutions}
	Using \eqref{Eq-asymptotic-Rk}-\eqref{Eq-asymptotic-Rkp}, we display the asymptotic solution of an one-elliptic localized solution given by \eqref{Eq-N-elliptic-localized-solution}-\eqref{Eq-N-elliptic-localized-solution-3} when $\kappa=2.08\mathrm{i},\rho=3.25-1.73\mathrm{i},\omega_1=3.25,\omega_3=-3.31\mathrm{i}$ and $z_1=1+\mathrm{i}$ in Figure \ref{Fig-Asym-first-order}. The moduli $|u_1|$ and $|u_{1,R_1^\pm}|$ are illustrated with orange solid curves and green dash dot curves in both panels of Figure \ref{Fig-Asym-first-order} respectively when $t=0$. 
	
	We visualize the asymptotic behaviors of a two-elliptic localized solution given by \eqref{Eq-N-elliptic-localized-solution}-\eqref{Eq-N-elliptic-localized-solution-3} with parameters $\kappa=2.08\mathrm{i},\rho={3.25-1.73\mathrm{i}},\omega_1=3.25,\omega_3=-3.31\mathrm{i}$, $z_1=-0.46-2.06\mathrm{i},z_2=-0.65+4.41\mathrm{i}$ and $\alpha_1=\alpha_2=1$. Using \eqref{Eq-vi}, the velocities of the two elliptic localized waves are evaluated as $v_1=-0.14$ and $v_2=0.58$, respectively. Consequently, this solution describes two elliptic localized waves propagating in opposite directions. The asymptotic solutions between the propagation directions are given by \eqref{Eq-asymptotic-Rk}-\eqref{Eq-asymptotic-Rkp}. Figure \ref{Fig-Asym-second-order-positive} displays the moduli of the asymptotic solutions together with $|u_2|$ in regions $R_1^-$, $R_2^+$, and $R_1^+$ at $t=27$. The orange solid curve represents $|u_2|$, while the green dash dot curves depict $|u_{2,R_1^{-}}|$, $|u_{2,R_2^{+}}|$, and $|u_{2,R_1^{+}}|$, respectively. This demonstrates the solution's asymptotic properties in regions $R_1^-$, $R_2^+$, and $R_1^+$ as $t\rightarrow +\infty$. Additionally, Figure \ref{Fig-Asym-second-order-line-lines} illustrate the asymptotic behaviors along lines $L_1^\pm$ and $L_2^\pm$ based on 	\eqref{Eq-asym-Lkm-infty}, respectively.

	\begin{figure}[ht]
		\centering
		\includegraphics[width=0.48
		\linewidth]{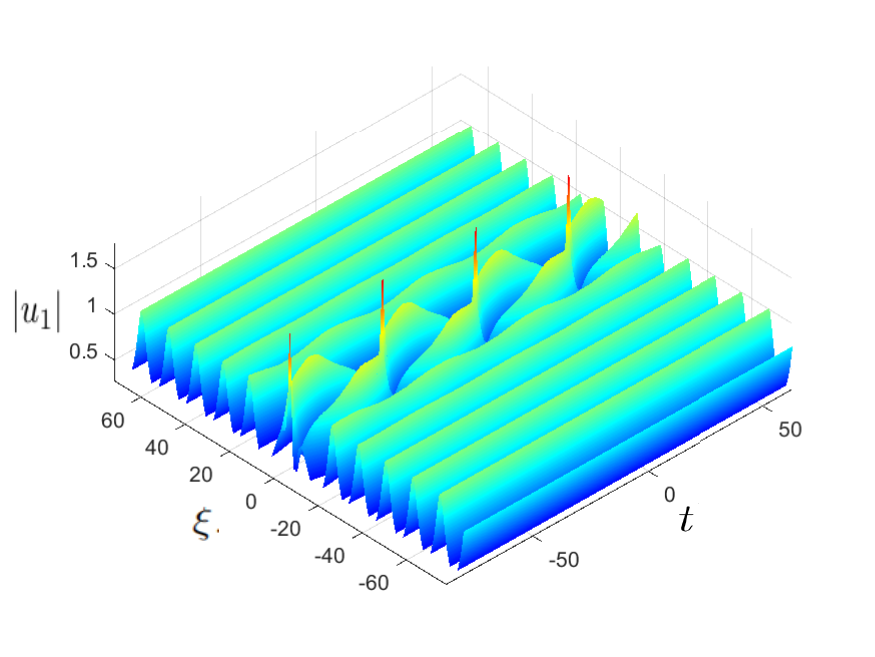}
		\includegraphics[width=0.48
		\linewidth]{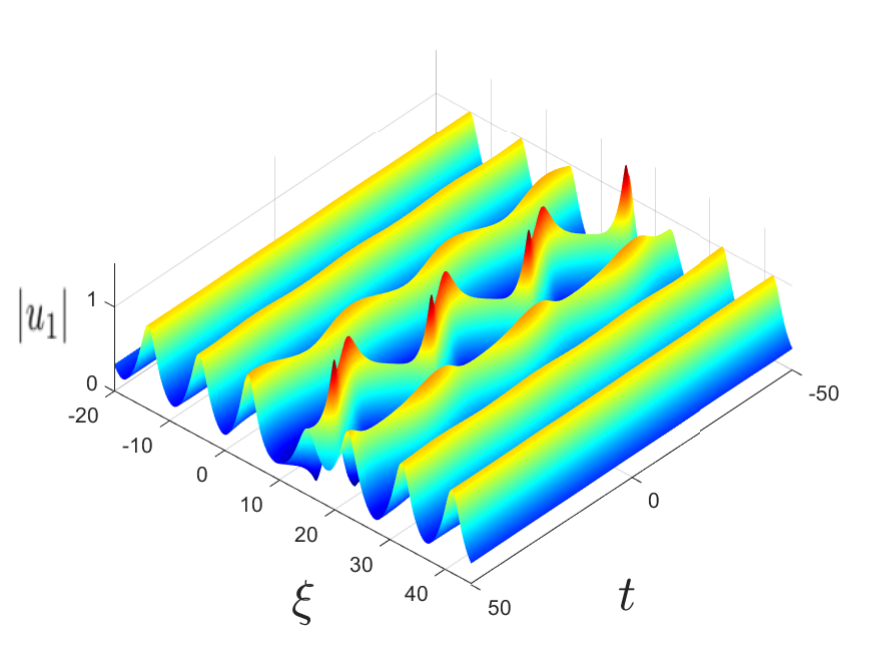}
		\caption{The modulus of the one-elliptic localized solution $u_1$ with parameters $\kappa=1.57\mathrm{i},\rho={4.61-1.57\mathrm{i}},\omega_1=4.61,\omega_3=-3.14\mathrm{i}$, $z_1=\frac{\omega_1}{2}-\frac{\omega_3}{2}$. The left panel: $\alpha_1=1$. The right panel: $\alpha_1=0.05$. }
		\label{Fig-first-order-solution-1}
	\end{figure}
	
	\begin{figure}[ht]
		\centering
		\includegraphics[width=0.48
		\linewidth]{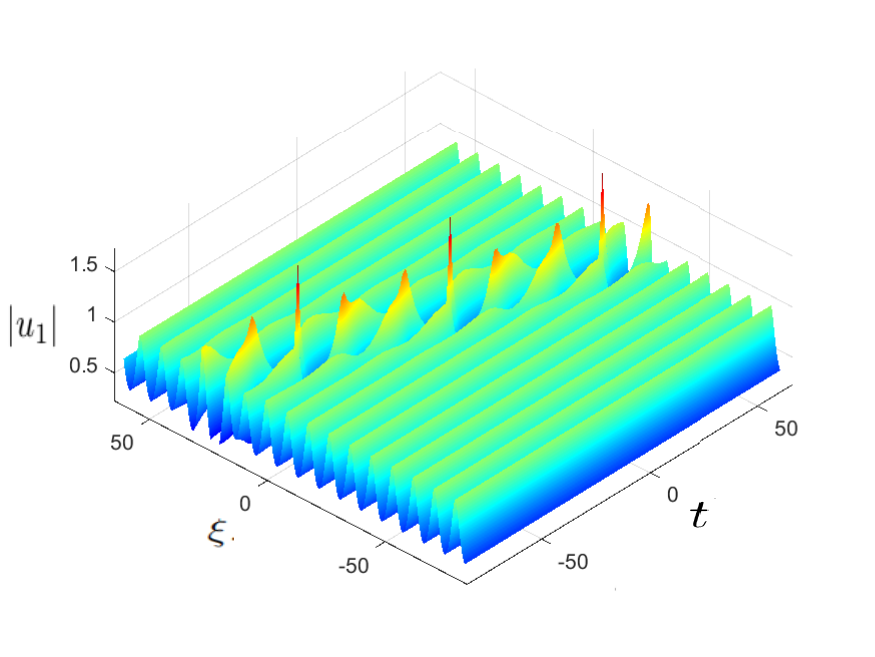}
		\includegraphics[width=0.48
		\linewidth]{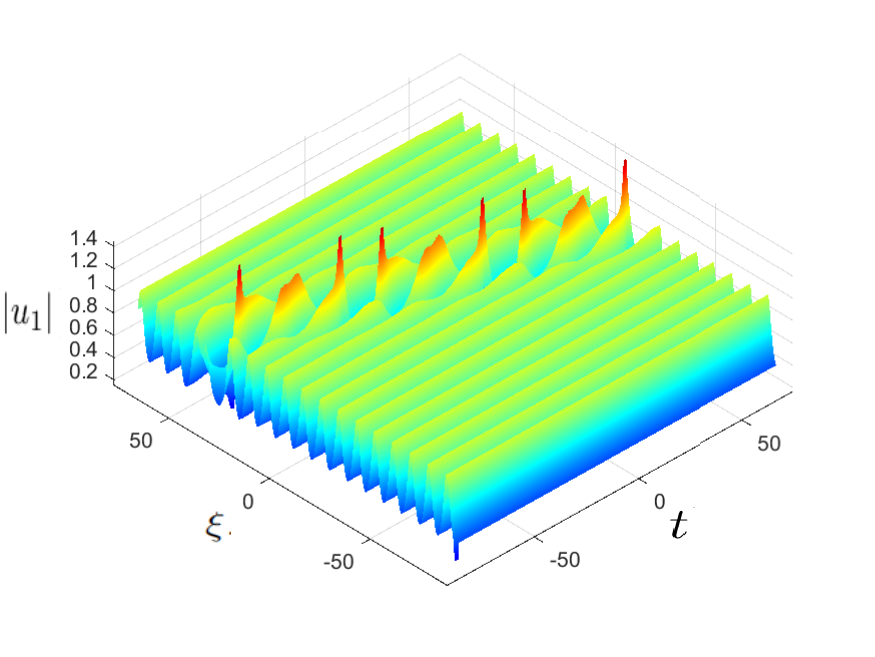}
		\caption{The modulus of the one-elliptic localized solution $u_1$ with parameters $\kappa=1.9\mathrm{i},\rho={4.47-2.2\mathrm{i}},\omega_1=4.47,\omega_3=-3.51\mathrm{i}$, $z_1=\frac{\omega_1}{2}-\frac{\omega_3}{2}$. The left panel: $\alpha_1=1$. The right panel: $\alpha_1=0.05$.}
		\label{Fig-first-order-solution-2}
	\end{figure}

	\begin{figure}[ht]
		\centering
		\includegraphics[width=0.48
		\linewidth]{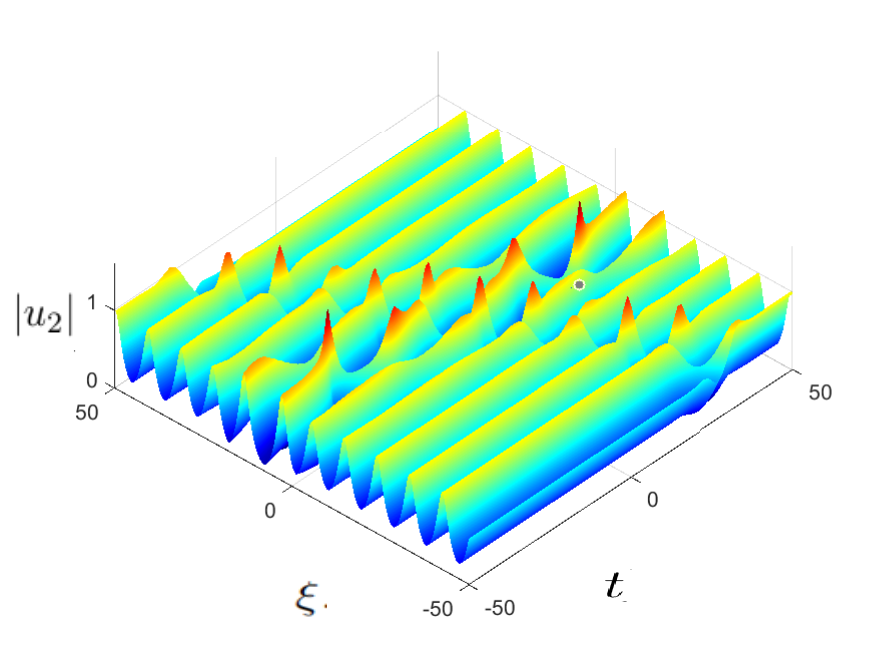}
		\includegraphics[width=0.48\linewidth]{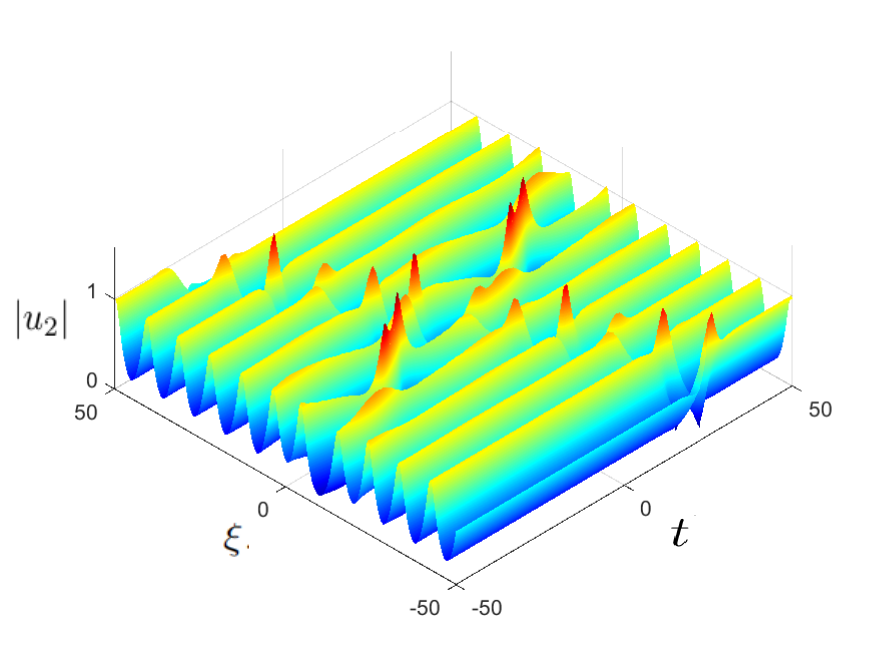}
		\caption{The modulus of the two-elliptic localized solutions $u_2$ with parameters $\kappa=1.57\mathrm{i},\rho={4.6-1.57\mathrm{i}},\omega_1=4.6,\omega_3=-3.14\mathrm{i}$ and $\alpha_1=\alpha_2=1$. The left panel: $z_1=\frac{\omega_1}{2}-\frac{\omega_3}{2},z_2=-\frac{\omega_1}{9}+\frac{\omega_3}{8}$. The right panel: $z_1=\frac{\omega_1}{2}-\frac{2\omega_3}{3},z_2=-\frac{\omega_1}{9}+\frac{\omega_3}{6}$. }
		\label{Fig-second-order-solution-1}
	\end{figure}

	\begin{figure}[ht]
		\centering
		\includegraphics[width=0.44
		\linewidth]{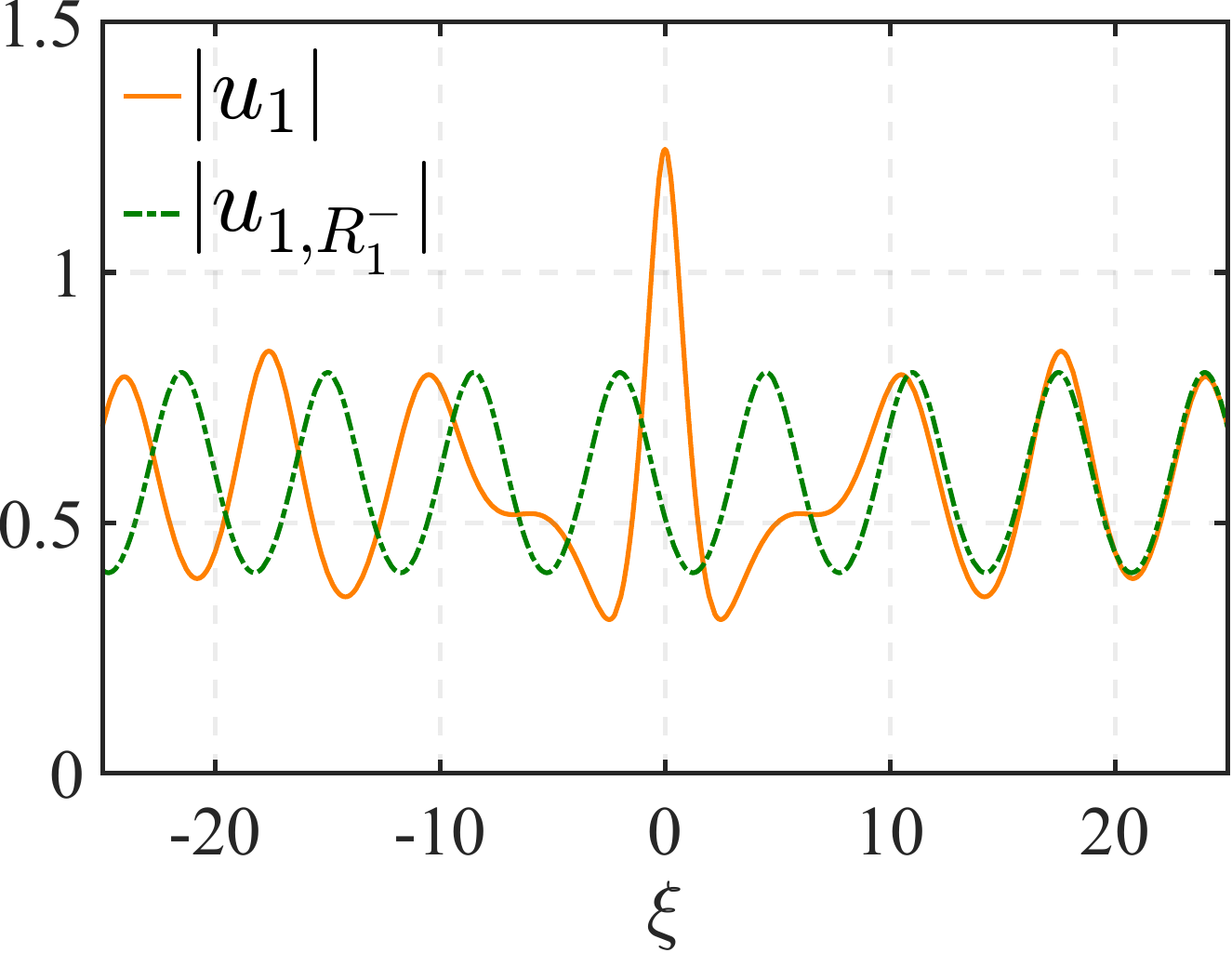}
		\includegraphics[width=0.44
		\linewidth]{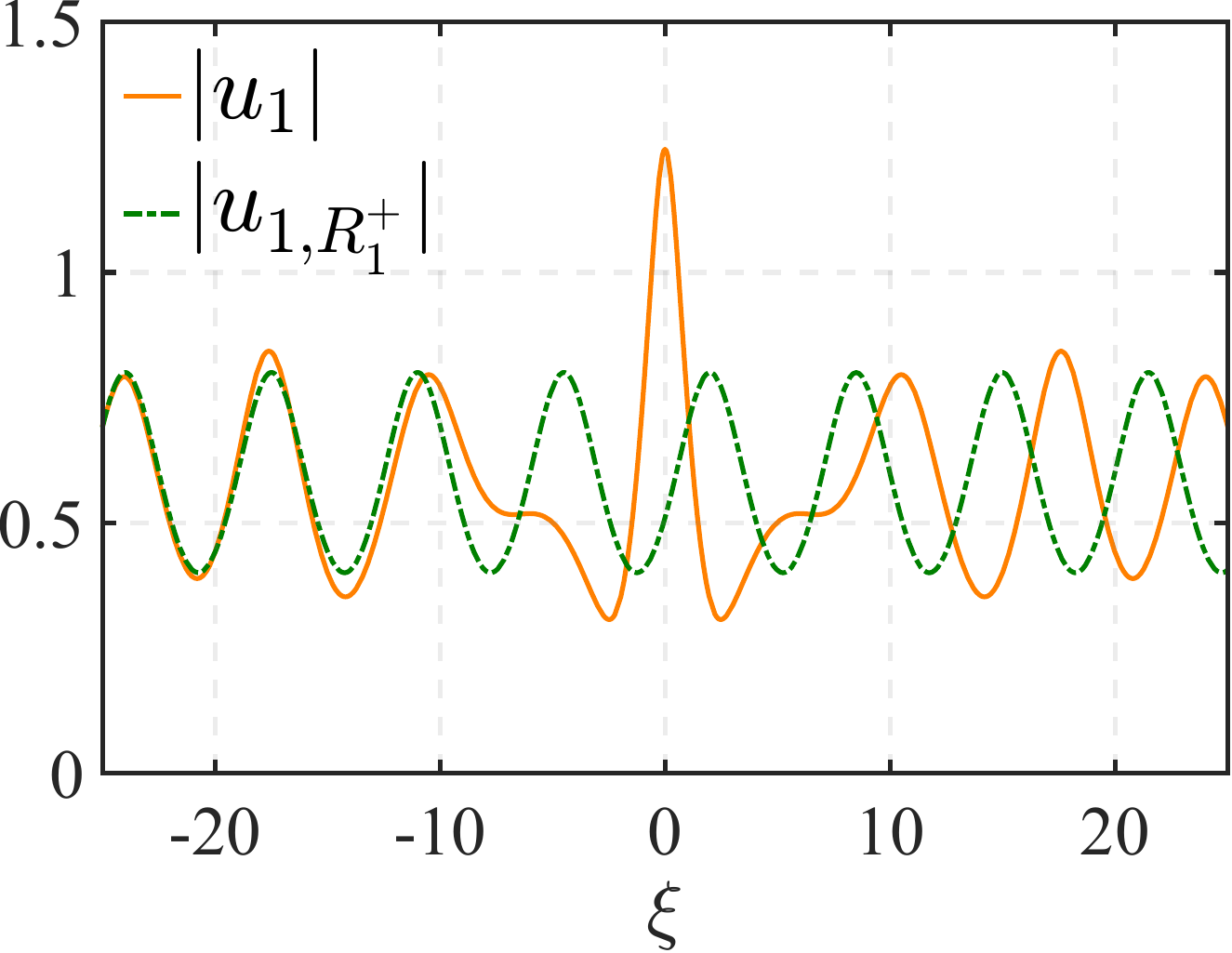}
		\caption{The asymptotic behaviors for the one-elliptic localized solution with parameters $\kappa=2.08\mathrm{i},\rho={3.25-1.73\mathrm{i}},\omega_1=3.25,\omega_3=-3.31\mathrm{i},z_1=1+\mathrm{i}$ and $\alpha_1=1$ when $t=0$. The left panel:  The orange solid curve and the green dash curve represent $|u_1|$ and $|u_{1,R_{1}^-}|$ respectively. The right panel: The orange solid curve and the green dash curve represent $|u_1|$ and $|u_{1,R_{1}^+}|$ respectively. }
		\label{Fig-Asym-first-order}
	\end{figure}

	\begin{figure}[ht]
		\centering
		(A)\includegraphics[width=0.35
		\linewidth]{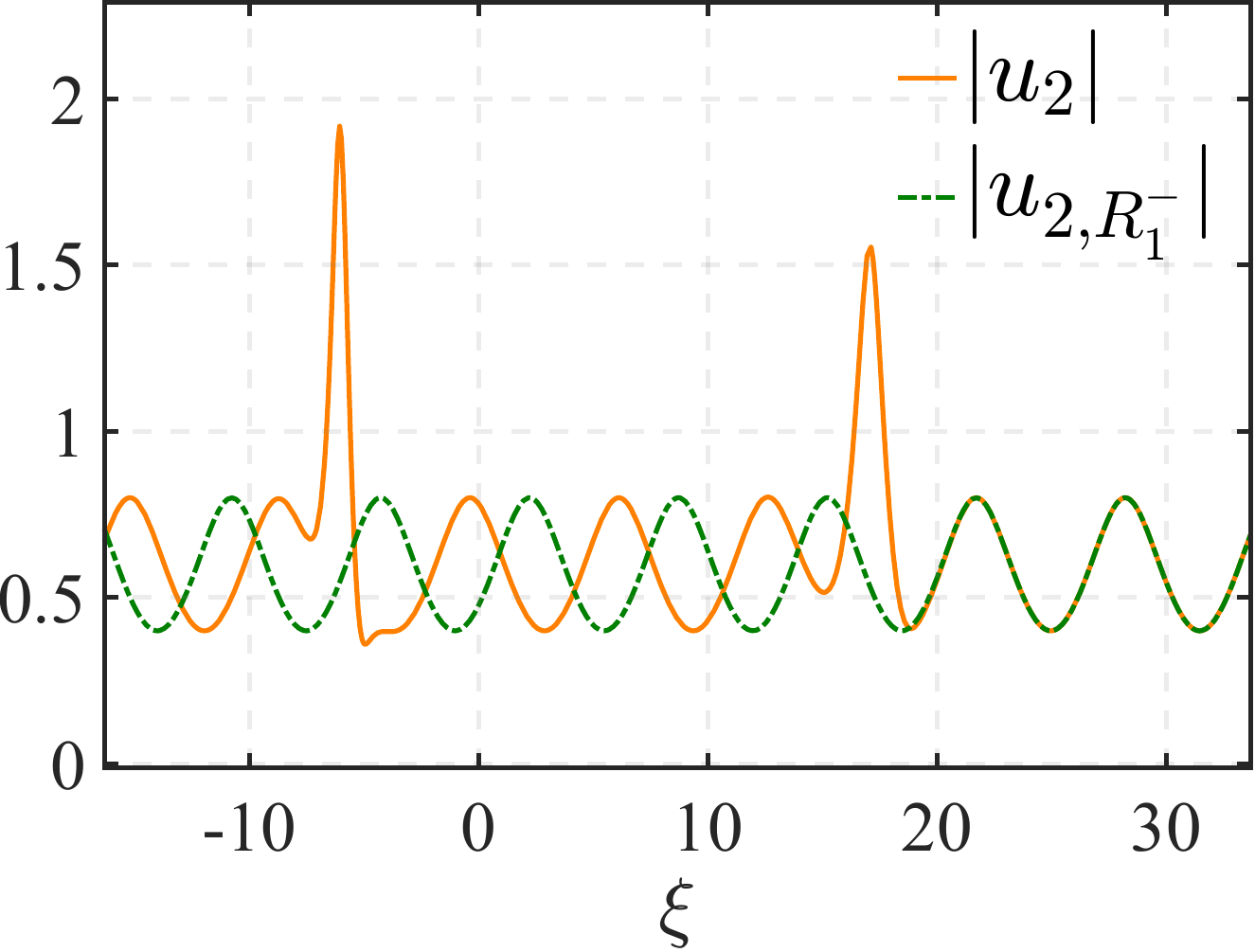} 
		(B)\includegraphics[width=0.35
		\linewidth]{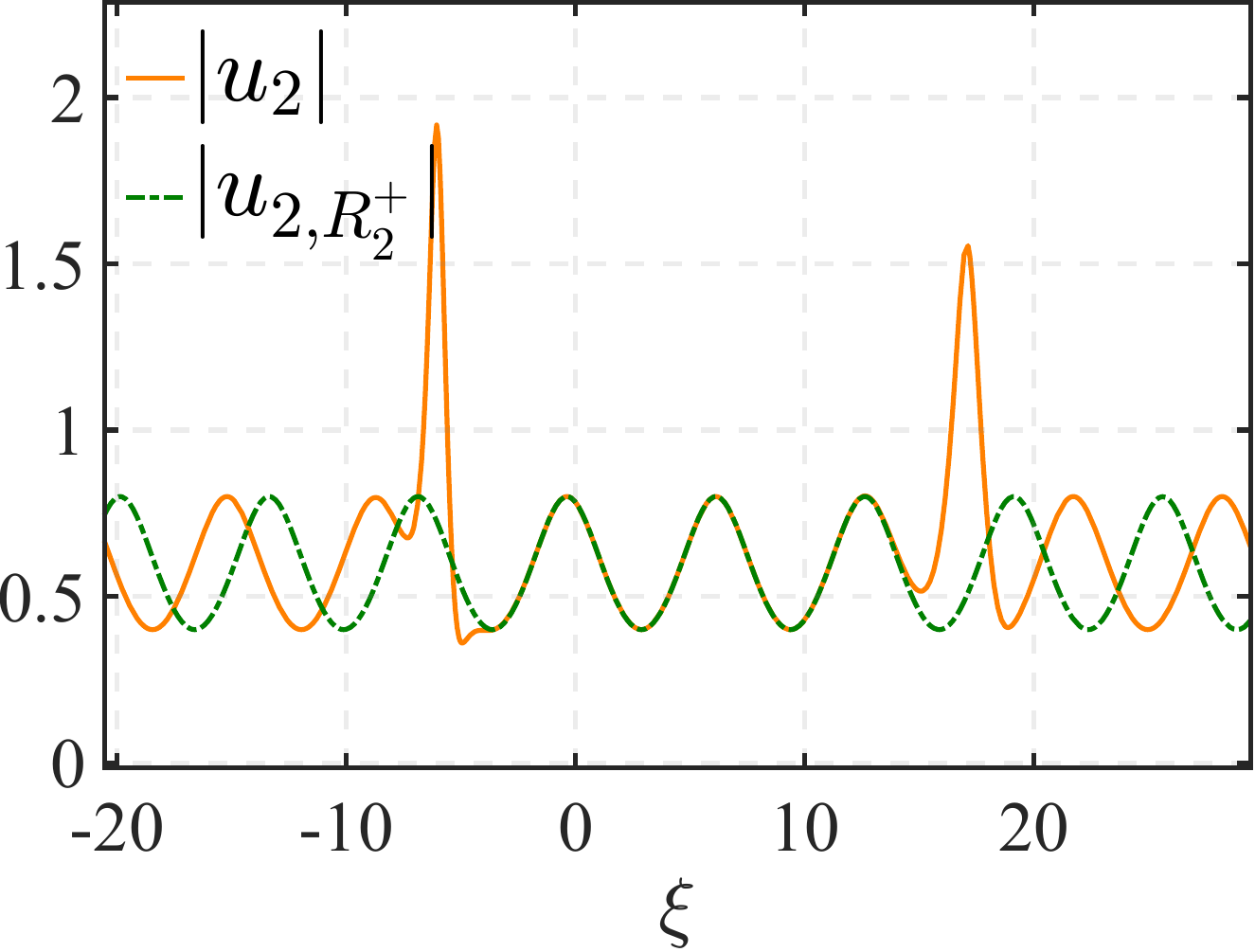}
		(C)\includegraphics[width=0.35
		\linewidth]{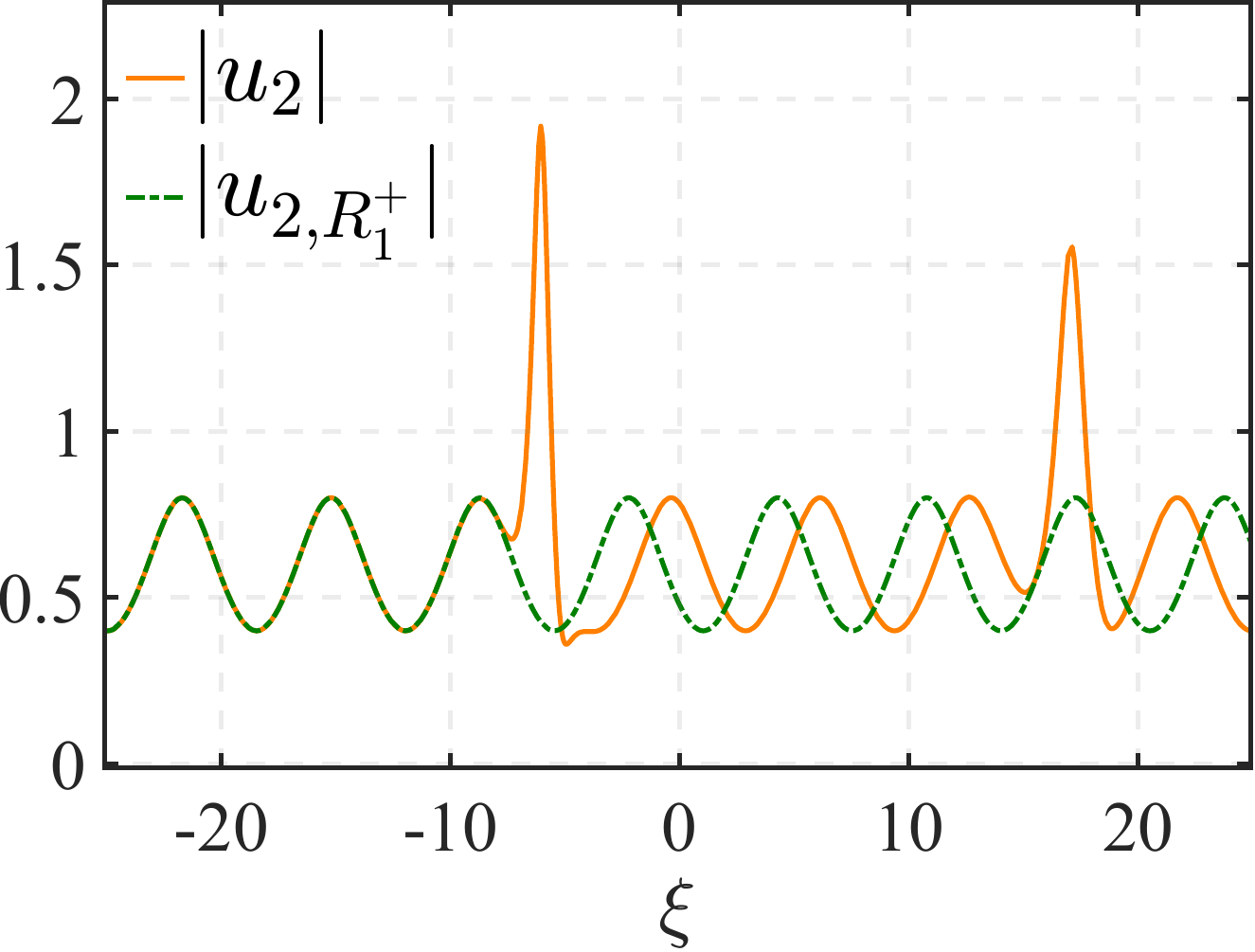}
		\caption{The asymptotic behaviors of the two-elliptic localized solution $|u_2|$ with parameters $\kappa=2.08\mathrm{i},\rho={3.25-1.73\mathrm{i}},\omega_1=3.25,\omega_3=-3.31\mathrm{i}$, $z_1=-0.46-2.06\mathrm{i},z_2=-0.65+4.41\mathrm{i}$ and $\alpha_1=\alpha_2=1$ in the regions $R_1^-,R_2^+$ and $R_1^+$ when $t=27$. (A)  The orange solid curve and the green dash dot curve represent $|u_2|$ and $|u_{2,R_1^-}|$ respectively. (B) The orange solid curve and the green dash dot curve represent $|u_2|$ and $|u_{2,R_2^+}|$  respectively. (C) The orange solid curve and the green dash dot curve represent $|u_2|$ and $|u_{2,R_1^+}|$  respectively.}
		\label{Fig-Asym-second-order-positive}
	\end{figure}
	
	\begin{figure}[ht]
		\centering
		(A)\includegraphics[width=0.35
		\linewidth]{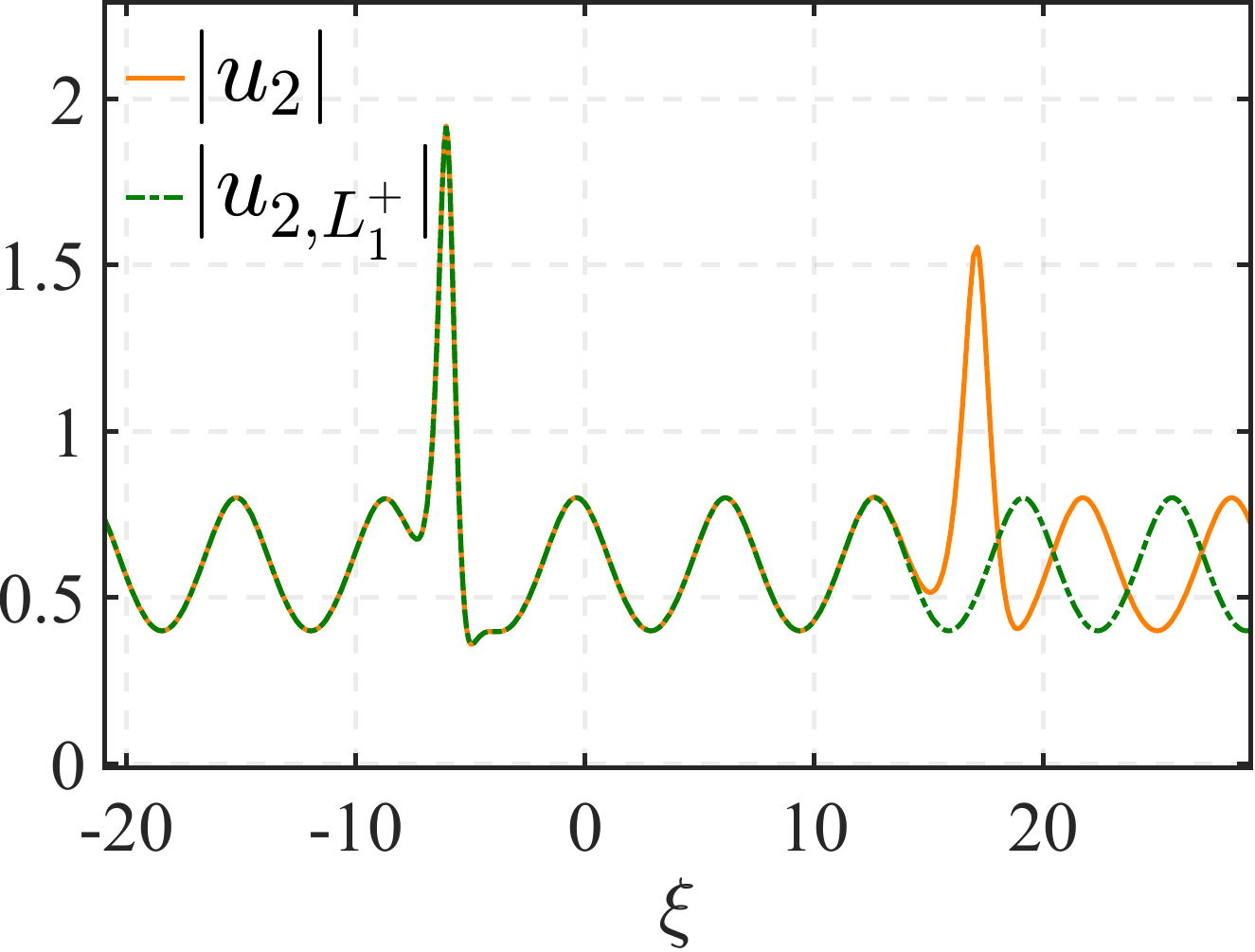}
		(B)\includegraphics[width=0.35
		\linewidth]{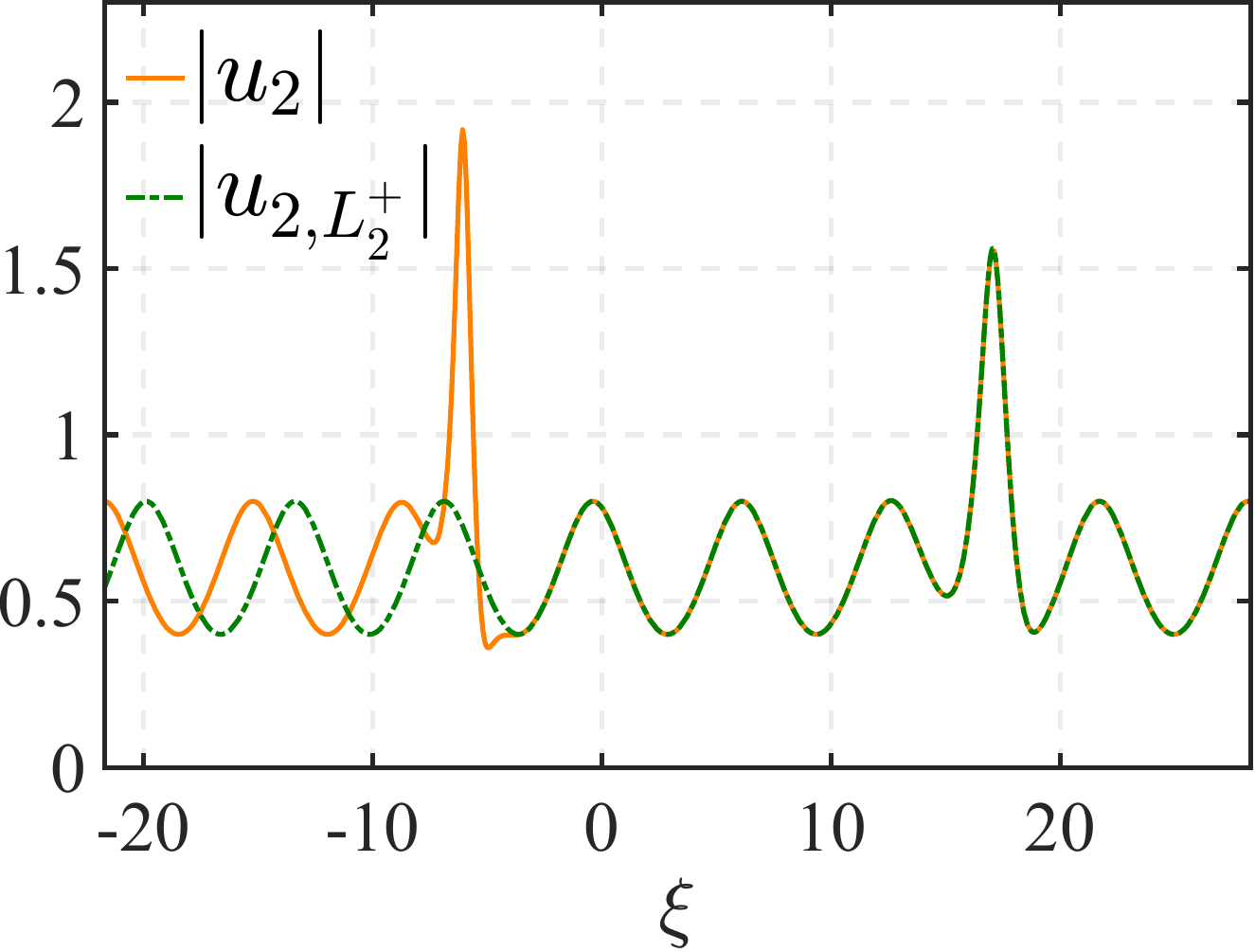}\\
		(C)\includegraphics[width=0.35
		\linewidth]{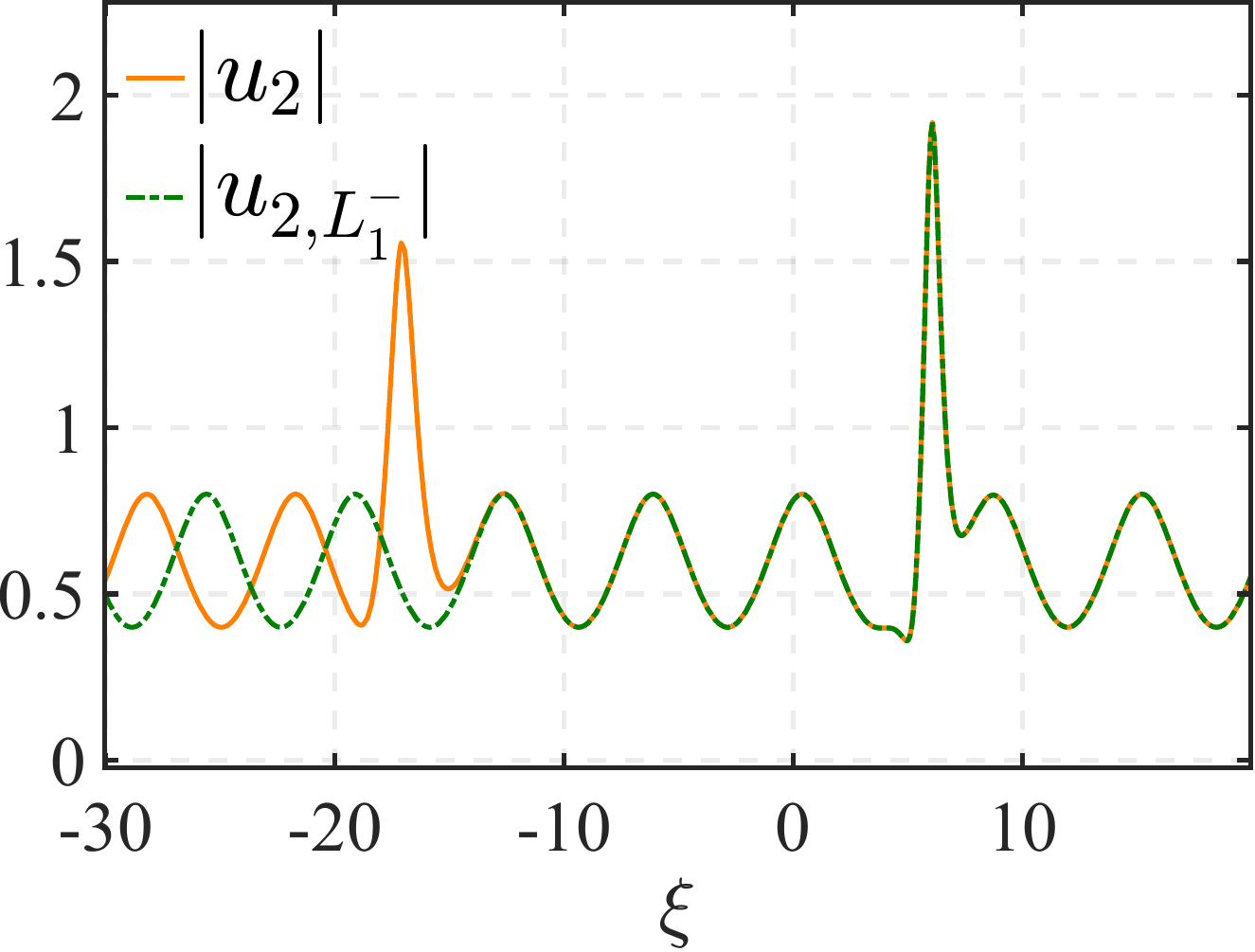}
		(D)\includegraphics[width=0.35
		\linewidth]{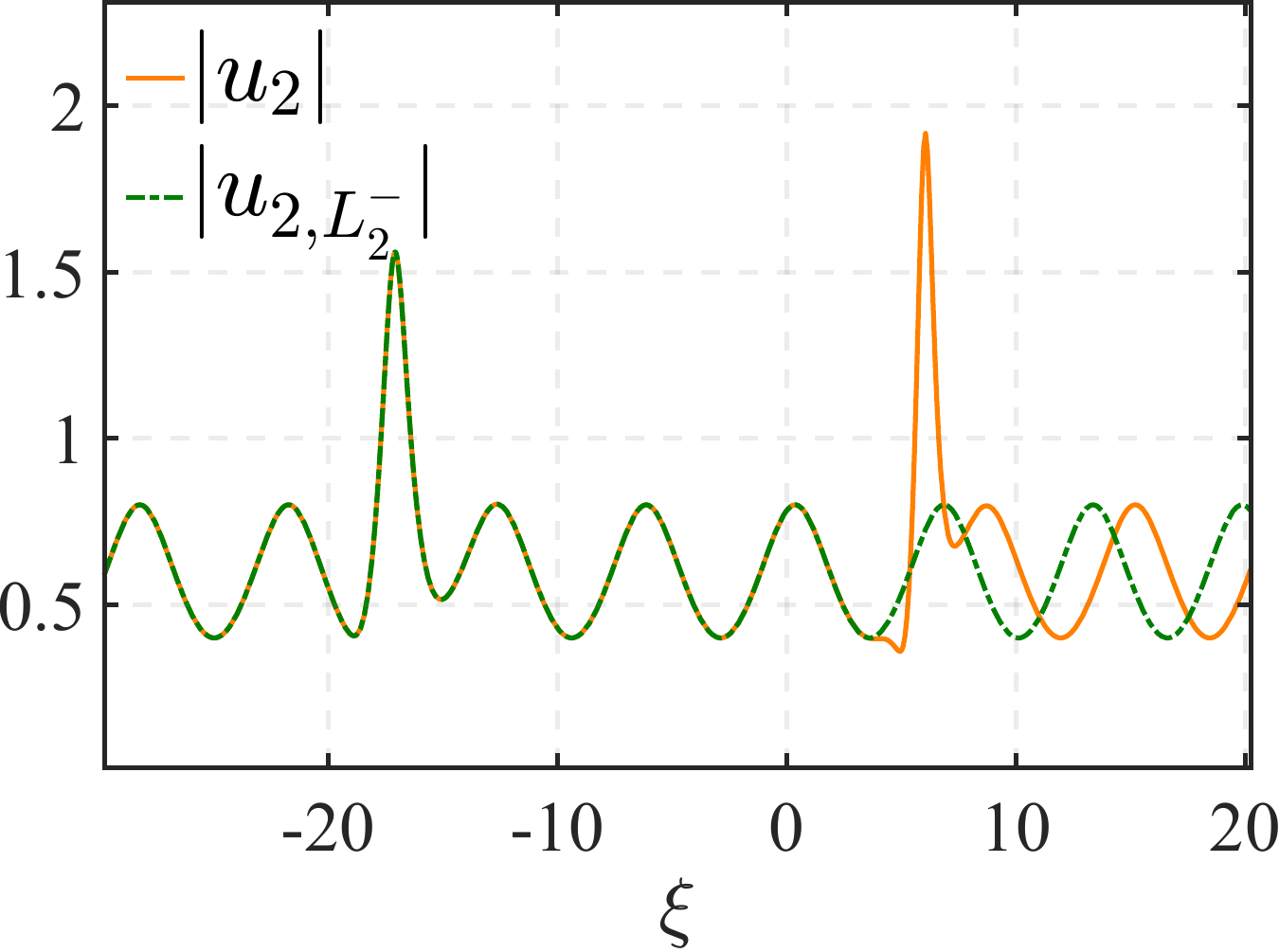}	
		\caption{The asymptotic behaviors of the two-elliptic localized solution $|u_2|$ with parameters $\kappa=2.08\mathrm{i},\rho={3.25-1.73\mathrm{i}},\omega_1=3.25,\omega_3=-3.31\mathrm{i}$, $z_1=-0.46-2.06\mathrm{i},z_2=-0.65+4.41\mathrm{i}$ and $\alpha_1=\alpha_2=1$ along the lines $L_{1,2}^{\pm}$ when $t=\pm 27$ respectively. (A) The orange solid curve and the green dash dot curve represent $|u_2|$ and $|u_{2,L_1^+}|$ when $t=27$ respectively. (B) The orange solid curve and the green dash dot curve represent $|u_2|$ and $|u_{2,L_2^+}|$ when $t=27$ respectively. (C)  The orange solid curve and the green dash dot curve represent $|u_2|$ and $|u_{2,L_1^-}|$ when $t=-27$ respectively. (D) The orange solid curve and the green dash dot curve represent $|u_2|$ and $|u_{2,L_2^-}|$ when $t=-27$ respectively.}
		\label{Fig-Asym-second-order-line-lines}
	\end{figure}
	
	\section{Conclusions}
	In this work, we investigate $N$-elliptic localized solutions to the DNLS equation. We streamline the process of the effective integration method and obtain a uniform parameterization for the squared moduli. Based on this representation, elliptic function solutions to the DNLS equation are generated using Weierstrass functions. By introducing a uniform parameter $z$ as a substitution for the spectral parameter $\lambda$, we construct the fundamental solution matrix to the associated Lax pair. Furthermore, by employing the Darboux-B\"acklund transformation and the addition formula for Weierstrass sigma functions, we obtain $N$-elliptic localized solutions to the DNLS equation in two equivalent forms. Based on these forms, equivalent asymptotic analyses are derived both along and between the propagation directions of the elliptic localized waves.
	
	This work extends the framework established in \cite{Ling-NLS} to the Kaup-Newell hierarchy, obtaining the compact form of $N$-elliptic localized solutions for Kaup-Newell equations for the first time. In the solution construction process, we have introduced some novel computational techniques. The methodology presented here shows potential for extension to other Kaup-Newell equations, including the Fokas-Lenells equation, Chen-Lee-Liu equation, the Gerdjikov-Ivanov equation and so on.
	
	Within Hirota's bilinear framework, the essence of $N$-soliton solutions is understood as determinant identities. Previous works \cite{LS-NLS,Ling-NLS,LS-mKdV-solution,LS-sG} demonstrate the effectiveness of addition formulas for the elliptic functions in constructing $N$-elliptic localized solutions for soliton equations. The compatibility between these addition formulas and the structures of AKNS and Kaup-Newell equations suggests a parallel assertion: the essence of $N$-elliptic localized solutions for integrable equations lies in addition formulas. However, confirming this requires substantial work. First, current results only cover specific equations within AKNS and Kaup-Newell systems. Whether this approach applies to equations in other hierarchies requires further exploration. Second, we aim to establish deeper connections between the additive group structure of elliptic curves and integrable structures, potentially revealing more profound algebraic relationships \cite{Wright2015EffectiveIO,ZHAO2020132213}. Building on this foundation, we seek to simplify $N$-elliptic localized solution construction using algebraic geometry, ultimately developing methods extendable to higher-genus cases. These investigations are currently underway.

	\section*{Acknowledgements}
	Liming Ling is supported by the National Natural Science Foundation of China (No. 12471236), the Guangzhou Municipal Science and Technology Project (Guangzhou Science and Technology Plan, No. 2024A04J6245) and Guangdong Natural Science Foundation grant (No. 2025A1515011868).
	
	\section*{Conflict of interest}
	The authors declare that they have no conflict of interest with other people or organizations that may inappropriately influence the author's work.
	
	\section*{Data Availability}
	
	Data sharing is not applicable to this article as no new data were
	created or analyzed in this study.

	\appendix
	
	\titleformat{\section}[display]
	{\centering\LARGE\bfseries}{ }{11pt}{\LARGE}
	
	\titleformat{\subsection}[display]
	{\large\bfseries}{ }{10pt}{\large}
	
	\renewcommand{\appendixname}{Appendix \, \Alph{section}}

	\section{\appendixname. Weierstrass Functions}
	
	\setcounter{equation}{0}
	\setcounter{prop}{0}
	\setcounter{lemma}{0}
	
	\renewcommand\theequation{\Alph{section}.\arabic{equation}}
	\renewcommand\theprop{\Alph{section}.\arabic{prop}}
	\renewcommand\thelemma{\Alph{section}.\arabic{lemma}}
	
	Let \(\Omega = \{2m\omega_1 + 2n\omega_3 \mid m,n \in \mathbb{Z}\}\) denote a lattice in the complex plane with $\omega_1>0$ and $\Im(\omega_3)<0$. The fundamental Weierstrass functions are defined as follows:
	
	\subsection{ Weierstrass elliptic functions}
	
	The Weierstrass elliptic function is defined as
	\begin{equation}\label{Eq-wp}
		\wp(\theta;\Omega) = \frac{1}{\theta^2} + \sum_{\substack{\omega \in \Omega \\ \omega \neq 0}} \left( \frac{1}{(\theta-\omega)^2} - \frac{1}{\omega^2} \right).
	\end{equation}
	The function $\wp(\theta)$ is doubly periodic on the complex plane with fundamental periods $2\omega_1$ and $2\omega_3$. Denote $\omega_2=-\omega_1-\omega_3$, then $\wp(\theta)$
	is the solution to the following equation 
	\begin{equation}
		\big(\wp'(\theta)\big)^2=4\big(\wp(\theta)-e_1\big)\big(\wp(\theta)-e_2\big)\big(\wp(\theta)-e_3\big),
	\end{equation}
	where $\wp(\omega_i)=e_i,i=1,2,3$ and the prime denotes the derivative with respect to $\theta$. It is satisfied that
	\begin{equation}\label{Eq-special-values}
		\wp(\theta \pm \omega_i) = e_i + \dfrac{\prod_{l=1,l\neq i}^3(e_i-e_l)}{\wp(\theta) - e_i}.
	\end{equation}
	The Weierstrass elliptic function $\wp(\theta)$ is associated with the Jacobi snoidal function via 
	\begin{equation}\label{Eq-relation-wp-sn}
		\wp(\theta)=e_3+\frac{e_1-e_3}{\ssn^2(\alpha_0\theta,k_0)},
	\end{equation}
	where $\alpha_0=\sqrt{e_1-e_3},k_0=\sqrt{\frac{e_2-e_3}{e_1-e_3}}$.
	\subsection{ Weierstrass zeta functions}
	
	The Weierstrass zeta function is defined by
	\begin{equation}\label{Eq-zeta}
		\zeta(\theta;\Omega) = \frac{1}{\theta} + \int_0^\theta \left( \frac{1}{\chi^2} - \wp(\chi) \right) d\chi.
	\end{equation}
	This function satisfies
	\begin{align}\label{Quasi-periodic-zeta}
		\zeta(\theta+2\omega_i) = \zeta(\theta) + 2\zeta(\omega_i), \quad i=1,3.
	\end{align}
	Furthermore, it is an odd function with the derivative formula
	\begin{equation}\label{Eq-derivative-zeta}
		\zeta'(\theta) = -\wp(\theta).
	\end{equation}
	\subsection{Weierstrass sigma functions}
	
	The Weierstrass sigma function is defined by the canonical product
	\begin{equation}\label{Eq-sigma}
		\sigma(\theta;\Omega) := \theta \prod_{\substack{\omega \in \Omega\\ \omega \neq 0}} \left(1-\frac{\theta}{\omega}\right)\exp\left(\frac{\theta}{\omega} + \frac{\theta^2}{2\omega^2}\right),
	\end{equation}
	which is an entire function in the complex plane with simple zeros at lattice points $\omega\in\Omega$. This function exhibits quasi-periodicity characterized by
	\begin{align}\label{Eq-quasi-periodic-sigma}
		\sigma(\theta+2\omega_i) = -\sigma(\theta)e^{2\zeta(\omega_i)(\theta+\omega_i)}, \quad i=1,3.
	\end{align}
	As an odd function, its logarithmic derivative satisfies
	\begin{equation}\label{Eq-derivative-sigma}
		\frac{\sigma'(\theta)}{\sigma(\theta)} = \zeta(\theta).
	\end{equation}
		\subsection{Integration formulas}
	
	The following integrals are essential for constructing $N$-elliptic localized solutions:
	\begin{align}\label{Eq-integration-formulas}
		\begin{split}
			& \int \frac{\wp^{\prime}(\theta) d \theta}{\wp(\theta)-\wp(\theta_1)} = \ln\big(\wp(\theta)-\wp(\theta_1)\big),\\
			& \int \frac{d \theta}{\wp(\theta)-\wp(\theta_1)} = \frac{1}{\wp^{\prime}(\theta_1)}\left(\ln \frac{\sigma(\theta-\theta_1)}{\sigma(\theta+\theta_1)} + 2 \theta \zeta(\theta_1)\right).
		\end{split}
	\end{align}
	\subsection{Fundamental formulas}
	
	The following formulas for Weierstrass functions are crucial for deriving the $N$-elliptic localized solution:
	\begin{equation}\label{Eq-formulas of Weierstrass functions}
		\begin{split}
			&\wp(\theta_1)-\wp(\theta_2) = -\frac{\sigma(\theta_1+\theta_2)\sigma(\theta_1-\theta_2)}{\sigma^2(\theta_1)\sigma^2(\theta_2)}, \\ 
			& \zeta(\theta_1+\theta_2)-\zeta(\theta_2)-\zeta(\theta_1) = \frac{1}{2}\frac{\wp'(\theta_2)-\wp'(\theta_1)}{\wp(\theta_2)-\wp(\theta_1)}, \\
			& \zeta(\theta_1)+\zeta(\theta_2)+\zeta(\theta_3)-\zeta(\theta_1+\theta_2+\theta_3) = \frac{\sigma(\theta_1+\theta_2)\sigma(\theta_1+\theta_3)\sigma(\theta_2+\theta_3)}{\sigma(\theta_1)\sigma(\theta_2)\sigma(\theta_3)\sigma(\theta_1+\theta_2+\theta_3)}.
		\end{split}
	\end{equation}
	The first formula is actually addition formula for the Weierstrass $\wp$- functions. Taking the limit $\theta_1\rightarrow \theta_2$ yields
	\begin{align}\label{Eq-half-argument-1}
		\wp'(\theta) = -\frac{\sigma(2\theta)}{\sigma^4(\theta)},
	\end{align}
	where we utilize the asymptotic behavior $\sigma(\theta)\sim\theta$ as $\theta\rightarrow 0$ from \eqref{Eq-sigma}. Similarly, the limit $\theta_1\rightarrow \theta_2$ in the second identity gives
	\begin{equation}\label{Eq-half-argument-2}
		\zeta(2\theta) = 2\zeta(\theta) + \frac{\wp''(\theta)}{2\wp'(\theta)}.
	\end{equation}
	Equations \eqref{Eq-half-argument-1}-\eqref{Eq-half-argument-2} are recognized as the half-argument formulas for Weierstrass functions.
	
	\subsection{Addition formula for sigma functions}
	
	The addition formula for the sigma function, fundamental to our construction, follows from the first identity in \eqref{Eq-formulas of Weierstrass functions}:
	
	\begin{prop}
		The Weierstrass sigma function satisfies
		\begin{equation}\label{Eq-addition formulas of the sigma functions}
			\begin{split}
				&\sigma(\theta_1+\theta_2)\sigma(\theta_1-\theta_2)\sigma(\theta_3+\theta_4)\sigma(\theta_3-\theta_4) \\
				&\quad + \sigma(\theta_1+\theta_3)\sigma(\theta_1-\theta_3)\sigma(\theta_4+\theta_2)\sigma(\theta_4-\theta_2) \\
				= &\sigma(\theta_3+\theta_2)\sigma(\theta_3-\theta_2)\sigma(\theta_1+\theta_4)\sigma(\theta_1-\theta_4),
			\end{split}
		\end{equation}
		for arbitrary $\theta_i \in \mathbb{C}$, $i=1,2,3,4$.
	\end{prop}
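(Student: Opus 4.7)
The plan is to reduce the claimed quartic sigma identity to an elementary algebraic identity among values of the Weierstrass $\wp$-function, exploiting the first formula in \eqref{Eq-formulas of Weierstrass functions}, namely
\begin{equation*}
\wp(\theta_i)-\wp(\theta_j) = -\frac{\sigma(\theta_i+\theta_j)\sigma(\theta_i-\theta_j)}{\sigma^2(\theta_i)\sigma^2(\theta_j)}.
\end{equation*}
First I would divide both sides of \eqref{Eq-addition formulas of the sigma functions} by $\sigma^2(\theta_1)\sigma^2(\theta_2)\sigma^2(\theta_3)\sigma^2(\theta_4)$; each of the six sigma pairs $\sigma(\theta_i+\theta_j)\sigma(\theta_i-\theta_j)$ appearing in \eqref{Eq-addition formulas of the sigma functions} then converts (up to an overall sign of $(-1)^2=1$ in each product of two pairs) to $\bigl(\wp(\theta_i)-\wp(\theta_j)\bigr)\sigma^2(\theta_i)\sigma^2(\theta_j)$, and the common factor $\sigma^2(\theta_1)\sigma^2(\theta_2)\sigma^2(\theta_3)\sigma^2(\theta_4)$ cancels uniformly from every term.

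After this reduction, writing $a=\wp(\theta_1)$, $b=\wp(\theta_2)$, $c=\wp(\theta_3)$, $d=\wp(\theta_4)$, the claim becomes the polynomial identity
\begin{equation*}
(a-b)(c-d)+(a-c)(d-b)=(c-b)(a-d),
\end{equation*}
which is verified by a direct expansion: the left-hand side simplifies to $ac-ab+bd-cd=(c-b)(a-d)$. Hence the reduced identity holds for all complex values of $a,b,c,d$, and in particular for the $\wp$-values, which establishes \eqref{Eq-addition formulas of the sigma functions}.

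The only subtlety concerns degenerate configurations in which some $\theta_i$ coincides with $\pm\theta_j$ modulo the period lattice, so that one of the sigma factors on either side vanishes and the denominator $\sigma^2(\theta_i)\sigma^2(\theta_j)$ introduced by the division is singular. On such exceptional loci the identity may be recovered by continuity, since both sides of \eqref{Eq-addition formulas of the sigma functions} are entire functions of $(\theta_1,\theta_2,\theta_3,\theta_4)\in\mathbb{C}^4$ and the identity has already been established on the complement of a proper analytic subvariety. This closure step is expected to be the only nontrivial point; the algebraic core of the proof is the three-term relation among the $\wp$-differences.
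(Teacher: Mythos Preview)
Your proposal is correct and follows essentially the same route as the paper: both rewrite each product $\sigma(\theta_i+\theta_j)\sigma(\theta_i-\theta_j)$ as $-\bigl(\wp(\theta_i)-\wp(\theta_j)\bigr)\sigma^2(\theta_i)\sigma^2(\theta_j)$ via the first identity in \eqref{Eq-formulas of Weierstrass functions}, cancel the common factor $\sigma^2(\theta_1)\sigma^2(\theta_2)\sigma^2(\theta_3)\sigma^2(\theta_4)$, and reduce to the elementary three-term relation $(a-b)(c-d)+(a-c)(d-b)=(c-b)(a-d)$. Your additional continuity argument for degenerate configurations is a harmless refinement the paper omits.
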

	\begin{proof}
		From \eqref{Eq-formulas of Weierstrass functions} we obtain
		\begin{equation}\label{Eq-first} 
			\sigma(\theta_1+\theta_2)\sigma(\theta_1-\theta_2)\sigma(\theta_3+\theta_4)\sigma(\theta_3-\theta_4) = \big(\wp(\theta_1) - \wp(\theta_2)\big)\big(\wp(\theta_3) - \wp(\theta_4)\big)\sigma^2(\theta_1)\sigma^2(\theta_2)\sigma^2(\theta_3)\sigma^2(\theta_4).
		\end{equation}
		Similarly,
		\begin{equation}\label{Eq-second} 
			\begin{split}
				\sigma(\theta_1+\theta_3)\sigma(\theta_1-\theta_3)\sigma(\theta_4+\theta_2)\sigma(\theta_4-\theta_2) &= \big(\wp(\theta_1) - \wp(\theta_3)\big)\big(\wp(\theta_4) - \wp(\theta_2)\big)\sigma^2(\theta_1)\sigma^2(\theta_2)\sigma^2(\theta_3)\sigma^2(\theta_4),\\
				\sigma(\theta_3+\theta_2)\sigma(\theta_3-\theta_2)\sigma(\theta_1+\theta_4)\sigma(\theta_1-\theta_4) &= \big(\wp(\theta_3) - \wp(\theta_2)\big)\big(\wp(\theta_1) - \wp(\theta_4)\big)\sigma^2(\theta_1)\sigma^2(\theta_2)\sigma^2(\theta_3)\sigma^2(\theta_4).
			\end{split}
		\end{equation}
		Substituting these into the algebraic identity
		\[
		\big(\wp(\theta_1)-\wp(\theta_2)\big)\big(\wp(\theta_3)-\wp(\theta_4)\big) + \big(\wp(\theta_1)-\wp(\theta_3)\big)\big(\wp(\theta_4)-\wp(\theta_2)\big) = \big(\wp(\theta_3)-\wp(\theta_2)\big)\big(\wp(\theta_1)-\wp(\theta_4)\big)
		\]
		yields \eqref{Eq-addition formulas of the sigma functions}.
	\end{proof}

	\section{References}
	\bibliographystyle{siam}
	
	\bibliography{Reference}

\end{document}